\documentclass[12pt,reqno]{amsart}

\usepackage[margin = 1.25in]{geometry}
\usepackage{amssymb,amsmath,amsthm,amsfonts}
\usepackage{float}
\usepackage{afterpage}
\usepackage{times}
\usepackage[flushleft]{threeparttable}
\usepackage[title]{appendix}

\newtheorem{assumption}{Assumption}
\newtheorem{proposition}{Proposition}
\newtheorem{example}{Example}
\newtheorem{remark}{Remark}
\usepackage[]{pdfpages}
\newtheorem{theorem}{Theorem}
\newtheorem*{theorem*}{Theorem}
\newtheorem*{proposition*}{Proposition}
\newtheorem{lemma}{Lemma}
\usepackage[ruled,vlined]{algorithm2e}
\usepackage{mathrsfs}
\newtheorem{corollary}{Corollary}
\newtheorem*{claim*}{Claim}
\newtheorem*{remark*}{Remark}
\providecommand{\dist}{\operatorname{dist}}
\providecommand{\conv}{\operatorname{conv}}

\LinesNumbered
\DeclareMathOperator*{\argmin}{arg\,min}	
\DeclareMathOperator*{\argmax}{arg\,max}

\usepackage{graphicx}
\usepackage[inline]{enumitem}
\setlist[enumerate]{label=\arabic*.}
\usepackage[colorlinks,
linkcolor = black,
            urlcolor  = blue,
            citecolor = black]{hyperref}
\usepackage[round]{natbib}
\usepackage{booktabs}
\usepackage{multirow}
\usepackage[all=normal, paragraphs=tight, bibbreaks=tight, floats=tight,
mathdisplays=tight,bibnotes=tight]{savetrees}
\makeatletter
\renewcommand\paragraph{\@startsection{paragraph}{4}%
  \z@{.5em}{-1em}%
  {\normalfont\bfseries}}
  \renewcommand\@addpunct[1]{}
\makeatother
\title{Nonparametric Bayesian Inference for Partially Identified Discrete Response Models}
\author{Elie Tamer}
\author{Christopher D. Walker}
\thanks{\today. \@ Tamer: Harvard University, Department of Economics, \href{mailto:elietamer@fas.harvard.edu}{elietamer@fas.harvard.edu}; Walker: Duke University, Department of Economics, \href{mailto:christopher.walker@duke.edu}{christopher.walker@duke.edu}. This work is preliminary and any comments are welcome. We are grateful to Aslihan Asil, Allan Collard-Wexler, and Adam Rosen for helpful discussions. We thank Mengyuan Jiang for research assistance. All errors are our own.}
\begin{document}
\pagestyle{plain}
\maketitle
\begin{abstract}
This paper proposes a nonparametric Bayesian inference framework for partially identified discrete response models. The key observation is that these models map a reduced-form conditional choice probability to an identified set. Consequently, nonparametric Bayesian inference for the conditional probability mass function leads to Bayesian inference for the identified set. The inference framework nests conditional moment inequalities and linear systems with unknown coefficients as special cases. Importantly, our proposal does not require converting conditional moments into unconditional moments or discretizing covariates. We show that the posterior is consistent for the true identified set when the model is correctly specified, show that the posterior can consistently detect model misspecification, and show posterior consistency for a pseudo-identified set that is valid under misspecification. We also verify the assumptions for a class of priors based on Gaussian processes that we use to implement our proposal. These priors offer similar flexibility to frequentist partial identification methods, and are computationally attractive because posterior sampling can be performed in closed-form. We also show that many of the ideas in this paper extend to continuous responses and aggregated discrete responses (e.g., market shares).
\end{abstract}
\newpage
\section{Introduction}
\subsection{Overview.} Many economic models with discrete responses lead to set-identified structural parameters. These include games with multiple equilibria \citep{tamer2003incomplete,ciliberto2009market}, semiparametric static binary/multinomial choice models with fixed effects \citep{manski1987semiparametric,shi2018estimating,gao2024identification,pakes2024moment}, dynamic models with state dependence \citep{Heckman1981a,honore2006bounds,torgovitsky2019nonparametric,khan2021inference,pakes2021unobserved,khan2023identification}, limited consideration discrete choice models \citep{barseghyan2021heterogeneous,LU2022368}, ordered choice models \citep{CHESHER201233,pakes2015moment}, network formation models \citep{sheng2020structural}, and binary regression models with interval-censored covariates \citep{manski2002inference}, to name a few. A common statistical theme across these examples is that the identified set is known up to an unrestricted reduced-form conditional probability mass function (PMF). Consequently, uncertainty about the identified set entirely reflects that of the PMF, and, as a result, statistical inference for the PMF should translate to the identified set.

Using this observation, our paper proposes a nonparametric Bayesian inference framework for identified sets defined by conditional PMFs. Concretely, let the observed data be $(Y,X')'$, where $Y$ is a discrete random variable and $X$ is a (possibly) continuous random vector, and let the target parameter be an identified set $\Gamma_{I} = \{\gamma: f(X,p(X),\gamma) \geq 0 \ a.s.\}$, where $\gamma$ is a finite-dimensional economic parameter (e.g., payoff parameters), $p(X)$ is a reduced-form conditional PMF for $Y$ given $X$, and $f(X,p(X),\gamma)$ is a known vector of possibly nonseparable functions (with inequalities evaluated elementwise). The central idea of our proposal is that $\Gamma_{I}$ can be viewed as the output of a correspondence that takes $p$ as an input. Consequently, a Bayesian inference framework for $\Gamma_{I}$ arises in which a marginal posterior $\Gamma_{I}|Y,X$ is obtained from a posterior $p|Y,X$ for $p$. Beyond this, we provide implementation details for Gaussian process (GP) priors, prove a general posterior consistency theorem for $\Gamma_{I}|Y,X$ (and for a pseudo-identified set that accommodates misspecification), and verify the conditions of the theorem for GPs. The examples below highlight the breadth of our framework; Section \ref{sec:extensions} also demonstrates how many of our results extend to continuous $Y$ and aggregated discrete choice (e.g., $Y$ is a market share).
\begin{example}[Conditional Moment Inequalities]\label{ex:momentsineq}
Let $g(Y,X,\gamma)$ be a vector of known functions. The conditional moment inequality (CMI) model says $\gamma$ is in the identified set iff $E[g(Y,X,\gamma)|X] \geq 0$ a.s. CMIs arise in discrete choice, for instance, due to model incompleteness \citep{tamer2003incomplete,ciliberto2009market,sheng2020structural,barseghyan2021heterogeneous,LU2022368} or weak assumptions about unobserved heterogeneity \citep{manski2002inference,shi2018estimating,pakes2021unobserved,pakes2024moment}. Defining $f(X,p(X),\gamma) = \sum_{y}g(y,X,\gamma)p_{y}(X),$ where $p_{y}(X)$ is the conditional probability that $Y=y$ under $p(X)$, CMIs fall within our framework.
    
\end{example}
\begin{example}[Linear Systems]
Some discrete choice models lead to identifying restrictions of the form $\mathbf{1}\{p(X) \in R\}c_{R}(X)\gamma \geq 0 \ \forall \ R \in \mathcal{R}$ a.s., where $R$ is some restriction on $p(X)$, $c_{R}(X)$ is a known transformation that depends on $R$, $\mathcal{R}$ is a set of restrictions, and $\gamma$ is a structural parameter.\footnote{For instance, in the dynamic binary panel data model of \cite{khan2023identification}, inequalities like $\mathbf{1}\{p(Y_{t-1}=1,Y_{t}=1|X) +  P(Y_{s-1}=1,Y_{s} = 0|X) \geq 1\}\{(X_{t}-X_{s})'\beta + \theta\} \geq 0$, where $s,t \in \{1,...,T\}$ are time periods, and $X=(X_{t})_{t=1}^{T}$ is a vector of covariates, characterize the sharp identified set.} These are a special case of a class of linear systems $A(X,p(X))\gamma \geq b(X,p(X))$ a.s., where $A(X,p(X))$ is a matrix and $b(X,p(X))$ is a vector, which is cast within our framework by defining $f(X,p(X),\gamma) = A(X,p(X))\gamma-b(X,p(X))$.
\end{example}

Our proposal has several appealing features. First, our approach accommodates continuous $X$ without requiring the conversion of conditional moments into unconditional moments or explicitly discretizing the support of $X$ into bins. In CMI models, the one-sided nature of inequality restrictions means that making these choices in a way that preserves identifying information is challenging \citep{khan2009inference, andrews2013inference}. This has resulted in a widespread empirical practice of basing inference on ad hoc choices. For instance, in many empirical implementations of CMIs, the conditioning variables are discretized into finite cells or groups. The resulting cell-level inequalities can then be treated as a finite set of unconditional moment inequalities, typically by multiplying the conditional moment by cell indicators or other nonnegative `instruments' and averaging (see, for example, \cite{kline2016bayesian,chen2018monte,ciliberto2021market, pakes2021unobserved}, among many others). Ad hoc selections of unconditional moments may result in substantial loss of sharpness under correct specification, while conclusions drawn from outer sets based on unconditional moments may be sensitive to instrument choice when the underyling CMIs are misspecified \citep{li2024discordant}. Similarly, empirical applications of linear systems typically involve coarsened covariates. For instance, the linear programming formulation \cite{honore2006bounds} presumes that the covariates have finite support (so as to obtain a finite set of inequalities), while the empirical implementation of the identified set in \cite{khan2023identification} binarizes the covariates, despite the finding that rich covariate support can lead to much sharper identified sets (and even point identification in some cases). Similar discretizing steps were taken in \cite{ciliberto2009market} and \cite{kline2016bayesian}. In both CMIs and linear systems, our procedure obviates the need for such discretization and/or selecting unconditional moments by targeting the conditional PMF directly. 

Second, our approach is flexible and amenable to computation. We provide implementation details for a class of priors that combine stick-breaking and Gaussian process (GP) methods.\footnote{Stick-breaking is a technique that dates back to \cite{halmos1944random} and can be thought of as representing a discrete distribution in terms of its hazard functions. It was famously employed by \cite{sethuraman1994constructive} as a computational device for the \cite{ferguson1973bayesian} Dirichlet process, however these techniques are increasingly used for Bayesian conditional distribution estimation \citep{dunson2008kernel,chung2009nonparametric,rodriguez2011nonparametric,ren2011logistic}.} GP priors allow a researcher to flexibly incorporate nonparametric properties, such as smoothness, and produce nonparametric Bayesian procedures with similar flexibility to frequentist nonparametric estimators \citep{vaart2008rates}. Consequently, our implementation shares the advantageous flexibility of frequentist inference for identified sets, while enjoying all of the niceties of Bayesian inference. These priors are also computationally attractive because the stick-breaking characterization enables the use of the P\'{o}lya-Gamma data augmentation technique of \cite{polson2013bayesian} to derive Gibbs sampling algorithms in which GPs form conditionally conjugate priors. As a result, sampling from the posterior for the reduced-form PMFs amounts to generating draws from standard parametric distributions. Given draws from $p|Y,X$, the researcher can use \textit{any} appropriate method for computing $\Gamma_{I}$ to obtain the corresponding draws from $\Gamma_{I}|Y,X$, while the two-step nature of our proposal means that computing the $\Gamma_{I}$ draws is parallelizable.

The main theoretical results in the paper are posterior consistency guarantees for the identified set. A possible concern about Bayesian inference for partially identified models is that, even with infinite data, the prior does not fully revise \citep{poirier1998revising,moon2012bayesian,giacomini2021robust}. In our setting, the prior is placed directly on an identifiable reduced-form parameter $p$ and the identified set $\Gamma_{I}$ is viewed as a point identified, set-valued parameter. Consequently, under suitable continuity restrictions on the mapping from reduced form PMFs to identified sets, fully revised beliefs for the PMF should translate to fully revised beliefs about the identified set. Our core posterior consistency theorem confirms this intuition, providing high-level conditions under which posterior consistency for the conditional choice probabilities leads to the posterior for the identified set to concentrate around the true identified set. In this sense, the prior for $p$ has an asymptotically negligible effect on inferences about $\Gamma_{I}$. We also prove posterior consistency for a pseudo-identified set $\tilde{\Gamma}_{I}$ that is always nonempty (and coincides with $\Gamma_{I}$ when $\Gamma_{I}$ is not empty), and show that the posterior probability that the identified set is empty is a consistent diagnostic for model misspecification. This is important because several of the examples that fit within our framework are prone to model misspecification, for instance, due to parametric restrictions on unobserved heterogeneity \citep{ciliberto2009market,barseghyan2021heterogeneous,ciliberto2021market}.

Our posterior consistency theorems are based on assumptions that the posterior concentrates on sets of $p$ for which certain uniform convergence and lower hemicontinuity conditions are satisfied. While these conditions are straightforward to verify in finite-dimensional models, posterior consistency in infinite-dimensional models can be a delicate issue, evidenced by canonical inconsistency results, such as those in \cite{freedman1963asymptotic, freedman1965asymptotic} and \cite{diaconis1986inconsistent,diaconis1986consistency,diaconis1993nonparametric}, and more recent work highlighting the intricacies of posterior consistency in norms relevant to statistical functionals \citep{gine2011rates,castillo2014bayesian,ho2024bayesian}. Since we cast the identified set as a statistical functional, this second set of concerns is relevant for our framework. Conscious of this, we enhance the credibility of our general theory with a set low-level sufficient conditions for an important class of identified sets (i.e., they apply to linear systems and some conditional moment inequality models) under both correct specification (i.e., $\Gamma_{I}$) and misspecification (i.e., $\tilde{\Gamma}_{I}$). Then, noticing a common underlying supremum norm posterior consistency requirement for $p$, we derive conditions under which the stick-breaking GP prior used for the implementation meets this condition, and thereby provides a complete verification of posterior consistency. As an intermediate technical result, we find conditions under which the stick-breaking GP posterior for $p$ converges to the truth $p_{0}$ in empirical mean-square at the minimax optimal rate, which may be of independent interest.
\subsection{Literature.}
This paper is related to several literatures in econometrics. The first is Bayesian analysis of partially identified models. We build on \cite{kline2016bayesian} by performing Bayesian inference for an identified set by evaluating it at draws from the posterior of an identifiable reduced-form parameter (i.e., the PMF). A key difference, however, is that \cite{kline2016bayesian} explicitly restrict the reduced-form parameter to be finite-dimensional, thereby ruling out partially identified discrete response models with continuous covariates. \cite{LIAO2019338} and \cite{florens2021revisiting} also derive reduced-form Bayesian inference frameworks for identified sets using support functions and Dirichlet processes, respectively. These papers similarly restrict attention to reduced-form parameters that are estimable at the parametric rate, precluding settings where the identified set is indexed by a reduced-form conditional PMF with continuous conditioning variables. \cite{norets2014semiparametric} propose Bayesian inference for \cite{rust1987optimal}-type structural dynamic discrete choice models by assuming a joint prior for the reduced-form conditional choice probabilities and observable state transition distributions, but restrict attention to finite observable state spaces. More recently, taking a posterior for an identified set as given, \cite{kline2024counterfactual} propose a Bayesian inference framework for counterfactual prediction sets, and establish continuous mapping theorems under which posterior consistency for the identified set implies predictive consistency for the counterfactuals. Since our posterior consistency results for $\Gamma_{I}$ match those required by \cite{kline2024counterfactual}, an implication of our theory is posterior consistency for counterfactual prediction sets without requiring discretization of covariates. Other papers on Bayesian inference under partial identification include \cite{poirier1998revising}, \cite{liao2010bayesian}, \cite{moon2012bayesian}, \cite{giacomini2021robust}, and \cite{christensen2026optimal}.\footnote{There is also a related  literature in statistics on nonparametric Bayesian level set estimation. See for example \cite{gayraud2005rates,gayraud2007consistency}'s work on density level sets and  \cite{li2021posterior} who focus on level sets of a general unknown function.}

More broadly, our paper fits within the literature on reduced-form Bayesian inference for econometric models, which refers to situations in which a parameter of interest is viewed as a functional of an identifiable reduced-form parameter (for which a prior is assumed).\footnote{The reduced-form Bayesian literature can be connected to a large literature in statistics on Bayesian estimation of statistical functionals. A nonexhaustive list of contributions includes \cite{ferguson1973bayesian}, \cite{rubin1981bayesian}, \cite{newton1994approximate}, \cite{rivoirard2012bernstein}, \cite{castillo2015bernstein}, \cite{lyddon2019general}, \cite{nickl2020nonparametric}, \cite{monard2021statistical}, \cite{ray2021bernstein}, and \cite{li2026empiricallikelihoodgenerativeai}.} \cite{walker2026semiparametric} considers Bayesian inference for parameters point identified by conditional moment equalities via a nonparametric prior for the conditional density of the endogenous variables given the exogenous variables. Our paper follows a similar thought experiment, except that it assumes a nonparametric prior for a conditional PMF and focuses on identified sets determined by inequality restrictions on the conditional distribution. In this sense, the conceptual relationship between our paper and \cite{walker2026semiparametric} parallels how \cite{kline2016bayesian} connects the seminal inference framework of \cite{chamberlain2003nonparametric} to identified sets. \cite{norets2022adaptiveconditional,norets2022adaptive} develop a nonparametric Bayesian frameworks for reduced-form mixed continuous-discrete distributions, with a key empirical motivation being that firm entry parameters (e.g., \cite{pakes2007simple}) are functionals of these distributions. \cite{ray2018semiparametric}, \cite{breunig2025double,breunig2026semiparametricbayesiandifferenceindifferences}, \cite{ditraglia2025bayesian}, \cite{yiu2025semiparametric}, and \cite{ye2026nonparametric} propose reduced-form Bayesian approaches for treatment effect parameters.

Finally, our paper contributes to a large literature on inference under partial identification. Since we characterize identified sets as level sets of criterion functions, our approach offers a Bayesian analogue to frequentist inference procedures proposed in \cite{manski2002inference}, \cite{chernozhukov2007estimation}, \cite{bugni2010bootstrap}, \cite{romano2010inference}, \cite{menzel2014consistent}, \cite{chernozhukov2015inference}, and \cite{chen2018monte}. Our two leading examples, conditional moment inequalities and linear systems, connects our Bayesian framework to frequentist inference frameworks for these models, such as \cite{khan2009inference}, \cite{andrews2013inference,andrews2014nonparametric,andrews2017inference}, \cite{chernozhukov2013intersection}, \cite{armstrong2014weighted,armstrong2015asymptotically}, \cite{armstrong2016multiscale}, \cite{chetverikov2018adaptive}, \cite{andrews2023inference}, and \cite{chernozhukov2023constrained} for conditional moment inequalities, and \cite{bai2022testing}, \cite{fang2023inference}, \cite{goff2025inference}, and \cite{bai2026inference} for linear systems. There are many other frequentist approaches to inference under partial identification; see \cite{canay2017practical}, \cite{Ho_Rosen_2017}, \cite{MOLINARI2020355}, \cite{KLINE2021345}, and \cite{CANAY2023105558} and the references therein.

\subsection{Outline.} The paper is organized as follows. Section \ref{sec:generalsetup} formalizes the Bayesian inference framework, Section \ref{sec:GPimplementation} puts forward a flexible class of priors for $p$ and illustrates their use in an application to state dependence, Section \ref{sec:posteriorconsistency} presents our posterior consistency results, Section \ref{sec:extensions} offers several extensions of the framework, and Section \ref{sec:conclusion} concludes. Notation is introduced when appropriate and proofs are in the Appendix.

\section{Bayesian Inference Framework}\label{sec:generalsetup}
This section formalizes how Bayesian inferences about identified sets of the form $\Gamma_{I} = \{\gamma: f(X,p(X),\gamma) \geq 0 \ a.s.\}$ can be obtained via nonparametric Bayesian inference for the unrestricted reduced-form conditional PMF $p$ of $Y$ given $X$.
\subsection{Nonparametric Bayesian Inference for PMFs.} Let $\{(Y_{i},X_{i}')'\}_{i\geq 1}$, where $Y_{i} \in \mathcal{Y}$, $\mathcal{Y} = \{y_{1},...,y_{K}\}$, and $X_{i} \in \mathcal{X} \subseteq \mathbb{R}^{d_{x}}$ for each $i$, be a sequence of random vectors for which $\{(Y_{i},X_{i}')'\}_{i=1}^{n}$, $n \geq 1$, forms the observed data. There are no restrictions on $\mathcal{X}$ (i.e., $X_{i}$ can have continuous components). Our sampling model conditions on the realization $\{x_{i}\}_{i \geq 1}$ of $\{X_{i}\}_{i \geq 1}$ and imposes that $Y^{(n)} = (Y_{1},...,Y_{n})$ satisfies 
\begin{align*}
Y^{(n)}|p \sim  P^{(n)}, \quad P^{(n)} = \bigotimes_{i=1}^{n}Categorical(p(x_{i})),
\end{align*}
where $Categorical(p(x))$ denotes a discrete distribution over $\{y_{1},...,y_{K}\}$ with probability mass function $p(x) = (p_{1}(x),...,p_{K}(x))$ for $x \in \mathcal{X}$. For Bayesian inference, we endow the infinite-dimensional space $\mathcal{P}$ of PMFs $p$ with a prior $\Pi$, a conditional (on $\{X_{i}\}_{i \geq 1}$) probability measure over $(\mathcal{P},\mathscr{P})$, where $\mathscr{P}$ is a $\sigma$-algebra over $\mathcal{P}$. Section \ref{sec:GPimplementation} presents an example.
\begin{assumption}\label{as:samplingmodel}
    For each $n \geq 1$ and almost every fixed realization $\{x_{i}\}_{i \geq 1}$ of $\{X_{i}\}_{i \geq 1}$, $(y^{(n)},p) \mapsto L_{n}(p) :=\prod_{i=1}^{n}\prod_{k=1}^{K}p_{k}(x_{i})^{\mathbf{1}\{y_{i}=y_{k}\}}$ is measurable on $(\mathcal{Y}^{n} \times \mathcal{P}, 2^{\mathcal{Y}^{n}}\otimes\mathscr{P})$.
\end{assumption}
\begin{assumption}\label{as:marginallikelihood}
The following conditions hold:
\begin{enumerate}
\item For each $n \geq 1$ and almost every fixed realization $\{x_{i}\}_{i\geq 1}$ of $\{X_{i}\}_{i \geq 1}$, the marginal distribution $\Pi_{n}$ of $(p(x_{1}),...,p(x_{n}))$ under $\Pi$ admits a density $\pi_{n}$.
\item Let $p_{0}$ denote the true conditional PMF. For each $n \geq 1$ and almost every fixed realization $\{x_{i}\}_{i \geq 1}$ of $\{X_{i}\}_{i \geq 1}$, $\int_{\mathcal{P}}L_{n}(p)\pi_{n}(p)dp \in (0,\infty)$ a.e. $[P_{0}^{(n)}]$.
\end{enumerate}
\end{assumption}
Assumptions \ref{as:samplingmodel} and \ref{as:marginallikelihood} are standard regularity conditions \citep{ghosal2017fundamentals}, and imply that the vector of in-sample PMFs $(p(x_{1}),...,p(x_{n}))$ has a well-defined posterior distribution $\Pi_{n}((p(x_{1}),...,p(x_{n})) \in \cdot | Y^{(n)})$ with density that satisfies
\begin{align*}
\pi_{n}(p(x_{1}),...,p(x_{n})|Y^{(n)}) \propto L_{n}(p)\pi_{n}(p(x_{1}),...,p(x_{n})).    
\end{align*}
for each $p \in \mathcal{P}$. The posterior contains a full inference theory for the vector $(p(x_{1}),...,p(x_{n}))$ and any transformation of it. Consequently, by treating the identified set as a function of this vector, we obtain a Bayesian inference theory for the identified set.

\subsection{Bayesian Inference for Identified Sets.} We now show how $\pi_{n}(\cdot|Y^{(n)})$ leads to a Bayesian inference framework for an identified set. Let $\Gamma \subseteq \mathbb{R}^{d_{\gamma}}$, $d_{\gamma} < \infty$, be the structural parameter space, and, for a fixed realization $\{x_{i}\}_{i \geq 1}$ of $\{X_{i}\}_{i \geq 1}$, define the \textit{conditional identified set} $\Gamma_{n,I}(p)$ at $p$ as 
\begin{align*}
\Gamma_{n,I}(p) = \{\gamma \in \Gamma: f(x_{i},p(x_{i}),\gamma) \in \mathbb{R}_{+}^{d_{f}} \ \forall \ i=1,...,n\},    
\end{align*}
where $f: \mathcal{X} \times \Delta^{K-1} \times \Gamma \rightarrow \mathbb{R}^{d_{f}}$, $d_{f} < \infty$, is a known function.\footnote{Theorem \ref{thm:fullsupport} resolves the apparent discrepancy between $\Gamma_{n,I}$ and $\Gamma_{I}$.} The conditional identified set can be equivalently expressed as the level set $\Gamma_{n,I}(p) = \{\gamma \in \Gamma: Q_{n}(\gamma,p) = 0\}$, where $Q_{n}: \Gamma \times \mathcal{P} \rightarrow \mathbb{R}_{+}$ is any criterion function that only depends on $p$ through $(p(x_{1}),...,p(x_{n}))$, and, importantly, satisfies the property\footnote{Section \ref{sec:examples} demonstrates $Q_{n}$ may depend on $\{x_{i}\}_{i=1}^{n}$, but we suppress for ease of exposition.} 
\begin{align*}
Q_{n}(\gamma,p) = 0 \iff f(x_{i},p(x_{i}),\gamma)\in \mathbb{R}_{+}^{d_{f}} \ \forall \ i=1,...,n .   
\end{align*}
Crucially, since $Q_{n}$ only depends on the in-sample response probabilities $(p(x_{1}),...,p(x_{n}))$, posterior probability statements about $(p(x_{1}),...,p(x_{n}))$ translate to statements about $\Gamma_{n,I}(p)$, provided that the correspondence $\Gamma_{n,I}: \mathcal{P} \rightarrow 2^{\Gamma}$ is suitably measurable.

For this purpose, we maintain that $\Gamma_{n,I}: \mathcal{P} \rightarrow 2^{\Gamma}$ forms a random closed set \citep{molchanov2017theory,molchanov2018random}. That is, $\Gamma_{n,I}(p)$ is a closed set for each $p$ and $\Gamma_{n,I}^{-}(A) := \{p: \Gamma_{n,I}(p) \cap A \neq \emptyset\} \in \mathscr{P}$ for each compact subset $A$ of $\Gamma$. In such case, the probability law of $\Gamma_{n,I}|Y^{(n)}$ is uniquely determined by the \textit{posterior capacity functional} $T_{\Gamma_{n,I}|Y^{(n)}}$, which satisfies $T_{\Gamma_{n,I}|Y^{(n)}}(A) = \Pi_{n}(\Gamma_{n,I}(p) \cap A \neq \emptyset | Y^{(n)})$ for $A \subseteq \Gamma$ compact. The posterior capacity functional formalizes $\Gamma_{I}|Y,X$. Assumptions \ref{as:parameterspace} and \ref{as:criterion}, both standard, are sufficient for this characterization (with Theorem \ref{thm:randomset} serving as validation).\footnote{Two comments. First, compactness of $\Gamma$ is not used in the proof of Theorem \ref{thm:randomset}, however, for the application of random set theory, the ambient space (i.e., $\Gamma$) is typically restricted to be locally compact second countable Hausdorff (LCSCH) (see \cite{molchanov2017theory,molchanov2018random}). Assumption \ref{as:parameterspace} is sufficient for this and will be used in the subsequent results of the paper, so we impose it outright but explicitly acknowledge that it is not necessary for Theorem \ref{thm:randomset} (i.e, we can impose that $\Gamma$ is LCSCH only). Second, if Assumption \ref{as:criterion}.1 is strengthened to continuity and Assumption \ref{as:parameterspace} holds, then Assumption \ref{as:criterion}.2 holds if $Q_{n}(\gamma,\cdot): \mathcal{P} \rightarrow \mathbb{R}_{+}$ is $\mathscr{P}$-measurable for each $\gamma \in \Gamma$ (see, for example, Theorem 17.18 in \cite{aliprantis1999infinite}). For sufficient conditions based on lower semicontinuity, see \cite{stinchcombe1992some}.}
\begin{assumption}\label{as:parameterspace}
$\Gamma$ is a compact subset of $(\mathbb{R}^{d_{\gamma}},||\cdot||_{2})$, where $||\cdot||_{2}$ is the Euclidean norm.
\end{assumption}
\begin{assumption}\label{as:criterion}
For each $n \geq 1$ and almost every fixed realization $\{x_{i}\}_{i \geq 1}$ of $\{X_{i}\}_{i \geq 1}$,
\begin{enumerate}
    \item The mapping $\gamma \mapsto Q_{n}(\gamma,p)$ is lower semicontinuous for each $p \in \mathcal{P}$.
    \item The mapping $p \mapsto \inf_{\gamma \in A}Q_{n}(\gamma,p)$ is measurable function from $(\mathcal{P},\mathscr{P})$ to $(\mathbb{R}_{+},\mathcal{B}(\mathbb{R}_{+}))$ for each compact $A \subseteq \Gamma$.\footnote{The notation $\mathcal{B}(A)$ denotes the Borel $\sigma$-algebra generated by the open subsets of $A$.}
\end{enumerate}   
\end{assumption}
\begin{theorem}\label{thm:randomset}
Suppose that Assumptions \ref{as:parameterspace} and \ref{as:criterion} hold. Then, for each $n \geq 1$ and almost every fixed realization $\{x_{i}\}_{i \geq 1}$ of $\{X_{i}\}_{i \geq 1}$, $\Gamma_{n,I}$ defines a random closed set over $(\mathcal{P},\mathscr{P})$.
\end{theorem}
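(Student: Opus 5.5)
We verify the two defining properties of a random closed set as stated in the text: for each fixed $n$ and each admissible realization $\{x_i\}_{i\geq 1}$, (i) $\Gamma_{n,I}(p)$ is closed for every $p \in \mathcal{P}$, and (ii) $\Gamma_{n,I}^{-}(A) = \{p : \Gamma_{n,I}(p) \cap A \neq \emptyset\} \in \mathscr{P}$ for every compact $A \subseteq \Gamma$. Throughout we fix $n$ and a realization in the probability-one set on which Assumption \ref{as:criterion} holds, and we use the level-set representation $\Gamma_{n,I}(p) = \{\gamma \in \Gamma : Q_n(\gamma,p) = 0\}$.

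For closedness, we use $Q_n \geq 0$ to write $\Gamma_{n,I}(p) = \{\gamma \in \Gamma : Q_n(\gamma,p) \leq 0\}$. This is a sublevel set of the map $\gamma \mapsto Q_n(\gamma,p)$, which is lower semicontinuous by Assumption \ref{as:criterion}.1, so it is closed in $\Gamma$. Since $\Gamma$ is closed in $\mathbb{R}^{d_\gamma}$ by Assumption \ref{as:parameterspace}, the set is also closed in the ambient space.

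For measurability, the key step is the identity
\begin{align*}
\Gamma_{n,I}^{-}(A) = \Big\{p : \inf_{\gamma \in A} Q_n(\gamma,p) = 0\Big\}.
\end{align*}
The inclusion ``$\subseteq$'' is immediate: if some $\gamma \in A$ has $Q_n(\gamma,p) = 0$, then the infimum is $0$ because $Q_n \geq 0$. For ``$\supseteq$'' we use the fact that a lower semicontinuous function attains its infimum on a nonempty compact set. Take a minimizing sequence in $A$, extract a subsequence converging to some $\gamma^* \in A$, and apply lower semicontinuity to get $Q_n(\gamma^*,p) \leq \liminf = 0$, hence $Q_n(\gamma^*,p) = 0$. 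If $A = \emptyset$, both sides are empty under the convention $\inf \emptyset = +\infty$, so we treat this case separately. Once the identity holds, the right-hand side is the preimage of the Borel set $\{0\}$ under the map $p \mapsto \inf_{\gamma \in A} Q_n(\gamma,p)$, which is $\mathscr{P}$-measurable by Assumption \ref{as:criterion}.2. Hence $\Gamma_{n,I}^{-}(A) \in \mathscr{P}$.

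The main obstacle is the attainment step, which is the only place where compactness of $A$ and lower semicontinuity interact. Everything else is bookkeeping. The remaining care needed is to note that the argument never uses compactness of $\Gamma$ itself, only that $A$ is compact, consistent with the footnote's remark, and that the null sets of realizations excluded in Assumption \ref{as:criterion} do not depend on $A$.
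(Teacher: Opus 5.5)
Your proposal is correct and follows essentially the same route as the paper. Closedness comes from lower semicontinuity together with $Q_n \geq 0$, and measurability of $\Gamma_{n,I}^{-}(A)$ comes from the identity $\{p:\Gamma_{n,I}(p)\cap A\neq\emptyset\} = \{p: \inf_{\gamma\in A}Q_n(\gamma,p)=0\}$ (via attainment of the infimum on compact $A$) plus Assumption \ref{as:criterion}.2. Your minimizing-sequence argument and separate handling of $A=\emptyset$ only spell out details the paper states tersely.
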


There are several aspects of our framework that should be emphasized. First, the posterior for the identified set implies a marginal posterior for the identified set of functions of the structural parameter. Let $g: \mathbb{R}^{d_{\gamma}} \rightarrow \mathbb{R}^{d_{h}}$, $d_{g} < \infty$, be a known continuous transformation. The conditional identified set for $g(\gamma)$ is $G_{n,I}(p) = \{g(\gamma): \gamma \in \Gamma_{n,I}(p)\}$. Corollary \ref{cor:functionrandomset} shows that $G_{n,I}: \mathcal{P} \rightarrow 2^{g(\Gamma)}$ forms a random closed set. Consequently, $\pi_{n}(\cdot|Y^{(n)})$ leads to a posterior for $G_{n,I}(p)$ that is similarly described its capacity functional. Defining $g$ as the coordinate projection, this nests the identified set for subvectors of $\gamma$.
\begin{corollary}\label{cor:functionrandomset}
Suppose Assumptions \ref{as:parameterspace} and \ref{as:criterion} hold, and $g: \Gamma \rightarrow \mathbb{R}^{d_{g}}$ is continuous. Then, for each $n \geq 1$ and almost every fixed realization $\{x_{i}\}_{i \geq 1}$ of $\{X_{i}\}_{i \geq 1}$, $G_{n,I}$ defines a random closed set over $(\mathcal{P},\mathscr{P})$.
\end{corollary}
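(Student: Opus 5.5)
We verify the two defining properties of a random closed set for $G_{n,I}(p) = g(\Gamma_{n,I}(p))$: closedness of each realization, and measurability of the hitting events $G_{n,I}^{-}(B) := \{p: G_{n,I}(p)\cap B \neq \emptyset\}$ for compact $B \subseteq g(\Gamma)$. Both reduce to Theorem \ref{thm:randomset}, which under Assumptions \ref{as:parameterspace} and \ref{as:criterion} gives, for almost every realization $\{x_i\}_{i\ge1}$, that $\Gamma_{n,I}(p)$ is closed for each $p$ and that $\Gamma_{n,I}^{-}(A)\in\mathscr{P}$ for every compact $A\subseteq\Gamma$. We fix such a realization and $n$ throughout. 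First, the ambient space $g(\Gamma)$ is a compact subset of $\mathbb{R}^{d_g}$, being the continuous image of the compact set $\Gamma$ (Assumption \ref{as:parameterspace}). Hence it is LCSCH, so random set theory applies there.

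For closedness, $\Gamma_{n,I}(p)$ is a closed subset of the compact set $\Gamma$, so it is compact. Its image under the continuous map $g$ is therefore compact, and in particular closed in $g(\Gamma)$ and in $\mathbb{R}^{d_g}$. This includes the case where $\Gamma_{n,I}(p)=\emptyset$, since the empty set is closed.

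For measurability, fix a compact $B\subseteq g(\Gamma)$ and set $A := g^{-1}(B)\cap\Gamma$. Because $g$ is continuous and $B$ is closed, $A$ is a closed subset of the compact $\Gamma$, hence compact. The key identity is
\begin{align*}
G_{n,I}(p)\cap B\neq\emptyset \iff \exists\,\gamma\in\Gamma_{n,I}(p) \text{ with } g(\gamma)\in B \iff \Gamma_{n,I}(p)\cap A\neq\emptyset,
\end{align*}
where the last step uses $\Gamma_{n,I}(p)\subseteq\Gamma$. Thus $G_{n,I}^{-}(B)=\Gamma_{n,I}^{-}(A)$, and this set lies in $\mathscr{P}$ by Theorem \ref{thm:randomset}.

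The only point needing care is that Theorem \ref{thm:randomset} controls hitting events only for compact test sets. The preimage step therefore has to produce a compact set, and this is exactly where compactness of $\Gamma$ is used: $g^{-1}(B)$ alone need not be compact, but its intersection with $\Gamma$ is. If one instead wants $G_{n,I}$ viewed as a random closed set in all of $\mathbb{R}^{d_g}$, the same argument applies with $A=g^{-1}(B)\cap\Gamma$ for any compact $B\subseteq\mathbb{R}^{d_g}$. The conclusion holds for almost every realization, inherited directly from Theorem \ref{thm:randomset}.
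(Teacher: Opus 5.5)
Your proof is correct and follows essentially the same route as the paper: you identify the hitting event $\{p: G_{n,I}(p)\cap B\neq\emptyset\}$ with $\{p:\Gamma_{n,I}(p)\cap g^{-1}(B)\neq\emptyset\}$, note that $g^{-1}(B)$ is a compact subset of $\Gamma$, and invoke Theorem \ref{thm:randomset}. You also check explicitly that each $G_{n,I}(p)$ is closed, as the continuous image of a compact set; the paper's proof leaves this step implicit, so your version is slightly more complete.
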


Second, our framework also accommodates model misspecification. A partially identified model is misspecified at $p \in \mathcal{P}$ if $\Gamma_{n,I}(p) = \emptyset$. Setting $A = \Gamma$, $\{p: \Gamma_{n,I}(p)= \emptyset\} $ is $\mathscr{P}$-measurable and  $1-T_{\Gamma_{n,I}|Y^{(n)}}(\Gamma)$ is a misspecification diagnostic because a high posterior probability of an empty identified set provides strong evidence that the identifying restrictions are not compatible with the data. Alternatively, one can report the posterior for a pseudo-identified set $\tilde{\Gamma}_{n,I}(p) = \{\gamma \in \Gamma: \tilde{Q}_{n}(\gamma,p) = 0\}$, where $\tilde{Q}_{n}(\gamma,p) = Q_{n}(\gamma,p) - \inf_{\tilde{\gamma} \in \Gamma}Q_{n}(\tilde{\gamma},p)$ (i.e., $\tilde{\Gamma}_{n,I}(p)$ is the set of minimizers of $Q_{n}(\cdot,p)$). Like $\Gamma_{n,I}(p)$, $\tilde{\Gamma}_{n,I}(p)$ is a set-valued functional of $(p(x_{1}),...,p(x_{n}))$ so it has a posterior under $\pi_{n}(\cdot|Y^{(n)})$ as long as it is a random closed set.\footnote{There are, of course, conceptual issues associated with reporting pseudo-true parameters \citep{white1982maximum,muller2013risk,hansen2021inference,andrews2026true}. Our paper does not take a stance on whether one \textit{should} report $\tilde{\Gamma}_{n,I}(p)$, but simply provides it as an option for those concerned about misspecification at $p$.} Similarly, a valid inference framework for functions of $\gamma$ is obtained via the posterior for $\tilde{G}_{n,I}(p) = \{g(\gamma): \gamma \in \tilde{\Gamma}_{n,I}(p)\}$.
\begin{theorem}\label{thm:pseudoset}
Suppose that Assumptions \ref{as:parameterspace} and \ref{as:criterion} hold. Then, for each $n \geq 1$ and almost every fixed realization $\{x_{i}\}_{i \geq 1}$ of $\{X_{i}\}_{i \geq 1}$, $\tilde{\Gamma}_{n,I}$ defines a random closed set over $(\mathcal{P},\mathscr{P})$.
\end{theorem}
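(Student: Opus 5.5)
The plan is to reduce Theorem \ref{thm:pseudoset} to the same two ingredients as Theorem \ref{thm:randomset}: closedness of each realization, and measurability of the hitting events $\{p:\tilde{\Gamma}_{n,I}(p)\cap A\neq\emptyset\}$ for compact $A\subseteq\Gamma$. Fix $n$ and a realization $\{x_i\}_{i\geq 1}$ for which Assumption \ref{as:criterion} holds, and set $m(p):=\inf_{\tilde\gamma\in\Gamma}Q_n(\tilde\gamma,p)$.

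\emph{Closedness.} Since $\Gamma$ is compact by Assumption \ref{as:parameterspace}, Assumption \ref{as:criterion}.1 implies that $\gamma\mapsto\tilde{Q}_n(\gamma,p)=Q_n(\gamma,p)-m(p)$ is lower semicontinuous. It is also finite, because $m(p)\in[0,\infty)$ as $Q_n\ge 0$. Hence $\tilde{\Gamma}_{n,I}(p)=\{\gamma:\tilde{Q}_n(\gamma,p)\le 0\}$ is a sublevel set of an lsc function, so it is closed. By Weierstrass it is moreover nonempty, since an lsc function on a compact set attains its infimum.

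\emph{Measurability.} For compact $A\subseteq\Gamma$, let $m_A(p):=\inf_{\gamma\in A}Q_n(\gamma,p)$, taking $m_\emptyset\equiv+\infty$. I claim that
\begin{align*}
\{p:\tilde{\Gamma}_{n,I}(p)\cap A\neq\emptyset\}=\{p:m_A(p)\le m(p)\}=\{p:m_A(p)=m_\Gamma(p)\}.
\end{align*}
For ``$\subseteq$'': if $\gamma\in A$ is a minimizer, then $m_A(p)\le Q_n(\gamma,p)=m(p)$. For ``$\supseteq$'': $Q_n(\cdot,p)$ is lsc on the compact set $A$, so it attains $m_A(p)$ at some $\gamma\in A$. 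If $m_A(p)\le m(p)$, then $Q_n(\gamma,p)\le m(p)$, so $\gamma$ is a global minimizer lying in $A$. The attainment step is the one place compactness of $A$ is essential; for merely closed $A\subseteq\Gamma$ it also holds, because closed subsets of a compact $\Gamma$ are compact.

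By Assumption \ref{as:criterion}.2, both $m_A$ and $m_\Gamma=m$ are $\mathscr{P}$-measurable real-valued functions, taking $A=\Gamma$ for the second. So $p\mapsto m_A(p)-m_\Gamma(p)$ is measurable, and the hitting set is its preimage of $(-\infty,0]$, which lies in $\mathscr{P}$. Together with closedness, this gives the random closed set definition used for Theorem \ref{thm:randomset}.

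The main obstacle is the set identity, specifically making sure the infimum over $A$ is attained so that ``$\le$'' and ``hits $A$'' coincide. Once lsc and compactness supply that, the rest is bookkeeping. If one preferred to define random closed sets via open or closed $A$ instead, I would approximate open sets by countable unions of compact sets, which is possible since $\Gamma\subseteq\mathbb{R}^{d_\gamma}$ is LCSCH. The union of measurable hitting events is again measurable.
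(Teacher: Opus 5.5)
Your proof is correct and follows the paper's route. Lower semicontinuity of the shifted criterion gives closedness, and the hitting event is identified with $\{p:\inf_{\gamma\in A}\tilde{Q}_n(\gamma,p)=0\}$ exactly as in Theorem \ref{thm:randomset}. Your explicit identity $\inf_{\gamma\in A}\tilde{Q}_n(\gamma,p)=m_A(p)-m_\Gamma(p)$ is in fact the cleaner way to obtain measurability: the paper only records measurability of $p\mapsto\tilde{Q}_n(\gamma,p)$ for each fixed $\gamma$, which by itself does not control the infimum over an uncountable $A$.
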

Finally, most literature on partial identification with covariates assumes that $\{(Y_{i},X_{i}')'\}_{i \geq 1}$ forms an independent and identically distributed (i.i.d) sequence, and defines the identified set at $p$ as $\Gamma_{I}(p) = \{\gamma \in \Gamma: f(x,p(x),\gamma) \geq 0 \ \forall \ x \in \mathcal{S}_{X}\}$ to define the identified set at $p$, where $\mathcal{S}_{X}$ is the support of the marginal distribution $P_{0,X}$ of $X$ (see, for example, \cite{Ho_Rosen_2017}, \cite{MOLINARI2020355}, \cite{KLINE2021345}, and \cite{CANAY2023105558}). This creates an apparent discrepancy between our target parameter $\Gamma_{n,I}(p)$ and the focus of much of the partial identification literature (an exception being \cite{rosen2025finite}, who define the identified set for the \cite{manski1975maximum,manski1985semiparametric} maximum score model conditionally). Fortunately, the next result establishes that, under a well-separatedness condition for $\Gamma_{I}(p)$ and other standard regularity conditions, there is virtually no difference between $\Gamma_{n,I}(p)$ and $\Gamma_{I}(p)$ for large $n$. For notation, $f_{j}(x,p(x),\gamma)$ is the $j$th element of $f(x,p(x),\gamma)$, $\mathscr{X}$ is the $\sigma$-algebra over $\mathcal{X}$ for which $P_{0,X}$ is defined, $\mathscr{X}^{\infty}$ is the associated product $\sigma$-algebra, and $d_{\mathcal{H}}$ is the Hausdorff distance.\footnote{Given two sets $\Gamma_{1},\Gamma_{2}$ in $(\mathbb{R}^{d_{\gamma}},||\cdot||_{2})$, the Hausdorff distance is defined as $d_{\mathcal{H}}(\Gamma_{1},\Gamma_{2}) = \max\{\sup_{\gamma \in \Gamma_{1}}\inf_{\tilde{\gamma} \in \Gamma_{2}}||\gamma-\tilde{\gamma}||_{2},\sup_{\gamma \in \Gamma_{2}}\inf_{\tilde{\gamma} \in \Gamma_{1}}||\gamma-\tilde{\gamma}||_{2}\}$.}
\begin{theorem}\label{thm:fullsupport}
Suppose that the following hold for each $p \in \mathcal{P}$: 1. $\gamma \mapsto Q_{n}(\gamma,p)$ is lower semicontinuous for each $\{x_{i}\}_{i  \geq 1} \in \mathcal{X}^{\infty}$ and each $n \geq 1$, 2. $\{x_{i}\}_{i \geq 1} \mapsto \inf_{\gamma \in A}Q_{n}(\gamma,p)$ is a measurable function from $(\mathcal{X}^{\infty},\mathscr{X}^{\infty})$ to $(\mathbb{R}_{+},\mathcal{B}(\mathbb{R}_{+}))$ for each compact $A \subseteq \Gamma$ and each $n \geq 1$, 3. $\Gamma$ is a compact subset of $(\mathbb{R}^{d_{\gamma}},||\cdot||_{2})$, 4. the sequence $\{X_{i}\}_{i\geq 1}$ is i.i.d with common distribution $P_{0,X}$, 5. the mapping $x \mapsto \sup_{\gamma \in A}\min_{1 \leq j \leq d_{f}}f_{j}(x,p(x),\gamma) $ is a measurable function from $(\mathcal{X},\mathscr{X})$ to $(\mathbb{R},\mathcal{B}(\mathbb{R}))$ for every compact $A \subseteq \Gamma$, 6. $\Gamma_{I}(p) \neq \emptyset$, and 7. for each compact $A\subseteq \Gamma \setminus \Gamma_{I}(p)$, $P_{0,X}(\sup_{\gamma \in A}\min_{1 \leq j \leq d_{f}}f_{j}(X,p(X),\gamma) < 0)>0$. Then for each $p \in \mathcal{P}$: $\lim_{n\rightarrow \infty}d_{\mathcal{H}}\left(\Gamma_{n,I}(p),\Gamma_{I}(p)\right) = 0$ a.e. $[P_{0,X}^{(\infty)}]$, where $P_{0,X}^{(\infty)}= \bigotimes_{i \geq 1}P_{0,X}$.
\end{theorem}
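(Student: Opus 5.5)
The plan is to prove the two halves of the Hausdorff distance separately. One half turns out to be identically zero almost surely. The other half is controlled by a countable family of compact "bad" sets, each of which is eventually excluded by a Borel--Cantelli-type argument built on condition 7.

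\emph{Step 1 (inclusion).} Fix $p \in \mathcal{P}$. Since $P_{0,X}(\mathcal{S}_{X}) = 1$ (taking $\mathcal{X}$ separable metric so that the support has full measure), condition 4 gives that almost surely $x_{i} \in \mathcal{S}_{X}$ for all $i \geq 1$. On this event, any $\gamma \in \Gamma_{I}(p)$ satisfies $f(x_{i},p(x_{i}),\gamma) \geq 0$ for every $i$. Hence $\Gamma_{I}(p) \subseteq \Gamma_{n,I}(p)$ for all $n$. This has three consequences:
\begin{itemize}
\item $\sup_{\gamma \in \Gamma_{I}(p)}\inf_{\tilde\gamma \in \Gamma_{n,I}(p)}\|\gamma-\tilde\gamma\|_{2} = 0$;
\item by condition 6, $\Gamma_{n,I}(p) \neq \emptyset$, so the Hausdorff distance is well defined and finite by condition 3;
\item $\Gamma_{n,I}(p)$ is nonincreasing in $n$, since adding observations only adds constraints.
\end{itemize}
Closedness of $\Gamma_{n,I}(p)$ follows from condition 1, because it is a sublevel set $\{Q_{n}(\cdot,p) \le 0\}$ of a lower semicontinuous function.

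\emph{Step 2 (excluding the complement).} It remains to show that almost surely $\sup_{\gamma \in \Gamma_{n,I}(p)} \dist(\gamma,\Gamma_{I}(p)) \to 0$. For $\varepsilon > 0$, let
\[
A_{\varepsilon} = \{\gamma \in \Gamma : \dist(\gamma,\Gamma_{I}(p)) \geq \varepsilon\}.
\]
This set is closed in the compact $\Gamma$, hence compact, and it is disjoint from $\Gamma_{I}(p)$. By conditions 5 and 7, the quantity
\[
q_{\varepsilon} := P_{0,X}\Bigl(\sup_{\gamma \in A_{\varepsilon}}\min_{j} f_{j}(X,p(X),\gamma) < 0\Bigr)
\]
is well defined and strictly positive.

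Call an observation $x_{i}$ \emph{excluding} if $\sup_{\gamma \in A_{\varepsilon}}\min_{j} f_{j}(x_{i},p(x_{i}),\gamma) < 0$. If some excluding $x_{i}$ occurs with $i \le n$, then every $\gamma \in A_{\varepsilon}$ violates at least one inequality at $x_{i}$. Hence $A_{\varepsilon} \cap \Gamma_{n,I}(p) = \emptyset$, which means $\sup_{\gamma \in \Gamma_{n,I}(p)}\dist(\gamma,\Gamma_{I}(p)) < \varepsilon$.

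By condition 4, the probability that none of $x_{1},\dots,x_{n}$ is excluding equals $(1-q_{\varepsilon})^{n} \to 0$. These events are decreasing in $n$, so with probability one some excluding observation eventually occurs, and by monotonicity the exclusion persists for all larger $n$. Taking $\varepsilon = 1/m$ and intersecting the countably many probability-one events over $m \in \mathbb{N}$ (together with the event from Step 1) gives $\limsup_{n} d_{\mathcal{H}}(\Gamma_{n,I}(p),\Gamma_{I}(p)) \le 1/m$ for every $m$ almost surely. This proves the claim.

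\emph{Main obstacle.} The delicate point is measurability. We need the events in Step 2, and $d_{\mathcal{H}}$ itself, to be measurable with respect to $\mathscr{X}^{\infty}$. Condition 5 handles the exclusion events directly. Measurability of $d_{\mathcal{H}}(\Gamma_{n,I}(p),\Gamma_{I}(p))$ as a function of $\{x_{i}\}$ follows from conditions 1--2 via the random-closed-set argument of Theorem \ref{thm:randomset}, now with $\{x_{i}\}$ in place of $p$ as the source of randomness. In any case, the argument only requires that the event in question contain a measurable event of probability one.
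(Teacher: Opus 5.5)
Your proof is correct, and it takes a more direct route than the paper.

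\textbf{The paper's route.} The paper verifies the hit-and-miss criteria of Proposition 1.7.23 of \cite{molchanov2017theory}, for every compact set disjoint from $\Gamma_{I}(p)$ and every open set meeting it. For the compact sets it uses Borel--Cantelli with the summable bound $(1-c_{A})^{n}$, which gives almost-sure Painlev\'e--Kuratowski convergence. It then converts this to Hausdorff convergence using compactness of $\Gamma$. Conditions 1--2 are needed there so that $\{x_{i}\}_{i\geq 1}\mapsto\Gamma_{n,I}(p)$ is a random closed set and the proposition applies.

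\textbf{Your route.} You work with the Hausdorff excess directly and only need the countable family $A_{1/m}$. You replace Borel--Cantelli by two facts: $\Gamma_{n,I}(p)$ is nested in $n$, and probability measures are continuous along decreasing events. The probability-one event is then explicitly a countable intersection of events that are measurable by condition 5, together with $\{x_{i}\in\mathcal{S}_{X}\ \forall i\}$. So the random-set measurability of $d_{\mathcal{H}}$ plays no essential role. You also justify the almost-sure inclusion $\Gamma_{I}(p)\subseteq\Gamma_{n,I}(p)$, via $P_{0,X}(\mathcal{S}_{X})=1$ for $\mathcal{X}\subseteq\mathbb{R}^{d_{x}}$; the paper only asserts this step in Step 2B.

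\textbf{What each buys.} The paper's hit/miss argument does not rely on monotonicity in $n$, and it places the result inside standard random-closed-set convergence theory. Both monotonicity and summability are available here, so your argument is the more elementary and self-contained of the two.
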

\subsection{Examples.}\label{sec:examples}
This section revisits Examples \ref{ex:momentsineq} and \ref{ex:linsyst}. Let $||\cdot||_{n,r}$ satisfy $||h||_{n,r}= (n^{-1}\sum_{i=1}^{n}|h(x_{i})|^{r})^{1/r}$ for $r \in [1,\infty)$ and $||h||_{n,\infty} = \max_{1 \leq i \leq n}|h(x_{i})|$ for real-valued $h$ (replacing $|h(x)|$ with $||h(x)||_{2}$ and $||h(x)||_{op}$ for vector valued and matrix valued $h$, respectively, where $||\cdot||_{op}$ is the operator norm). Let $\text{dist}_{W}(\cdot,A)$ satisfy $\text{dist}_{W}(t,A) = \inf_{\tilde{t} \in A}||t-\tilde{t}||_{W}$ for each $t$, where $||t||_{W}^{2} = t'Wt$ for some symmetric, positive-definite matrix $W$.
\setcounter{example}{0}
\begin{example}[Continued]
Recall that, for CMIs, $f(x_{i},p(x_{i}),\gamma) = E[g(Y,X,\gamma)|X=x_{i}]$, where $g(Y,X,\gamma)$ be a $d_{g} \times 1$ vector of known functions and the expectation is taken with respect to $p(x_{i})$. A class of criterion functions for CMIs are of the mimimum distance form,
\begin{align*}
 Q_{n,r}(\gamma,p) = \left|\left|\text{dist}_{W(\cdot,p(\cdot),\gamma)}(f(\cdot,p(\cdot),\gamma),\mathbb{R}_{+}^{d_{g}})\right| \right|_{n,r},   
\end{align*} where $W(x_{i},p(x_{i}),\gamma)$ is some symmetric, positive-definite matrix and $r \in [1,\infty]$. Since $\{Q_{n}(\gamma,p): \gamma \in \Gamma\}$ is fully determined by $(p(x_{1}),...,p(x_{n}))$, the posterior $\pi_{n}(\cdot|Y^{(n)})$ leads to a posterior over the values of $Q_{n}(\gamma,p)$, $\gamma \in \Gamma$, and, by extension, the set of minimizers. These criterions are common in the moment inequality literature (see \cite{canay2017practical} and the references therein). Indeed, setting $r = 2$ and $W(x_{i},p(x_{i}),\gamma) = \Sigma^{-1}(x_{i},p(x_{i}),\gamma)$, where $\Sigma(x_{i},p(x_{i}),\gamma) = Var(g(Y,X,\gamma)|X=x_{i})$, one obtains a quasi-likelihood ratio (QLR) criterion,
\begin{align*}
    &Q_{n,QLR}(\gamma,p) 
=\sqrt{\frac{1}{n}\sum_{i=1}^{n}\min_{t \in \mathbb{R}^{d_{g}}_{+}}\left\{(f(x_{i},p(x_{i}),\gamma)-t)'\Sigma^{-1}(x_{i},\gamma)(f(x_{i},p(x_{i}),\gamma)-t)\right\}}
\end{align*}while setting $r = 2$ and $W(x_{i},p(x_{i}),\gamma) = D^{-1}(x_{i},p(x_{i}),\gamma)$, with $D(x_{i},p(x_{i}),\gamma) = \text{diag}\Sigma(x_{i},p(x_{i}),\gamma)$, yields a modified method of moments (MMM) criterion
\begin{align*}
 Q_{n,MMM}(\gamma,p) =    \sqrt{\frac{1}{n}\sum_{i=1}^{n}||(D^{-1/2}(x_{i},p(x_{i})\gamma)f(x_{i},p(x_{i}),\gamma))_{-}||_{2}^{2}},
\end{align*}
where $(t)_{-} = (\min\{t_{1},0\},...,\min\{t_{d},0\})'$ for $t \in \mathbb{R}^{d}$. In specific relation to the frequentist conditional moment inequality literature (e.g., \cite{andrews2013inference} and \cite{armstrong2014weighted}), $Q_{n,r}(\gamma,p)$ with $r \in [1,\infty)$ and $r= \infty$ can be viewed as Cr\'{a}mer-von Mises-type (CvM) and Kolmogorov-Smirnov-type (KS) statistics, respectively. A key distinction is that we aggregate over the observed covariates $\{x_{i}\}_{i=1}^{n}$, whereas the papers listed above aggregate over a user-specified class of implied unconditional moment restrictions. Consequently, applied researchers willing to adopt a Bayesian perspective can avoid selecting unconditional moments.
\end{example}
\begin{example}[Continued]\label{ex:linsyst}
Recall that, for linear systems, $f(x_{i},p(x_{i}),\gamma) = A(x_{i},p(x_{i}))\gamma - b(x_{i},p(x_{i}))$, where $A(x_{i},p(x_{i})) \in \mathbb{R}^{d_{a}\times d_{\gamma}}$ and $b(x_{i},p(x_{i})) \in \mathbb{R}^{d_{a}}$. Like Example \ref{ex:momentsineq}, a possible criterion function is the minimum distance objective
\begin{align*}
Q_{n,r}(\gamma,p) = ||\text{dist}_{W(\cdot,p(\cdot),\gamma)}(A(\cdot,p(\cdot))\gamma-b(\cdot,p(\cdot)),\mathbb{R}_{+}^{d_{a}})||_{n,r},    
\end{align*}
where $W(x_{i},p(x_{i}),\gamma)$ is a symmetric, positive-definite weight matrix and $r \in [1,\infty]$. Alternatively, one could report
\begin{align*}
    Q_{n}^{LP}(\gamma,p) = \frac{1}{n}\sum_{i=1}^{n}||(A(x_{i},p(x_{i}))\gamma-b(x_{i},p(x_{i})))_{-}||_{1},
\end{align*}
where $||\cdot||_{1}$ is the taxicab norm.\footnote{The taxicab norm is $||z||_{1} = \sum_{j=1}^{d_{z}}|z_{j}|$.}$^{,}$\footnote{This statistic is not specific to linear systems because, for any candidate $f$, one can define $Q_{n}^{LP}(\gamma,p) = n^{-1}\sum_{i=1}^{n}||(f(x_{i},p(x_{i}),\gamma))_{-}||_{1}$.} This criterion function is particularly compelling for linear systems because if $\Gamma$ is a convex polytope (i.e., $\Gamma=\{\gamma\in\mathbb R^{d_\gamma}:A_{\Gamma}\gamma\leq b_{\Gamma}\}$ for some matrix $A_{\Gamma}$ and vector $b_{\Gamma}$, and is bounded), then the set of minimizers of $Q_n^{LP}(\cdot,p)$ is the projection onto the $\gamma$-coordinates of the optimal-solution set of the following linear program:\footnote{If $\Gamma$ is merely compact (with a relevant distinguishing case being the spherical normalization $\Gamma =\{\gamma \in \mathbb{R}^{d_{\gamma}}: ||\gamma||_{2}= 1\}$), then the same optimization problem can be written with the constraint $\gamma\in\Gamma$, although it is not necessarily a linear program.}
\begin{align*} 
\min_{\gamma,s_1,\cdots,s_n} & \frac{1}{n}\sum_{i=1}^n\mathbf 1_{d_a}'s_i\\ \text{s.t.}\quad& A(x_i,p(x_i))\gamma+s_i-b(x_i,p(x_i))\geq 0, \ i=1,\ldots,n,\\
&A_{\Gamma}\gamma\leq b_{\Gamma},\quad s_i\geq 0,\ i=1,\ldots,n . \end{align*}
A similar story to Example \ref{ex:momentsineq} applies: $(p(x_{1}),...,p(x_{n}))$ determines $Q_{n}^{LP}(\cdot,p)$ and $Q_{n,r}(\cdot,p)$, meaning that a posterior $\pi_{n}(\cdot|Y^{(n)})$ leads to a posterior for the set of minimizers. Importantly, our approach does not require discretizing the covariates.
\end{example}
\section{Implementation with Gaussian Processes.}\label{sec:GPimplementation}
This section describes implementation for priors based on Gaussian processes (GPs), and presents a Monte Carlo simulation.
\subsection{A Flexible Class of Priors.}\label{sec:GPimplementationdetails}
Our implementation is based on the `stick-breaking' parametrization of a categorical distribution from \cite*{linderman2015dependent}. Define the stick-breaking parametrization of the PMF $p$ as follows,
\begin{align}\label{eq:stickbreak}
p_{1} = q_{1}, \quad p_{k} = q_{k}\prod_{j = 1}^{k-1}(1-q_{j}), \  j=2,...,K-1, \quad p_{K} = \prod_{j=1}^{K-1}(1-q_{j}),
\end{align}
where $q_{k}: \mathcal{X} \rightarrow [0,1]$ for each $k=1,...,K-1$.\footnote{For any discrete distribution $p$, there exists a well-defined stick-breaking representation because $q_{k}= p_{k}/\sum_{j \geq k}p_{j}$ for all $k=1,...,K-1$, so this parametrization does not entail any loss of generality.} By substituting the stick-breaking weights, the likelihood function
\begin{align*}
L_{n}(p) = \prod_{i=1}^{n}\prod_{k=1}^{K}p_{k}(x_{i})^{\mathbf{1}\{Y_{i}=y_{k}\}}    
\end{align*}
satisfies $L_{n}(p) = L_{n}^{*}(q)$, where\footnote{$L_{n}^{*}(q)$ presumes the support of $Y$ has been enumerated (with $y_{k}$ being an enumerated support point) so that the event $Y > y_{k}$ has meaning.}
\begin{align*}
    L_{n}^{*}(q) = \prod_{i=1}^{n}\prod_{k=1}^{K-1}q_{k}(x_{i})^{\mathbf{1}\{Y_{i}=y_{k}\}}(1-q_{k}(x_{i}))^{\mathbf{1}\{Y_{i} > y_{k}\}}.
\end{align*}
Consequently, by specifying a prior for $q := (q_{1},...,q_{K-1})$ and computing the posterior for $(q(x_{1}),...,q(x_{n}))$ using $L_{n}^{*}(q)$, Bayesian inference for $(p(x_{1}),...,p(x_{n}))$ is obtained via the induced probability distribution for $p$ under the stick-breaking parametrization (\ref{eq:stickbreak}).

Since the stick-breaking functions $q$ \textit{only} need to satisfy $0 \leq q_{k} \leq 1$ for each $k=1,...,K-1$, we specify the prior for $q$ as the probability law of $(\Lambda(B_{1}),...,\Lambda(B_{K-1}))$, where $\Lambda(z) = \exp(z)/(1+\exp(z))$ for each $z \in \mathbb{R}$, and
\begin{align*}
B_{k}\overset{ind}{\sim} GP(\mu_{k},\kappa_{k}), \quad k=1,...,K-1,
\end{align*}
with $GP(\mu_{k},\kappa_{k})$ denoting a Gaussian process with mean function $\mu_{k}: \mathcal{X} \rightarrow \mathbb{R}$ and covariance function $\kappa_{k}: \mathcal{X} \times \mathcal{X} \rightarrow \mathbb{R}$.\footnote{The parametrization $\Lambda(B_{k})$ is fully nonparametric because $B_{k} = \Lambda^{-1}(q_{k})$.} Following similar constructions in the conditional density estimation literature (e.g., \cite{ren2011logistic}), we call this the logistic stick-breaking Gaussian process prior, and use $LSBGP(\{\mu_{k}\}_{k=1}^{K-1},\{\kappa_{k}\}_{k=1}^{K-1})$ to denote this class of priors. The specification of $\kappa_{k}$ allows for flexible incorporation of nonparametric properties like smoothness, while the choice of $\mu_{k}$ enables shrinkage towards a particular parametric model for $q$, such as a linear index model \citep{williams2006gaussian}. Moreover, the prior is compatible with our general assumptions. Indeed, under the GP, the vector $(B_{n,1},...,B_{n,K-1})$, with $B_{n,k} :=(B_{k}(x_{i}))_{i=1}^{n}$, satisfies 
\begin{align*}
B_{n,k} \overset{ind}{\sim}\mathcal{N}(\mu_{n,k},\kappa_{n,k}), \quad k=1,...,K-1    
\end{align*} 
where $\mu_{n,k}:=(\mu_{k}(x_{i}))_{i=1}^{n}$ and $\kappa_{n,k}:=(\kappa_{k}(x_{i},x_{j}))_{i,j=1}^{n}$ for $k=1,...,K-1$. This implies that Assumption \ref{as:marginallikelihood}.1 holds because $B \mapsto p$ has a differentiable inverse. In practice, we set $\mu_{k} = 0$ for each $k=1,...,K-1$, so prior elicitation amounts to specifying $\{\kappa_{k}\}_{k=1}^{K-1}$.
\subsection{Posterior Sampling}\label{sec:samplerdescribe} We describe sampling from the posterior for $(B_{n,1},...,B_{n,K-1})$, the quantities relevant for computing $(p(x_{1}),...,p(x_{n}))$. Under the independent GP prior, the posterior density for $(B_{n,1},...,B_{n,K-1})$ satisifes
\begin{align*}
    \pi_{n}(B_{n,1},...,B_{n,K-1}|Y^{(n)}) &\propto \prod_{k=1}^{K-1}\prod_{i \in \mathcal{I}_{k}}\Lambda(B_{k}(x_{i}))^{\mathbf{1}\{Y_{i}=y_{k}\}}(1-\Lambda(B_{k}(x_{i})))^{\mathbf{1}\{Y_{i} > y_{k}\}}   \\
    &\quad \times \prod_{k=1}^{K-1}\pi_{n,k}(B_{n,k}) \\
    &\propto \prod_{k=1}^{K-1}\pi_{n,k}(B_{n,k}|Y^{(n)}),
\end{align*}
where $\pi_{n,k}$ is the density of $\mathcal{N}(\mu_{n,k},\kappa_{n,k})$, $\mathcal{I}_{k} = \{i:Y_{i} \geq y_{k}\}$, and
\begin{align}\label{eq:binarylogit}
   \pi_{n,k}(B_{n,k}|Y^{(n)}) \propto \prod_{i \in \mathcal{I}_{k}}\Lambda(B_{k}(x_{i}))^{\mathbf{1}\{Y_{i}=y_{k}\}}(1-\Lambda(B_{k}(x_{i})))^{\mathbf{1}\{Y_{i} > y_{k}\}}\pi_{n,k}(B_{n,k}).
\end{align}
This means that the blocks $B_{n,1}$,...,$B_{n,K-1}$ are independent under the posterior. Consequently, a draw from the posterior for $(p(x_{1}),...,p(x_{n}))$ is obtained as follows:
\begin{enumerate}
    \item Draw $B_{n,k} \overset{ind}{\sim} \pi_{n,k}(\cdot|Y^{(n)})$ for $k=1,...,K-1$.
    \item Compute $p(x_{i})$ for $i=1,...,n$ using (\ref{eq:stickbreak}).
\end{enumerate}
Given a draw $(p(x_{1}),...,p(x_{n}))$, a draw from $\Gamma_{n,I}|Y^{(n)}$ is acquired by computing $\Gamma_{n,I}(p)$ using \textit{any} appropriate method (and, as a post-processing exercise, can be made parallel). Step 1 can also be parallelized across $k$, which is an attractive feature for settings with large $K$.

We briefly outline sampling from $\pi_{n,k}(\cdot|Y^{(n)})$; Appendix \ref{ap:posteriordraws} contains the precise details. For a given $k$, let $B_{\mathcal{I}_{k},k}=(B_{k}(x_{i}))_{i \in \mathcal{I}_{k}}$ and $B_{\mathcal{I}_{k}^{c},k} = (B_{k}(x_{i}))_{i \in \mathcal{I}_{k}^{c}}$. Partitioning $B_{n,k} $ into $(B_{\mathcal{I}_{k},k},B_{\mathcal{I}_{k}^{c},k})$ and factorizing $\pi_{n,k}(B_{n,k}|Y^{(n)})$ as
\begin{align*}
    \pi_{n,k}\left(B_{n,k}\middle |Y^{(n)}\right) = \pi_{n,k}\left(B_{\mathcal{I}_{k}^{c},k}\middle |Y^{(n)},B_{\mathcal{I}_{k},k}\right) \pi_{n,k}\left(B_{\mathcal{I}_{k},k}\middle | Y^{(n)}\right),
\end{align*}
we sample from $B_{n,k}|Y^{(n)}$ by first sampling from $\pi_{n,k}(B_{\mathcal{I}_{k},k}|Y^{(n)})$ and then sampling from $\pi_{n,k}(B_{\mathcal{I}_{k}^{c},k}|Y^{(n)},B_{\mathcal{I}_{k},k})$. Expression (\ref{eq:binarylogit}) indicates that $\pi_{n,k}(B_{\mathcal{I}_{k},k}|Y^{(n)})$ is the posterior of a nonparametric binary logit model for the subsample $(Y_{i})_{i \in \mathcal{I}_{k}}$. Consequently, P\'{o}lya-Gamma data augmentation \citep{polson2013bayesian} leads to a Gibbs sampler with \textit{closed-form} sweeps. For $\pi_{n}(B_{\mathcal{I}_{k}^{c},k}|Y^{(n)},B_{\mathcal{I}_{k},k})$, the conditional likelihood in (\ref{eq:binarylogit}) is a constant function of $B_{\mathcal{I}_{k}^{c},k}$, which implies that $\pi_{n,k}(B_{\mathcal{I}_{k}^{c},k}|Y^{(n)},B_{\mathcal{I}_{k},k})=\pi_{n}(B_{\mathcal{I}_{k}^{c},k}|B_{\mathcal{I}_{k},k})$. Since $B_{n,k} \sim \mathcal{N}(\mu_{n,k},\kappa_{n,k})$, properties of multivariate normal distributions implies that sampling from $\pi_{n,k}(B_{\mathcal{I}_{k}^{c},k}|Y^{(n)},B_{\mathcal{I}_{k},k})$ simply amounts to generating draws from a Gaussian distribution.
\begin{remark}[Prior Hyperparameters]
The mean and covariance functions often depend on hyperparameters. That is, $\mu_{k}(x)=\mu_{k,\tau_{k,1}}(x)$ and $\kappa_{k}(x,\tilde{x}) = \kappa_{k,\tau_{k,2}}(x,\tilde{x})$ for some $\tau_{k}:=(\tau_{k,1}',\tau_{k,2}')' \in \mathbb{R}^{d_{\tau_{k}}}$. Factorizing the posterior into the product 
\begin{align*}
\pi_{n,k}(B_{n,k},\tau_{k}|Y^{(n)}) = \pi_{n,k}(B_{\mathcal{I}_{k}^{c},k}|Y^{(n)},B_{\mathcal{I}_{k},k},\tau_{k})\pi_{n,k}(B_{\mathcal{I}_{k},k},\tau_{k}|Y^{(n)}),    
\end{align*} 
accommodating hyperparameter selection requires sampling from $\pi_{n,k}(B_{\mathcal{I}_{k},k},\tau_{k}|Y^{(n)})$ and then $\pi_{n,k}(B_{\mathcal{I}_{k}^{c},k}|Y^{(n)},B_{\mathcal{I}_{k},k},\tau_{k})$. The former requires a minor modification to the Gibbs sampler that, importantly, preserves the tractability of P\'{o}lya-Gamma data augmentation (see Appendix \ref{ap:hyperparameters}). Generating draws from $\pi_{n,k}(B_{\mathcal{I}_{k}^{c},k}|Y^{(n)},B_{\mathcal{I}_{k},k},\tau_{k})$ is identical to $\pi_{n,k}(B_{\mathcal{I}_{k}^{c},k}|Y^{(n)},B_{\mathcal{I}_{k},k})$, except that $\mu_{n,k}$ and $\kappa_{n,k}$ are adjusted to reflect the sampled $\tau_{k}$.
\end{remark}
\begin{remark}[Discrete Covariates]\label{remark:discrete}
Suppose that $x_{i} = (x_{i,c}',x_{i,d}')'$, where $x_{i,d} \in \{x_{1},...,x_{G}\}$ with $G < \infty$. In such case, $q_{k}(x_{i}) = \prod_{g=1}^{G}q_{k,g}(x_{i,c})^{\mathbf{1}\{x_{i,d}=x_{g}\}}$ for each $k=1,...,K-1$, and, $L_{n}^{*}(q) =\prod_{i=1}^{n}\prod_{k=1}^{K-1}\prod_{g=1}^{G}q_{k,g}(x_{i,c})^{\mathbf{1}\{x_{i,d}=g,Y_{i}= k\}}(1-q_{k,g}(x_{i,c})^{\mathbf{1}\{x_{i,d}=g,Y_{i} > k\}}$. Consequently, by writing $q_{k,g}(x_{i,c}) = \Lambda(B_{k,g}(x_{i,c}))$ and assuming that $B_{k,g} \overset{ind}{\sim} GP(\mu_{k,g},\kappa_{k,g})$ for each $k,g$, we can apply the same algorithm as above, except now with an additional factorization across discrete covariate values $\{x_{g}\}_{g=1}^{G}$.
\end{remark}
\subsection{Simulation.}
We demonstrate the logistic stick-breaking process in a simulated dynamic panel binary choice model.
\subsubsection{Data-Generating Process and Identified Sets.} The data-generating process imposes that a binary outcome $Y_{t}$ at time $t$ is related to a contemporaneous covariate $X_{t}$ and the previous period's outcome $Y_{t-1}$ via the threshold-crossing model,
\begin{align*}
    Y_{t} = \mathbf{1}\{\beta t + X_{t} + \theta Y_{t-1} \geq \alpha + U_{t}\}, \quad t=1,2
\end{align*}
where $\beta = 1$, $\theta = 0.5$, $\alpha | X_{1},X_{2} \sim \mathcal{N}(0,1)$, $(U_{1},U_{2})|\alpha,X_{1},X_{2} \sim \mathcal{N}(0_{2},\Sigma_{U})$ with correlation matrix $\Sigma_{U}$ having off-diagonal equal to $0.5$, and $Y_{0}|X_{1},X_{2}\sim Bernoulli(0.5)$. We generate covariates according to the rule $X_{1},X_{2} \overset{iid}{\sim}U[-2,2]$. Our simulations focus on the posterior for the conditional identified set of $\gamma := (\beta,\theta)'$ under the stationarity assumption
\begin{align*}
U_{1}|X_{1},X_{2},\alpha \sim U_{2}|X_{1},X_{2},\alpha.    
\end{align*}
Conditional stationarity is a common approach to index coefficient identification in semiparametric panel discrete choice models \citep{manski1987semiparametric,shi2018estimating,khan2021inference,pakes2021unobserved,khan2023identification,pakes2024moment}.

The identified set for $\gamma$ is determined by the conditional probability mass function $p$ of $(Y_{0},Y_{1},Y_{2})'$ given $X = (X_{1},X_{2})$. Specifically, Theorem 1 of \cite{khan2023identification} show that the sharp identified set is characterized by inequalities of the form $p(x) \in R \implies c_{R}(x,\gamma) \geq 0$, where $R$ is some restriction and $c_{R}(x,\gamma)$ is a known transformation of $X$ that depends on $R$. One illustrative inequality is
\begin{align*}
p(Y_{2}=1|X=x) \geq p(Y_{1}=1|X=x) \implies \beta + (x_{2}-x_{1}) + |\theta| \geq 0 
\end{align*}
Hence, the dynamic panel binary choice model falls within our framework by defining $f(x,p(x),\gamma) = (f_{1}(x,p(x),\gamma),...,f_{18}(x,p(x),\gamma))'$, where $f_{j}(x,p(x),\gamma) = \mathbf{1}\{p(x) \in R_{j}\}c_{R_{j}}(x,\gamma)$, the $R_{j}$s are the restrictions that define the identified set, and $J=18$ is the number of restrictions for each $x$.\footnote{Appendix \ref{ap:simulation} lists the full set of inequalities. Importantly, by partitioning the parameter space based on the sign of $\theta$, the identified set can be viewed as the union of convex polytopes (of which fall into Example \ref{ex:linsyst}).} Figure \ref{fig:conditionalidset} presents the conditional identified set for $n=100,500,1000,2000$ based on a sequence of covariates $\{x_{i}\}_{i=1}^{n}$ that we keep fixed throughout the simulations.\footnote{Specifically, we first generate $\{(X_{i1},X_{i2})\}_{i=1}^{2000}$ i.i.d $U([-2,2])^{\otimes 2} $ and then compute $\Gamma_{n,I}(p_{0})$ using the truncated sequence $\{(X_{i1},X_{i2})\}_{i=1}^{n}$. Since the sequence of covariates are i.i.d, Theorem \ref{thm:fullsupport} implies that the conditional identified sets converge to the identified set based on the full support of the uniform distribution. In Appendix \ref{ap:comparisonwithothersets}, we compare the conditional identified set at $n=2000$ with the full support identified set. The latter is modestly smaller.}
\begin{figure}
    \centering
    \includegraphics[width=\linewidth]{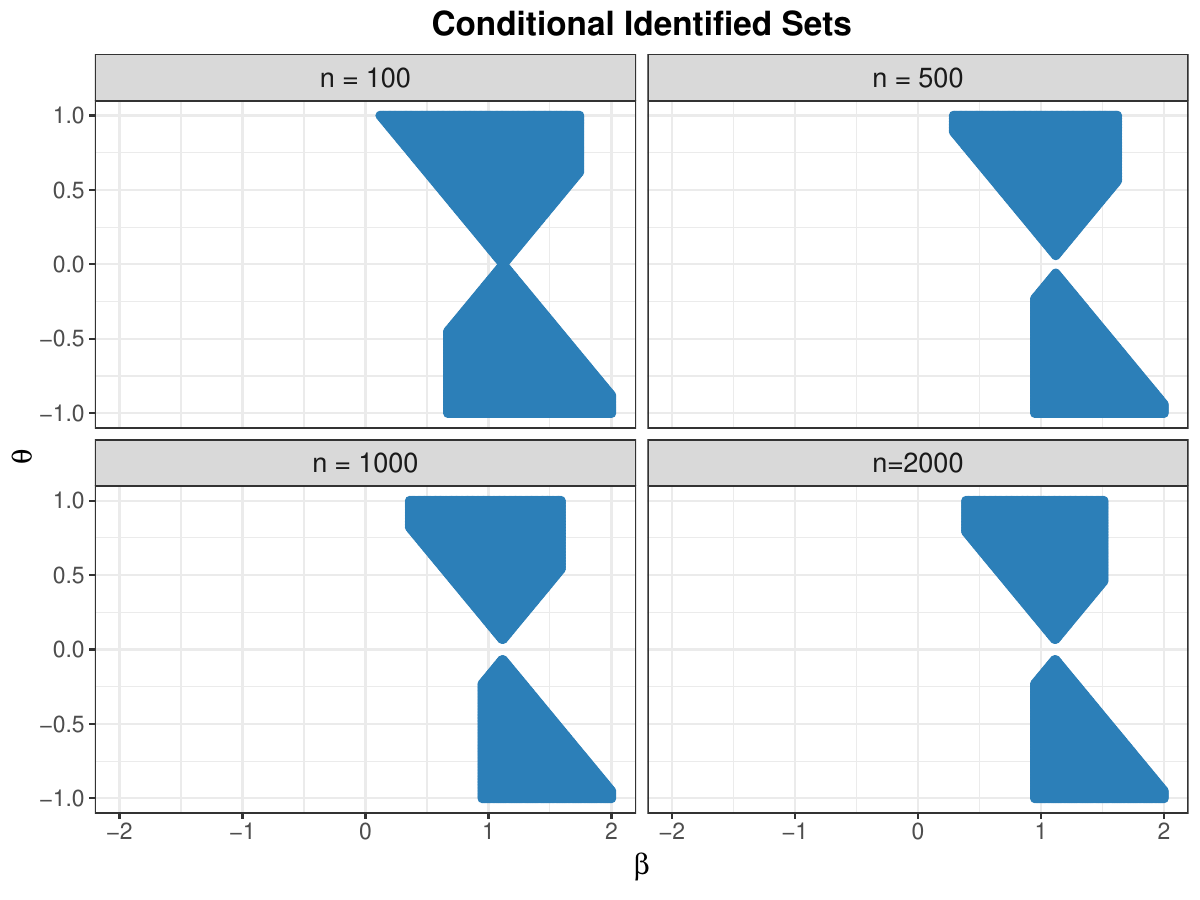}
    \caption{Conditional Identified Sets for Different Sample Sizes}
    \label{fig:conditionalidset}
\end{figure}
\subsubsection{Simulation Results} Conditional on the fixed set of covariates $\{x_{i}\}_{i=1}^{n}$, we simulate $200$ independent panels comprised of $n$ independent cross-sectional units and implement the logistic Gaussian process Gibbs sampler to generate draws from the posterior for the reduced form PMF $p$. Since there are eight possible outcome paths (i.e., $K = |\{0,1\}^{3}| = 8$), we need to specify seven independent GPs to implement the logistic Gaussian process stick-breaking Gibbs sampler. For each $k$, we set $\mu_{k} = 0$ and specify that $\kappa_{k}$ belongs to the Mat\'{e}rn class,
\begin{align}\label{eq:materncovariance}
    \kappa_{k}(x,\tilde{x}) = \frac{2^{1-\alpha_{k}}}{\Gamma(\alpha)}\left(\sqrt{2\alpha_{k}}\frac{||x-\tilde{x}||_{2}}{l_{k}}\right)^{\alpha_{k}}C_{\alpha_{k}}\left(\sqrt{2\alpha_{k}}\frac{||x-\tilde{x}||_{2}}{l_{k}}\right),
\end{align}
where $x = (x_{1},x_{2})'$, $\alpha_{k} > 0$ is a smoothness parameter, $l_{k}>0$ is a length-scale parameter, and $C_{\alpha_{k}}$ is the modified Bessel function of the second kind.\footnote{We standardize the covariates when evaluating $\kappa_{k}$.} The hyperparameter $\alpha_{k}$ indexes smoothness because the sample paths $B$ are $\lfloor\alpha_{k} \rfloor$-times differentiable \citep{JMLR:v12:vandervaart11a}, while the length-scale hyperparameter $l_{k}$ describes the rate at which the correlation between function values decays across the covariate space. We follow the common practice of setting $\alpha_{k} = 1.5$ for each $k=1,...,7$ \citep{williams2006gaussian}. To select the length scale, we update $l_{k}$ during the first $10,000$ iterations of the sampler (assuming $\log l_{k} \overset{iid}{\sim} \log N(0,1)$), and then fix $(l_{1},...,l_{7})$ at the medians of the draws obtained during the second half of this exploration phase.\footnote{A similar approach to length-scale selection is employed in \cite{kankanala2025generalized}.} We perform $20,000$ iterations of the Gibbs sampling algorithm, discarding the first $16,000$ iterations as burn-in and then thinning every four draws. This results in $1,000$ total draws from the posterior.\footnote{Even if updating $l_{k}$ for the entire chain, our prior is specified so that the product posterior from Section \ref{sec:samplerdescribe} arises, which means that if a user's computer has at least $7$ cores, then the posteriors $B_{n,k}|Y^{(n)}$, $k=1,...,7$, can be sampled completely in parallel, effectively only \textit{one} Gibbs sampling chain is performed.}

Using the output from the Gibbs sampler, we compute choice probabilities $(p(x_{1}),...,p(x_{n}))$, and then $\argmin_{\gamma \in [-2,2] \times [-1,1]}Q_{n}^{LP}(\gamma,p)$ to obtain identified set posterior draws, where
\begin{align*}
    Q_{n}^{LP}(\gamma,p) = \frac{1}{n}\sum_{i=1}^{n}\sum_{j=1}^{18}\left|\min\{\mathbf{1}\{p(x_{i}) \in R_{j}\}c_{R_{j}}(x_{i},\gamma),0\}\right|,
\end{align*}
In other words, we report the posterior for $\tilde{\Gamma}_{n,I}(p)=\{\gamma \in \Gamma: \tilde{Q}_{n}^{L}(\gamma,p) = 0\}$. The set of minimizers is found by evaluating $Q_{n}^{LP}(\gamma,p)$ over a grid in $[-2,2]\times [-1,1]$ with step size $10^{-2}$. Figure \ref{fig:inclusionprobabilities} plot the posterior probabilities that $\beta \in \tilde{B}_{n,I}(p)$ and $\theta \in \tilde{\Theta}_{n,I}(p)$ (averaged over the $200$ simulated datasets), where $\tilde{B}_{n,I}(p)$ and $\tilde{\Theta}_{n,I}(p)$ are the coordinate projections of $\tilde{\Gamma}_{n,I}(p)$ in the $\beta$ and $\theta$ directions, respectively. Comparing this with Figure \ref{fig:conditionalidset}, the identified set places increasing probability on the elements of the conditional identified set as $n$ grows. Table \ref{tab:mad} presents the mean absolute deviation (MAD) of $E[d_{\mathcal{H}}(\tilde{\Gamma}_{n,I}(p),\Gamma_{n,I}(p_{0}))|Y^{(n)}]$ from zero, which, by Markov's inequality, summarizes the amount of mass the posterior for $\tilde{\Gamma}_{n,I}(p)$ places on $d_{\mathcal{H}}$ neighborhoods of $\Gamma_{n,I}(p_{0})$. The results indicate that the posterior becomes more concentrated around $\Gamma_{n,I}(p_{0})$ as $n$ grows.
\begin{figure}[h!]
  \centering
    \includegraphics[width=\linewidth]{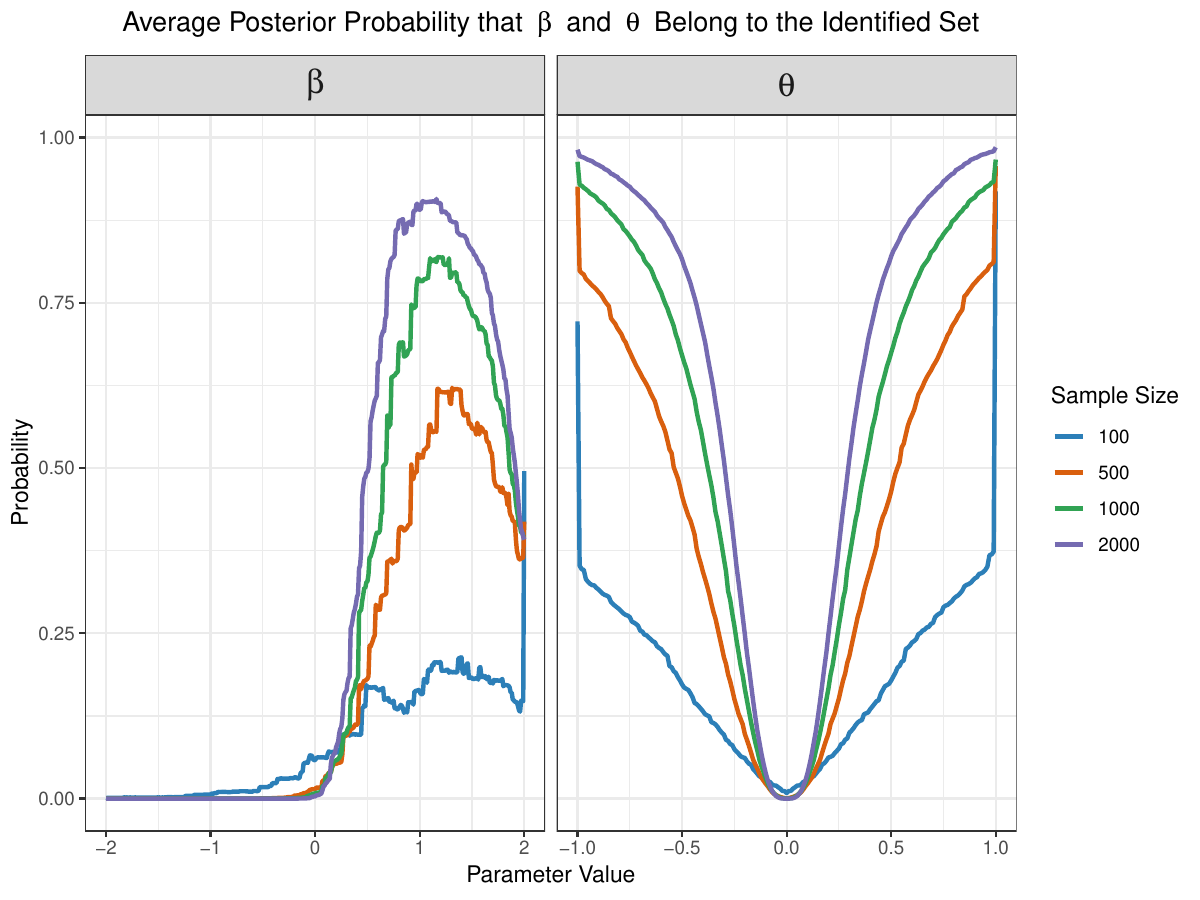}
    \caption{Posterior Inclusion Probabilities}
    \label{fig:inclusionprobabilities}
\end{figure}
\begin{table}[h!]
\centering
\label{tab:hausdorff}
\begin{tabular}{lllll}
& $n=100$ & $n=500$ & $n=1000$ & $n=2000$ \\ \cline{2-5}
\multicolumn{1}{l|}{MAD} &   $1.733$      & $0.995$        &    $0.705$      &  $0.548$\\
 & & & &
\end{tabular}
\caption{MAD for Different Sample Sizes}\label{tab:mad}
\end{table}

\section{Posterior Consistency for the Identified Set}\label{sec:posteriorconsistency}
This section derives conditions under which $\Gamma_{n,I}|Y^{(n)}$ concentrates around $\Gamma_{n,I}(p_{0})$ in large samples, where $p_{0}$ is the true conditional distribution of $Y$ given $X$.
\subsection{General Theorem.} This subsection presents a general posterior consistency theorem for $\Gamma_{n,I}|Y^{(n)}$ that applies to any prior $\Pi$ and criterion $Q_{n}$ satisfying some conditions. We start with some assumptions. For notation, let $P_{0,X}^{(\infty)}$ denote the probability law of $\{X_{i}\}_{i \geq 1}$, let $\text{dist}(x,A)$ denote the Euclidean distance between a point $x$ and a set $A$, and let $\Gamma_{n,I}^{\varepsilon} (p)= \{\gamma \in \Gamma: \text{dist}(\gamma,\Gamma_{n,I}(p)) < \varepsilon\}$ for each $\varepsilon > 0$ and $p \in \mathcal{P}$.
\begin{assumption}[DGP]\label{as:dgp}
Conditional on $P_{0,X}^{(\infty)}$-almost every fixed realization $\{x_{i}\}_{i\geq 1}$ of $\{X_{i}\}_{i \geq 1}$, the following hold:
\begin{enumerate}
    \item Independent Outcomes: the sequence $\{Y_{i}\}_{i \geq 1}$ satisfies $Y_{i} \overset{ind}{\sim} Categorical(p_{0}(x_{i}))$ for every $i \geq 1$, where $p_{0} \in \mathcal{P}$.
    \item Correct Specification and Well-Separated Criterion:
    \begin{enumerate}
        \item The true conditional identified set $\Gamma_{n,I}(p_{0})$ satisfies $\Gamma_{n,I}(p_{0}) \neq \emptyset$ for all $n \geq 1$.
        \item For every $\varepsilon >0$, there exists $\xi > 0$ and $N \geq 1$ such that $Q_{n}(\gamma,p_{0}) \geq \xi$ for all $\gamma \in \Gamma \setminus \Gamma_{n,I}^{\varepsilon}(p_{0})$ and all $n \geq N$ and $\{\gamma \in \Gamma: Q_{n}(\gamma,p_{0}) \geq \xi\}$ is compact subset of $\Gamma$ for all $n \geq N$. 
    \end{enumerate}
\end{enumerate}
\end{assumption}
\begin{assumption}[Prior]\label{as:prior}
For $P_{0,X}^{(\infty)}$-almost every fixed realization $\{x_{i}\}_{i \geq 1}$ of $\{X_{i}\}_{ i\geq 1}$, there is a sequence of semimetrics $\{d_{n}\}_{n \geq 1}$ over $\mathcal{P}$, sequences $\{\delta_{n}\}_{n \geq 1}$ with $\delta_{n} \downarrow 0$ as $n\rightarrow \infty$, and sieves $\{\mathcal{P}_{n}\}_{n \geq 1} \subseteq \mathcal{P}$ such that
\begin{enumerate}
    \item \textit{Posterior Concentration:}
    \begin{enumerate}
            \item $\Pi_{n}(p \in \mathcal{P}_{n}|Y^{(n)}) \rightarrow 1$ in $P^{(n)}_{0}$-probability as $n\rightarrow \infty$.
        \item $\Pi_{n}(d_{n}(p,p_{0})< \delta_{n}|Y^{(n)}) \rightarrow 1$ in $P^{(n)}_{0}$-probability as $n\rightarrow \infty$.
    \end{enumerate}
    \item \textit{Uniform Convergence:} the sequence of criterion function $\{Q_{n}\}_{n \geq 1}$ is such that
    \begin{align*}
    \limsup_{n\rightarrow \infty}\sup_{(\gamma,p) \in \Gamma \times V_{n,\delta_{n}}}|Q_{n}(\gamma,p)-Q_{n}(\gamma,p_{0})| =0,
    \end{align*}
    where $V_{n,\delta_{n}} = \mathcal{P}_{n} \cap \{p: d_{n}(p,p_{0}) < \delta_{n}\}$.
  \item \textit{Lower Hemicontinuity:} for every $\varepsilon >0$, there exists $N \geq 1$ such that $p \in V_{n,\delta_{n}}$ and $n \geq N$ implies $\Gamma_{n,I}(p) \cap B(\gamma,\varepsilon) \neq \emptyset$ for all $\gamma \in \Gamma_{n,I}(p_{0})$.
\end{enumerate} 
\end{assumption}
We discuss the assumptions. Assumption \ref{as:dgp}.1 requires that the conditional sampling model be correctly specified. Assumption \ref{as:dgp}.2 imposes that the identified set $\Gamma_{n,I}(p_{0})$ is nonempty and, for large $n$, forms a well-separated set of minimizers of the criterion function $Q_{n}(\cdot,p_{0})$.\footnote{The compactness of $\{\gamma \in \Gamma: Q_{n}(\gamma,p_{0}) \geq \xi\}$ is a technical measurability condition. It holds, for instance, if lower semicontinuity of $Q_{n}(\gamma,p_{0})$ is strengthened to continuity in $\gamma$.} Separation conditions are often imposed for consistency of estimators under both point identification \citep{newey1994large} and partial identification \citep{chernozhukov2007estimation}. We treat misspecification (i.e., $\Gamma_{n,I}(p_{0}) = \emptyset$) momentarily.  Assumption \ref{as:prior}.1 requires that the posterior concentrates on (possibly) infinite-dimensional sieves $\mathcal{P}_{n}$ and neighborhoods of $p_{0}$ with respect to some semimetric $d_{n}$ at rate $\delta_{n}$. Assumption \ref{as:prior}.2 states the criterion functions $Q_{n}(\gamma,p)$ converges uniformly to $Q_{n}(\gamma,p_{0})$ over $\Gamma \times V_{n,\delta_{n}}$, while Assumption \ref{as:prior}.3 is a lower hemicontinuity-type condition on the correspondence $p \mapsto \Gamma_{n,I}(p)$. Section \ref{sec:suff1} provides low-level sufficient conditions for Assumptions \ref{as:prior}.2 and \ref{as:prior}.3, while Section \ref{sec:GPverification} verifies Assumption \ref{as:prior}.1 for the logistic stick-breaking Gaussian process prior.
\begin{theorem}\label{thm:consistency}
Suppose Assumptions \ref{as:samplingmodel}--\ref{as:prior} hold. Then, for $P_{0,X}^{(\infty)}$-almost every fixed realization $\{x_{i}\}_{i \geq 1}$ of $\{X_{i}\}_{i \geq 1}$ and for every $\varepsilon > 0$,
\begin{align*}
\Pi_{n}\left(d_{\mathcal{H}}(\Gamma_{n,I}(p),\Gamma_{n,I}(p_{0})) \geq \varepsilon \middle|Y^{(n)}\right) \overset{P_{0}^{(n)}}{\longrightarrow} 0,    
\end{align*} 
as $n\rightarrow \infty$, where $d_{\mathcal{H}}$ is the Hausdorff distance and $P_{0}^{(n)} = \bigotimes_{i=1}^{n}Categorical(p_{0}(x_{i}))$.
\end{theorem}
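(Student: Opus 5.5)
The plan is a deterministic argument on the set $V_{n,\delta_n}$ followed by a probabilistic step. Fix a realization $\{x_i\}_{i\ge1}$ in the probability-one set on which Assumptions \ref{as:dgp} and \ref{as:prior} hold, and fix $\varepsilon>0$. The aim is to show there is $N$ such that, for all $n\ge N$ and all $p\in V_{n,\delta_n}$,
\[
d_{\mathcal{H}}(\Gamma_{n,I}(p),\Gamma_{n,I}(p_0))<\varepsilon .
\]
Granting this, the event $\{d_{\mathcal{H}}(\Gamma_{n,I}(p),\Gamma_{n,I}(p_0))\ge\varepsilon\}$ is contained in $V_{n,\delta_n}^c\subseteq\mathcal{P}_n^c\cup\{d_n(p,p_0)\ge\delta_n\}$. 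Its posterior probability is then bounded by $\Pi_n(p\notin\mathcal{P}_n|Y^{(n)})+\Pi_n(d_n(p,p_0)\ge\delta_n|Y^{(n)})$, and both terms tend to zero in $P_0^{(n)}$-probability by Assumption \ref{as:prior}.1. Measurability of the Hausdorff event follows from Theorem \ref{thm:randomset}. To be safe one can use outer probability, or express the event through hitting and missing events of compact sets; the compactness of the superlevel sets in Assumption \ref{as:dgp}.2(b) is there for this purpose.

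The Hausdorff distance splits into two one-sided pieces, and the first piece is the hard one. It is $\sup_{\gamma\in\Gamma_{n,I}(p)}\operatorname{dist}(\gamma,\Gamma_{n,I}(p_0))$, which says that $\Gamma_{n,I}(p)$ is not much larger than the truth. Take $\xi,N_1$ from Assumption \ref{as:dgp}.2(b) at radius $\varepsilon/2$. By Assumption \ref{as:prior}.2 there is $N_2$ with
\[
\sup_{\Gamma\times V_{n,\delta_n}}|Q_n(\gamma,p)-Q_n(\gamma,p_0)|<\xi/2 \quad\text{for } n\ge N_2 .
\]
Let $n\ge\max(N_1,N_2)$ and $\gamma\in\Gamma_{n,I}(p)$, so that $Q_n(\gamma,p)=0$. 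Then $Q_n(\gamma,p_0)<\xi/2<\xi$, which forces $\gamma\in\Gamma_{n,I}^{\varepsilon/2}(p_0)$. Hence this piece is at most $\varepsilon/2$. If $\Gamma_{n,I}(p)=\emptyset$ the piece is vacuous. That case is handled by the second piece, which forces nonemptiness because $\Gamma_{n,I}(p_0)\neq\emptyset$ by Assumption \ref{as:dgp}.2(a).

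The second piece is $\sup_{\gamma\in\Gamma_{n,I}(p_0)}\operatorname{dist}(\gamma,\Gamma_{n,I}(p))$, which says that $\Gamma_{n,I}(p)$ is not much smaller than the truth. Lower hemicontinuity, Assumption \ref{as:prior}.3 applied with $\varepsilon/2$, gives $N_3$ such that for $n\ge N_3$ and $p\in V_{n,\delta_n}$, every $\gamma\in\Gamma_{n,I}(p_0)$ has a point of $\Gamma_{n,I}(p)$ within $\varepsilon/2$. So this piece is at most $\varepsilon/2$, and $\Gamma_{n,I}(p)\neq\emptyset$.

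Combining the two pieces gives $d_{\mathcal{H}}\le\varepsilon/2<\varepsilon$ for $n\ge N:=\max(N_1,N_2,N_3)$, uniformly on $V_{n,\delta_n}$. I expect the main obstacle to be the measurability and bookkeeping of the almost-sure realization, not the analysis. The key analytic step is the one in the first piece, where uniform closeness of the criteria and the well-separation margin $\xi$ convert $Q_n(\gamma,p)=0$ into proximity of $\gamma$ to $\Gamma_{n,I}(p_0)$.
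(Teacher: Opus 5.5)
Your proposal is correct and follows the paper's proof. The paper likewise reduces to $V_{n,\delta_n}$ via Assumption \ref{as:prior}.1 and splits the Hausdorff distance into its two one-sided pieces, bounding the first through well-separation (Assumption \ref{as:dgp}.2(b)) plus uniform convergence of $Q_n$, and the second, together with nonemptiness of $\Gamma_{n,I}(p)$, through lower hemicontinuity; your deterministic inclusion of the bad event in $V_{n,\delta_n}^{c}$ for large $n$ is simply a cleaner packaging of the paper's successive conditioning on $V_{n,\delta_n}$ and on $\{\Gamma_{n,I}(p)\neq\emptyset\}$.
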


There are two immediate corollaries of Theorem \ref{thm:consistency}. First, the posterior is consistent for the identified set $G_{n,I}(p_{0})$ of continuous transformations of $\gamma$ (of which a special case is the identified set for a subvector of $\gamma$). Second, the posterior is consistent for the full support identified set $\Gamma_{I}(p_{0})$ if, additionally, the conditions of Theorem \ref{thm:fullsupport} hold.
\begin{corollary}\label{cor:subvectors}
Suppose Assumptions \ref{as:samplingmodel} -- \ref{as:prior} and $g: \Gamma \rightarrow \mathbb{R}^{d_{g}}$ is continuous. Then, for $P_{0,X}^{(\infty)}$-almost every fixed realization $\{x_{i}\}_{i \geq 1}$ of $\{X_{i}\}_{i \geq 1}$ and every $\varepsilon > 0$, the following is true:
\begin{align*}
\Pi_{n}\left(d_{\mathcal{H}}(G_{n,I}(p),G_{n,I}(p_{0})) \geq \varepsilon \middle|Y^{(n)}\right) \overset{P_{0}^{(n)}}{\longrightarrow} 0,    
\end{align*} 
as $n\rightarrow \infty$, where $d_{\mathcal{H}}$ denotes the Hausdorff distance.
\end{corollary}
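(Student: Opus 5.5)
Corollary \ref{cor:subvectors} follows from Theorem \ref{thm:consistency} by a deterministic continuity argument: we show that the map $A \mapsto g(A)$, on nonempty closed subsets of the compact set $\Gamma$, is uniformly continuous in the Hausdorff metric. Since $\Gamma$ is compact (Assumption \ref{as:parameterspace}), $g$ is uniformly continuous on $\Gamma$. Fix $\varepsilon>0$ and choose $\eta>0$ such that $\|\gamma-\tilde\gamma\|_2<\eta$ with $\gamma,\tilde\gamma\in\Gamma$ implies $\|g(\gamma)-g(\tilde\gamma)\|_2<\varepsilon/2$.

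\textbf{Step 1 (deterministic inclusion).} Suppose $A,B\subseteq\Gamma$ are nonempty and $d_{\mathcal H}(A,B)<\eta$. Take any point $g(\gamma)$ with $\gamma\in A$. There is $\tilde\gamma\in B$ with $\|\gamma-\tilde\gamma\|_2<\eta$, so $\|g(\gamma)-g(\tilde\gamma)\|_2<\varepsilon/2$. By symmetry the same holds with the roles of $A$ and $B$ swapped. Taking suprema gives $d_{\mathcal H}(g(A),g(B))\le\varepsilon/2<\varepsilon$. Hence
\begin{align*}
\{p: d_{\mathcal H}(G_{n,I}(p),G_{n,I}(p_0))\ge\varepsilon\}\subseteq\{p: d_{\mathcal H}(\Gamma_{n,I}(p),\Gamma_{n,I}(p_0))\ge\eta\}
\end{align*}
for every $p$ with $\Gamma_{n,I}(p)\neq\emptyset$.

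\textbf{Step 2 (empty sets).} Assumption \ref{as:dgp}.2(a) guarantees $\Gamma_{n,I}(p_0)\neq\emptyset$. If instead $\Gamma_{n,I}(p)=\emptyset$, we adopt the convention $d_{\mathcal H}(\emptyset,B)=\infty$ for nonempty $B$, which is implicit in Theorem \ref{thm:consistency}. Then $G_{n,I}(p)=\emptyset$ as well, and both distances are infinite, so the inclusion above still holds.

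\textbf{Step 3 (measurability and conclusion).} Both events are measurable: Corollary \ref{cor:functionrandomset} and Theorem \ref{thm:randomset} make $G_{n,I}$ and $\Gamma_{n,I}$ random closed sets, so their Hausdorff distances to fixed compact sets are measurable. If one wishes to avoid this step, the statement can be read with outer probability. Monotonicity of the posterior then gives
\begin{align*}
\Pi_n\bigl(d_{\mathcal H}(G_{n,I}(p),G_{n,I}(p_0))\ge\varepsilon\,\big|\,Y^{(n)}\bigr)\le\Pi_n\bigl(d_{\mathcal H}(\Gamma_{n,I}(p),\Gamma_{n,I}(p_0))\ge\eta\,\big|\,Y^{(n)}\bigr).
\end{align*}
By Theorem \ref{thm:consistency} applied with $\eta$ in place of $\varepsilon$, the right-hand side converges to zero in $P_0^{(n)}$-probability, for $P_{0,X}^{(\infty)}$-almost every fixed realization $\{x_i\}_{i\geq 1}$.

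The only step needing care is the handling of empty sets and the measurability of the events. The rest is the standard fact that uniformly continuous maps induce Hausdorff-continuous set maps, where uniform continuity comes from the compactness of $\Gamma$.
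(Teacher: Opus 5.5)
Your proof is correct and follows the paper's argument: uniform continuity of $g$ on the compact set $\Gamma$ turns Hausdorff closeness of $\Gamma_{n,I}(p)$ and $\Gamma_{n,I}(p_{0})$ into Hausdorff closeness of their images, and applying Theorem \ref{thm:consistency} at the smaller radius finishes the proof. Your explicit handling of an empty $\Gamma_{n,I}(p)$, of measurability, and of the $\varepsilon/2$ margin (which keeps the inequality strict after taking suprema) covers details that the paper's proof passes over.
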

\begin{corollary}\label{cor:fullsupportidset}
Suppose Assumptions \ref{as:samplingmodel}--\ref{as:prior} hold and the assumptions of Theorem \ref{thm:fullsupport} hold too. Then, for $P_{0,X}^{(\infty)}$-almost every fixed realization $\{x_{i}\}_{i \geq 1}$ of $\{X_{i}\}_{i \geq 1}$ and every $\varepsilon >0$,
\begin{align*}
    \Pi_{n}\left(d_{\mathcal{H}}\left(\Gamma_{n,I}(p),\Gamma_{I}(p_{0})\right) \geq \varepsilon \middle | Y^{(n)}\right) \overset{P_{0}^{(n)}}{\longrightarrow} 0
\end{align*}
as $n\rightarrow \infty$. If, additionally, $g: \mathbb{R}^{d_{\gamma}}\rightarrow \mathbb{R}$ is continuous, then, for $P_{0,X}^{(\infty)}$-almost almost every fixed realization $\{x_{i}\}_{i \geq 1}$ of $\{X_{i}\}_{i \geq 1}$ and every $\varepsilon > 0$,
\begin{align*}
\Pi_{n}\left(d_{\mathcal{H}}(G_{n,I}(p),G_{I}(p_{0})) \geq \varepsilon \middle|Y^{(n)}\right) \overset{P_{0}^{(n)}}{\longrightarrow} 0,    
\end{align*} 
as $n\rightarrow \infty$, where $d_{\mathcal{H}}$ denotes Hausdorff distance.
\end{corollary}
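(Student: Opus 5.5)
The argument combines Theorem \ref{thm:consistency} with Theorem \ref{thm:fullsupport} through the triangle inequality for the Hausdorff distance, and then handles the transformation $g$ by a uniform-continuity argument.

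\textbf{First claim.} Apply Theorem \ref{thm:fullsupport} with $p = p_{0}$. Its conditions are assumed, and its conclusion is a property of the covariate sequence only. So there is a $P_{0,X}^{(\infty)}$-probability-one set $\Omega_{1}$ of realizations $\{x_{i}\}_{i\geq1}$ on which
\begin{align*}
d_{\mathcal{H}}(\Gamma_{n,I}(p_{0}),\Gamma_{I}(p_{0}))\rightarrow 0 .
\end{align*}
Theorem \ref{thm:consistency} supplies a probability-one set $\Omega_{2}$ on which its conclusion holds. We work on $\Omega_{1}\cap\Omega_{2}$, which still has probability one.

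Fix $\varepsilon>0$ and a realization in $\Omega_{1}\cap\Omega_{2}$. Choose $N$ such that $d_{\mathcal{H}}(\Gamma_{n,I}(p_{0}),\Gamma_{I}(p_{0}))<\varepsilon/2$ for all $n\geq N$. The sets $\Gamma_{n,I}(p_{0})$ are nonempty by Assumption \ref{as:dgp}.2(a), and $\Gamma_{I}(p_{0})$ is nonempty by condition 6 of Theorem \ref{thm:fullsupport}. Hence the Hausdorff triangle inequality applies to nonempty subsets of the compact set $\Gamma$. It gives, for $n\geq N$, the event inclusion
\begin{align*}
\{d_{\mathcal{H}}(\Gamma_{n,I}(p),\Gamma_{I}(p_{0}))\geq\varepsilon\}\subseteq\{d_{\mathcal{H}}(\Gamma_{n,I}(p),\Gamma_{n,I}(p_{0}))\geq\varepsilon/2\}.
\end{align*}
The posterior probability of the right-hand event tends to zero in $P_{0}^{(n)}$-probability by Theorem \ref{thm:consistency}, which proves the first display.

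One point needs care. If $\Gamma_{n,I}(p)=\emptyset$, we use the convention $d_{\mathcal{H}}=\infty$. Then the event on the left also lies in the right-hand event, so the inclusion still holds. Measurability of these events is not an issue: we only need an upper bound through an outer probability, or we can appeal to the random closed set structure from Theorem \ref{thm:randomset}.

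\textbf{Second claim.} The key step is to show that continuity of $g$ lets Hausdorff convergence pass to images. Since $\Gamma$ is compact, $g$ is uniformly continuous on $\Gamma$. So for every $\varepsilon>0$ there is an $\eta>0$ such that $\|\gamma-\tilde\gamma\|_{2}<\eta$ implies $|g(\gamma)-g(\tilde\gamma)|<\varepsilon$.

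Now take nonempty $A,B\subseteq\Gamma$ with $d_{\mathcal{H}}(A,B)<\eta$. Each $\gamma\in A$ has some $\tilde\gamma\in B$ within distance $\eta$. Therefore $g(\gamma)$ lies within $\varepsilon$ of $g(B)$, and the same holds with $A$ and $B$ swapped. This yields
\begin{align*}
d_{\mathcal{H}}(g(A),g(B))\leq\varepsilon .
\end{align*}
Equivalently,
\begin{align*}
\{d_{\mathcal{H}}(g(A),g(B))>\varepsilon\}\subseteq\{d_{\mathcal{H}}(A,B)\geq\eta\}.
\end{align*}

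Apply this with $A=\Gamma_{n,I}(p)$ and $B=\Gamma_{I}(p_{0})$, noting that $G_{n,I}(p)=g(\Gamma_{n,I}(p))$ and $G_{I}(p_{0})=g(\Gamma_{I}(p_{0}))$. Replacing $\varepsilon$ by $\varepsilon/2$ converts the strict inequality into the required $\geq\varepsilon$. The first claim, applied with tolerance $\eta$, then shows that the posterior probability of $\{d_{\mathcal{H}}(G_{n,I}(p),G_{I}(p_{0}))\geq\varepsilon\}$ tends to zero in $P_{0}^{(n)}$-probability.

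\textbf{Main obstacle.} The main difficulty is bookkeeping rather than substance, and it has two parts. First, the almost-sure sets from Theorem \ref{thm:fullsupport} and Theorem \ref{thm:consistency} must be intersected before the deterministic triangle-inequality step is made. Second, empty sets must be handled consistently under the Hausdorff convention, so that the event inclusions remain valid when $\Gamma_{n,I}(p)=\emptyset$.
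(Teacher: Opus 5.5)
Your proposal is correct. The first claim is proved exactly as in the paper. You fix a realization on which Theorem \ref{thm:fullsupport} (at $p=p_{0}$) gives $d_{\mathcal{H}}(\Gamma_{n,I}(p_{0}),\Gamma_{I}(p_{0}))\to 0$ and on which Theorem \ref{thm:consistency} holds, and then apply the Hausdorff triangle inequality with tolerance $\varepsilon/2$.

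For the second claim you take a different and shorter route than the paper.

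The paper triangulates at the level of images. It bounds $d_{\mathcal{H}}(G_{n,I}(p),G_{I}(p_{0}))$ by $d_{\mathcal{H}}(G_{n,I}(p),G_{n,I}(p_{0}))+d_{\mathcal{H}}(G_{n,I}(p_{0}),G_{I}(p_{0}))$ and controls the first term with Corollary \ref{cor:subvectors}. It then proves almost-sure convergence $G_{n,I}(p_{0})\to G_{I}(p_{0})$ separately. That step includes checking that $\{x_{i}\}_{i\geq1}\mapsto G_{n,I}(p_{0})$ is a random closed set on $(\mathcal{X}^{\infty},\mathscr{X}^{\infty})$.

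You instead prove the deterministic inclusion $\{d_{\mathcal{H}}(g(A),g(B))\geq\varepsilon\}\subseteq\{d_{\mathcal{H}}(A,B)\geq\eta\}$ once, from uniform continuity of $g$ on $\Gamma$. You then push the already-established first claim through it. This has three advantages:
\begin{itemize}
\item it reuses the null set from the first claim;
\item it needs neither Corollary \ref{cor:subvectors} nor the extra measurability argument for $G_{n,I}(p_{0})$;
\item it invokes uniform continuity once rather than twice.
\end{itemize}
The paper's version buys modularity instead: it reruns Step 1 verbatim with $G$ in place of $\Gamma$, so the image-level results stand on their own.

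Your convention $d_{\mathcal{H}}=\infty$ for empty $\Gamma_{n,I}(p)$ matches the paper's definition of $d_{\mathcal{H}}$, since an infimum over the empty set is $+\infty$. Your image-level inclusion also holds trivially in that case, because $g(\emptyset)=\emptyset$.
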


Theorem \ref{thm:consistency}'s assumptions (or slight modifications of them) also imply some asymptotic properties under misspecification. Specifically, the posterior probability that $\Gamma_{n,I}(p) \neq \emptyset$ is a consistent diagnostic for model misspecification, and the posterior for the pseudo-identified set $\tilde{\Gamma}_{n,I}(p) = \{\gamma \in \Gamma: \tilde{Q}_{n}(\gamma,p)=0\}$ is consistent at $\tilde{\Gamma}_{n,I}(p_{0})$.
\begin{theorem}\label{thm:test}
Suppose Assumptions \ref{as:samplingmodel} -- \ref{as:prior} hold. Then, the following holds for $P_{0,X}^{(\infty)}$-almost every fixed realization $\{x_{i}\}_{i \geq 1}$ of $\{X_{i}\}_{i \geq 1}$,
\begin{enumerate}
    \item If $\liminf_{n\rightarrow \infty}\inf_{\gamma \in \Gamma}Q_{n}(\gamma,p_{0}) > 0$, then $\Pi_{n}(\Gamma_{n,I}(p) = \emptyset | Y^{(n)}) = 1 + o_{P_{0}^{(n)}}(1)$. 
    \item If $\Gamma_{n,I}(p_{0}) \neq \emptyset$ for all $n \geq 1$, then $\Pi_{n}(\Gamma_{n,I}(p) \neq \emptyset | Y^{(n)}) = 1 + o_{P_{0}^{(n)}}(1)$.
\end{enumerate}  
\end{theorem}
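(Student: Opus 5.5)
Both parts reduce to working on the event $\{p \in V_{n,\delta_n}\}$, whose posterior probability tends to one in $P_0^{(n)}$-probability by Assumption \ref{as:prior}.1. Indeed,
\[
\Pi_n(p\notin V_{n,\delta_n}\mid Y^{(n)}) \le \Pi_n(p\notin\mathcal{P}_n\mid Y^{(n)})+\Pi_n(d_n(p,p_0)\ge\delta_n\mid Y^{(n)}) = o_{P_0^{(n)}}(1).
\]
It therefore suffices to show a deterministic statement: for all $n$ large enough, every $p\in V_{n,\delta_n}$ has $\Gamma_{n,I}(p)=\emptyset$ in case 1 and $\Gamma_{n,I}(p)\neq\emptyset$ in case 2. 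Throughout we fix a realization $\{x_i\}_{i\ge1}$ in the full-measure set on which Assumptions \ref{as:samplingmodel}--\ref{as:prior} hold. The events $\{\Gamma_{n,I}(p)=\emptyset\}$ are $\mathscr{P}$-measurable by Theorem \ref{thm:randomset} with $A=\Gamma$, so the posterior probabilities are well defined.

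\textbf{Part 1.} Set $c:=\liminf_{n}\inf_{\gamma}Q_n(\gamma,p_0)>0$. Then there is $N_1$ such that $\inf_{\gamma\in\Gamma}Q_n(\gamma,p_0)\ge c/2$ for all $n\ge N_1$. By the uniform convergence in Assumption \ref{as:prior}.2, there is $N_2$ such that
\[
\sup_{(\gamma,p)\in\Gamma\times V_{n,\delta_n}}|Q_n(\gamma,p)-Q_n(\gamma,p_0)|<c/4 \quad\text{for all } n\ge N_2.
\]
Hence, for $n\ge\max\{N_1,N_2\}$ and every $p\in V_{n,\delta_n}$, we have $Q_n(\gamma,p)>c/4>0$ for all $\gamma\in\Gamma$. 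Since $\Gamma_{n,I}(p)$ is the zero set of $Q_n(\cdot,p)$, it is empty. Consequently
\[
\Pi_n(\Gamma_{n,I}(p)=\emptyset\mid Y^{(n)})\ge \Pi_n(p\in V_{n,\delta_n}\mid Y^{(n)})\to1
\]
in probability. This part needs only Assumption \ref{as:prior}.1--2, not the correct-specification parts of Assumption \ref{as:dgp}.

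\textbf{Part 2.} Take any $\varepsilon>0$, say $\varepsilon=1$. By the lower hemicontinuity in Assumption \ref{as:prior}.3, there is $N$ such that for every $n\ge N$ and every $p\in V_{n,\delta_n}$, we have $\Gamma_{n,I}(p)\cap B(\gamma,\varepsilon)\neq\emptyset$ for all $\gamma\in\Gamma_{n,I}(p_0)$. Because $\Gamma_{n,I}(p_0)\neq\emptyset$ by hypothesis, at least one such $\gamma$ exists, and therefore $\Gamma_{n,I}(p)\neq\emptyset$. The same bound as in Part 1 then gives $\Pi_n(\Gamma_{n,I}(p)\neq\emptyset\mid Y^{(n)})\to1$.

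The only delicate point is the hypothesis of Part 1, which is a misspecified regime. There Assumption \ref{as:dgp}.2(a) contradicts the premise, so that assumption must be read as not in force. The argument above deliberately avoids it and uses only the sampling-model and prior conditions. Part 2 is essentially immediate from lower hemicontinuity once nonemptiness of $\Gamma_{n,I}(p_0)$ is in hand.
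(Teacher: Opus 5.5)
Your proof is correct and follows essentially the same route as the paper: you restrict to the sieved neighborhood $V_{n,\delta_n}$ via Assumption \ref{as:prior}.1, then in Part 1 you use the uniform convergence of Assumption \ref{as:prior}.2 to keep $Q_n(\cdot,p)$ bounded away from zero, and in Part 2 you use lower hemicontinuity (Assumption \ref{as:prior}.3 with $\varepsilon=1$) together with $\Gamma_{n,I}(p_0)\neq\emptyset$. Your direct inclusion $V_{n,\delta_n}\subseteq\{p:\Gamma_{n,I}(p)=\emptyset\}$ for large $n$ packages the paper's conditioning step a little more cleanly, and you are right that Part 1 must be read without Assumption \ref{as:dgp}.2(a), since that assumption would make the hypothesis vacuous; the paper's argument never uses it either.
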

\begin{theorem}\label{thm:misspec_consistency}
Suppose Assumptions \ref{as:samplingmodel}--\ref{as:prior} hold, except with the following modifications:
\begin{enumerate}
    \item Modification of Assumption \ref{as:dgp}.2: for every $\varepsilon >0$, there exists $\xi > 0$ and $N \geq 1$ such that $\tilde{Q}_{n}(\gamma,p_{0}) \geq \xi$ for all $\gamma \in \Gamma \setminus \tilde{\Gamma}_{n,I}^{\varepsilon}(p_{0})$ and all $n \geq N$, and $\{\gamma \in \Gamma: \tilde{Q}_{n}(\gamma,p_{0}) \geq \xi\}$ is a compact subset of $\Gamma$ for all $n \geq N$.
    \item Modification of Assumption \ref{as:prior}.3: for every $\varepsilon >0$, there exists $N \geq 1$ such that $p \in V_{n,\delta_{n}}$ and $n \geq N$ implies $\tilde{\Gamma}_{n,I}(p) \cap B(\gamma,\varepsilon) \neq \emptyset$ for all $\gamma \in \tilde{\Gamma}_{n,I}(p_{0})$..
\end{enumerate}
Then, for $P_{0,X}^{(\infty)}$-almost every fixed realization $\{x_{i}\}_{i \geq 1}$ of $\{X_{i}\}_{i \geq 1}$ and every $\varepsilon > 0$, the following is true:
\begin{align*}
\Pi_{n}\left(d_{\mathcal{H}}(\tilde{\Gamma}_{n,I}(p),\tilde{\Gamma}_{n,I}(p_{0})) \geq \varepsilon \middle|Y^{(n)}\right) \overset{P_{0}^{(n)}}{\longrightarrow} 0,    
\end{align*} 
as $n\rightarrow \infty$, where $d_{\mathcal{H}}$ denotes the Hausdorff distance.
\end{theorem}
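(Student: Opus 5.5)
The argument mirrors the proof of Theorem \ref{thm:consistency}, with $Q_{n}$ replaced by $\tilde{Q}_{n}$ and $\Gamma_{n,I}$ by $\tilde{\Gamma}_{n,I}$. First I will show that the uniform convergence in Assumption \ref{as:prior}.2 transfers to $\tilde{Q}_{n}$. For $p \in V_{n,\delta_{n}}$, write $\eta_{n} := \sup_{(\gamma,p) \in \Gamma \times V_{n,\delta_{n}}}|Q_{n}(\gamma,p)-Q_{n}(\gamma,p_{0})|$. Then
\begin{align*}
\left|\inf_{\tilde\gamma \in \Gamma}Q_{n}(\tilde\gamma,p)-\inf_{\tilde\gamma \in \Gamma}Q_{n}(\tilde\gamma,p_{0})\right| \leq \eta_{n},
\end{align*}
so $\sup_{\Gamma \times V_{n,\delta_{n}}}|\tilde{Q}_{n}(\gamma,p)-\tilde{Q}_{n}(\gamma,p_{0})| \leq 2\eta_{n} \rightarrow 0$. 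By Assumption \ref{as:parameterspace}, lower semicontinuity, and compactness, $\tilde{\Gamma}_{n,I}(p)$ is nonempty and closed for every $p$. By Theorem \ref{thm:pseudoset} the relevant events are measurable, and by Assumption \ref{as:prior}.1 the posterior mass of $V_{n,\delta_{n}}$ tends to one in $P_{0}^{(n)}$-probability. It therefore suffices to show the following deterministic statement: for every $\varepsilon>0$ there exists $N$ such that $n \geq N$ and $p \in V_{n,\delta_{n}}$ imply $d_{\mathcal{H}}(\tilde{\Gamma}_{n,I}(p),\tilde{\Gamma}_{n,I}(p_{0})) < \varepsilon$.

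The Hausdorff distance splits into two one-sided parts. The first is the inclusion $\tilde{\Gamma}_{n,I}(p) \subseteq \tilde{\Gamma}^{\varepsilon}_{n,I}(p_{0})$. Fix $\varepsilon$ and take $\xi,N$ from Modification 1. Choose $N' \geq N$ such that $2\eta_{n} < \xi$ for $n \geq N'$. Let $\gamma \in \tilde{\Gamma}_{n,I}(p)$, so that $\tilde{Q}_{n}(\gamma,p)=0$. Then $\tilde{Q}_{n}(\gamma,p_{0}) \leq 2\eta_{n} < \xi$, and since every point outside $\tilde{\Gamma}^{\varepsilon}_{n,I}(p_{0})$ has $\tilde{Q}_{n}(\cdot,p_{0})\geq\xi$, this forces $\gamma \in \tilde{\Gamma}_{n,I}^{\varepsilon}(p_{0})$. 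Hence $\sup_{\gamma \in \tilde{\Gamma}_{n,I}(p)}\text{dist}(\gamma,\tilde{\Gamma}_{n,I}(p_{0})) < \varepsilon$. The second part, $\sup_{\gamma \in \tilde{\Gamma}_{n,I}(p_{0})}\text{dist}(\gamma,\tilde{\Gamma}_{n,I}(p)) < \varepsilon$, is exactly Modification 2 of Assumption \ref{as:prior}.3, applied for $n$ beyond its $N$.

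Taking $n$ larger than both thresholds, the event $\{d_{\mathcal{H}} \geq \varepsilon\}$ is contained in $V_{n,\delta_{n}}^{c}$. Its posterior probability is therefore bounded by
\begin{align*}
\Pi_{n}(p\notin\mathcal{P}_{n}|Y^{(n)})+\Pi_{n}(d_{n}(p,p_{0})\geq\delta_{n}|Y^{(n)}) \overset{P_{0}^{(n)}}{\longrightarrow} 0,
\end{align*}
which holds for almost every realization $\{x_{i}\}_{i\geq1}$ on which the assumptions hold.

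The main obstacle is the measurability bookkeeping: the event $\{d_{\mathcal{H}}(\tilde\Gamma_{n,I}(p),\tilde\Gamma_{n,I}(p_0))\ge\varepsilon\}$ must be $\mathscr{P}$-measurable, or else the argument must be phrased with outer posterior probability. I would handle this with Theorem \ref{thm:pseudoset} and the compactness of the level set $\{\tilde Q_n(\cdot,p_0)\ge\xi\}$, exactly as in Theorem \ref{thm:consistency}. A secondary point is that $\limsup \eta_{n}=0$ means $\eta_n\to0$ only from some $n$ on, which is why the threshold $N'$ appears above.
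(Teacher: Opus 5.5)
Your proposal is correct and takes essentially the paper's route. Both arguments reduce to $p\in V_{n,\delta_{n}}$ and split the Hausdorff distance into its two one-sided parts. The inclusion $\tilde{\Gamma}_{n,I}(p)\subseteq\tilde{\Gamma}^{\varepsilon}_{n,I}(p_{0})$ comes from the modified separation condition together with $\sup_{\Gamma\times V_{n,\delta_{n}}}|\tilde{Q}_{n}(\gamma,p)-\tilde{Q}_{n}(\gamma,p_{0})|\leq 2\sup_{\Gamma\times V_{n,\delta_{n}}}|Q_{n}(\gamma,p)-Q_{n}(\gamma,p_{0})|$, and the reverse direction comes from the modified lower hemicontinuity condition. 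The one difference is cosmetic: you show $\{d_{\mathcal{H}}\geq\varepsilon\}$ is disjoint from $V_{n,\delta_{n}}$ for large $n$ and bound its posterior mass by $\Pi_{n}(p\notin V_{n,\delta_{n}}|Y^{(n)})$ directly, whereas the paper conditions on $V_{n,\delta_{n}}$ via the law of total probability.
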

\subsection{Verifying Assumption \ref{as:prior}.}\label{sec:verifyingconditions}
This section offers sufficient conditions for Assumption \ref{as:prior}. There are three subsections: the first two present low-level sufficient conditions for Assumptions \ref{as:prior}.2 and \ref{as:prior}.3 (and its misspecified counterpart). Then, noticing a common uniform convergence requirement, we verify uniform convergence for the GP priors from Section \ref{sec:GPimplementation} in the third subsection to find a sufficient condition for the existence of sieved neighborhoods $\{V_{n,\delta_{n}}\}_{n \geq 1}$.
\subsubsection{Sufficient Conditions for Assumption \ref{as:prior}.2}\label{sec:suff1} The first proposition gives sufficient conditions for uniform convergence of $Q_{n}(\gamma,p)$ (Assumption \ref{as:prior}.2) in the case where 
\begin{align*}
Q_{n}(\gamma,p) = ||\text{dist}_{W(\cdot,p(\cdot),\gamma)}(f(\cdot,p(\cdot),\gamma),\mathbb{R}_{+}^{d_{f}})||_{n,r}    
\end{align*} 
for some $r \in [1,\infty]$. For notation, let $\lambda_{min}(A)$ and $\lambda_{max}(A)$ denote the minimum and maximum eigenvalues, respectively, of a matrix $A$.
\begin{proposition}\label{prop:ucon}
Suppose that, for $P_{0,X}^{(\infty)}$-almost every fixed realization $\{x_{i}\}_{i \geq 1}$ of $\{X_{i}\}_{i \geq 1}$, the following conditions hold:
\begin{enumerate}
\item Bounded criterion:
\begin{align*}
    \sup_{\gamma \in \Gamma}||\text{dist}_{W(\cdot,p_{0}(\cdot),\gamma)}(f(\cdot,p_{0}(\cdot),\gamma),\mathbb{R}_{+}^{d_{f}})||_{n,r} < \infty.
\end{align*}
\item Bounded weighting matrix: there exists constants $0 < \underline{\lambda}_{W} \leq \overline{\lambda}_{W} < \infty$ such that
\begin{align*}
  \underline{\lambda}_{W} \leq  \inf_{\gamma \in \Gamma} \inf_{n \geq 1}\min_{1 \leq i \leq n} \lambda_{min}(W(x_{i},p_{0}(x_{i}),\gamma))
  \end{align*}
  and
  \begin{align*}
  \sup_{\gamma \in \Gamma} \sup_{n \geq 1}\max_{1 \leq i \leq n} \lambda_{max}(W(x_{i},p_{0}(x_{i}),\gamma)) \leq \overline{\lambda}_{W}.    
  \end{align*}
\item Uniform convergence: $\{\mathcal{P}_{n}\}_{n \geq 1}$, $\{d_{n}\}_{n \geq 1}$, and $\{\delta_{n}\}_{n \geq 1}$ from Assumption \ref{as:prior} are such that
\begin{align}\label{eq:supf}
\sup_{(\gamma,p) \in \Gamma \times V_{n,\delta_{n}}}\left| \left| f(\cdot,p(\cdot),\gamma)- f(\cdot,p_{0}(\cdot),\gamma) \right| \right|_{n,\infty} = o(1)
\end{align}
and
\begin{align*}
 \sup_{(\gamma,p) \in \Gamma \times V_{n,\delta_{n}}}\left| \left| W(\cdot,p(\cdot),\gamma)-W(\cdot,p_{0}(\cdot),\gamma)\right| \right|_{n,\infty} = o(1)   
\end{align*}
as $n\rightarrow \infty$.
\end{enumerate}
Then Assumption \ref{as:prior}.2 holds for $Q_{n}(\gamma,p) = ||\text{dist}_{W(\cdot,p(\cdot),\gamma)}(f(\cdot,p(\cdot),\gamma),\mathbb{R}_{+}^{d_{f}})||_{n,r}$.
\end{proposition}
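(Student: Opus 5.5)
Fix a realization $\{x_i\}$ for which conditions 1--3 hold and write $f_i(p,\gamma)=f(x_i,p(x_i),\gamma)$ and $W_i(p,\gamma)=W(x_i,p(x_i),\gamma)$. The plan is to bound $|Q_n(\gamma,p)-Q_n(\gamma,p_0)|$ uniformly over $\Gamma\times V_{n,\delta_n}$. The first step is the reverse triangle inequality for $\|\cdot\|_{n,r}$ (valid for all $r\in[1,\infty]$), which gives
\begin{align*}
|Q_n(\gamma,p)-Q_n(\gamma,p_0)|\le \max_{1\le i\le n}\big|\text{dist}_{W_i(p,\gamma)}(f_i(p,\gamma),\mathbb{R}_+^{d_f})-\text{dist}_{W_i(p_0,\gamma)}(f_i(p_0,\gamma),\mathbb{R}_+^{d_f})\big|.
\end{align*}
This reduces the problem to a pointwise perturbation bound that holds uniformly in $i\le n$ and $\gamma$. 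We split each term into a change in $f$ at fixed weight $W_i(p,\gamma)$ plus a change in the weight at fixed point $f_i(p_0,\gamma)$.

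\emph{Change in $f$.} For a fixed weight $W$, $t\mapsto \text{dist}_W(t,\mathbb{R}_+^{d_f})$ is $\lambda_{max}(W)^{1/2}$-Lipschitz in Euclidean norm. By condition 3, $\|W_i(p,\gamma)-W_i(p_0,\gamma)\|_{op}\to0$ uniformly, so by Weyl's inequality $\lambda_{max}(W_i(p,\gamma))\le 2\overline{\lambda}_W$ and $\lambda_{min}(W_i(p,\gamma))\ge \underline{\lambda}_W/2$ for all $n$ large, uniformly over $V_{n,\delta_n}$. The first piece is therefore at most $(2\overline\lambda_W)^{1/2}\sup\|f(\cdot,p,\gamma)-f(\cdot,p_0,\gamma)\|_{n,\infty}=o(1)$ by (\ref{eq:supf}).

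\emph{Change in the weight.} Here I expect the main obstacle: $f_i(p_0,\gamma)$ might be unbounded, because condition 1 only bounds an $\ell^r$ average of distances. I would handle this through the following comparison. For positive definite $W,W'$ and any $s$,
\begin{align*}
\|s\|_{W}^2-\|s\|_{W'}^2\le \|W-W'\|_{op}\|s\|_2^2\le \frac{\|W-W'\|_{op}}{\lambda_{min}(W')}\|s\|_{W'}^2 .
\end{align*}
Take $s=t-\tilde t$, with $\tilde t$ the $W'$-projection of $t$ onto $\mathbb{R}_+^{d_f}$. This gives $\text{dist}_W(t,\cdot)\le (1+\|W-W'\|_{op}/\lambda_{min}(W'))^{1/2}\text{dist}_{W'}(t,\cdot)$, and the same argument with the roles swapped gives the reverse bound. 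The result is a \emph{multiplicative} perturbation with factor $1+o(1)$, uniform in $i$ and $\gamma$, by condition 3 and the eigenvalue bounds above.

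Because the weight perturbation is multiplicative, it is cleaner not to take the max over $i$ for that piece. Instead I apply the multiplicative bound inside the norm: $\|\text{dist}_{W(p)}(f(p_0))\|_{n,r}$ is within a factor $1+o(1)$ of $Q_n(\gamma,p_0)$. The difference is then at most $o(1)\sup_\gamma Q_n(\gamma,p_0)$, which is $o(1)$ provided the bound in condition 1 is uniform in $n$. I read condition 1 that way, i.e.\ $\sup_n\sup_\gamma Q_n(\gamma,p_0)<\infty$. Combining the two pieces through the triangle inequality for $\|\cdot\|_{n,r}$ gives
\begin{align*}
\sup_{\Gamma\times V_{n,\delta_n}}|Q_n(\gamma,p)-Q_n(\gamma,p_0)|\le (2\overline\lambda_W)^{1/2}o(1)+o(1)\,\sup_{n,\gamma}Q_n(\gamma,p_0)\to0,
\end{align*}
which is Assumption \ref{as:prior}.2.
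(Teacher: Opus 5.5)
Your proposal is correct and follows essentially the same route as the paper. The paper also splits the difference into a change in $f$ at the fixed weight $W(x_i,p(x_i),\gamma)$, bounded via the Lipschitz property of $\text{dist}_W$. It bounds the remaining change in weight at the fixed point $f(x_i,p_0(x_i),\gamma)$ by a two-sided multiplicative comparison applied inside $\|\cdot\|_{n,r}$, and finishes with condition 1.

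The differences are cosmetic. You control the comparison constants with Weyl's inequality and a direct quadratic-form bound. The paper instead uses $\|W^{1/2}(x_i,p(x_i),\gamma)W^{-1/2}(x_i,p_0(x_i),\gamma)\|_{op}$ together with a square-root perturbation bound. Your explicit reading of condition 1 as uniform in $n$ is exactly what the paper's final step needs as well, though the paper leaves it implicit.
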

Proposition \ref{prop:ucon} states that Assumption \ref{as:prior}.2 holds if $Q_{n}(\cdot,p_{0}): \Gamma \rightarrow \mathbb{R}_{+}$ is bounded, the class of weight matrices $\{W(x_{i},p_{0}(x_{i}),\gamma): 1 \leq i \leq n, \ n \geq 1, \ \gamma \in \Gamma\}$ forms a compact set of positive-definite matrices, and the sieve neighborhoods $V_{n,\delta_{n}}$ are structured enough to ensure convergence of $W(\cdot,p(\cdot),\gamma)$ and $f(\cdot,p(\cdot),\gamma)$ to $W(\cdot,p_{0}(\cdot),\gamma)$ and $f(\cdot,p_{0}(\cdot),\gamma)$, respectively, in the empirical supremum norm, and uniformly over $\Gamma \times V_{n,\delta_{n}}$.\footnote{It should be noted convergence of $f$ in $||\cdot||_{\infty}$ is sufficient for simultaneous control over all $r \in [1,\infty]$, however one can show that Assumption \ref{as:prior}.2 still holds pointwise across $r$ if (\ref{eq:supf}) is weakened to $||\cdot||_{n,r}$; we prefer to state the proposition in terms of $||\cdot||_{n,\infty}$ as this convergence is important for verifying Assumption \ref{as:prior}.3.} The first two conditions hold if $x$ takes values in a compact set, $(x,\gamma) \mapsto f_{0}(x,p_{0}(x),\gamma)$ is continuous, and $(x,\gamma) \mapsto W(x,p_{0}(x),\gamma)$ is continuous and $W(x,p_{0}(x),\gamma)$ is positive-definite for each $(x,\gamma) \in \mathcal{X} \times \Gamma$. Condition 3 is typically satisfied if $V_{n,\delta_{n}}$ is contained in a $||\cdot||_{n,\infty}$-ball centered at $p_{0}$ with shrinking radius.
\setcounter{example}{0}
\begin{example}[Continued]
Suppose that $\mathcal{X}$ is compact and $g(y,x,\gamma)$ is continuous in $(x,\gamma)$ for each $y$ (alternatively, if $g(y,x,\gamma)$ is bounded in $(x,\gamma)$), then, coupled with Assumption \ref{as:parameterspace}, there is a constant $C \in (0,\infty)$ that is independent of $\{x_{i}\}_{i=1}^{n}$ such that
\begin{align*}
||f(\cdot,p(\cdot),\gamma) - f(\cdot,p_{0}(\cdot),\gamma)||_{n,\infty} \leq C||p-p_{0}||_{n,\infty}.     
\end{align*}
Consequently, the first uniform convergence requirement is met if $V_{n,\delta_{n}}$ is contained in an empirical supremum norm ball centered at $p_{0}$ with slowly shrinking radius. Similarly, under positive definiteness of $\{W(\cdot,p_{0}(\cdot),\gamma): \gamma \in \Gamma\}$, the second uniform convergence condition is met for $W(x,p(x),\gamma) = \Sigma^{-1}(x,\gamma)$ and $W(x,p(x),\gamma) = D^{-1}(x,\gamma)$ if $V_{n,\delta_{n}}$ is contained in an empirical supremum norm ball centered at $p_{0}$ with slowly shrinking radius.
\end{example}
\begin{example}[Continued]
Since $\Gamma$ is compact, $\sup_{\gamma \in \Gamma}||f(\cdot,p(\cdot),\gamma) - f(\cdot,p_{0}(\cdot),\gamma)||_{n,\infty}$ is bounded from above by a constant multiple of the maximum of $||A(\cdot,p(\cdot))-A(\cdot,p_{0}(\cdot))||_{n,\infty}$ and $||b(\cdot,p(\cdot))-b(\cdot,p_{0}(\cdot))||_{n,\infty}$. Hence, uniform convergence is satisfied if $A(\cdot,p(\cdot))$ and $b(\cdot,p(\cdot))$ is continuous in $p$ with respect to $||\cdot||_{n,\infty}$ and $V_{n,\delta_{n}}$ is contained in an empirical supremum norm ball centered at $p_{0}$ with slowly shrinking radius. For concreteness, consider \cite{khan2023identification}-type inequalities $\mathbf{1}\{p(x_{i}) \in R\}c_{R}(x_{i})'\gamma \geq 0$ for all $i=1,...,n$ and for all $ R \in \mathcal{R}$, where $R$ is a polyhedral restriction on $p(x)$, $\mathcal{R}$ is a finite set of restrictions $R$, and $c_{R}(x)$ is a known vector. Let $r_{n}\downarrow 0$ as $n\rightarrow \infty$ be such that $V_{n,\delta_{n}} \subseteq \{p: ||p-p_{0}||_{n,\infty} < r_{n}\}$ and let $\rho_{n}:=\min_{1 \leq i \leq n}\min_{R \in \mathcal{R}}\text{dist}(p_{0}(x_{i}),\partial R)$. If there is an $N \geq 1$ such that $r_{n} < \rho_{n}$, then $||A(\cdot,p(\cdot))-A(\cdot,p_{0}(\cdot))||_{n,\infty} = 0$ for all $n \geq N$, which guarantees the desired uniform convergence of $f$ (we can ignore $b(x,p(x))$ as it is zero for all $x$ and $p$).
\end{example}
Since Proposition \ref{prop:ucon} applies to minimum distance criterion functions based on weighted Euclidean norms, it does not immediately apply the criterion $Q_{n}^{LP}(\gamma,p)$ introduced in Example \ref{ex:linsyst}. Fortunately, $Q_{n}^{LP}(\gamma,p)$ satisfies Assumption \ref{as:prior}.2 under empirical $L^{1}$ convergence of $f$ (implied by the same uniform convergence assumption on $f$ as Proposition \ref{prop:ucon}).
\begin{proposition}\label{prop:taxicab}
Suppose that, for $P_{0,X}^{(\infty)}$-almost every fixed realization $\{x_{i}\}_{i \geq 1}$ of $\{X_{i}\}_{i \geq 1}$, $\{\mathcal{P}_{n}\}_{n \geq 1}$, $\{d_{n}\}_{n \geq 1}$, and $\{\delta_{n}\}_{n \geq 1}$ from Assumption \ref{as:prior} are such that
\begin{align*}
\sup_{(\gamma,p) \in \Gamma \times V_{n,\delta_{n}}}\left| \left| f(\cdot,p(\cdot),\gamma)- f(\cdot,p_{0}(\cdot),\gamma) \right| \right|_{n,1} = o(1)
\end{align*}
as $n\rightarrow \infty$. Then Assumption \ref{as:prior}.2 holds for $Q_{n}^{LP}(\gamma,p) = ||(f(\cdot,p(\cdot),\gamma))_{-}||_{n,1}$.
\end{proposition}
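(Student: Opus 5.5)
The plan is to bound $|Q_{n}^{LP}(\gamma,p)-Q_{n}^{LP}(\gamma,p_{0})|$ pointwise by the empirical $L^{1}$ distance between $f(\cdot,p(\cdot),\gamma)$ and $f(\cdot,p_{0}(\cdot),\gamma)$, and then take suprema. Fix a realization $\{x_{i}\}_{i\geq 1}$ in the probability-one set on which the displayed condition holds. For $t\in\mathbb{R}^{d}$ write $(t)_{-}=(\min\{t_{1},0\},\ldots,\min\{t_{d},0\})'$.

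\textbf{Step 1 (Lipschitz property).} The scalar map $s\mapsto \min\{s,0\}$ is $1$-Lipschitz. Hence, coordinatewise, $|\min\{t_{j},0\}-\min\{u_{j},0\}|\leq |t_{j}-u_{j}|$. Summing over $j$ and using the reverse triangle inequality for $||\cdot||_{1}$ gives
\begin{align*}
\bigl|\,||(t)_{-}||_{1}-||(u)_{-}||_{1}\bigr| \leq ||(t)_{-}-(u)_{-}||_{1}\leq ||t-u||_{1}.
\end{align*}

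\textbf{Step 2 (averaging).} Apply Step 1 with $t=f(x_{i},p(x_{i}),\gamma)$ and $u=f(x_{i},p_{0}(x_{i}),\gamma)$, average over $i=1,\ldots,n$, and use the triangle inequality for the average. This yields, for every $(\gamma,p)$,
\begin{align*}
|Q_{n}^{LP}(\gamma,p)-Q_{n}^{LP}(\gamma,p_{0})|\leq \frac{1}{n}\sum_{i=1}^{n}||f(x_{i},p(x_{i}),\gamma)-f(x_{i},p_{0}(x_{i}),\gamma)||_{1}.
\end{align*}
The right-hand side coincides with $||f(\cdot,p(\cdot),\gamma)-f(\cdot,p_{0}(\cdot),\gamma)||_{n,1}$ up to the norm convention for vector-valued $h$. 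The paper's $||\cdot||_{n,1}$ uses $||h(x)||_{2}$ pointwise, and $||v||_{1}\leq \sqrt{d_{f}}\,||v||_{2}$ with $d_{f}<\infty$. So the bound holds with the constant $\sqrt{d_{f}}$, which does not affect an $o(1)$ statement.

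\textbf{Step 3 (supremum).} Take $\sup_{(\gamma,p)\in\Gamma\times V_{n,\delta_{n}}}$ on both sides. The hypothesis makes the right-hand side $o(1)$, so the $\limsup$ of the left-hand side is $0$, which is exactly Assumption \ref{as:prior}.2 for $Q_{n}^{LP}$.

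None of these steps is a real obstacle, because the argument needs no boundedness or weighting conditions, unlike Proposition \ref{prop:ucon}. The only point requiring care is the norm-convention constant in Step 2, which is finite because $d_{f}<\infty$.
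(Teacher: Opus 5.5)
Your proposal is correct and follows essentially the same route as the paper: a pointwise $1$-Lipschitz bound for $t\mapsto\|(t)_{-}\|_{1}$, averaged over $i$ and then supremized over $\Gamma\times V_{n,\delta_{n}}$. The only differences are minor. The paper obtains the Lipschitz property from the $\|\cdot\|_{1}$-distance to $\mathbb{R}_{+}^{d_{f}}$, whereas you argue coordinatewise. Your $\sqrt{d_{f}}$ remark about the $\|\cdot\|_{n,1}$ convention for vector-valued functions is a careful detail that the paper's proof glosses over.
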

\subsubsection{Verifying Assumption \ref{as:prior}.3 and its Misspecified Counterpart}\label{sec:suffhemi} We provide low-level sufficient conditions for Assumption \ref{as:prior}.3. Proposition \ref{prop:hemi} states Assumption \ref{as:prior}.3 holds if, additionally, $\Gamma_{n,I}(p)$ is `strongly identified' for all $p \in V_{n,\delta_{n}}$.
\begin{proposition}\label{prop:hemi}
If, for $P_{0,X}^{(\infty)}$-almost every fixed realization $\{x_{i}\}_{i \geq 1}$ of $\{X_{i}\}_{i \geq 1}$, the sequences $\{\mathcal{P}_{n}\}_{n \geq 1}$, $\{d_{n}\}_{n \geq 1}$, and $\{\delta_{n}\}_{n \geq 1}$ from Assumption \ref{as:prior} are such that
\begin{align*}
\sup_{(\gamma,p) \in \Gamma \times V_{n,\delta_{n}}}\left| \left| f(\cdot,p(\cdot),\gamma)- f(\cdot,p_{0}(\cdot),\gamma) \right| \right|_{n,\infty} = o(1),
\end{align*}
and there is a constant $c \in (0,\infty)$ and $\tilde{N} \geq 1$ such that, for each $\gamma \in \Gamma$,
\begin{align}\label{eq:errorbound}
\max_{1 \leq i \leq n} ||(f(x_{i},p(x_{i}),\gamma))_{-}||_{2} \geq c \text{dist}(\gamma,\Gamma_{n,I}(p))    
\end{align} 
holds for all $p \in V_{n,\delta_{n}}$ and all $n \geq \tilde{N}$, then Assumption \ref{as:prior}.3 holds.
\end{proposition}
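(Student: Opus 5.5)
Fix $\varepsilon>0$ and a realization $\{x_{i}\}_{i\geq 1}$ for which the hypotheses hold. Write
\begin{align*}
\eta_{n} := \sup_{(\gamma,p) \in \Gamma \times V_{n,\delta_{n}}}\left|\left|f(\cdot,p(\cdot),\gamma)-f(\cdot,p_{0}(\cdot),\gamma)\right|\right|_{n,\infty} = o(1).
\end{align*}
The plan is to take an arbitrary $\gamma \in \Gamma_{n,I}(p_{0})$ and an arbitrary $p \in V_{n,\delta_{n}}$, and show that $\text{dist}(\gamma,\Gamma_{n,I}(p)) < \varepsilon$ once $n$ is large. This yields a point of $\Gamma_{n,I}(p)$ inside $B(\gamma,\varepsilon)$, which is exactly Assumption \ref{as:prior}.3. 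The error bound (\ref{eq:errorbound}) converts a small constraint violation at $\gamma$ under $p$ into a small distance from $\gamma$ to $\Gamma_{n,I}(p)$. So the argument reduces to bounding the violation $\max_{i}\|(f(x_{i},p(x_{i}),\gamma))_{-}\|_{2}$ uniformly.

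First, since $\gamma \in \Gamma_{n,I}(p_{0})$, we have $f(x_{i},p_{0}(x_{i}),\gamma) \in \mathbb{R}_{+}^{d_{f}}$ for every $i \leq n$, so $(f(x_{i},p_{0}(x_{i}),\gamma))_{-} = 0$. Next, the map $t \mapsto (t)_{-}$ is the coordinatewise map $t_{j}\mapsto \min\{t_{j},0\}$. Each coordinate map is $1$-Lipschitz, so $\|(t)_{-}-(s)_{-}\|_{2} \leq \|t-s\|_{2}$. Applying this with $t = f(x_{i},p(x_{i}),\gamma)$ and $s = f(x_{i},p_{0}(x_{i}),\gamma)$, and taking the maximum over $i$, gives
\begin{align*}
\max_{1\leq i \leq n}\|(f(x_{i},p(x_{i}),\gamma))_{-}\|_{2} \leq \|f(\cdot,p(\cdot),\gamma)-f(\cdot,p_{0}(\cdot),\gamma)\|_{n,\infty} \leq \eta_{n}.
\end{align*}
This bound is uniform in $\gamma \in \Gamma_{n,I}(p_{0})$ and $p\in V_{n,\delta_{n}}$.

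Combining with (\ref{eq:errorbound}), for $n \geq \tilde{N}$ we get $\text{dist}(\gamma,\Gamma_{n,I}(p)) \leq \eta_{n}/c$. Choose $N \geq \tilde{N}$ so that $\eta_{n} < c\varepsilon$ for all $n\geq N$; then $\text{dist}(\gamma,\Gamma_{n,I}(p))<\varepsilon$. The only point needing care is that a finite distance requires $\Gamma_{n,I}(p) \neq \emptyset$. By convention $\text{dist}(\gamma,\emptyset)=\infty$, so (\ref{eq:errorbound}) together with the finite bound $\eta_{n}/c$ already rules out emptiness, given that $\Gamma_{n,I}(p_{0})$ is nonempty by Assumption \ref{as:dgp}.2(a).

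It remains to produce an actual point in $B(\gamma,\varepsilon)$. By Theorem \ref{thm:randomset}, $\Gamma_{n,I}(p)$ is closed in the compact set $\Gamma$ (Assumption \ref{as:parameterspace}), so the infimum defining the distance is attained. If one prefers to avoid that step, strictness of $\text{dist}<\varepsilon$ alone gives such a point. Hence $\Gamma_{n,I}(p)\cap B(\gamma,\varepsilon)\neq\emptyset$ for all $\gamma\in\Gamma_{n,I}(p_{0})$, all $p\in V_{n,\delta_{n}}$ and all $n\geq N$, which is Assumption \ref{as:prior}.3. None of these steps is a real obstacle: the content lies in hypothesis (\ref{eq:errorbound}), and the only subtlety is the emptiness convention just discussed.
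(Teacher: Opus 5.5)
Your proof is correct and follows essentially the same route as the paper: uniform convergence makes the constraint violation $\max_{i}\|(f(x_{i},p(x_{i}),\gamma))_{-}\|_{2}$ uniformly small over $\gamma\in\Gamma_{n,I}(p_{0})$ and $p\in V_{n,\delta_{n}}$, and the error bound then forces $\text{dist}(\gamma,\Gamma_{n,I}(p))<\varepsilon$. Two small differences: you use the $1$-Lipschitz property of $t\mapsto(t)_{-}$ to get the explicit bound $\eta_{n}/c$, where the paper bounds each coordinate by $c\varepsilon/\sqrt{d_{f}}$; and your remark that the error bound rules out $\Gamma_{n,I}(p)=\emptyset$ makes explicit a point the paper leaves implicit.
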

Conditions like (\ref{eq:errorbound}) are regularly encountered in the literature on inference for identified sets. For example, it is similar to a linear version of the polynomial minorant condition (4.5) in \cite{chernozhukov2007estimation} and an unstudentized version of Assumption 3 in \cite{kaido2022constraint}. We call it a strong identification condition on $\Gamma_{n,I}(p)$ because it states that, as $\gamma$ moves away from $\Gamma_{n,I}(p)$, the maximal violation of the inequality restrictions grows sufficiently fast. The result readily extends to `union bounds' in which $\Gamma_{n,I}(p) = \bigcup_{j=1}^{J}\Gamma_{n,I}^{(j)}(p)$ with $\Gamma_{n,I}^{(j)} = \{\gamma \in \Gamma^{(j)}: f^{(j)}(x_{i},p(x_{i}),\gamma) \in \mathbb{R}_{+}^{d_{f^{(j)}}} \ \forall \ i=1,...,n\}$ satisfying the conditions of Proposition \ref{prop:hemi}. This result, combined with Proposition \ref{prop:convexhemi} below, is useful for the dynamic binary panel model studied in Section \ref{sec:GPimplementation} (we elaborate on this after stating Proposition \ref{prop:convexhemi}).
\begin{corollary}\label{cor:unions}
    Suppose that 
    \begin{align*}
        \Gamma_{n,I}(p) = \bigcup_{j=1}^{J}\Gamma_{n,I}^{(j)}(p),
    \end{align*} 
    where, for each $j=1,...,J$,
    \begin{align*}
    \Gamma_{n,I}^{(j)} = \left\{\gamma \in \Gamma^{(j)}: f^{(j)}(x_{i},p(x_{i}),\gamma) \in \mathbb{R}_{+}^{d_{f^{(j)}}} \ \forall \ i=1,...,n\right\}    
    \end{align*} satisfies the conditions of Proposition \ref{prop:hemi}. Then Assumption \ref{as:prior}.3 holds.
\end{corollary}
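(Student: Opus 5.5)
Fix an admissible realization $\{x_{i}\}_{i \geq 1}$ and $\varepsilon>0$; we must find $N$ such that, for all $n \geq N$ and all $p \in V_{n,\delta_{n}}$, every $\gamma \in \Gamma_{n,I}(p_{0})$ has a point of $\Gamma_{n,I}(p)$ within distance $\varepsilon$. Each piece $\Gamma_{n,I}^{(j)}$ satisfies the conditions of Proposition \ref{prop:hemi}, so that proposition yields the lower hemicontinuity property for each piece separately. For each $j=1,\dots,J$ we therefore get an $N_{j}$ such that $p \in V_{n,\delta_{n}}$ and $n \geq N_{j}$ imply $\Gamma_{n,I}^{(j)}(p) \cap B(\gamma,\varepsilon) \neq \emptyset$ for all $\gamma \in \Gamma_{n,I}^{(j)}(p_{0})$. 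The index $j$ ranges over a finite set, so we set $N = \max_{1 \leq j \leq J} N_{j}$.

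Now take $n \geq N$, $p \in V_{n,\delta_{n}}$ and $\gamma \in \Gamma_{n,I}(p_{0}) = \bigcup_{j}\Gamma_{n,I}^{(j)}(p_{0})$. Choose $j$ with $\gamma \in \Gamma_{n,I}^{(j)}(p_{0})$. Then $\Gamma_{n,I}^{(j)}(p) \cap B(\gamma,\varepsilon)$ is nonempty, and since $\Gamma_{n,I}^{(j)}(p) \subseteq \Gamma_{n,I}(p)$, so is $\Gamma_{n,I}(p) \cap B(\gamma,\varepsilon)$. This is Assumption \ref{as:prior}.3.

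The only point needing care, and the main obstacle, is that Proposition \ref{prop:hemi} may be applied verbatim to each piece. Each piece is defined relative to its own parameter space $\Gamma^{(j)}$ and its own $f^{(j)}$. The error bound (\ref{eq:errorbound}) must be read with $\Gamma_{n,I}^{(j)}(p)$ and $\gamma \in \Gamma^{(j)}$, and the sup-norm convergence with $f^{(j)}$, all over the common neighborhoods $V_{n,\delta_{n}}$; this is exactly what ``satisfies the conditions of Proposition \ref{prop:hemi}'' should mean. The proof of Proposition \ref{prop:hemi} uses no property of the index set beyond these conditions. Each $\Gamma^{(j)}$ may be taken compact (e.g.\ closed subsets of $\Gamma$, such as the sign partition of $\theta$), so the earlier compactness arguments continue to apply to each piece.

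A point $\gamma$ may belong to several pieces. This causes no difficulty, because any single index $j$ containing $\gamma$ suffices. The sets $\Gamma_{n,I}^{(j)}(p_{0})$ may also be empty for some $j$ and $n$, in which case the statement for that piece holds vacuously.
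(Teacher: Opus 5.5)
Your proof is correct and is essentially the paper's: the lower-hemicontinuity property comes piece by piece from Proposition~\ref{prop:hemi}, the finitely many thresholds are combined as $N=\max_{j}N_j$, and the conclusion follows because $\gamma\in\Gamma_{n,I}(p_0)$ lies in some $\Gamma^{(j)}_{n,I}(p_0)$ and $\Gamma^{(j)}_{n,I}(p)\subseteq\Gamma_{n,I}(p)$. The paper writes this last step as $\text{dist}(\gamma,\Gamma_{n,I}(p))=\min_{j}\text{dist}(\gamma,\Gamma^{(j)}_{n,I}(p))$ and re-runs the argument of Proposition~\ref{prop:hemi} with piece-specific constants $c_j$ and thresholds $\tilde N_j,\bar N_j$, where you invoke the proposition as a black box; your remark about compactness of the $\Gamma^{(j)}$ is unnecessary, since that argument never uses it.
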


Condition (\ref{eq:errorbound}) (and Corollary \ref{cor:unions}) is still quite high-level, and results from the mathematical optimization literature indicate that its satisfaction may depend subtly on the geometry of $\Gamma_{n,I}$ (see \cite{IOFFE_2016} for a recent survey). For this reason, we provide an interpretable set of conditions for (\ref{eq:errorbound}) for an important class of identified sets.
\begin{proposition}\label{prop:convexhemi}
Suppose that the following conditions hold:
\begin{enumerate}
    \item Parameter Space: $\Gamma$ is a compact and convex subset of $\mathbb{R}^{d_{\gamma}}$.
    \item Smooth concavity: $f(x,p(x),\gamma)$ is differentiable and concave in $\gamma$ for all $(x,p)$.
\end{enumerate}
Further suppose that, for $P_{0,X}^{(\infty)}$-almost every fixed realization $\{x_{i}\}_{i\geq 1}$ of $\{X_{i}\}_{i \geq 1}$, the following conditions hold:
\begin{enumerate}
\setcounter{enumi}{2}
    \item Interior true identified set: there exists $\bar{N}_{1} \geq 1$ and $\xi>0$ such that 
    \begin{align*}
     \Gamma_{n,I}(p_{0}) \subseteq \text{int}(\Gamma)   
    \end{align*} 
    and 
    \begin{align*}
     \sup_{\gamma \in \partial \Gamma}\max_{1\leq i \leq n}\max_{1 \leq j \leq d_{f}}f_{j}(x_{i},p_{0}(x_{i}),\gamma) \leq -\xi   
    \end{align*} 
    for all $n \geq \bar{N}_{1}$.
    \item Slater condition: there exists $\gamma^{\circ} \in \Gamma$, $\eta > 0$, and $\bar{N}_{2} \geq 1$ such that 
    \begin{align*}
    \min_{1\leq i \leq n}\min_{1 \leq j \leq d_{f}}f_{j}(x_{i},p_{0}(x_{i}),\gamma^{\circ}) \geq \eta     
    \end{align*}
    for all $n \geq \bar{N}_{2}$.
    \item Uniform convergence: the sequences $\{\mathcal{P}_{n}\}_{n \geq 1}$, $\{d_{n}\}_{n \geq 1}$, and $\{\delta_{n}\}_{n \geq 1}$ from Assumption \ref{as:prior} are such that
\begin{align*}
\sup_{(\gamma,p) \in \Gamma \times V_{n,\delta_{n}}}\left| \left| f(\cdot,p(\cdot),\gamma)- f(\cdot,p_{0}(\cdot),\gamma) \right| \right|_{n,\infty} = o(1).
\end{align*}
\end{enumerate}
Then Assumption \ref{as:prior}.3 holds.
\end{proposition}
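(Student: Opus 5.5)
The plan is to verify Assumption \ref{as:prior}.3 directly with a convex-combination (Slater-point) construction, rather than going through the error bound (\ref{eq:errorbound}) of Proposition \ref{prop:hemi}. Fix $\varepsilon>0$ and write
\begin{align*}
\epsilon_{n} := \sup_{(\gamma,p) \in \Gamma \times V_{n,\delta_{n}}}\left|\left| f(\cdot,p(\cdot),\gamma)- f(\cdot,p_{0}(\cdot),\gamma)\right|\right|_{n,\infty}.
\end{align*}
By condition 5, $\epsilon_{n} = o(1)$. Since the Euclidean norm dominates each coordinate, $|f_{j}(x_{i},p(x_{i}),\gamma)-f_{j}(x_{i},p_{0}(x_{i}),\gamma)| \leq \epsilon_{n}$ for all $i \leq n$, all $j$, all $\gamma\in\Gamma$ and all $p \in V_{n,\delta_{n}}$. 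Let $D := \sup_{\gamma,\tilde{\gamma}\in\Gamma}||\gamma-\tilde{\gamma}||_{2}$, which is finite by compactness (condition 1). If $\Gamma_{n,I}(p_{0})=\emptyset$ there is nothing to prove, so take any $\gamma \in \Gamma_{n,I}(p_{0})$.

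\textbf{Step 1 (construction).} For $t\in[0,1]$, set $\gamma_{t} := (1-t)\gamma + t\gamma^{\circ}$, where $\gamma^{\circ}$ is the Slater point of condition 4. By convexity of $\Gamma$, $\gamma_{t}\in\Gamma$. Also $||\gamma_{t}-\gamma||_{2} = t||\gamma^{\circ}-\gamma||_{2} \leq tD$.

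\textbf{Step 2 (feasibility under $p_{0}$ with margin).} By concavity of each $f_{j}(x_{i},p_{0}(x_{i}),\cdot)$ (condition 2), and using $f_{j}(x_{i},p_{0}(x_{i}),\gamma)\geq 0$ together with condition 4, for $n\geq \bar N_{2}$,
\begin{align*}
f_{j}(x_{i},p_{0}(x_{i}),\gamma_{t}) \geq (1-t)f_{j}(x_{i},p_{0}(x_{i}),\gamma) + t f_{j}(x_{i},p_{0}(x_{i}),\gamma^{\circ}) \geq t\eta
\end{align*}
for all $i\leq n$ and all $j$.

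\textbf{Step 3 (transfer to $p$).} For $p\in V_{n,\delta_{n}}$, Step 2 and the bound from condition 5 give $f_{j}(x_{i},p(x_{i}),\gamma_{t}) \geq t\eta - \epsilon_{n}$. Choose $t_{n} := \min\{1,\epsilon_{n}/\eta\}$. Then $\gamma_{t_{n}}\in\Gamma_{n,I}(p)$ whenever $\epsilon_{n}\leq \eta$. Next pick $N \geq \bar N_{2}$ such that, for all $n\geq N$, both $\epsilon_{n}\leq\eta$ and $\epsilon_{n}D/\eta < \varepsilon$ hold; this is possible because $\epsilon_{n}\to0$. For such $n$ we get $||\gamma_{t_{n}}-\gamma||_{2} \leq t_{n}D<\varepsilon$, so $\Gamma_{n,I}(p)\cap B(\gamma,\varepsilon)\neq\emptyset$.

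\textbf{Uniformity.} The index $N$ depends only on $\varepsilon$, $\eta$, $D$ and the sequence $\epsilon_{n}$. It does not depend on $\gamma\in\Gamma_{n,I}(p_{0})$ or on $p\in V_{n,\delta_{n}}$. This is exactly Assumption \ref{as:prior}.3, and it holds for $P_{0,X}^{(\infty)}$-almost every realization, namely wherever conditions 3--5 hold.

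The main point needing care is precisely this uniformity over $\gamma$ and $p$. It is secured by two features of the argument: the Slater margin $\eta$ and the bound $\epsilon_{n}$ are uniform in $i$, and the step length $t_{n}$ is chosen independently of $\gamma$ and $p$. In this route, differentiability and the interior/boundary condition 3 are not needed. Should they be required to certify the alternative error-bound route, I would instead establish (\ref{eq:errorbound}) via a Robinson-type bound $\text{dist}(\gamma,\Gamma_{n,I}(p)) \leq (D/\eta')\max_{i}||(f(x_{i},p(x_{i}),\gamma))_{-}||_{2}$, with $\eta' = \eta/2$ the perturbed Slater margin. That bound follows from the same convex-combination trick applied to an arbitrary $\gamma$, and one would then invoke Proposition \ref{prop:hemi}.
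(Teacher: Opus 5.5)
Your proof is correct, and it follows a genuinely different and more elementary route than the paper's. The paper never verifies Assumption \ref{as:prior}.3 directly. It establishes the error bound (\ref{eq:errorbound}) and then appeals to Proposition \ref{prop:hemi}. To get that bound, it first uses Condition 3 together with Condition 5 to show $\Gamma_{n,I}(p)\subseteq\text{int}(\Gamma)$ for all $p\in V_{n,\delta_n}$ and large $n$, so that the constraint $\gamma\in\Gamma$ is inactive. It then uses Conditions 4--5 to show that $\gamma^{\circ}$ remains a Slater point, with margin $\eta/2$, for every perturbed system. Finally, it writes the KKT conditions for the projection $\gamma_{n,p}$ of $\gamma$ onto $\Gamma_{n,I}(p)$, applies the gradient inequality at the active constraints, and bounds the multipliers by $\|\lambda_n(\gamma,p)\|_1\le 2\,\mathrm{diam}(\Gamma)/\eta$ by evaluating the Lagrangian at $\gamma^{\circ}$.

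Your convex-combination construction bypasses all of this. It needs concavity only at $p_0$, and it uses neither differentiability nor Condition 3. It also gives, for all large $n$, the explicit uniform rate $\sup_{p\in V_{n,\delta_n}}\sup_{\gamma\in\Gamma_{n,I}(p_0)}\text{dist}(\gamma,\Gamma_{n,I}(p))\le D\epsilon_n/\eta$, which is exactly the one-sided statement Assumption \ref{as:prior}.3 asks for.

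What the paper's route buys is the stronger inequality (\ref{eq:errorbound}), valid for every $\gamma\in\Gamma$. This is a uniform metric-regularity property of the perturbed constraint systems; it matches the ``strong identification'' framing and feeds into Corollary \ref{cor:unions}. Your fallback Robinson-type argument recovers that bound too, with $c=\eta/(2D)$, which is the same order as the paper's constant. It does so using concavity at $p$, again without differentiability or the interior condition. Your direct route also handles unions on its own: a point of $\Gamma_{n,I}(p_0)$ lies in some convex component, and the construction, run with that component's Slater point, stays inside it.
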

Proposition \ref{prop:convexhemi} gives conditions under which Assumption \ref{as:prior}.3 holds when $\Gamma_{n,I}(p)$ is convex. Convex $\Gamma_{n,I}(p)$ is satisfied by several discrete response models, such as interval censored binary choice models \citep{manski2002inference}, ordered choice models \citep{pakes2015moment}, and panel multinomial choice models \citep{shi2018estimating}. Provided the components $\Gamma_{n,I}^{(j)}(p)$ of $\Gamma_{n,I}(p) = \bigcup_{j=1}^{J}\Gamma_{n,I}^{(j)}(p)$ satisfy Conditions 1--5 of Proposition \ref{prop:convexhemi}, Corollary \ref{cor:unions} implies that it readily extends to identified sets that are unions of convex sets. Related to Section \ref{sec:GPimplementation}, this case conforms to the identification results from \cite{khan2023identification}, where the identified sets are unions of convex polytopes determined by the sign of a state dependence parameter. In terms of the conditions themselves, the key condition on the data-generating process $p_{0}$ in Proposition \ref{prop:convexhemi} is the existence of a uniform slater point $\gamma^{\circ}$ because, combined with the other conditions, it allows us to find a constant $c > 0$ through an asymptotic uniform boundedness (over $\Gamma \times V_{n,\delta_{n}}$) property for the Lagrange multipliers associated with the convex program $\inf_{\tilde{\gamma} \in \Gamma_{n,I}(p)}||\gamma-\tilde{\gamma}||_{2}$.\footnote{The requirements on $p_{0}$ that $\Gamma_{n,I}(p_{0})$ is in the interior of $\Gamma$ is a standard condition (e.g., it is similar to Assumption 1(b) in \cite{kaido2022constraint}), while the separation from the boundary of $\Gamma$ is merely a technical condition that provides enough uniformity to account for $p$ being random.} The uniform convergence for $f(\cdot,p(\cdot),\gamma)$ is the same as Propositions \ref{prop:ucon}-\ref{prop:hemi}, and it is the \textit{only} role for the prior $\Pi$.

Propositions \ref{prop:hemi} and \ref{prop:convexhemi} are catered towards the correctly specified case where $\Gamma_{n,I}(p_{0}) \neq \emptyset$. The next result provides a sufficient condition for the modified Assumption \ref{as:prior}.3 that appears in Theorem \ref{thm:misspec_consistency} (i.e., the posterior consistency for $\tilde{\Gamma}_{n,I}$); it is a counterpart to Proposition \ref{prop:hemi}. \begin{proposition}\label{prop:minorantmisspec}
Suppose that for $P_{0,X}^{(\infty)}$-almost every fixed realization $\{x_{i}\}_{i \geq 1}$ of $\{X_{i}\}_{i \geq 1}$, the following holds:
\begin{align}\label{eq:hemi_misspec_ucon}
  u_{n}:=\sup_{(\gamma,p) \in \Gamma \times V_{n,\delta_{n}}}\left|\tilde{Q}_{n}(\gamma,p)-\tilde{Q}_{n}(\gamma,p_{0})\right| \longrightarrow 0
\end{align}
as $n\rightarrow \infty$. Further, suppose that, for $P_{0,X}^{(\infty)}$-almost every fixed realization $\{x_{i}\}_{i \geq 1}$ of $\{X_{i}\}_{i \geq 1}$, there exists a sequence $\{r_{n}\}_{n \geq 1}$, constants $c,\zeta >0$, and an integer $\tilde{N} \geq 1$ such that $r_{n} \geq 0$ for all $n \geq 1$ and $r_{n}\rightarrow 0$ as $n\rightarrow \infty$, and, for each $\gamma \in \Gamma$,
\begin{align}\label{eq:hemi_misspec_minorant}
    \tilde{Q}_{n}(\gamma,p) \geq c \cdot \text{dist}(\gamma,\tilde{\Gamma}_{n,I}(p))^{\zeta}- r_{n}
\end{align}
for all $p \in V_{n,\delta_{n}}$ and $n \geq \tilde{N}$. Then, the following holds for $P_{0,X}^{(\infty)}$-almost every fixed realization $\{x_{i}\}_{i \geq 1}$ of $\{X_{i}\}_{i \geq 1}$: for every $\varepsilon >0$, there exists $N \geq 1$ such that $p \in V_{n,\delta_{n}}$ and $n \geq N$ implies $\tilde{\Gamma}_{n,I}(p) \cap B(\gamma,\varepsilon) \neq \emptyset$ for all $\gamma \in \tilde{\Gamma}_{n,I}(p_{0})$.
\end{proposition}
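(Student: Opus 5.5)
Fix a realization $\{x_{i}\}_{i\geq 1}$ for which both (\ref{eq:hemi_misspec_ucon}) and (\ref{eq:hemi_misspec_minorant}) hold, fix $\varepsilon>0$, and take $\gamma\in\tilde{\Gamma}_{n,I}(p_{0})$ and $p\in V_{n,\delta_{n}}$. The plan is to use the uniform convergence of $\tilde{Q}_{n}$ to show that $\gamma$ is nearly a minimizer of $\tilde{Q}_{n}(\cdot,p)$. The minorant condition then turns near-minimality into near-membership of $\tilde{\Gamma}_{n,I}(p)$.

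\textbf{Step 1: $\gamma$ is an approximate minimizer under $p$.} Since $\gamma$ minimizes $Q_{n}(\cdot,p_{0})$, we have $\tilde{Q}_{n}(\gamma,p_{0})=0$. Therefore
\begin{align*}
\tilde{Q}_{n}(\gamma,p)\leq \tilde{Q}_{n}(\gamma,p_{0})+u_{n}=u_{n}.
\end{align*}
This bound is uniform in $\gamma\in\tilde{\Gamma}_{n,I}(p_{0})$ and in $p\in V_{n,\delta_{n}}$.

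\textbf{Step 2: translate the bound into a distance bound.} For $n\geq\tilde{N}$, the minorant (\ref{eq:hemi_misspec_minorant}) gives $c\,\text{dist}(\gamma,\tilde{\Gamma}_{n,I}(p))^{\zeta}\leq u_{n}+r_{n}$. Hence
\begin{align*}
\text{dist}(\gamma,\tilde{\Gamma}_{n,I}(p))\leq \left((u_{n}+r_{n})/c\right)^{1/\zeta},
\end{align*}
and the right-hand side tends to zero.

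\textbf{Step 3: conclude.} Choose $N\geq\tilde{N}$ so that $((u_{n}+r_{n})/c)^{1/\zeta}<\varepsilon$ for all $n\geq N$. Then $\text{dist}(\gamma,\tilde{\Gamma}_{n,I}(p))<\varepsilon$. If $\tilde{\Gamma}_{n,I}(p)$ is nonempty, this strict inequality yields a point $\tilde{\gamma}\in\tilde{\Gamma}_{n,I}(p)$ with $\|\gamma-\tilde{\gamma}\|_{2}<\varepsilon$. Thus $\tilde{\Gamma}_{n,I}(p)\cap B(\gamma,\varepsilon)\neq\emptyset$, with $N$ independent of $\gamma$ and $p$.

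\textbf{Main obstacle: nonemptiness of $\tilde{\Gamma}_{n,I}(p)$.} We need the pseudo-identified set to be nonempty, since otherwise the distance is $+\infty$ by convention and the argument breaks. This follows from Assumption \ref{as:criterion}.1 and Assumption \ref{as:parameterspace}: a lower semicontinuous function on the compact set $\Gamma$ attains its infimum. The same fact gives $\tilde{\Gamma}_{n,I}(p_{0})\neq\emptyset$, so the statement is not vacuous. Closedness of $\tilde{\Gamma}_{n,I}(p)$, from Theorem \ref{thm:pseudoset}, is not needed here, because the strict inequality already supplies an open-ball witness.
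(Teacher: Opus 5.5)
Your proof is correct and follows the paper's argument essentially line for line: you bound $\tilde{Q}_{n}(\gamma,p)\leq u_{n}$ using $\tilde{Q}_{n}(\gamma,p_{0})=0$, invert the minorant to get $\text{dist}(\gamma,\tilde{\Gamma}_{n,I}(p))\leq ((u_{n}+r_{n})/c)^{1/\zeta}$, and choose $N$ uniformly in $\gamma$ and $p$. Your explicit check that $\tilde{\Gamma}_{n,I}(p)\neq\emptyset$ (lower semicontinuity on compact $\Gamma$) covers a point the paper leaves implicit; it is also forced directly by the minorant hypothesis, since $\tilde{Q}_{n}(\gamma,p)$ is finite.
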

Proposition \ref{prop:minorantmisspec} states that the analogue of Assumption \ref{as:prior}.3 for $\tilde{\Gamma}_{n,I}(p)$ holds under a uniform convergence and asymptotic \textit{weak sharp minimum-type property} for $\tilde{Q}_{n}(\gamma,p)$. We label the latter this way because if $\zeta = 1$ and $r_{n} = 0$ for all $n \geq 1$, then it collapses to a stochastic version of the idea introduced in \cite{doi:10.1137/0331063}. We also note that, by allowing $\zeta >0$, the $r_{n} = 0$ case closely relates to the polynomial minorant condition from \cite{chernozhukov2007estimation}. Like Proposition \ref{prop:hemi}, we also provide a set of interpretable low-level sufficient conditions for Proposition \ref{prop:minorantmisspec}.
\begin{proposition}\label{prop:misspec_hemi_convex}
Suppose that the following conditions hold:
\begin{enumerate}
    \item Parameter space: $\Gamma \subseteq \mathbb{R}^{d_{\gamma}}$ is a compact and convex subset of $\mathbb{R}^{d_{\gamma}}$.
    \end{enumerate}
Further, suppose that, for $P_{0,X}^{(\infty)}$-almost every fixed realization $\{x_{i}\}_{i \geq 1}$ of $\{X_{i}\}_{i \geq 1}$, the following conditions hold:
\begin{enumerate}
    \setcounter{enumi}{1}
\item Piecewise Maximum of Smooth Functions: for every $n \geq 1$, there exists a finite index set $\mathcal{A}_{n}$ and functions $\{q_{n,a}: a \in \mathcal{A}_{n}\}$ defined on $\Gamma^{o} \times \mathcal{P}$, where $\Gamma^{o}$ is an open set that contains $\Gamma$, such that $$Q_{n}(\gamma,p) = \max_{a \in \mathcal{A}_{n}}q_{n,a}(\gamma,p)$$ for each $n \geq 1$, and $q_{n,a}(\gamma,p)$ is differentiable and convex in $\gamma$ for each $p$, $a$, and $n$.
\item Uniform convergence: the sequences $\{\mathcal{P}_{n}\}_{n \geq 1}$, $\{d_{n}\}_{n \geq 1}$, and $\{\delta_{n}\}_{n \geq 1}$ from Assumption \ref{as:prior} are such that
\begin{align*}
    \Delta_{n}&:=\sup_{(\gamma,p) \in \Gamma \times V_{n,\delta_{n}}}\max_{a \in \mathcal{A}_{n}}|q_{n,a}(\gamma,p)-q_{n,a}(\gamma,p_{0})| \longrightarrow 0 \\
    \Delta_{n}'&:=\sup_{(\gamma,p) \in \Gamma \times V_{n,\delta_{n}}}\max_{a \in \mathcal{A}_{n}}\left|\left|\nabla_{\gamma} q_{n,a}(\gamma,p)-\nabla_{\gamma}q_{n,a}(\gamma,p_{0})\right| \right|_{2} \longrightarrow 0
\end{align*}
as $n\rightarrow \infty$.
\item Gradient: $\nabla_{\gamma}q_{n,a}(\gamma,p_{0})$ is $L_{0}$-Lipschitz on $\Gamma$, where $L_{0} \in (0,\infty)$ is independent of $n$ and $a$, and there exists $\Gamma_{nbd} \subseteq  \Gamma$ and $\breve{N} \geq 1$ such that $\tilde{\Gamma}_{n,I}(p_{0}) \subseteq \Gamma_{nbd}$ for all $n \geq \breve{N}$  and $\sup_{n \geq \breve{N}}\sup_{\gamma \in \Gamma_{nbd}}\max_{a \in \mathcal{A}_{n}}||\nabla_{\gamma}q_{n,a}(\gamma,p_{0})||_{2} < \infty$.
\item Uniform Nondegeneracy: there exists $\eta > 0$ and $\bar{N} \geq 1$ such that, for all $n \geq \bar{N}$ and all $\tilde{\gamma} \in \tilde{\Gamma}_{n,I}(p_{0})$,
\begin{align*}
   \{t \in \mathbb{R}^{d_{\gamma}}: ||t||_{2} \leq \eta\} \subseteq \conv \left\{\nabla_{\gamma}q_{n,a}(\tilde{\gamma},p_{0}): a \in \mathcal{A}_{n}^{*}(\tilde{\gamma},p_{0})\right\},
\end{align*}
where $\conv$ denotes convex hull and $\mathcal{A}_{n}^{*}(\gamma,p) = \{a \in \mathcal{A}_{n}: q_{n,a}(\gamma,p) = Q_{n}(\gamma,p)\}$.
\end{enumerate}
Then Proposition \ref{prop:minorantmisspec} holds.
\end{proposition}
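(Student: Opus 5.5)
The plan is to verify the two hypotheses of Proposition \ref{prop:minorantmisspec} directly: the uniform convergence (\ref{eq:hemi_misspec_ucon}) and the minorant (\ref{eq:hemi_misspec_minorant}), the latter with $\zeta=1$, $c=\eta$ and $r_n$ a constant multiple of $\Delta_n$.

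\textbf{Step 1: uniform convergence.} Since $|\max_a u_a-\max_a v_a|\le \max_a|u_a-v_a|$, Condition 2 gives $|Q_n(\gamma,p)-Q_n(\gamma,p_0)|\le\Delta_n$ uniformly over $\Gamma\times V_{n,\delta_n}$. Taking infima over $\gamma\in\Gamma$ preserves this bound, so
\begin{align*}
\left|\inf_{\Gamma}Q_n(\cdot,p)-\inf_{\Gamma}Q_n(\cdot,p_0)\right|\le\Delta_n .
\end{align*}
Consequently $u_n\le 2\Delta_n\to0$ by Condition 3, which is (\ref{eq:hemi_misspec_ucon}).

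\textbf{Step 2: a sharp minimum at $p_0$.} Each $q_{n,a}(\cdot,p_0)$ is differentiable, hence continuous. So $Q_n(\cdot,p_0)$ is convex and continuous on the open set $\Gamma^o$, and $\tilde\Gamma_{n,I}(p_0)$ is nonempty by compactness of $\Gamma$. The same argument shows $\tilde\Gamma_{n,I}(p)$ is nonempty for every $p$.

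Fix $n\ge\bar N$, a minimizer $\tilde\gamma\in\tilde\Gamma_{n,I}(p_0)$, and a direction $d\neq0$. Because the maximum is over a finite family of differentiable functions, the Danskin formula gives the one-sided directional derivative
\begin{align*}
Q_n'(\tilde\gamma,p_0;d)=\max_{a\in\mathcal A_n^*(\tilde\gamma,p_0)}\nabla_\gamma q_{n,a}(\tilde\gamma,p_0)'d .
\end{align*}
By Condition 5, $t=\eta d/\|d\|_2$ lies in the convex hull of the active gradients. Writing $t=\sum_a\lambda_a\nabla_\gamma q_{n,a}(\tilde\gamma,p_0)$ with convex weights $\lambda_a$ yields
\begin{align*}
\eta\|d\|_2=t'd\le Q_n'(\tilde\gamma,p_0;d).
\end{align*}
Now use convexity of $Q_n(\cdot,p_0)$ along the segment $[\tilde\gamma,\gamma]\subseteq\Gamma$, which uses convexity of $\Gamma$. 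This gives, for every $\gamma\in\Gamma$,
\begin{align*}
Q_n(\gamma,p_0)-Q_n(\tilde\gamma,p_0)\ge\eta\|\gamma-\tilde\gamma\|_2 .
\end{align*}
Applying this bound with $\gamma$ equal to a second minimizer forces that minimizer to coincide with $\tilde\gamma$. Hence $\tilde\Gamma_{n,I}(p_0)=\{\gamma_n^*\}$ is a singleton, and
\begin{align*}
\tilde Q_n(\gamma,p_0)\ge\eta\|\gamma-\gamma_n^*\|_2 \quad\text{for all } \gamma\in\Gamma .
\end{align*}

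\textbf{Step 3: transfer to $p\in V_{n,\delta_n}$.} By Step 1, $\tilde Q_n(\gamma,p)\ge\eta\|\gamma-\gamma_n^*\|_2-2\Delta_n$. Take any $\gamma_p\in\tilde\Gamma_{n,I}(p)$. Then $\tilde Q_n(\gamma_p,p)=0$, so the previous inequality gives $\|\gamma_p-\gamma_n^*\|_2\le2\Delta_n/\eta$. By the triangle inequality,
\begin{align*}
\mathrm{dist}(\gamma,\tilde\Gamma_{n,I}(p))\le\|\gamma-\gamma_n^*\|_2+2\Delta_n/\eta .
\end{align*}
Combining the last two displays yields
\begin{align*}
\tilde Q_n(\gamma,p)\ge\eta\,\mathrm{dist}(\gamma,\tilde\Gamma_{n,I}(p))-4\Delta_n
\end{align*}
for all $\gamma\in\Gamma$, all $p\in V_{n,\delta_n}$, and all $n\ge\bar N$. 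This is (\ref{eq:hemi_misspec_minorant}) with $c=\eta$, $\zeta=1$, $r_n=4\Delta_n\to0$ and $\tilde N=\bar N$.

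\textbf{Expected obstacle.} The delicate step is Step 2: converting the convex-hull nondegeneracy into a global sharp-minimum inequality. This requires the exact directional-derivative formula for a finite max of differentiable convex functions, and it requires the segment argument, which is where convexity of $\Gamma$ and the open domain $\Gamma^o$ are used. If that route failed, for instance because the directional-derivative formula is not available at points on $\partial\Gamma$, the fallback would be a local argument. That argument would bound the change in the active set and gradients near $\gamma_n^*$ using $\Delta_n'$ and the $L_0$-Lipschitz and boundedness conditions of Condition 4. In the argument above, those two conditions are not needed.
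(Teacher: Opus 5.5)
Your proof is correct, and for the key step it takes a different and leaner route than the paper.

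Your Steps 1 and 2 match the paper. Step 1 is the paper's bound $u_n\le 2\Delta_n$. Step 2 is Lemma \ref{lem:singletonidset}, except that the paper avoids the Danskin formula: it picks a single active piece $a^{*}$ with $\nabla_\gamma q_{n,a^{*}}(\tilde\gamma,p_0)'t\ge\eta$ and applies the gradient inequality to that one convex function. This is slightly more elementary but has the same content.

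The real divergence is your Step 3. The paper re-linearizes at the projection $\tilde\gamma_{n,p}$ of $\gamma$ onto $\tilde\Gamma_{n,I}(p)$, using the piece $a^{*}$ chosen at $(\gamma_n^{*},p_0)$ in the direction $t_{n,p}$. It then needs two estimates:
\begin{itemize}
\item It bounds the activity gap $Q_n(\tilde\gamma_{n,p},p)-q_{n,a^{*}}(\tilde\gamma_{n,p},p)$ using $\Delta_n$, the localization radius $\kappa_n=2\Delta_n/\eta$ from Lemma \ref{lem:localization}, and the gradient bound on $\Gamma_{nbd}$.
\item It bounds the slope $\nabla_\gamma q_{n,a^{*}}(\tilde\gamma_{n,p},p)'t_{n,p}$ from below by $\eta-\Delta_n'-L_0\kappa_n$, using the $\Delta_n'$ part of Condition 3 and the Lipschitz part of Condition 4.
\end{itemize}
This yields $c=\eta/2$ and $r_n=\acute{C}\kappa_n+2\Delta_n$.

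You instead transfer the sharp minimum at $p_0$ to $p$ by a zeroth-order triangle-inequality argument, the same mechanism the paper uses in Proposition \ref{prop:relaxed}. You need only $u_n\le 2\Delta_n$ and the localization bound $\|\gamma_p-\gamma_n^{*}\|_2\le 2\Delta_n/\eta$, and you get $c=\eta$, $r_n=4\Delta_n$. Since the minorant (\ref{eq:hemi_misspec_minorant}) allows an additive slack $r_n\to0$, nothing more is required.

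Your argument therefore shows, with a better constant, that Condition 4 and the gradient half ($\Delta_n'$) of Condition 3 are unnecessary for this conclusion. The paper's first-order analysis of the perturbed criterion at its own minimizers would be the natural starting point if one wanted a slack-free minorant ($r_n=0$) at $p$. As written, though, it still carries an $O(\Delta_n)$ slack, so in this statement its extra hypotheses buy nothing.
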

Some brief remarks on the conditions of Proposition \ref{prop:misspec_hemi_convex}. First, the prior only enters in Condition 3, where it is assumed that $\{V_{n,\delta_{n}}\}_{n \geq 1}$ ensure uniform convergence of the component functions and their gradients. Appendix \ref{ap:uconmisspec} argues that, for some $Q_{n}(\gamma,p)$, $\Delta_{n}$ and $\Delta_{n}'$ are bounded by the empirical uniform norm of $f$ and $\nabla_{\gamma}f$, so a similar takeaway to the correctly specified case applies: Condition 3 can be satisfied if $V_{n,\delta_{n}}$ is contained in a $||\cdot||_{n,\infty}$-ball centered at $p_{0}$ with shrinking radius. Second, the key restriction on $p_{0}$ is Condition 5, which is interpreted as the misspecification Slater condition in Proposition \ref{prop:convexhemi}. Appendix \ref{ap:complementarity} argues these properties are complementary: Condition 5 can fail under correct specification when the slater condition holds, while Condition 5 deals with $\Gamma_{n,I}(p_{0}) = \emptyset$, a setting where, mechanically, Slater fails.

Finally, Condition 5 implies that $\tilde{\Gamma}_{n,I}(p_{0})$ is a singleton for large $n$. This is not an artifact of the proof: with a flat bottom, lower hemicontinuity of the exact pseudo-identified set can fail under Conditions 1-4. Indeed, in such case, posterior draws of $p$ act as small tilts of the criterion that collapse the exact set of minimizers onto draw-dependent faces of $\tilde{\Gamma}_{n,I}(p_{0})$, so the posterior for the exact set concentrates on strict subsets of $\tilde{\Gamma}_{n,I}(p_{0})$ and is spuriously precise -- a Bayesian analogue of the spurious precision under misspecification analyzed by \cite{andrews2024misspecified}.\footnote{Appendix \ref{ap:failurehemicontinuity} presents an example of this.} When a non-singleton pseudo-identified set is anticipated, we recommend reporting the posterior of a slightly relaxed pseudo-identified set $\tilde{\Gamma}_{n,I}^{t}:=\{\gamma \in \Gamma: \tilde{Q}_{n}(\gamma,p) \leq t\}$, $t > 0$, rather than the exact set of minimizers. Proposition \ref{prop:relaxed} in Appendix \ref{ap:failurehemicontinuity} shows the relaxation remedies this: if $\tilde{\Gamma}_{n,I}(p_{0})$ is a set of weak sharp minima of $\tilde{Q}_{n}(\cdot,p_{0})$, which, unlike Condition 5, is a condition compatible with flat bottoms, then, for large $n$ and uniformly over $p \in V_{n,\delta_{n}}$, $\tilde{\Gamma}_{n,I}^{t}(p)$ contains $\tilde{\Gamma}_{n,I}(p_{0})$ and lies within Hausdorff distance $2t/\eta$ of it. Furthermore, taking $t:=t_{n} \downarrow 0$ with $u_{n}/t_{n}\rightarrow 0$ as $n\rightarrow \infty$ restores Hausdorff consistency without any singleton requirement, which mirrors the slack sequences of \cite{chernozhukov2007estimation}.
\subsubsection{Uniform Convergence of Choice Probabilities Under Gaussian Priors.}\label{sec:GPverification}
A key takeaway from Section \ref{sec:suff1} is that, for certain classes of $\Gamma_{n,I}(p_{0})$ (and $\tilde{\Gamma}_{n,I}(p_{0})$), empirical uniform convergence of the conditional choice probabilities is a sufficient condition for Assumption \ref{as:prior}. This section derives conditions under which $\Pi_{n}((p(x_{1}),...,p(x_{n})) \in \cdot | Y^{(n)})$ concentrates around $(p_{0}(x_{1}),...,p_{0}(x_{n}))$ in the empirical supremum norm for the logistic stick-breaking GP prior. For notation, let $C([0,1]^{d_{x}},\mathbb{R})$ be the space of real-valued continuous functions defined on $[0,1]^{d_{x}}$, let $||\cdot||_{\infty}$ be the supremum norm (i.e., $||b||_{\infty} = \sup_{x \in [0,1]^{d_{x}}}|b(x)|$), and let $C^{a}([0,1]^{d_{x}},\mathbb{R})$, $a > 0$, be the H\"{o}lder space of functions that are $\lfloor a \rfloor$ times differentiable with $\lfloor a \rfloor$th derivative that is H\"{o}lder continuous with index $(a- \lfloor a \rfloor)$.
\begin{assumption}[Covariates]\label{as:covariates}
The following conditions hold:
1. $\mathcal{X} = [0,1]^{d_{x}}$, 2. $\{X_{i}\}_{i \geq 1}$ is i.i.d (i.e., $P_{0,X}^{(\infty)} = \bigotimes_{i \geq 1}P_{0,X}$), and 3. $P_{0,X}$ admits a density $p_{0,X}$ that is uniformly bounded away from zero and infinity on $[0,1]^{d_{x}}$.
\end{assumption}
\begin{assumption}[Gaussian Process]\label{as:GP}
The prior $\Pi$ for $p $ is $LSBGP(\{0\}_{k=1}^{K-1},\{\kappa_{k}\}_{k=1}^{K-1})$ with the Gaussian processes $B_{1},...,B_{K-1}$ additionally satisfying 1. $B_{k} \in C^{a_{k}}([0,1]^{d_{x}},\mathbb{R})$ for all $a_{k} < \alpha_{k}$, and 2. are truncated to $||B_{k}||_{\infty} \leq \overline{B}_{k}$ for some $\overline{B}_{k} < \infty$ for each $k=1,...,K-1$.
\end{assumption}
Assumptions \ref{as:covariates} and \ref{as:GP} are standard restrictions on $\{X_{i}\}_{i \geq 1}$ and the prior. Assumption \ref{as:covariates}.1 imposes that the covariates take values in $[0,1]^{d_{x}}$. This condition is without loss of generality if $\mathcal{X}$ is a compact subset of $\mathbb{R}^{d_{x}}$, and is routinely imposed in nonparametric econometrics. Assumption \ref{as:covariates}.2 imposes that the covariates are i.i.d, a standard assumption in the partial identification literature. Assumption \ref{as:covariates}.3 requires that the marginal distribution of $X_{i}$ has a density which is uniformly bounded away from zero and infinity on $[0,1]^{d_{x}}$, a condition that is satisfied, for instance, if $\mathcal{X}$ is compact, and $p_{0,X}$ is continous and positive everywhere. Assumption \ref{as:GP}.1 requires that the sample paths of the Gaussian process $B_{k}$ are almost $\alpha_{k}$ regular in the sense that they take values in the H\"{o}lder space $C^{a_{k}}([0,1]^{d_{x}})$. Common Gaussian processes (in particular, the Mat\'{e}rn processes from Section \ref{sec:GPimplementation}) satisfy this condition \citep{vaart2008rates,JMLR:v12:vandervaart11a}. Assumption \ref{as:GP}.2 imposes that the sample paths are uniformly bounded. It is a technical condition because $\overline{B}_{k}$ permitted to be arbitrarily large, but finite, and, for this reason, can be ignored in practice. Uniform boundedness restrictions on GP priors are often imposed when deriving strong norm posterior concentration guarantees \citep{gine2011rates,norets2015bayesian,walker2026semiparametric}.\footnote{Lemma 5.1 of \cite{van2008reproducing} guarantees that the event $\{||B_{k}||_{\infty} \leq \bar{B}_{k}\}$ receives positive measure under the probability law of the Gaussian process.}

We start with an intermediate empirical mean square posterior concentration result. For notation, let $(\mathcal{H}_{k},||\cdot||_{\mathcal{H}_{k}})$ be the reproducing kernel Hilbert space (RKHS) attached to $B_{k}$, let $\overline{\mathcal{H}}_{k}$ be the closure of $\mathcal{H}_{k}$ in $(C([0,1]^{d_{x}},\mathbb{R}),||\cdot||_{\infty})$, and let $$\varphi_{k,b_{0,k}}(\delta) = \inf_{f_{k} \in \mathcal{H}_{k}: ||f_{k}-b_{0,k}||_{\infty}< \delta}\frac{1}{2}||f_{k}||_{\mathcal{H}_{k}} - \log P_{GP(0,\kappa_{k})}(||B||_{k}< \delta)$$ for $\delta > 0$ be the concentration function of $GP(0,\kappa_{k})$ at $b_{0,k}$, where $b_{0,k} = \Lambda^{-1}(q_{0,k})$ for each $k=1,...,K-1$.
\begin{proposition}\label{prop:L2}
Suppose Assumptions \ref{as:dgp}.1, \ref{as:covariates}.1, and \ref{as:GP}.2 hold, $b_{0,k} \in \overline{\mathcal{H}}_{k}$ and $||b_{0,k}||_{\infty} \leq \bar{B}_{k}$ for each $k \in \{1,...,K-1\}$, and, for each $k \in \{1,...,K-1\}$, there exists a sequence $\{\tilde{\delta}_{n,k}\}_{n \geq 1}$ with $\tilde{\delta}_{n,k} \rightarrow 0$ as $n\rightarrow \infty$, $n\tilde{\delta}_{n,k}^{2} \rightarrow \infty$ as $n\rightarrow \infty$, and $\varphi_{k,b_{0,k}}(\tilde{\delta}_{n,k}) \leq n \tilde{\delta}_{n,k}^{2}$. Then for $P_{0,X}^{(\infty)}$-almost every fixed realization $\{x_{i}\}_{i \geq 1}$ of $\{X_{i}\}_{i \geq 1}$,
\begin{align*}
     \Pi_{n}\left(||p-p_{0}||_{n,2}\geq M\tilde{\delta}_{n}\middle | Y^{(n)}\right) \overset{P_{0}^{(n)}}{\longrightarrow} 0
\end{align*}
as $n\rightarrow \infty$, where $M>0$ is a large constant and $\tilde{\delta}_{n} = \max_{1 \leq k \leq K-1}\tilde{\delta}_{n,k}$.
\end{proposition}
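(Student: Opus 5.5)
Because the data are independent but not identically distributed given the fixed design, I would verify the conditions of the general posterior contraction theorem for independent non-identically distributed observations (Ghosal and van der Vaart, Theorem 8.23 in \cite{ghosal2017fundamentals}). The working semimetric is the root average squared Hellinger distance $d_{n,H}^{2}(p,p') = n^{-1}\sum_{i=1}^{n}h^{2}(p(x_{i}),p'(x_{i}))$ between the categorical laws. That theorem yields contraction at rate $\tilde{\delta}_{n}$ in $d_{n,H}$ under three conditions: a prior mass condition on Kullback--Leibler neighbourhoods, a sieve with small prior complement, and an entropy bound. The rate is then transferred to $\|p-p_{0}\|_{n,2}$.

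The transfer rests on a reparametrization argument. Through the stick-breaking map (\ref{eq:stickbreak}) and the logistic link $\Lambda$, the vector $p(x)$ is a smooth function of $(B_{1}(x),\dots,B_{K-1}(x))$. By Assumption \ref{as:GP}.2 and $\|b_{0,k}\|_{\infty}\le\bar B_{k}$, all relevant values lie in the compact box $\prod_{k}[-\bar B_{k},\bar B_{k}]$. On that box every $p_{k}(x)$ and $q_{k}(x)$ is bounded away from $0$ and $1$ by a constant depending only on the $\bar B_{k}$, and the map $B\mapsto p$ is Lipschitz with inverse also Lipschitz. It follows that the KL divergence, the KL variance and $h^{2}$ between $Categorical(p(x))$ and $Categorical(p_{0}(x))$ are all bounded above and below by constant multiples of $\|p(x)-p_{0}(x)\|_{2}^{2}$. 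That quantity is in turn $\lesssim \sum_{k}|B_{k}(x)-b_{0,k}(x)|^{2}$. Consequently $d_{n,H}$ and $\|\cdot\|_{n,2}$ are equivalent on the support of the truncated prior, and both are dominated by $\max_{k}\|B_{k}-b_{0,k}\|_{\infty}$.

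With this reduction the three conditions follow from the standard GP results of \cite{vaart2008rates,van2008reproducing}, applied to each $B_{k}$ separately and combined through prior independence.
\begin{enumerate}
\item Prior mass. The KL neighbourhood of radius $\tilde\delta_{n}$ contains $\{\max_{k}\|B_{k}-b_{0,k}\|_{\infty}<c\tilde\delta_{n}\}$. Since $b_{0,k}\in\overline{\mathcal H}_{k}$ and $\varphi_{k,b_{0,k}}(\tilde\delta_{n,k})\le n\tilde\delta_{n,k}^{2}$, the untruncated GP gives this ball mass at least $e^{-n\tilde\delta_{n}^{2}}$, after the usual rescaling of $\tilde\delta$ by constants. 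Truncation to $\|B_{k}\|_{\infty}\le\bar B_{k}$ only divides by a fixed positive normalizer. The one subtlety is that the ball must intersect the truncation region, which holds since $\|b_{0,k}\|_{\infty}\le\bar B_{k}$ (if the bound is an equality, shift the center slightly inward).
\item Sieve and entropy. Take the sieve $\mathcal B_{n,k}=M_{n}\mathbb H_{1}^{k}+\tilde\delta_{n}\mathbb B_{1}$ from \cite{vaart2008rates}, with $M_{n}$ of order $\sqrt{n}\tilde\delta_{n}$. Its complement has prior probability at most $e^{-Cn\tilde\delta_{n}^{2}}$, and its sup-norm entropy is $\lesssim n\tilde\delta_{n}^{2}$. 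Truncation only inflates the complement bound by a constant, and the Lipschitz map transfers sup-norm covers of $B$ into $d_{n,H}$-covers of $p$.
\item Testing. Tests for $d_{n,H}$ exist for independent non-identically distributed observations via the Le Cam--Birgé construction, so no extra work is needed there.
\end{enumerate}
The resulting statement holds for every fixed design, hence for almost every realization. Assumption \ref{as:covariates}.1 enters only to place the GP on $[0,1]^{d_{x}}$.

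I expect the main obstacle to be the bookkeeping around truncation in the first two conditions. Renormalizing the truncated prior needs a uniform-in-$n$ positive lower bound on $P(\|B_{k}\|_{\infty}\le\bar B_{k})$, which Lemma 5.1 of \cite{van2008reproducing} supplies. The prior-mass condition must then hold at a common rate $\tilde\delta_{n}=\max_{k}\tilde\delta_{n,k}$, which requires that $\varphi_{k}$ is decreasing so that $\varphi_{k}(\tilde\delta_{n})\le\varphi_{k}(\tilde\delta_{n,k})\le n\tilde\delta_{n}^{2}$.
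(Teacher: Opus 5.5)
Your proposal is correct and follows essentially the same route as the paper. The paper assembles by hand exactly the ingredients of the general non-i.i.d.\ contraction theorem you cite: the Hellinger-to-$\|\cdot\|_{n,2}$ transfer, KL prior mass via the concentration function and truncation normalizer (Lemma \ref{lem:KLdivergence}), an evidence lower bound (Lemma \ref{lem:normalizing}), Hellinger tests, and the van der Vaart--van Zanten sieve with entropy transferred through the Lipschitz map $B\mapsto p$. Your remark that the small ball around $b_{0,k}$ need not lie inside the truncation region when $\|b_{0,k}\|_\infty=\bar{B}_k$ addresses a point the paper's Lemma \ref{lem:KLdivergence} passes over, and your inward shift works because replacing $b_{0,k}$ by $(1-t)b_{0,k}$ does not increase the concentration function.
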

Proposition \ref{prop:L2} states that if $b_{0,k}$ can be uniformly approximated by sequences in the RKHS $\mathcal{H}_{k}$, then the posterior concentrates around empirical $L^{2}$ balls centered at $p_{0,k}$ (and the rate of convergence is quantified by $\varphi_{b_{0,k}}$). It can be viewed as an extension of the binary regression results in \cite{vaart2008rates} (i.e., Theorem 3.2) to categorical regression, and, since it is based on the logit stick breaking Gaussian process prior, it may be of independent general interest, as we are not aware of formal posterior contraction rate guarantees for these priors.\footnote{\cite{pati2013posterior} and \cite{norets2014posterior} derive posterior consistency results (without rates) in the related context of covariate dependent stick breaking models for conditional densities.} In some cases, these posteriors contract at the optimal rate. For example, if $\kappa_{k}$ belongs to the Mat\'{e}rn class with regularity parameter $\alpha_{k} > 0$ and $b_{0,k}$ is in the intersection of a H\"{o}lder and a Sobolev space of order $\alpha_{0,k}> 0$, then Theorem 5 of \cite{JMLR:v12:vandervaart11a} implies that $\tilde{\delta}_{n,k} = n^{-\min\{\alpha_{k},\alpha_{0,k}\}/(2\alpha_{k}+d_{x})}$ for each $k=1,...,K-1$. Consequently, when the prior and true smoothness match (i.e., $\alpha_{k} = \alpha_{0,k}$ for each $k=1,...,K-1$), we conclude that $\tilde{\delta}_{n} = n^{-\alpha_{0}/(2\alpha_{0}+d_{x})}$, where $\alpha_{0} = \min\{\alpha_{0,1},...,\alpha_{0,K-1}\}$, thereby achieving the optimal rate from \cite{stone1982optimal}.

Proposition \ref{prop:post_sup_norm} establishes that, when combined with smoothness restrictions on $b_{0,k}$ and $B_{k}$, posterior consistency in the empirical $L^{2}$ distance (Proposition \ref{prop:L2}) implies posterior consistency in the empirical supremum norm. The argument is based on wavelet interpolation and follows a similar logic to Proposition 2 of \cite{walker2026semiparametric}, except that it applies to a different class of priors (i.e., the logistic stick-breaking GP prior rather than an infinite-dimensional exponential family). The only additional condition, Assumption \ref{as:wavelet}, is a condition on a wavelet basis, and, since its exposition is somewhat cumbersome, we defer it to the Appendix. For Mat\'{e}rn GPs with $\alpha_{k} = \alpha_{0,k}$ for each $k=1,...,K-1$, the condition holds if $\min_{1 \leq k \leq K-1}\alpha_{k}>(1+\sqrt{5})d_{x}/4$, which is a slight strengthening of the baseline restriction $\min_{1 \leq k \leq K-1}\alpha_{k} > d_{x}/2$ in Assumption \ref{as:GP} (see Appendix \ref{ap:verifyingwavelet} for more discussion).
\begin{proposition}\label{prop:post_sup_norm}
Suppose that Assumptions \ref{as:dgp}.1, \ref{as:covariates}, \ref{as:GP}, and \ref{as:wavelet} hold, the conditions of Proposition \ref{prop:L2} hold, and for each $k \in \{1,...,K-1\}$, there exists $\alpha_{0,k} > d_{x}/2$ such that $b_{0,k} \in C^{\alpha_{0,k}}([0,1]^{d_{x}},\mathbb{R})$. Then, for $P_{0,X}^{(\infty)}$-almost every fixed realization $\{x_{i}\}_{i \geq 1}$ of $\{X_{i}\}_{i \geq 1}$, the following holds: for every $\varepsilon > 0$,
\begin{align*}
    \Pi_{n}\left(||p-p_{0}||_{n,\infty} \geq \varepsilon \middle | Y^{(n)}\right) \overset{P_{0}^{(n)}}{\longrightarrow} 0
\end{align*}
as $n\rightarrow \infty$.
\end{proposition}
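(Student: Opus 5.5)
The argument reduces the supremum-norm statement for $p$ to a statement about the latent functions $B_k$, and obtains it by combining the empirical $L^{2}$ contraction of Proposition \ref{prop:L2} with a smoothness-based interpolation inequality.

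\textbf{Reduction to the latent processes.} The stick-breaking map (\ref{eq:stickbreak}) composed with $\Lambda$ is globally Lipschitz from $(B_{1}(x),\dots,B_{K-1}(x))$ to $p(x)$, since $\Lambda'\le 1/4$ and all factors lie in $[0,1]$. Hence
\begin{align*}
||p-p_{0}||_{n,\infty}\le C_{K}\max_{k}||B_{k}-b_{0,k}||_{n,\infty}.
\end{align*}
In the other direction, the truncation $||B_{k}||_{\infty}\le\bar{B}_{k}$ and $||b_{0,k}||_{\infty}\le \bar{B}_{k}$ keep all $q_{k}$ and $q_{0,k}$ in a compact subinterval of $(0,1)$. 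On that subinterval $\Lambda^{-1}$ is Lipschitz. The stick-breaking inversion $q_{k}=p_{k}/\sum_{j\ge k}p_{j}$ then has denominators bounded below by $\prod_{j<k}(1-q_{j})\ge c>0$. It follows that $||B_{k}-b_{0,k}||_{n,2}\le C||p-p_{0}||_{n,2}$. Proposition \ref{prop:L2} therefore gives posterior concentration of each $B_{k}$ on $\{||B_{k}-b_{0,k}||_{n,2}\le M\tilde{\delta}_{n}\}$. (Alternatively, this could be read off directly from its proof, which works at the level of $B_{k}$.)

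\textbf{Empirical to population $L^{2}$.} Under Assumptions \ref{as:covariates} and \ref{as:GP}, work on the sieve
\begin{align*}
\mathcal{F}_{k}=\{b: ||b||_{C^{a_{k}}}\le R_{n},\ ||b||_{\infty}\le \bar{B}_{k}\}, \qquad d_{x}/2<a_{k}<\alpha_{k}.
\end{align*}
First, show the posterior puts mass tending to one on $\mathcal{F}_{k}$. The prior mass of $\{||B_{k}||_{C^{a_{k}}}>R_{n}\}$ decays like $e^{-cR_{n}^{2}}$ by Borell's inequality, and this beats the $e^{-Cn\tilde{\delta}_{n}^{2}}$ evidence lower bound from Proposition \ref{prop:L2} once $R_{n}^{2}\gtrsim n\tilde\delta_{n}^{2}$. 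Here one may take $R_{n}$ growing slowly, or fixed if the truncation is strengthened. Second, use a uniform law over $\{(b-b_{0,k})^{2}: b\in\mathcal{F}_{k}\}$. Its entropy is $\log N(\epsilon)\lesssim (R_{n}/\epsilon)^{d_{x}/a_{k}}$, and the density $p_{0,X}$ is bounded below. Together these show that, for $P_{0,X}^{(\infty)}$-a.e. sequence, the empirical norm $||\cdot||_{n,2}$ dominates $c||\cdot||_{L^{2}([0,1]^{d_x})}$ up to an $o(1)$ error uniformly on $\mathcal{F}_{k}$. The a.s. statement follows from a Borel--Cantelli/Talagrand argument.

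\textbf{$L^{2}$ to sup norm; the main obstacle.} For $h=B_{k}-b_{0,k}\in C^{\min(a_{k},\alpha_{0,k})}$ with bounded Hölder norm, I would use the wavelet expansion of Assumption \ref{as:wavelet} truncated at resolution $J$. The low-frequency part obeys $||P_{J}h||_{\infty}\lesssim 2^{Jd_{x}/2}||h||_{L^{2}}$. The tail obeys $||h-P_{J}h||_{\infty}\lesssim R_{n}2^{-J\min(a_k,\alpha_{0,k})}$. Choosing $2^{J}$ to balance $2^{Jd_{x}/2}\tilde{\delta}_{n}$ against $R_{n}2^{-Ja}$ makes both terms $o(1)$ provided $a>d_{x}/2$ and $R_{n}$ grows slowly enough relative to $\tilde{\delta}_{n}$. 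This is exactly where the rate condition encoded in Assumption \ref{as:wavelet} enters. Finally, $||h||_{n,\infty}\le||h||_{\infty}$ gives the result.

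I expect the hardest part to be reconciling the sieve radius $R_{n}$, which must be large enough to carry the posterior mass, with the interpolation balance, which needs $R_{n}$ small relative to $\tilde{\delta}_{n}^{-1}$. I would first try to keep $R_{n}$ bounded by exploiting the Hölder regularity in Assumption \ref{as:GP}.1 together with the exponential Gaussian tail, and fall back on the explicit rate condition of Assumption \ref{as:wavelet} otherwise.
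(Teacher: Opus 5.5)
Your reduction from $p$ to the latent $B_k$ matches the paper's Step 1, and so does your H\"older-ball sieve, whose posterior mass is controlled by a Gaussian tail bound against the evidence lower bound. After that your route differs, and as written it has a gap at the empirical-to-population step.

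\textbf{The gap.} You only assert that $\|\cdot\|_{n,2}$ dominates $c\|\cdot\|_{L^2}$ up to an $o(1)$ error on $\mathcal{F}_k$. But the low-frequency term in your interpolation is $2^{Jd_x/2}\|h\|_{L^2}$ with $2^{J}\to\infty$, so an unquantified $o(1)$ gives nothing. In the balancing step you then silently replace $\|h\|_{L^2}$ by $\tilde\delta_n$, which your uniform law does not deliver.

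The sieve radius also cannot be kept bounded or slowly growing. To beat the $e^{-Cn\tilde\delta_n^2}$ evidence bound you need $R_n\gtrsim\sqrt{n}\tilde\delta_n$. For fixed $R$, the prior mass $e^{-cR^2}$ of the complement does not decay in $n$, so neither Assumption \ref{as:GP}.1 nor the Gaussian tail lets the prior-mass argument certify a bounded sieve.

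With $R_n\asymp\sqrt n\tilde\delta_n$, combine bracketing entropy $(R_n/\epsilon)^{d_x/a_k}$, bounded-difference concentration and Borel--Cantelli. This gives only
\[
\sup_{b\in\mathcal{F}_k}\Bigl|\|b-b_{0,k}\|_{n,2}^{2}-\|b-b_{0,k}\|_{L^{2}(P_{0,X})}^{2}\Bigr|\lesssim n^{-1/2}R_n^{d_x/(2a_k)}+\sqrt{\log n/n}.
\]
This bound is typically of larger order than $\tilde\delta_n^2$: the last term alone exceeds $n^{-2r}$ when $\tilde\delta_n=n^{-r}$ with $r>1/4$. So the population rate is $\epsilon_n$, with $\epsilon_n^2:=\tilde\delta_n^2+n^{-1/2}R_n^{d_x/(2a_k)}+\sqrt{\log n/n}$, and you must find $J$ with both $R_n2^{-Ja_k}\to0$ and $2^{Jd_x}\epsilon_n^2\to0$. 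This is not the balance you wrote. In particular, an admissible $J_k$ from Assumption \ref{as:wavelet} need not work, because $2^{J_k}=o((n/\log n)^{1/d_x})$ does not make $2^{J_kd_x}\sqrt{\log n/n}$ vanish.

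\textbf{How to close it.} Combining the two displays of Assumption \ref{as:wavelet} gives $\sqrt n\tilde\delta_n=o(n^{(a_k-d_x/2)/(2a_k)})$. A short exponent comparison then shows $R_n^{1/a_k}=o(\epsilon_n^{-2/d_x})$, so a suitable $J$ does exist. That verification, together with the explicit deviation bound, is the missing step. It is also not ``exactly where Assumption \ref{as:wavelet} enters'', since that assumption is tailored to a different argument.

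\textbf{Comparison with the paper.} The paper never leaves the design points. It splits $B_k-b_{0,k}$ using the \emph{empirical} least-squares projection onto the CDV wavelet span. The bound $\|B-b_0\|_{n,2}\lesssim\|p-p_0\|_{n,2}\lesssim\tilde\delta_n$ from Proposition \ref{prop:L2} then enters directly, and no uniform law is needed. The paper pays in two ways:
\begin{itemize}
\item a factor $2^{Jd_x/2}$ on the approximation error of $B_k$, which is the source of $(\sqrt n\tilde\delta_n)^{1/(a_k-d_x/2)}=o(2^{J_k})$;
\item a well-conditioned empirical wavelet Gram matrix and an $L^\infty$-stable projection, which is the source of $2^{J_k}=o((n/\log n)^{1/d_x})$.
\end{itemize}
Once quantified, your route trades those ingredients for an empirical-process bound over a growing H\"older ball.
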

\section{Extensions}\label{sec:extensions}
This section extends our framework to settings with aggregated discrete responses and continuous endogenous variables.
\subsection{Aggregated Discrete Responses.} In some economic settings (e.g., empirical industrial organization), the researcher may only have access to aggregated discrete outcome over some units (e.g., market shares). Our framework readily extends to this setting. For concreteness, suppose that the observed data is $(Y',X')'$, where $X = (\tilde{X}',M)$ is a vector of observed covariates $\tilde{X}$ and an exogenous number of units $M$ over which the aggregation is performed (e.g., market size), $Y = M\tilde{Y} \in \mathbb{N}^{K}$ is the vector of counts, and $\tilde{Y}$ is a $K$-dimensional random vector that takes values in the probability simplex (i.e., shares). Assuming the aggregated units are independent, the reduced-form model is
\begin{align*}
    Y_{i}|p \overset{ind}{\sim} Multinomial(m_{i},p(\tilde{x}_{i})), \quad i=1,...,n, \quad p \sim \Pi,
\end{align*}
where $Multinomial(m,p(\tilde{x}))$ is the multinomial distribution based on $m$ trials and event probabilities $p(\tilde{x})$, and $\Pi$ is the same prior over $p$ as before. Consequently, by defining $\Gamma_{n,I}(p) = \{\gamma \in \Gamma: f(x_{i},p(\tilde{x}_{i}),\gamma) \in \mathbb{R}^{d_{f}}_{+} \ \forall \ i=1,...,n\}$ (i.e., the same as before except acknowledging the partition $x_{i}=(\tilde{x}_{i},m_{i})$), a posterior for $(p(\tilde{x}_{1}),...,p(\tilde{x}_{n}))$ implies a posterior for $\Gamma_{n,I}(p)$. This demonstrates that, conceptually, there is virtually no difference between our main setup and the aggregated discrete choice setting.

Similarly, the stick-breaking technique used for our implementation readily extends to this setting. Using the stick-breaking weights, $p_{k} = q_{k}\prod_{j < k}(1-q_{j})$ for $k=1,...,K-1$ and $p_{K} = \prod_{j=1}^{K-1}(1-q_{j})$, the multinomial likelihood $L_{n}(p)$ satisfies $L_{n}(p) = L_{n}^{*}(q)$ with
\begin{align*}
    L_{n}^{*}(q) = \prod_{k=1}^{K-1}\prod_{i \in \mathcal{I}_{k}}{m_{i,k} \choose Y_{i,k}}q_{k}(\tilde{x}_{i})^{Y_{i,k}}(1-q_{k}(\tilde{x}_{i}))^{m_{i,k}-Y_{i,k}},
\end{align*}
where $m_{i,k} = m_{i} - \sum_{j < k}Y_{i,j}$, $Y_{i,k}$ is the $k$th element of $Y_{i}$, and $\mathcal{I}_{k} = \{i: m_{i,k} > 0\}$. Notice that, for each $k$, the factor is the density of $|\mathcal{I}_{k}|$ independent Binomial distributions with $m_{i,k}$ trials. Consequently, by specifying the law of $(q_{1},...,q_{K-1})$ to be that of $(\Lambda(B_{1}),...,\Lambda(B_{K-1}))$, where $B_{k} \overset{ind}{\sim} GP(\mu_{k},\kappa_{k})$ for each $k=1,...,K-1$, the binomial version of \cite{polson2013bayesian}'s P\'{o}lya-Gamma data augmentation can be used to sample from $\pi_{n,k}(B_{\mathcal{I}_{k},k}|Y^{(n)})$ (see Appendix \ref{ap:posteriordraws}). The other implementation aspects (e.g., post-processing to sample from $\pi_{n,k}(B_{\mathcal{I}_{k}^{c},k}|Y^{(n)},B_{\mathcal{I}_{k},k})$ and computing $\Gamma_{n,I}(p)$) do not change because they do not depend on the likelihood.

Since the reduced-form parameter $p$ is unchanged in the aggregated setting, the only possible concern for the asymptotic theory is whether this extension affects posterior concentration of $p$ around $p_{0}$. That is, whether the results from Section \ref{sec:GPverification} extends. The next result states that if $\{m_{i}\}_{i \geq 1}$ forms a bounded sequence, then Propositions \ref{prop:L2} and \ref{prop:post_sup_norm} are unaffected by the aggregation. Consequently, under bounded $M$, aggregated discrete choice settings are both practically and theoretically accommodated.
\begin{proposition}\label{prop:GPmultinomial}
Suppose that the true data-generating process satisfies $Y_{i}|\{X_{j}\}_{j \geq 1} \overset{ind}{\sim} Multinomial(M_{i},p_{0}(\tilde{X}_{i}))$ for every $i \geq 1$, $\{\tilde{X}_{i}\}_{i \geq 1}$ satisfies Assumption \ref{as:covariates}, and $\{M_{i}\}_{i \geq 1}$ is bounded with $P_{0,X}^{(\infty)}$-probability equal to one. Further, suppose that Assumptions \ref{as:GP} and \ref{as:wavelet} hold,  $b_{0,k} \in \overline{\mathcal{H}}_{k} \cap C^{\alpha_{0,k}}([0,1]^{d_{x}},\mathbb{R})$ for some $\alpha_{0,k} > d_{x}/2$ and $||b_{0,k}||_{\infty} \leq \bar{B}_{k}$ for each $k \in \{1,...,K-1\}$, and, for each $k \in \{1,...,K-1\}$, there exists a sequence $\{\tilde{\delta}_{n,k}\}_{n \geq 1}$ with $\tilde{\delta}_{n,k} \rightarrow 0$ as $n\rightarrow \infty$, $n\tilde{\delta}_{n,k}^{2} \rightarrow \infty$ as $n\rightarrow \infty$, and $\varphi_{k,b_{0,k}}(\tilde{\delta}_{n,k}) \leq n \tilde{\delta}_{n,k}^{2}$. Then, for $P_{0,X}^{(\infty)}$-almost every fixed realization $\{x_{i}\}_{i \geq 1}$ of $\{X_{i}\}_{i \geq 1}$, the following is true:  for every $\varepsilon > 0$,
\begin{align*}
    \Pi_{n}\left(||p-p_{0}||_{n,\infty} \geq \varepsilon \middle | Y^{(n)}\right) \overset{P_{0}^{(n)}}{\longrightarrow} 0
\end{align*}
as $n\rightarrow \infty$, where $P_{0}^{(n)} = \bigotimes_{i=1}^{n}Multinomial(m_{i},p_{0}(\tilde{x}_{i}))$
\end{proposition}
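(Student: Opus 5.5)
The plan is to reduce Proposition \ref{prop:GPmultinomial} to the argument behind Propositions \ref{prop:L2} and \ref{prop:post_sup_norm}. The only place the likelihood enters those proofs is the general posterior contraction machinery of \cite{ghosal2017fundamentals}: existence of tests, the Kullback--Leibler prior mass condition, and the sieve entropy/remaining mass condition. The prior mass and sieve conditions concern only the GP prior and the concentration function $\varphi_{k,b_{0,k}}$, so they carry over verbatim. The wavelet step from empirical $L^{2}$ to empirical supremum norm (Assumption \ref{as:wavelet}) uses only the smoothness of the sample paths and of $b_{0,k}$, not the likelihood. That step can therefore also be reused unchanged once the $L^{2}$ contraction is in hand. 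The work thus reduces to re-verifying the two likelihood-dependent ingredients for independent (non-identically distributed) multinomial observations with bounded trial counts $m_{i}\leq \bar{m}$.

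\textbf{Step 1 (KL and variance bounds).} For the Kullback--Leibler neighborhoods, I would use the stick-breaking factorization $L_{n}^{*}(q)$. It turns observation $i$ into a product over $k$ of $\text{Binomial}(m_{i,k},\Lambda(b_{k}(\tilde x_i)))$ factors, with $m_{i,k}\leq \bar m$. For each binomial factor with logit parameter, the KL divergence and second KL moment between parameters $b$ and $b_{0}$ are bounded by $\bar m\,C(\bar B)\,|b-b_{0}|^{2}$; the constant $C(\bar B)$ comes from the uniform bounds $\|b_k\|_\infty,\|b_{0,k}\|_\infty\le \bar B_k$ keeping $\Lambda$ away from $0$ and $1$. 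The $m_{i,k}$ are random (they depend on $Y_{i,j}$, $j<k$), so I would bound the divergence conditionally on earlier coordinates and then take expectations. Averaging over $i$ then gives $n^{-1}\sum_i K_i \lesssim \bar m \sum_k \|b_k-b_{0,k}\|_{n,2}^2\le \bar m\sum_k\|b_k-b_{0,k}\|_\infty^2$, and the same holds for the variance term. Consequently the prior mass bound $\Pi(\|b_k-b_{0,k}\|_\infty<\tilde\delta_{n,k})\ge e^{-n\tilde\delta_{n,k}^2}$ from $\varphi_{k,b_{0,k}}(\tilde\delta_{n,k})\le n\tilde\delta_{n,k}^2$ yields the required KL prior mass at rate $\tilde\delta_n$, with constants inflated by $\bar m$.

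\textbf{Step 2 (tests).} I expect this to be the main obstacle, so I would try the standard route first. For independent non-i.i.d. observations, \cite{ghosal2017fundamentals} provide likelihood ratio tests with exponentially small errors in the root average squared Hellinger distance $d_{n,H}^2=n^{-1}\sum_i H^2(P_{p,i},P_{p_0,i})$. It then suffices to show $d_{n,H}(p,p_0)\gtrsim \|p-p_0\|_{n,2}$. Since $m_i\ge 1$ (otherwise there is no observation, and one restricts to $m_i\geq1$), the multinomial with $m_i$ trials contains the single-trial categorical as a sufficient-statistic marginal. More directly, Hellinger affinity tensorizes over trials: $1-H^{2}/2=\rho^{m_i}\le\rho$, where $\rho$ is the single-categorical affinity. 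Hence $H^2(\text{Mult}(m_i,p),\text{Mult}(m_i,p_0))\ge H^2(\text{Cat}(p),\text{Cat}(p_0))\gtrsim \|p(x_i)-p_0(x_i)\|_2^2$. Entropy of the sieve in $\|\cdot\|_{n,2}$ is controlled through $\|\cdot\|_\infty$ entropy of the GP sieves exactly as in Proposition \ref{prop:L2}. The Lipschitz map $b\mapsto p$ (on bounded $b$) transfers these covers, and this part uses no features of the likelihood.

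\textbf{Step 3 (conclusion).} Steps 1--2 give $\Pi_n(\|p-p_0\|_{n,2}\ge M\tilde\delta_n\mid Y^{(n)})\to0$ in $P_0^{(n)}$-probability for a.e.\ $\{x_i\}$. The almost-sure-in-covariates qualifiers come from Assumption \ref{as:covariates} applied to $\{\tilde X_i\}$ together with boundedness of $M_i$ almost surely. Finally, I would invoke the wavelet interpolation argument of Proposition \ref{prop:post_sup_norm} unchanged to upgrade to $\|p-p_0\|_{n,\infty}\to0$ in posterior probability. The most delicate point to check carefully is the conditional-on-$m_{i,k}$ handling in Step 1. If it causes difficulty, the fallback is to bound the KL divergence directly on $p$: for multinomials it equals $m_i\,\mathrm{KL}(\text{Cat}(p_0)\|\text{Cat}(p))\le \bar m\,C\|p-p_0\|_2^2$, since $p$ is bounded away from the simplex boundary, and then use that $b\mapsto p$ is Lipschitz.
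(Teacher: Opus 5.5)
Your proposal is correct and follows essentially the same route as the paper's proof. The paper extends Lemmas \ref{lem:KLdivergence} and \ref{lem:normalizing} using that the multinomial KL is $m_i$ times the categorical one with $m_i$ bounded, which is your fallback, gets the tests from a Hellinger lower bound via the affinity tensorization $\rho^{m_i}\le\rho$, and reuses the wavelet upgrade of Proposition \ref{prop:post_sup_norm} unchanged; your primary stick-breaking/chain-rule bound on the KL terms also works, since each conditional binomial factor has at most $\bar m$ trials and logits bounded by $\bar B_k$.
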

\subsection{Continuous Response Variables.} Some important partially identified models involve continuously distributed $Y$, such as linear regression models with interval censoring \citep{manski2002inference}, incomplete auction models \citep{haile2003inference}, and models of exporter choice \citep{dickstein2018exporters}, to name a few. Our framework extends to this setting. First, we redefine the reduced-form model as
\begin{align*}
Y_{i}|p_{Y|X}\overset{ind}{\sim} p_{Y|X}(\cdot|x_{i}), \quad i=1,...,n, \quad p_{Y|X} \sim \Pi,
\end{align*}
where $p_{Y|X}$ is the conditional probability density function of $Y|X$ and $\Pi$ is a prior for $p_{Y|X}$. Then, we redefine the conditional identified set as 
\begin{align*}
\Gamma_{n,I}(p_{Y|X}) = \left\{\gamma \in \Gamma: f(x_{i},p_{Y|X}(\cdot|x_{i}),\gamma) \in \mathbb{R}_{+}^{d_{f}} \ \forall \ i=1,...,n\right\}.     
\end{align*}
Subject to regularity conditions, our Bayesian inference framework extends: the posterior for $(p_{Y|X}(\cdot|x_{1}),...,p_{Y|X}(\cdot|x_{n}))$ implies a posterior for $\Gamma_{n,I}(p_{Y|X})$ via the mapping $(p_{Y|X}(\cdot|x_{1}),...,p_{Y|X}(\cdot|x_{n})) \mapsto \Gamma_{n,I}(p_{Y|X})$. Importantly, by defining $f(x_{i},p_{Y|X}(\cdot|x_{i}),\gamma) = \int g(y,x_{i},\gamma)p_{Y|X}(y|x_{i})dy$, this nests conditional moment inequalities with continuous $Y$.

Since the consistency theory only uses the finite support of $Y$ in Section \ref{sec:GPverification}, some of our posterior consistency results readily extend to continuous $Y$. Suppose that the posterior for $p_{Y|X}$ concentrates on $V_{n,\delta_{n}}^{dens} :=\mathcal{P}_{n,Y|X} \cap \{p_{Y|X}: e_{n}(p_{Y|X},p_{0,Y|X}) < \delta_{n}\}$, where $e_{n}$ is a semimetric over conditional densities and $\mathcal{P}_{n,Y|X}$ is a sieve, and that these sets are structured enough to ensure that
\begin{align}
\sup_{(\gamma,p) \in \Gamma \times V_{n,\delta_{n}}^{dens}}\max_{1 \leq i \leq n}||f(x_{i},p_{Y|X}(\cdot|x_{i}),\gamma)-f(x_{i},p_{0,Y|X}(\cdot|x_{i}),\gamma)||_{2}\longrightarrow 0  \label{eq:continuous_ucon1}\\
\sup_{(\gamma,p) \in \Gamma \times V_{n,\delta_{n}}^{dens}}\max_{1 \leq i \leq n}||W(x_{i},p_{Y|X}(\cdot|x_{i}),\gamma)-W(x_{i},p_{0,Y|X}(\cdot|x_{i}),\gamma)||_{2}\longrightarrow 0 \label{eq:continuous_ucon2}
\end{align}
as $n\rightarrow \infty$. Then, a continuous $Y$ extension of Assumption \ref{as:prior} for convex $\Gamma_{n,I}(p_{Y|X})$ and $Q_{n,r}(\gamma,p_{Y|X}) = ||\text{dist}_{W(\cdot,p_{Y|X}(\cdot|\cdot),\gamma)}(f(\cdot,p_{Y|X}(\cdot|\cdot),\gamma),\mathbb{R}_{+}^{d_{f}})||_{n,r}$, $r \in [1,\infty]$, holds so long as the true density $p_{0,Y|X}$ satisfies analogous conditions to $p_{0}$ in Propositions \ref{prop:ucon} and \ref{prop:convexhemi}.\footnote{Continuous $Y$ extensions of Propositions \ref{prop:taxicab}, \ref{prop:minorantmisspec}, and \ref{prop:misspec_hemi_convex} can be achieved. For conciseness, we omit them.} Proposition \ref{prop:continuousY} below formalizes this (with Assumption \ref{as:continuousDGP} in Appendix \ref{ap:extensionproofs} stating conditions on $p_{0,Y|X}$). Consequently, if the DGP satisfies $Y_{i}|\{X_{j}\}_{j \geq 1} \overset{ind}{\sim}p_{0,Y|X}$ and $Q_{n,r}(\gamma,p_{0,Y|X})$ is well-separated at $\Gamma_{n,I}(p_{0,Y|X}) \neq \emptyset$ (i.e., a modification of Assumption \ref{as:dgp} holds), then a virtually identical argument to Theorem \ref{thm:consistency} establishes posterior consistency for $\Gamma_{n,I}(p_{Y|X})$ at $\Gamma_{n,I}(p_{0,Y|X})$ in the Hausdorff distance. For conditional moment inequalities with bounded $g(Y,X,,\gamma)$ and $W(x,p_{Y|X}(\cdot|x),\gamma)$ based on $Var(g(Y,X,\gamma)|X=x)$, Proposition 2 of \cite{walker2026semiparametric} provides sufficient conditions for (\ref{eq:continuous_ucon1}) and (\ref{eq:continuous_ucon2}) under GP priors for $p_{Y|X}$.
\begin{proposition}\label{prop:continuousY}
Suppose that, for $P_{0,X}^{(\infty)}$-almost every fixed realization $\{x_{i}\}_{i \geq 1}$ of $\{X_{i}\}_{i \geq1}$, there exists sets $\{\mathcal{P}_{n,Y|X}\}_{n \geq 1}$, metrics $\{e_{n}\}_{n \geq 1}$, and sequences $\{\delta_{n}\}_{n \geq 1}$ such that $\delta_{n} \rightarrow 0$ as $n\rightarrow \infty$, and
\begin{align*}
\Pi_{n}(p_{Y|X} \in \mathcal{P}_{n,Y|X}|Y^{(n)}) &= 1+o_{P_{0,Y|X}^{(n)}}(1) \\
\Pi_{n}(e_{n}(p_{Y|X},p_{0,Y|X})<\delta_{n}|Y^{(n)}) &= 1+o_{P_{0,Y|X}^{(n)}}(1)
\end{align*}
Moreover, suppose that the criterion is $Q_{n,r}(\gamma,p) =  ||\text{dist}_{W(\cdot,p_{Y|X}(\cdot|\cdot),\gamma)}(f(\cdot,p_{Y|X}(\cdot|\cdot),\gamma),\mathbb{R}_{+}^{d_{f}})||_{n,r}$ for some $r \in [1,\infty]$, and, for $P_{0,X}^{(\infty)}$-almost every fixed realization $\{x_{i}\}_{i \geq 1}$ of $\{X_{i}\}_{i \geq1}$,
\begin{align*}
\sup_{(\gamma,p) \in \Gamma \times V_{n,\delta_{n}}^{dens}}\max_{1 \leq i \leq n}||f(x_{i},p_{Y|X}(\cdot|x_{i}),\gamma)-f(x_{i},p_{0,Y|X}(\cdot|x_{i}),\gamma)||_{2}\longrightarrow 0  \\
\sup_{(\gamma,p) \in \Gamma \times V_{n,\delta_{n}}^{dens}}\max_{1 \leq i \leq n}||W(x_{i},p_{Y|X}(\cdot|x_{i}),\gamma)-W(x_{i},p_{0,Y|X}(\cdot|x_{i}),\gamma)||_{2}\longrightarrow 0
\end{align*}
as $n\rightarrow \infty$. If Assumption \ref{as:continuousDGP} holds, then Assumption \ref{as:prior} holds (with $p_{Y|X}$ in place of $p$).
\end{proposition}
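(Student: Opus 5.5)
The goal is to verify the three parts of Assumption \ref{as:prior} in turn, with $V_{n,\delta_{n}}^{dens} = \mathcal{P}_{n,Y|X}\cap\{e_{n}(p_{Y|X},p_{0,Y|X})<\delta_{n}\}$ playing the role of $V_{n,\delta_{n}}$. We reuse the arguments of Propositions \ref{prop:ucon}, \ref{prop:hemi} and \ref{prop:convexhemi} verbatim. The finite support of $Y$ enters none of those proofs: they only ever act on the vectors $f(x_{i},p(x_{i}),\gamma)\in\mathbb{R}^{d_{f}}$ and matrices $W(x_{i},p(x_{i}),\gamma)$, not on $p$ itself. Assumption \ref{as:continuousDGP} is taken to collect, for $p_{0,Y|X}$, the bounded-criterion and bounded-eigenvalue conditions of Proposition \ref{prop:ucon}, the compactness and convexity of $\Gamma$, the concavity and differentiability of $f$ in $\gamma$, the interior/boundary-separation condition, and the uniform Slater condition of Proposition \ref{prop:convexhemi}.

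\textbf{Assumption \ref{as:prior}.1.} This is immediate from the two displayed posterior statements with sieve $\mathcal{P}_{n,Y|X}$ and semimetric $e_{n}$.

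\textbf{Assumption \ref{as:prior}.2.} We redo the proof of Proposition \ref{prop:ucon} with $f_{p}:=f(x_{i},p_{Y|X}(\cdot|x_{i}),\gamma)$ and $W_{p}$ defined analogously.
\begin{itemize}
\item[(i)] By (\ref{eq:continuous_ucon2}) and the eigenvalue bounds on $W_{p_{0}}$, Weyl's inequality gives eigenvalues of $W_{p}$ in $[\underline{\lambda}_{W}/2,2\overline{\lambda}_{W}]$ uniformly over $i\le n$, $\gamma\in\Gamma$ and $V^{dens}_{n,\delta_{n}}$, for $n$ large.
\item[(ii)] Split $\text{dist}_{W_{p}}(f_{p},\mathbb{R}^{d_{f}}_{+})-\text{dist}_{W_{p_{0}}}(f_{p_{0}},\mathbb{R}^{d_{f}}_{+})$ into a change in $f$, which is $\le(2\overline{\lambda}_{W})^{1/2}\|f_{p}-f_{p_{0}}\|_{2}$ because distance is Lipschitz, and a change in $W$ at fixed $f_{p_{0}}$.
\item[(iii)] For the $W$ term, use $|\,\|t\|_{W}-\|t\|_{W'}|\le \|W-W'\|_{op}^{1/2}\|t\|_{2}$, apply it at the minimizer for the other metric, and bound $\|t\|_{2}$ via the bounded criterion at $p_{0}$. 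Because the criterion bound is only in $\|\cdot\|_{n,r}$, for $r<\infty$ we bound the product term with Hölder/Minkowski in $\|\cdot\|_{n,r}$ rather than pointwise.
\item[(iv)] The reverse triangle inequality for $\|\cdot\|_{n,r}$ then yields $\sup_{\Gamma\times V^{dens}_{n,\delta_{n}}}|Q_{n,r}(\gamma,p)-Q_{n,r}(\gamma,p_{0})|\to0$ from (\ref{eq:continuous_ucon1})--(\ref{eq:continuous_ucon2}).
\end{itemize}

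\textbf{Assumption \ref{as:prior}.3.} This is the main obstacle. The plan is to establish the error bound (\ref{eq:errorbound}) uniformly over $V^{dens}_{n,\delta_{n}}$ and then invoke the argument of Proposition \ref{prop:hemi}. That argument uses only (\ref{eq:continuous_ucon1}) and (\ref{eq:errorbound}): for $\gamma\in\Gamma_{n,I}(p_{0})$, $\|(f_{p}(\gamma))_{-}\|\le \sup\|f_{p}-f_{p_{0}}\|=o(1)$, so $\text{dist}(\gamma,\Gamma_{n,I}(p))\le o(1)/c<\varepsilon$. To get (\ref{eq:errorbound}) we follow Proposition \ref{prop:convexhemi}.
\begin{itemize}
\item[(i)] By (\ref{eq:continuous_ucon1}), the Slater point $\gamma^{\circ}$ satisfies $\min_{i,j}f_{j,p}(\gamma^{\circ})\ge\eta/2$ and the boundary values satisfy $f_{p}\le-\xi/2$ on $\partial\Gamma$, uniformly in $p\in V^{dens}_{n,\delta_{n}}$, for large $n$.
\item[(ii)] Fix $\gamma\notin\Gamma_{n,I}(p)$, let $v:=\max_{i}\|(f_{p}(x_{i},\gamma))_{-}\|_{2}$, and take $\gamma_{\lambda}=(1-\lambda)\gamma+\lambda\gamma^{\circ}$ with $\lambda=v/(v+\eta/2)$. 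Concavity gives $f_{j,p}(\gamma_{\lambda})\ge(1-\lambda)(-v)+\lambda\eta/2=0$, so $\gamma_{\lambda}\in\Gamma_{n,I}(p)$.
\item[(iii)] Hence $\text{dist}(\gamma,\Gamma_{n,I}(p))\le\lambda\,\text{diam}(\Gamma)\le 2v\,\text{diam}(\Gamma)/\eta$, i.e.\ (\ref{eq:errorbound}) holds with $c=\eta/(2\,\text{diam}\,\Gamma)$.
\end{itemize}
This convexity route avoids Lagrange multipliers entirely; if it falls short, the fallback is the multiplier-boundedness argument of Proposition \ref{prop:convexhemi}. The real difficulty is checking that every constant stays uniform over $V^{dens}_{n,\delta_{n}}$ and over $n$. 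The only channel through which $p_{Y|X}$ enters is (\ref{eq:continuous_ucon1}), which is a sup over all $i\le n$, so this uniformity is exactly what that condition provides.
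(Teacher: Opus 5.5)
Your proposal is correct. For Assumptions \ref{as:prior}.1 and \ref{as:prior}.2 it follows the paper's proof, whose Step 1 simply reruns Proposition \ref{prop:ucon} with $p_{Y|X}$ in place of $p$. Your Weyl-inequality control of the spectrum of $W_{p}$ does the job of the paper's ratios $C_{1},C_{2}$.

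You depart from the paper in how you obtain the error bound (\ref{eq:errorbound}) for Assumption \ref{as:prior}.3. The paper reruns Proposition \ref{prop:convexhemi}. It projects $\gamma$ onto $\Gamma_{n,I}(p)$ and linearizes the active constraints there, using differentiability and concavity. It then invokes KKT stationarity, which needs Lemma \ref{lem:interiorparspace} (the interiority and boundary-separation conditions) so that the constraint $\gamma\in\Gamma$ is slack at the projection. Finally, it bounds the multipliers by $\|\lambda_{n}\|_{1}\le 2\,\text{diam}(\Gamma)/\eta$ using the uniform Slater point.

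Your convex-combination step is the primal counterpart of that dual argument, a Robinson-type Slater error bound. Moving from $\gamma$ toward $\gamma^{\circ}$ by the fraction $v/(v+\eta/2)$ restores feasibility by concavity of $f$ and convexity of $\Gamma$. It yields the same constant $\eta/(2\,\text{diam}(\Gamma))$ as the paper's multiplier bound.

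Your route uses only concavity, compactness and convexity of $\Gamma$, and the uniform Slater margin. Differentiability of $f$ and the interior/boundary-separation part of Assumption \ref{as:continuousDGP} are never used, and no constraint-qualification or KKT theory enters. This also shows those hypotheses are superfluous in Proposition \ref{prop:convexhemi} itself. The paper's route buys only economy of exposition, since it reuses an already-proved proposition.

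Two minor points on your treatment of Assumption \ref{as:prior}.2. First, the bounded-criterion hypothesis must be read as uniform in $n$, i.e.\ $\sup_{n}\sup_{\gamma}Q_{n,r}(\gamma,p_{0,Y|X})<\infty$. Otherwise the term $o(1)\cdot\sup_{\gamma}Q_{n,r}(\gamma,p_{0,Y|X})$ need not vanish; the paper uses the hypothesis the same way. Second, the H\"{o}lder/Minkowski detour is unnecessary. The pointwise bound $|\text{dist}_{W_{p}}(f_{p_{0}},\mathbb{R}^{d_{f}}_{+})-\text{dist}_{W_{p_{0}}}(f_{p_{0}},\mathbb{R}^{d_{f}}_{+})|\le o(1)\cdot\text{dist}_{W_{p_{0}}}(f_{p_{0}},\mathbb{R}^{d_{f}}_{+})$ holds with $o(1)$ uniform in $i$, $\gamma$ and $p$. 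It therefore passes directly through $\|\cdot\|_{n,r}$ for every $r\in[1,\infty]$.
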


\section{Conclusion}\label{sec:conclusion}
This paper develops a nonparametric Bayesian approach to inference in partially identified discrete response models. The central observation is simple: in a large class of models, the identified set is a functional of an unrestricted reduced-form conditional choice probability. Rather than placing prior structure on unidentified structural parameters, we place a flexible prior on this identifiable reduced form parameter and map its posterior into a posterior for the identified set. This delivers a fully Bayesian procedure for set-valued parameters while keeping the source of statistical uncertainty transparent. The extensions demonstrate that our insights paper also apply to aggregated discrete choice and continuous outcomes.

A main advantage of this approach relative to existing methods is that it works directly with the conditional restrictions that define the model. In conditional moment inequality models, there is no need to replace conditional moments with a user-chosen collection of unconditional moments or instruments. More generally, the procedure does not require discretizing continuously distributed covariates simply to make the problem finite dimensional. These steps are common in empirical implementations, but they can discard identifying information and make inference sensitive to choices that are external to the economic model. By learning the conditional probability mass function directly, our framework accommodates rich covariate variation while preserving the original identifying restrictions. In this sense, it combines the flexibility typically associated with nonparametric frequentist methods with a genuinely Bayesian treatment of uncertainty about the identified set.

Implementation is also straightforward. With the logistic stick-breaking Gaussian process priors considered here, P\'{o}lya--Gamma data augmentation yields conditionally Gaussian posterior updates, and the stick-breaking components can be sampled independently and in parallel. Given a conditional choice probability, draws of the identified set are obtained by solving the same optimization problem that would be used if those probabilities were known, making identified set computation a parallelizable post-processing step.

Our general asymptotic theory shows that our proposal has a clear large-sample interpretation: under correct specification, the posterior concentrates on the true identified set; under misspecification, the posterior probability of an empty identified set provides a diagnostic, while inference can instead be conducted on a pseudo-identified set. The low-level analysis in Section \ref{sec:verifyingconditions} highlights that our general asymptotic theory applies to the logistc stick-breaking Gaussian process priors. Taken together, these results provide a practical route to fully Bayesian inference in partially identified models without requiring researchers to coarsen covariates or alter the conditional identifying content of the model.

\bibliographystyle{ecca}
\bibliography{multinomial}

\appendix
\section{Reduced-Form Posterior Sampling Details}\label{ap:posteriordraws}
This section describes posterior sampling for the reduced-form parameter. The overall algorithm is as follows:
\begin{enumerate}
    \item For $k=1,...,K-1$,
\begin{enumerate}[label=\roman*.]
    \item Perform the steps in Section \ref{ap:data} to obtain a sample $\{B_{\mathcal{I}_{k},k}^{[s]}\}_{s=1}^{S}$ from the marginal posterior $\pi_{n,k}(B_{\mathcal{I}_{k},k}|Y^{(n)})$.
    \begin{itemize}
        \item With fixed hyperparameters and discrete outcomes, see \ref{ap:fixedsample}.
        \item With fixed hyperparameters and aggregate discrete outcomes, see \ref{ap:aggregatesample}.
        \item To incorporate hyperparameter selection, see \ref{ap:hyperparameters}.
    \end{itemize}
    \item Using the sample $\{B_{\mathcal{I}_{k},k}^{[s]}\}_{s=1}^{S}$, perform the steps in Section \ref{ap:postprocess} to generate a sample $\{B_{\mathcal{I}_{k},k}^{[s]}\}_{s=1}^{S}$ from $\pi_{n,k}(B_{\mathcal{I}_{k}^{c},k}|Y^{(n)},B_{\mathcal{I}_{k},k})$.
\end{enumerate}
    \item Using the sample $\{B_{n,k}^{[s]}\}_{s=1}^{S}$, where $B_{n,k}^{[s]} = (B_{\mathcal{I}_{k},k}^{[s]},B_{\mathcal{I}_{k}^{c},k}^{[s]})$, compute $\{(p^{[s]}(x_{i}))_{i=1}^{n}\}_{s=1}^{S}$ using the stick-breaking formula.
\end{enumerate}
\subsection{Sampling from $\pi_{n,k}(B_{\mathcal{I}_{k},k}|Y^{(n)})$.}\label{ap:data} This section details sampling from $\pi_{n,k}(B_{\mathcal{I}_{k}}|Y^{(n)})$. It has three subsections, each corresponding to Step 1.i above.
\subsubsection{Sampling from $\pi_{n,k}(B_{\mathcal{I}_{k},k}|Y^{(n)})$}\label{ap:fixedsample} Recall that sampling from $\pi_{n,k}(B_{\mathcal{I}_{k},k}|Y^{(n)})$ corresponds to Bayesian inference for a nonparametric logistic regression model, and, as a result, the P\'{o}lya-Gamma sampler of \cite{polson2013bayesian} is applicable. Specifically, the steps for generating a draw $B_{\mathcal{I}_{k},k}$ from $\pi_{n,k}(B_{\mathcal{I}_{k},k}|Y^{(n)})$ are
\begin{enumerate}
\item Sample $\omega_{i,k} \overset{ind}{\sim}PG(1,B_{k}(x_{i}))$ for $i \in \mathcal{I}_{k}$, where $PG(b,c)$ denotes a P\'{o}lya-Gamma distribution with parameters $b > 0$ and $c \in \mathbb{R}$.
\item Sample $B_{\mathcal{I}_{k},k} \sim \mathcal{N}(\hat{\mu}_{\mathcal{I}_{k},k},\hat{\kappa}_{\mathcal{I}_{k},\mathcal{I}_{k}k})$, where
\begin{align*}
\hat{\mu}_{\mathcal{I}_{k},k} &= \mu_{\mathcal{I}_{k},k} + \kappa_{\mathcal{I}_{k},\mathcal{I}_{k},k}(\kappa_{\mathcal{I}_{k},\mathcal{I}_{k},k}+\Omega_{\mathcal{I}_{k}}^{-1})^{-1}(Z_{\mathcal{I}_{k},k}-\mu_{\mathcal{I}_{k},k})\\
\hat{\kappa}_{\mathcal{I}_{k},k}  &=     \kappa_{\mathcal{I}_{k},\mathcal{I}_{k},k} -  \kappa_{\mathcal{I}_{k},\mathcal{I}_{k},k}(\kappa_{\mathcal{I}_{k},\mathcal{I}_{k},k}+\Omega_{\mathcal{I}_{k}}^{-1})^{-1}\kappa_{\mathcal{I}_{k},\mathcal{I}_{k},k}
\end{align*}
with $\mu_{\mathcal{I}_{k},k} = (\mu_{k}(x_{i}))_{i \in \mathcal{I}_{k}}$, $\kappa_{\mathcal{I}_{k},\mathcal{I}_{k}}=(\kappa(x_{i},x_{j}))_{i,j \in \mathcal{I}_{k}}$, $\Omega_{\mathcal{I}_{k}} = \text{diag}((\omega_{i,k})_{i \in \mathcal{I}_{k}})$, and $Z_{\mathcal{I}_{k},k} = ((\mathbf{1}\{Y_{i} =y_{k}\}-1/2)/\omega_{i,k})_{i \in \mathcal{I}_{k}}$.
\end{enumerate}
Cycling through these steps $S$ times (and independently across $k$) generates a sample $\{(B_{\mathcal{I}_{1},1}^{[s]},...,B_{\mathcal{I}_{K-1},K-1}^{[s]})\}_{s=1}^{S}$ from the marginal posterior $\pi_{n}\left(B_{\mathcal{I}_{1},1},...,B_{\mathcal{I}_{K-1},K-1} \middle |Y^{(n)}\right)$.
\subsubsection{Sampling from $\pi_{n,k}(B_{\mathcal{I}_{k},k}|Y^{(n)})$ in the Aggregate Case.}\label{ap:aggregatesample} Recall from Section \ref{sec:extensions} that $\pi_{n,k}(B_{\mathcal{I}_{k}}|Y^{(n)})$ corresponds to Bayesian inference in a model, where $Y_{i,k} \overset{ind}{\sim} Bin(m_{i,k},q_{k}(x_{i}))$ for each $i \in \mathcal{I}_{k}$. Consequently, the binomial version of P\'{o}lya-Gamma sampling from \cite{polson2013bayesian} applies. This means that the sweeps of the Gibbs sampler are
\begin{enumerate}
    \item Sample $\omega_{i,k} \overset{ind}{\sim} PG(m_{i,k},B_{k}(x_{i}))$ for $i \in \mathcal{I}_{k}$.
    \item Sample $B_{n,k} \sim \mathcal{N}(\hat{\mu}_{n,k},\kappa_{n,k})$, where the only difference from Section \ref{ap:data} is that $Z_{\mathcal{I}_{k},k}$ is redefined as $Z_{\mathcal{I}_{k},k} = ((Y_{i,k}-m_{i,k}/2)/\omega_{i,k})_{i \in \mathcal{I}_{k}}$.
\end{enumerate}
Cycling through these steps $S$ times (and independently across $k$) generates a sample $\{(B_{\mathcal{I}_{1},1}^{[s]},...,B_{\mathcal{I}_{K-1},K-1}^{[s]})\}_{s=1}^{S}$ from the marginal posterior $\pi_{n}\left(B_{\mathcal{I}_{1},1},...,B_{\mathcal{I}_{K-1},K-1} \middle |Y^{(n)}\right)$.
\subsubsection{Incorporating Hyperparameters}\label{ap:hyperparameters}
Recall that, by factorizing the posterior into the product $\pi_{n,k}(B_{n,k},\tau_{k}|Y^{(n)}) = \pi_{n,k}(B_{\mathcal{I}_{k}^{c},k}|Y^{(n)},B_{\mathcal{I}_{k},k},\tau_{k})\pi_{n,k}(B_{\mathcal{I}_{k},k},\tau_{k}|Y^{(n)})$, the only way that prior hyperparameters $\tau_{k}$ affect the algorithm is that now we have to sample from $\pi_{n,k}(B_{\mathcal{I}_{k},k},\tau_{k}|Y^{(n)})$. This amounts to augmenting the P\'{o}lya-Gamma Gibbs sampler with sweeps that additionally sample from the conditional posterior $\pi_{n,k}(\tau_{k}|Y^{(n)},B_{\mathcal{I}_{k}})$. Namely,
\begin{enumerate}
\item Sample $\omega_{i,k} \overset{ind}{\sim}PG(1,B_{k}(x_{i}))$ for $i \in \mathcal{I}_{k}$, where $PG(b,c)$ denotes a P\'{o}lya-Gamma distribution with parameters $b > 0$ and $c \in \mathbb{R}$. For the \textit{aggregrate} case, replace $PG(1,B_{k}(x_{i}))$ with $PG(m_{i,k},B_{k}(x_{i}))$
\item Sample $B_{\mathcal{I}_{k},k} \sim \mathcal{N}(\hat{\mu}_{\mathcal{I}_{k},k},\hat{\kappa}_{\mathcal{I}_{k},\mathcal{I}_{k}k})$, where
\begin{align*}
\hat{\mu}_{\mathcal{I}_{k},k} &= \mu_{\mathcal{I}_{k},k} + \kappa_{\mathcal{I}_{k},\mathcal{I}_{k},k}(\kappa_{\mathcal{I}_{k},\mathcal{I}_{k},k}+\Omega_{\mathcal{I}_{k}}^{-1})^{-1}(Z_{\mathcal{I}_{k},k}-\mu_{\mathcal{I}_{k},k})\\
\hat{\kappa}_{\mathcal{I}_{k},k}  &=     \kappa_{\mathcal{I}_{k},\mathcal{I}_{k},k} -  \kappa_{\mathcal{I}_{k},\mathcal{I}_{k},k}(\kappa_{\mathcal{I}_{k},\mathcal{I}_{k},k}+\Omega_{\mathcal{I}_{k}}^{-1})^{-1}\kappa_{\mathcal{I}_{k},\mathcal{I}_{k},k}
\end{align*}
with $\mu_{\mathcal{I}_{k},k} = (\mu_{k}(x_{i}))_{i \in \mathcal{I}_{k}}$, $\kappa_{\mathcal{I}_{k},\mathcal{I}_{k}}=(\kappa(x_{i},x_{j}))_{i,j \in \mathcal{I}_{k}}$, $\Omega_{\mathcal{I}_{k}} = \text{diag}((\omega_{i,k})_{i \in \mathcal{I}_{k}})$, and $Z_{\mathcal{I}_{k},k} = ((\mathbf{1}\{Y_{i} =y_{k}\}-1/2)/\omega_{i,k})_{i \in \mathcal{I}_{k}}$ all computed at the current value of $\tau_{k}$. For the \textit{aggregate} case, replace $Z_{\mathcal{I}_{k},k}$ with $Z_{\mathcal{I}_{k},k} = ((Y_{i,k}-m_{i,k}/2)/\omega_{i,k})_{i \in \mathcal{I}_{k}}$.
\item Sample $\tau_{k} \sim \pi_{n,k}(\tau_{k}| Y^{(n)},\mathcal{B}_{\mathcal{I}_{k},k})$.
\end{enumerate}
Cycling through these steps $S$ times (and independently across $k$) generates a sample $\{(B_{\mathcal{I}_{1},1}^{[s]},...,B_{\mathcal{I}_{K-1},K-1}^{[s]}, \tau_{1}^{[s]},...,\tau_{K-1}^{[s]})\}_{s=1}^{S}$ from $\pi_{n}(B_{\mathcal{I}_{1},1},...,B_{\mathcal{I}_{K-1},K-1},\tau_{1},...,\tau_{K-1}  |Y^{(n)})$.
The conditional for $\tau_{k}$ can be sampled via a Metropolis-Hastings step (or perhaps by exploiting some conjugacy). In our implementation, $\tau_{k} = (\alpha,l_{k})$ and we perform these sweeps for the first $10,000$ iterations (i.e., the exploration phase), using Random Walk Metropolis-Hastings to sample the length-scale $l_{k}$ (targeting an acceptance rate of $0.44$). For the remaining $10,000$ iterations, we fix $\tau_{k}$ at $(\alpha,\hat{l}_{k})$, where $\hat{l}_{k}$ is the median of the draws from second half of the exploration phase, and then apply Steps 1 and 2 only in the sweeps.
\subsection{Sampling from $\pi_{n,k}(B_{\mathcal{I}_{k}^{c},k}|Y^{(n)},B_{\mathcal{I}_{k},k})$.}\label{ap:postprocess} Recall that we showed $\pi_{n,k}(B_{\mathcal{I}_{k}^{c},k}|Y^{(n)},B_{\mathcal{I}_{k},k})=\pi_{n,k}(B_{\mathcal{I}_{k}^{c},k}|B_{\mathcal{I}_{k},k})$. Since $B_{k} \sim GP(\mu_{k},\kappa_{k})$, $B_{\mathcal{I}_{k}^{c},k}|B_{\mathcal{I}_{k},k} \sim \mathcal{N}(\hat{\mu}_{\mathcal{I}_{k}^{c},k}, \hat{\kappa}_{\mathcal{I}_{k}^{c},\mathcal{I}_{k}^{c},k})$,
where
\begin{align*}
   &\hat{\mu}_{\mathcal{I}_{k}^{c},k} = \mu_{\mathcal{I}_{k}^{c},k}+ \kappa_{\mathcal{I}_{k}^{c},\mathcal{I}_{k},k}\kappa_{\mathcal{I}_{k},\mathcal{I}_{k},k}^{-1}(B_{\mathcal{I}_{k},k}-\mu_{\mathcal{I}_{k},k}) \\
   &\hat{\kappa}_{\mathcal{I}_{k}^{c},k}= \kappa_{\mathcal{I}_{k}^{c},\mathcal{I}_{k}^{c},k} -\kappa_{\mathcal{I}_{k}^{c},\mathcal{I}_{k},k}\kappa_{\mathcal{I}_{k},\mathcal{I}_{k},k}^{-1}\kappa_{\mathcal{I}_{k},\mathcal{I}_{k}^{c},k},
\end{align*}
with $\mu_{\mathcal{I}_{k}^{c},k} = (\mu_{k}(x_{i}))_{i \notin \mathcal{I}_{k}}$, $\kappa_{\mathcal{I}_{k},\mathcal{I}_{k}^{c},k} = (\kappa(x_{i},x_{j}))_{i \in \mathcal{I}_{k},j \notin \mathcal{I}_{k}}$, $\kappa_{\mathcal{I}_{k}^{c},\mathcal{I}_{k},k}= (\kappa(x_{i},x_{j}))_{i \notin \mathcal{I}_{k},j \in \mathcal{I}_{k}}$, and $\kappa_{\mathcal{I}_{k}^{c},\mathcal{I}_{k}^{c},k}=(\kappa(x_{i},x_{j}))_{i,j\notin \mathcal{I}_{k}}$. Consequently, $\{(B_{\mathcal{I}_{2}^{c},2}^{[s]},...,B_{\mathcal{I}_{K-1}^{c},K-1}^{[s]})\}_{s=1}^{S}$ is generated by sampling from independently sampling $\mathcal{N}(\hat{\mu}_{\mathcal{I}_{k}^{c},k}, \hat{\kappa}_{\mathcal{I}_{k}^{c},\mathcal{I}_{k}^{c},k})$ for each $k=2,...,K-1$ and each $s=1,...,S$ (substituting $B_{\mathcal{I}_{k},k}^{[s]}$ into the expression for $\hat{\mu}_{\mathcal{I}_{k}^{c},k}$). If there is hyperparameter selection, then the same applies here except that $\hat{\mu}_{\mathcal{I}_{k}^{c},k}$ and $\hat{\kappa}_{\mathcal{I}_{k}^{c},\mathcal{I}_{k}^{c},k}$ must be adjusted to reflect the value of $\tau_{k}$.
\section{More on the Dynamic Binary Response Model}\label{ap:simulation}
\subsection{Inequality Restrictions in the Dynamic Binary Response Model}
Under conditional stationarity $U_{1}|X,\alpha \sim U_{2}|X,\alpha$ (and some other mild regularity conditions), Theorem 1 of \cite{khan2023identification} states that $\gamma \in \Gamma_{n,I}(p)$ if and only if the following holds for all $t,s=1,2$ and $i=1,...,n$,
\begin{align*}
    &\mathbf{1}\{p(Y_{t}=1|X=x_{i}) \geq P(Y_{s}=1|X=x_{i})\}\left(\beta (t-s) + (x_{i,t}-x_{i,s}) + |\theta| \right) \geq 0 \\
    &\mathbf{1}\{p(Y_{t}=1|X=x_{i}) \geq 1-p(Y_{s-1}=1,Y_{s}=0|X=x_{i})\}\left(\beta (t-s) + (x_{i,t}-x_{i,s}) -\min\{0,\theta\} \right) \geq 0 \\
    &\mathbf{1}\{p(Y_{t}=1|X=x_{i}) \geq 1-P(Y_{s-1}=0,Y_{s}=0|X=x_{i})\}\left(\beta (t-s)+ (x_{i,t}-x_{i,s})+\max\{0,\theta\}\right) \geq 0 \\
    &\mathbf{1}\{p(Y_{t-1}=1,Y_{t}=1|X=x_{i})\geq P(Y_{s}=1|X=x_{i})\}\left(\beta (t-s) + (x_{i,t}-x_{i,s}) + \max\{0,\theta\}\right) \geq 0 \\
    &\mathbf{1}\{p(Y_{t-1}=1,Y_{t}=1|X=x_{i})\geq 1 - P(Y_{s-1}=1,Y_{s}=0|X=x_{i})\}(\beta (t-s) + (x_{i,t}-x_{i,s})) \geq 0 \\
    &\mathbf{1}\{p(Y_{t-1}=1,Y_{}=1|X=x_{i}) \geq 1-p(Y_{s-1}=0,Y_{s}=0|X=x_{i})\}\left(\beta (t-s) + (x_{i,t}-x_{i,s}) + \theta\right) \\
    &\mathbf{1}\{p(Y_{t-1}=0,Y_{t}=1|X=x_{i}) \geq p(Y_{s}=1|X=x_{i})\}\left((\beta (t-s) + (x_{i,t}-x_{i,s}) -\min\{0,\theta\}\right) \geq 0 \\
    &\mathbf{1}\{p(Y_{t-1}=0,Y_{t}=1|X=x_{i}) \geq 1-p(Y_{s-1}=1,Y_{s}=0|X=x_{i})\}\left(\beta (t-s) + (x_{i,t}-x_{i,s})-\theta\right) \geq 0 \\
    &\mathbf{1}\{p(Y_{t-1}=0,Y_{t}=1|X=x_{i}) \geq 1-p(Y_{s-1}=0,Y_{s}=0|X=x_{i})\}(\beta (t-s) + (x_{i,t}-x_{i,s})) \geq 0.
\end{align*}
Consequently, to determine whether a parameter value is in the identified set, there are $n \times 18$ inequality restrictions to check. This can be characterized as a union of restrictions compatible within Example \ref{ex:linsyst} by partitioning the parameter space based on the sign of $\theta$.
\subsection{Comparison with Discretized Identified Set and Full Support Identified Set}\label{ap:comparisonwithothersets}
Even though the binary choice model is stated in terms of continuous covariates, \cite{khan2023identification} binarize the covariates according to $\tilde{X}_{t} = \mathbf{1}\{X_{t} \geq Median(X_{t})\}$ for each $t=1,2$. Figure \ref{fig:conditionalidsetdisc} plots the conditional identified set based on the discretized covariates $\{\tilde{x}_{i}\}_{i=1}^{n}$. Comparing with Figure \ref{fig:conditionalidset}, these sets are considerably less informative; for instance, it is impossible to rule out no state dependence (i.e. $\theta = 0$).
\begin{figure}
    \centering
    \includegraphics[width=\linewidth]{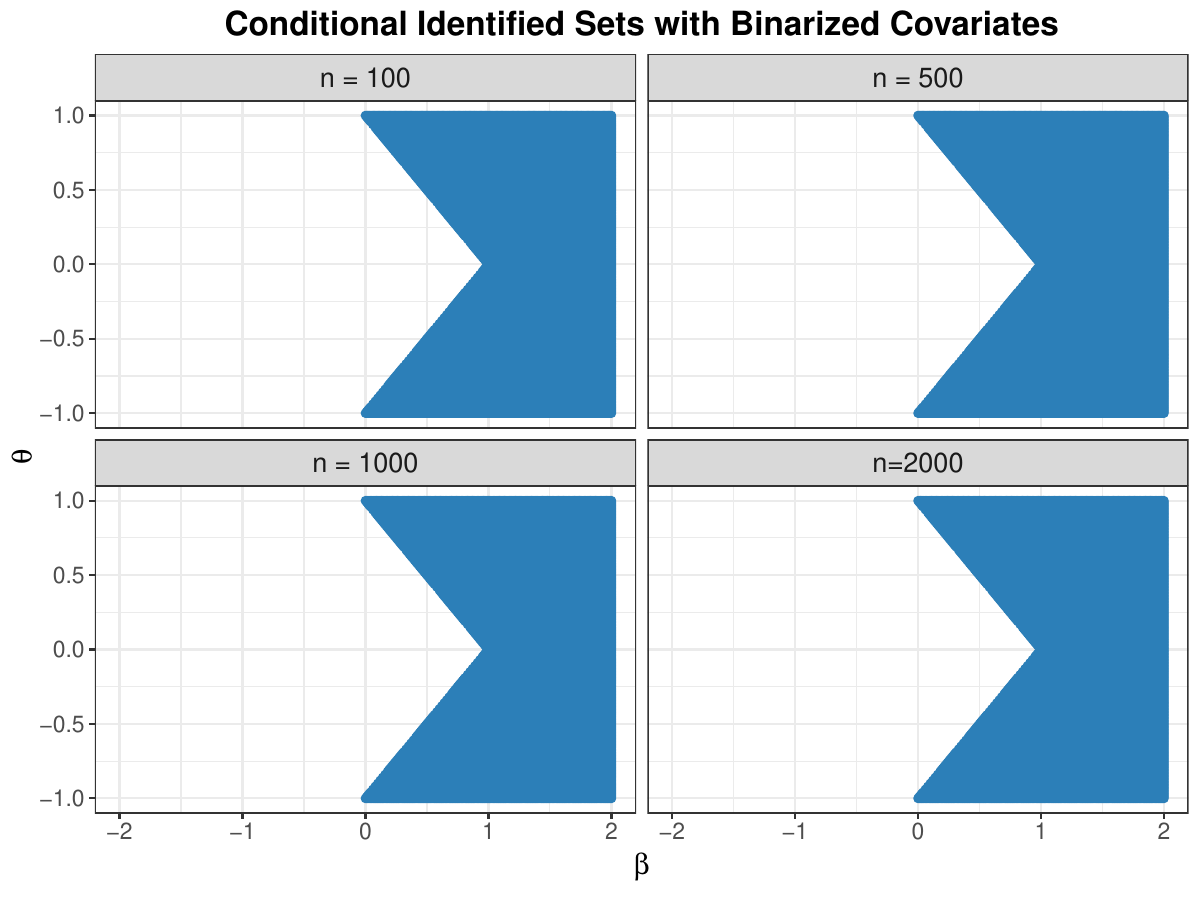}
    \caption{Conditional Identified Sets Based on Binarized Covariates}
    \label{fig:conditionalidsetdisc}
\end{figure}

We also compare the conditional identified set and the full support identified set. Although Theorem \ref{thm:fullsupport} states that these sets are asymptotically equivalent, the finite $n$ difference between $\Gamma_{n,I}(p_{0})$ and $\Gamma_{I}(p_{0})$ could be large. Figure \ref{fig:fullsupport} compares the conditional identified set based on $n=2000$ with the full support identified set.\footnote{Truly computing the full support identified set is infeasible, and, for that reason, we refer to the full support set as that which arises when we evaluate the conditional identified set with $n=20000$.} There is hardly any difference.
\begin{figure}
    \centering
    \includegraphics[width=\linewidth]{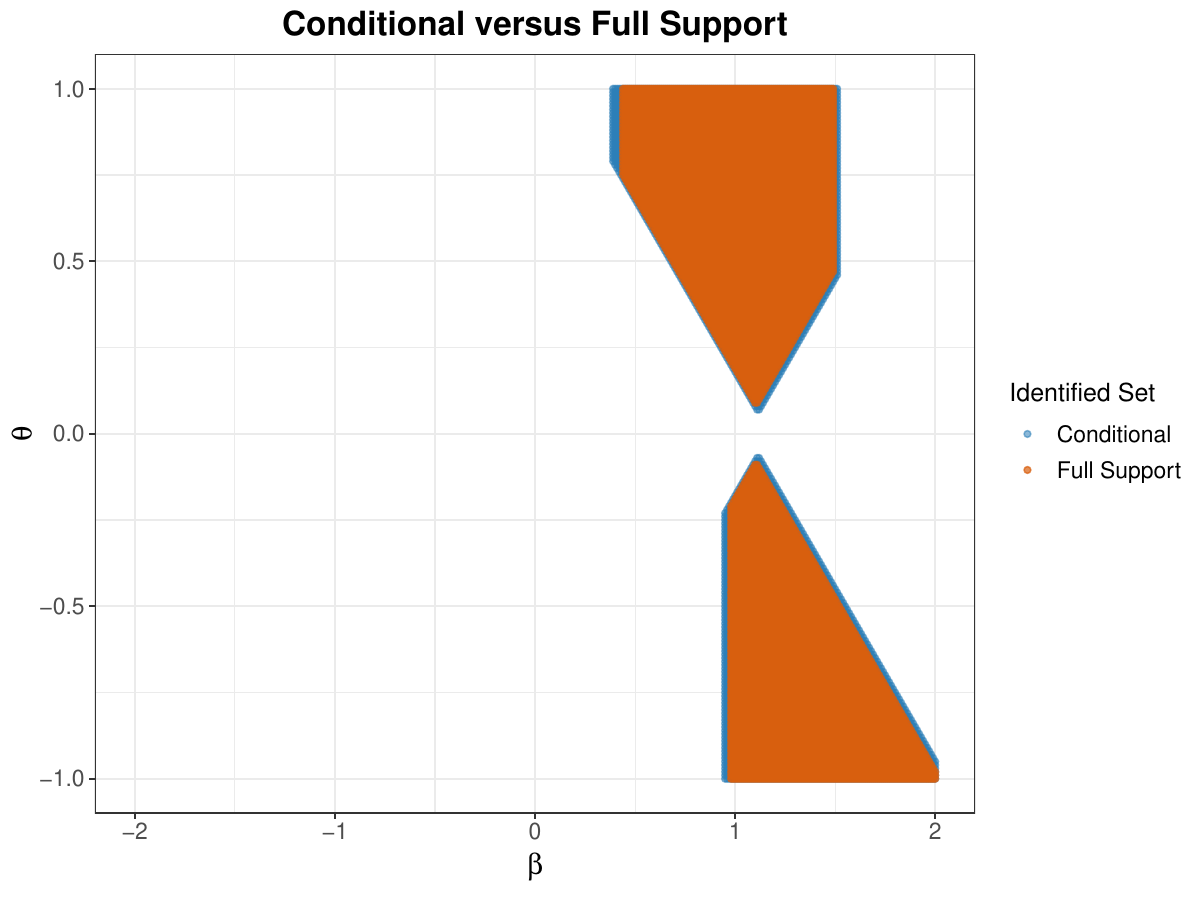}
    \caption{Conditional ($n=2000$) and Full Support Identified Sets}
    \label{fig:fullsupport}
\end{figure}
\section{Elaboration on the Discussion Following Proposition \ref{prop:misspec_hemi_convex}.}\label{ap:hemi_misspec}
\subsection{Uniform Convergence.}\label{ap:uconmisspec} Posterior concentration of $p$ around $p_{0}$ in $||\cdot||_{n,\infty}$ can satisfy Condition 3. Focusing on $Q_{n}^{LP}(\gamma,p)$ for demonstration (a conceptually similar argument applies to any $Q_{n,r}$, $r \in [1,\infty]$, with identity weighting), one can show 
\begin{align*}
Q_{n}^{LP}(\gamma,p) = \max_{a \in \{0,1\}^{nd_{f}}}q_{n,a}(\gamma,p),    
\end{align*}
where the component functions are 
\begin{align*}
q_{n,a}(\gamma,p) = -\frac{1}{n}\sum_{i=1}^{n}\sum_{j=1}^{d_f}a_{ij}\,f_j(x_i,p(x_i),\gamma).    
\end{align*} 
This implies that
\begin{align}\label{eq:ucon_misspec}
    \Delta_{n} \leq \sqrt{d_{f}}\sup_{(\gamma,p) \in \Gamma \times V_{n,\delta_{n}}}||f(x_{i},p(x_{i}),\gamma)-f(x_{i},p_{0}(x_{i}),\gamma)||_{n,\infty},
\end{align}
and
\begin{align}\label{eq:ucon_gradient_misspec}
    \Delta_{n}' \leq \sqrt{d_{f}}\sup_{(\gamma,p) \in \Gamma \times V_{n,\delta_{n}}}||\nabla_{\gamma} f(x_{i},p(x_{i}),\gamma)-\nabla_{\gamma}f(x_{i},p_{0}(x_{i}),\gamma)||_{n,\infty}
\end{align}
and similarly for $\Delta_{n}'$ (except replace with $\nabla_{\gamma} f$). Consequently, if the RHS of (\ref{eq:ucon_misspec}) and (\ref{eq:ucon_gradient_misspec}) are enveloped by a continuous function of $||p-p_{0}||_{n,\infty}$ that vanishes at $0$ (and similarly for $\Delta_{n}'$), then Condition 2 is satisfied when $V_{n,\delta_{n}}$ is contained in a shrinking $||\cdot||_{n,\infty}$-ball centered at $p_{0}$. This is satisfied, for instance, for conditional moment inequalities (i.e., Example \ref{ex:momentsineq}) when $g(y,x,\gamma)$ bounded and continuously differentiable in $\gamma$ (with bounded derivatives) because, in this case, the RHS of (\ref{eq:ucon_misspec}) and (\ref{eq:ucon_gradient_misspec}) admits an upper bound
\begin{align*}
\sqrt{d_{f}}\sup_{x \in  \mathcal{X}}\sup_{\gamma\in\Gamma}\max\left\{\sum_{y \in \mathcal{Y}}\|g(y,x,\gamma)\|_2,\sum_{y \in \mathcal{Y}}\|\nabla_\gamma g(y,x,\gamma)\|_{op}\right\}||p - p_{0}||_{n,\infty}.
\end{align*}
It is also satisfied by linear systems (i.e., Example \ref{ex:linsyst}) provided that $A(\cdot,p(\cdot))$ and $b(\cdot,p(\cdot))$ are continuous at $p_{0}$ with respect to the $||\cdot||_{n,\infty}$-topology.
\subsection{Complementarity of Condition 5 and Slater's Condition.}\label{ap:complementarity} When $\Gamma_{n,I}(p_{0}) = \emptyset$, the Slater condition from Proposition \ref{prop:convexhemi} fails because the existence of such a point $\gamma^{\circ}$ necessarily requires $\Gamma_{n,I}(p_{0}) \neq \emptyset$. To see why Condition 5 can fail under correct specification, consider $Q_{n}^{LP}$ again. For $\tilde{\gamma} \in \tilde{\Gamma}_{n,I}(p_{0})$, let $\mathcal{V}_n:=\left\{(i,j):f_j(x_i,p_0(x_i),\tilde{\gamma})<0\right\}$ and $\mathcal{B}_n:=\left\{(i,j):f_j(x_i,p_0(x_i),\tilde{\gamma})=0\right\}$ denote the violated and binding restrictions, respectively. One can show that the convex hull $\conv\{\nabla_\gamma q_{n,a}(\tilde \gamma,p_0):a\in \mathcal{A}_{n}^{*}(\tilde{\gamma},p_{0})\} $ is given by 
\begin{align}\label{eq:zonotope}
\Big\{-\frac{1}{n}\Big[\sum_{(i,j)\in\mathcal{V}_n}\nabla_\gamma f_j(x_i,p_0(x_i),\tilde{\gamma})+\sum_{(i,j)\in\mathcal{B}_n}w_{ij}\,\nabla_\gamma f_j(x_i,p_0(x_i),\tilde{\gamma})\Big]:\,w_{ij}\in[0,1]\Big\},
\end{align}
Condition 5 requires (\ref{eq:zonotope}) contain $\{t \in \mathbb{R}^{d_{\gamma}}: ||t||_{2} \leq \eta\}$, which means that the violated and binding gradients at the pseudo-true point must positively span $\mathbb{R}^{d_\gamma}$ with a uniform margin $\eta$. When $\Gamma_{n,I}(p_{0}) \neq \emptyset$ and $\Gamma_{n,I}(p_0)$ admits a Slater point, $\mathcal{V}_n=\emptyset$ and Condition 5 typically fails because (\ref{eq:zonotope}) contains the origin as a boundary point. 
\subsection{Failure of Lower Hemicontinuity with Flat Bottoms}\label{ap:failurehemicontinuity}
Condition 5 of Proposition \ref{prop:misspec_hemi_convex} implies that $\tilde{\Gamma}_{n,I}(p_0)$ is a singleton for large $n$. This property is formally proved in Lemma \ref{lem:singletonidset}, and is not an artifact of the proof: with a flat bottom, lower hemicontinuity of the exact pseudo-identified set can fail under Conditions 1--4. To see why, let $d_\gamma=1$, $\Gamma=[-1,2]$, and consider pieces $q_{n,1}(\gamma,p)=-\gamma$ and $q_{n,3}(\gamma,p)=\gamma-1$ for all $p$, together with $q_{n,2}(\gamma,p_0)=0$ and $q_{n,2}(\gamma,p)=\delta_n\gamma$ for $p\in V_{n,\delta_n}$, where $\delta_n\downarrow0$. Then $\tilde{\Gamma}_{n,I}(p_0)=[0,1]$ while $\tilde{\Gamma}_{n,I}(p)=\{0\}$, so $\dist(1,\tilde{\Gamma}_{n,I}(p))=1$ for every $n$, even though all pieces are linear and $\Delta_n\leq2\delta_n\to0$ and $\Delta'_n=\delta_n\to0$. Since the perturbation is compatible with any information about the criterion at $p_0$, no condition imposed at $p_0$ alone can restore lower hemicontinuity of the exact pseudo-identified set in this example; Condition 5 fails precisely at the interior points of $[0,1]$, where the only active piece has zero gradient. The next result says that flat bottoms can be accommodated if $\tilde{\Gamma}_{n,I}(p)$ is enlarged, the criterion satisfies uniform convergence, and a weak sharp minimum property holds at $p_{0}$.
\begin{proposition}
\label{prop:relaxed}
Suppose Assumptions 3 and 4.1 hold, and define, for each $t \geq 0$, the relaxed pseudo-identified set $\tilde{\Gamma}^{\,t}_{n,I}(p)\;:=\;\big\{\gamma\in\Gamma:\,\tilde{Q}_n(\gamma,p)\leq t\big\}$. Suppose that, for $P^{(\infty)}_{0,X}$-almost every fixed realization $\{x_i\}_{i\geq1}$ of $\{X_i\}_{i\geq1}$:
\begin{enumerate}
\item[1.] There exist $\eta>0$ and $\bar{N}\geq1$ such that
\[
\tilde{Q}_n(\gamma,p_0)\geq\eta\,\dist\big(\gamma,\tilde{\Gamma}_{n,I}(p_0)\big)\qquad\text{for all }\gamma\in\Gamma\text{ and all }n\geq\bar{N}.
\]
\item[2.] $u_n:=\sup_{(\gamma,p)\in\Gamma\times V_{n,\delta_n}}\big|\tilde{Q}_n(\gamma,p)-\tilde{Q}_n(\gamma,p_0)\big|\longrightarrow0$ as $n\to\infty$.
\end{enumerate}
Then, for $P^{(\infty)}_{0,X}$-almost every fixed realization $\{x_i\}_{i\geq1}$ of $\{X_i\}_{i\geq1}$, there exists $\hat{N} \geq 1$ such that,
\[
\tilde{\Gamma}_{n,I}(p_0)\subseteq\tilde{\Gamma}^{t}_{n,I}(p)
\qquad\text{and}\qquad
\sup_{p\in V_{n,\delta_n}}d_\mathcal{H}\big(\tilde{\Gamma}^{\,t}_{n,I}(p),\tilde{\Gamma}_{n,I}(p_0)\big)\leq\frac{2t}{\eta}
\]
for all $n \geq \hat{N}$.
\end{proposition}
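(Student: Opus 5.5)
The argument is a deterministic sandwich on the event $p\in V_{n,\delta_n}$. It combines the weak sharp minimum in Condition 1 with the uniform perturbation bound $u_n$ from Condition 2. Throughout, fix a realization $\{x_i\}_{i\geq 1}$ in the probability-one set where both conditions hold, and fix $t>0$. Choose $\hat N\geq\bar N$ such that $u_n\leq t$ for all $n\geq\hat N$. Condition 2 guarantees such an $\hat N$ exists; if $u_n\leq t/2$ is needed for a cleaner constant, I will take that instead.

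\emph{Inclusion.} Let $\gamma\in\tilde{\Gamma}_{n,I}(p_0)$, so $\tilde Q_n(\gamma,p_0)=0$. Then for every $p\in V_{n,\delta_n}$ and $n\geq\hat N$,
\begin{align*}
\tilde Q_n(\gamma,p)\leq \tilde Q_n(\gamma,p_0)+u_n\leq t,
\end{align*}
so $\gamma\in\tilde\Gamma^{t}_{n,I}(p)$. This gives $\tilde{\Gamma}_{n,I}(p_0)\subseteq\tilde\Gamma^t_{n,I}(p)$. It also shows that one half of the Hausdorff distance, $\sup_{\gamma\in\tilde\Gamma_{n,I}(p_0)}\dist(\gamma,\tilde\Gamma^t_{n,I}(p))$, equals zero. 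Nonemptiness of $\tilde{\Gamma}_{n,I}(p_0)$ follows from Assumption 3 and lower semicontinuity in Assumption 4.1, since a lower semicontinuous function on a compact set attains its infimum. This is what makes the Hausdorff distance well defined.

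\emph{Other half.} Take $\gamma\in\tilde\Gamma^t_{n,I}(p)$, so $\tilde Q_n(\gamma,p)\leq t$. By uniform convergence, $\tilde Q_n(\gamma,p_0)\leq t+u_n\leq 2t$. Condition 1 then gives
\begin{align*}
\eta\,\dist(\gamma,\tilde\Gamma_{n,I}(p_0))\leq 2t,
\end{align*}
so $\dist(\gamma,\tilde\Gamma_{n,I}(p_0))\leq 2t/\eta$. This bound is uniform in $\gamma$ and in $p\in V_{n,\delta_n}$. Taking suprema therefore yields $\sup_{p\in V_{n,\delta_n}}d_{\mathcal H}(\tilde\Gamma^t_{n,I}(p),\tilde\Gamma_{n,I}(p_0))\leq 2t/\eta$ for all $n\geq\hat N$.

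There is no real obstacle here. The only delicate point is making $\hat N$ depend on $t$ through $u_n\leq t$, and the statement allows this since $t$ is fixed before $\hat N$ is chosen. The remark about $t_n\downarrow0$ with $u_n/t_n\to0$ follows from the same two displays with $t$ replaced by $t_n$. Measurability is not needed, because the claim is purely deterministic on $V_{n,\delta_n}$.
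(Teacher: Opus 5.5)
Your proposal is correct and follows the paper's proof essentially line for line. You choose $\hat N\geq\bar N$ with $u_n\leq t$, obtain the inclusion from $\tilde Q_n(\gamma,p)\leq \tilde Q_n(\gamma,p_0)+u_n\leq t$, and bound the other half of the Hausdorff distance via $\tilde Q_n(\gamma,p_0)\leq 2t$ together with the weak sharp minimum in Condition 1. Restricting explicitly to $t>0$, and checking that $\tilde\Gamma_{n,I}(p_0)$ is nonempty, are harmless and in fact needed clarifications: the paper's choice of $\check N$ with $u_n\leq t$ also implicitly requires $t>0$ unless $u_n$ is eventually zero.
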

\begin{remark}
Inspecting the proof, if $t:= t_{n}$ with $t_{n}\downarrow 0$ as $n\rightarrow \infty$ and $u_{n}/t_{n} \rightarrow 0$ as $n\rightarrow \infty$, then Proposition \ref{prop:relaxed} leads to consistency for $\tilde{\Gamma}_{n,I}(p_{0})$. This mirrors the slack sequences from \cite{chernozhukov2007estimation}. For DGPs in which $\eta >0$ is sufficiently large, a fixed, small value of $t$ likely results in an immaterial difference between $\tilde{\Gamma}_{n,I}^{t}$ and $\tilde{\Gamma}_{n,I}$.   
\end{remark}

\section{Proofs of Main Results}
\subsection{Proof of Theorems \ref{thm:randomset}, \ref{thm:pseudoset}, and \ref{thm:fullsupport}, and Related Results}
\begin{proof}[Proof of Theorem \ref{thm:randomset}]
The proof has two steps. Step 1 shows that $\Gamma_{n,I}(p)$ is a closed set for each $p \in \mathcal{P}$. Step 2 shows that $\{p:\Gamma_{n,I}(p) \cap A \neq \emptyset \} \in \mathscr{P}$ for each compact $A$.
\paragraph{Step 1.} Since the criterion $\gamma \mapsto Q_{n}(\gamma,p)$ is lower semicontinuous for each $p \in \mathcal{P}$ (Assumption \ref{as:criterion}.1), the preimage $\{\gamma \in \Gamma: Q_{n}(\gamma,p) > 0\}$ is an open subset of $\Gamma$ for each $p \in \mathcal{P}$. Moreover, since a closed set is the complement of an open set, $\{\gamma \in \Gamma: Q_{n}(\gamma,p) \leq 0\}$ is a closed subset of $\Gamma$ for each $p \in \mathcal{P}$, and, since $Q_{n}(\cdot,p) \geq 0$, we can further conclude that the level set $\{\gamma \in \Gamma: Q_{n}(\gamma,p) = 0\}$ is a closed subset of $\Gamma$ for each $p \in \mathcal{P}$. Hence, the range of $\Gamma_{n,I}: \mathcal{P} \rightarrow 2^{\Gamma}$ is contained in the closed subsets of $\Gamma$, and, as a result, we conclude that $\Gamma_{n,I}(p)$ is a closed set for each $p \in \mathcal{P}$.
\paragraph{Step 2.} Let $A$ be an arbitrary compact subset of $\Gamma$ and define $f_{n,A}: \mathcal{P} \rightarrow \mathbb{R}_{+}$ with $f_{n,A}(p) = \inf_{\gamma \in A}Q_{n}(\gamma,p)$ for each $p \in \mathcal{P}$. The infimum is achieved for each $p \in \mathcal{P}$, as $Q_{n}(\cdot,p):\Gamma \rightarrow \mathbb{R}_{+}$ is lower semicontinuous for each $p \in \mathcal{P}$ (Assumption \ref{as:criterion}.1) and $A$ is compact. Since $\Gamma_{n,I}(p) \cap A \neq \emptyset$ iff $f_{n,A}(p) = 0$, $\{p: \Gamma_{n,I}(p) \cap A \neq \emptyset\} = f_{n,A}^{-1}\{0\}$. As a closed subset of $\mathbb{R}_{+}$, the singleton $\{0\} \in \mathcal{B}(\mathbb{R}_{+})$. Consequently, if $p \mapsto \inf_{\gamma \in A}Q_{n}(\gamma,p)$ is measurable, then $f_{n,A}^{-1}\{0\} \in \mathscr{P}$, which is equivalent to $\{p: \Gamma_{n,I}(p) \cap A \neq \emptyset\} \in \mathscr{P}$. Measurability of $p\mapsto f_{n,A}(p)$ follows from Assumption \ref{as:criterion}.2. Since $A$ was arbitrary, we conclude that $\{p:\Gamma_{n,I}(p) \cap A \neq \emptyset \} \in \mathscr{P}$ for each compact $A$.
\end{proof}
\begin{proof}[Proof of Corollary \ref{cor:functionrandomset}]
Let $A$ be an arbitrary compact subset of $\mathbb{R}^{d_{g}}$. Since $G_{n,I}(p) \cap A \neq \emptyset$ if and only if $\Gamma_{n,I}(p) \cap g^{-1}(A) \neq \emptyset$, we just need to show that $g^{-1}(A)$ is a compact subset of $\Gamma$ because, in such case, Theorem \ref{thm:randomset} implies that $\{p: \Gamma_{n,I}(p) \cap g^{-1}(A) \neq \emptyset\} \in \mathscr{P}$. Compactness of $g^{-1}(A)$ follows because $A$ is compact, $\Gamma$ is compact, and $g: \Gamma \rightarrow \mathbb{R}^{d_{g}}$ is continuous. Since $A$ was fixed arbitrarily, we can repeat the argument to conclude that $\{p: G_{n,I}(p) \cap A \neq \emptyset\} \in \mathscr{P}$ for any compact $A \subseteq \mathbb{R}^{d_{g}}$.
\end{proof}
\begin{proof}[Proof of Theorem \ref{thm:pseudoset}]
We need to show that $\gamma \mapsto \tilde{Q}_{n}(\gamma,p)$ is lower semicontinuous for each $p \in \mathcal{P}$ and each $n \geq 1$, and $p \mapsto \inf_{\gamma \in A}\tilde{Q}_{n}(\gamma,p)$ is $\mathscr{P}$-measurable for each $n \geq 1$ and each compact $A \subseteq \Gamma$. If these conditions hold, then follow the same argument as Theorem \ref{thm:randomset} to conclude that $\tilde{\Gamma}_{n,I}$ defines a random set over $(\mathcal{P},\mathscr{P})$. Since, for each $p \in \mathcal{P}$, $\tilde{Q}_{n}(\cdot,p)$ is $Q_{n}(\cdot,p)$ shifted by a constant (that depends on $p$), the shifted criterion $\tilde{Q}_{n}: \Gamma \rightarrow \mathbb{R}_{+}$ is lower semicontinuous (the infimum $\inf_{\gamma \in \Gamma}Q_{n}(\gamma,p)$ exists by Assumptions \ref{as:parameterspace} and \ref{as:criterion}.1). The mapping $p \mapsto \inf_{\gamma \in \Gamma}Q_{n}(\gamma,p)$ is $\mathscr{P}$-measurable by Assumption \ref{as:parameterspace} and \ref{as:criterion}.2 (setting $A = \Gamma$), and, as a result, $p \mapsto \tilde{Q}_{n}(\gamma,p)$ is $\mathscr{P}$-measurable for each $\gamma \in \Gamma$ (as the sum of measurable functions).
\end{proof}
Before proving Theorem \ref{thm:fullsupport}, we introduce a notion of set convergence. A sequence of sets $\{A_{n}\}_{n \geq 1}$ are said to converge to $A$ in the \textit{Painlev\'{e}-Kuratowski} (PK) sense if $A \subseteq \liminf_{n\rightarrow \infty}A_{n}$ and $\limsup_{n\rightarrow \infty}A_{n} \subseteq A$.
\begin{proof}[Proof of Theorem \ref{thm:fullsupport}]
The proof verifies Proposition 1.7.23 in \cite{molchanov2017theory} and then uses an equivalence between PK convergence and convergence in the Hausdorff distance in compact metric spaces. Specifically, to employ Proposition 1.7.23, we show that $\{x_{i}\}_{i \geq 1} \mapsto \Gamma_{n,I}(p)$ defines a random closed set for each $n \geq 1$ and the following conditions hold:
\begin{enumerate}
    \item $P_{0,X}^{(\infty)}(\Gamma_{n,I}(p) \cap A \neq \emptyset \ i.o.) = 0$ for any compact $A\subseteq \Gamma$ such that $A \cap \Gamma_{I}(p) =\emptyset$.
    \item $P_{0,X}^{(\infty)}(\Gamma_{n,I}(p) \cap A^{o} = \emptyset \ i.o.)=0$ for any open $A^{o} \subseteq \Gamma$ such that $A^{o} \cap \Gamma_{I}(p) \neq \emptyset$.
\end{enumerate}
Step 1 and Step 2 verify these properties, thereby establishing almost-sure PK convergence of $\Gamma_{n,I}(p)$ to $\Gamma_{I}(p)$. Step 3 then establishes the desired Hausdorff convergence.
\paragraph{Step 1.} Since $\gamma \mapsto Q_{n}(\gamma,p)$ is lower semicontinuous for each $\{x_{i}\}_{i \geq 1}$ and each $n \geq 1$, we can follow the same argument as Step 1 of Theorem \ref{thm:randomset} to conclude that $\{x_{i}\}_{i \geq 1} \mapsto \Gamma_{n,I}(p)$ defines a mapping from $\mathcal{X}^{\infty}$ into the space of closed subsets of $\Gamma$ for each $n \geq 1$. To show that $\{\{x_{i}\}_{i \geq 1}: \Gamma_{n,I}(p) \cap A  \neq \emptyset\} \in \mathscr{X}^{\infty}$, we note that $ \Gamma_{n,I}(p) \cap A  \neq \emptyset$ is equivalent to $\inf_{\gamma \in A}Q_{n}(\gamma,p) = 0$, and, as a result, $\{\{x_{i}\}_{i \geq 1}: \Gamma_{n,I}(p) \cap A  \neq \emptyset\} \in \mathscr{X}^{\infty}$ if and only if $\{\{x_{i}\}_{i \geq 1}:\inf_{\gamma \in A}Q_{n}(\gamma,p) = 0\} \in \mathscr{X}^{\infty}$, which holds, like Step 2 of Theorem \ref{thm:randomset}, by Condition 2 of the theorem. Hence, under the first two assumptions of the theorem, we conclude that $\{x_{i}\}_{i \geq 1} \mapsto \Gamma_{n,I}(p)$ defines a random closed set on $(\mathcal{X}^{\infty},\mathscr{X}^{\infty})$ for each $n \geq 1$.
\paragraph{Step 2.} We verify the two conditions stated above.
\paragraph{Step 2A.} Let $A$ be a compact subset of $\Gamma$ such that $A \cap \Gamma_{I}(p) = \emptyset$. We show that
\begin{align*}
\sum_{n=1}^{\infty}P_{0,X}^{(\infty)}(\Gamma_{n,I}(p) \cap A \neq \emptyset) < \infty    
\end{align*}
so that $P_{0,X}^{(\infty)}(\Gamma_{n,I}(p) \cap A \neq \emptyset \ i.o.) = 0$ holds by the Borel-Cantelli lemma. On the event $\Gamma_{n,I}(p) \cap A \neq \emptyset$, there exists $\gamma \in A $ such that $\min_{1 \leq j \leq d_{f}}f_{j}(x_{i},p(x_{i}),\gamma) \geq 0$ for all $i=1,...,n$, where $f_{j}(x,p(x),\gamma)$ is an element of $f(x,p(x),\gamma)$. This implies $F_{A}(x_{i},p(x_{i})) \geq 0$ for all $i=1,...,n$, where $F_{A}(x_{i},p(x_{i})) = \sup_{\gamma \in A}\min_{1 \leq j \leq d_{f}}f_{j}(x_{i},p(x_{i}),\gamma)$. Note that $x \mapsto F_{A}(x,p(x))$ is measurable by assumption. Using $\{X_{i}\}_{i \geq 1}$ is i.i.d, we then conclude that $P_{0,X}^{(\infty)}(\Gamma_{n,I}(p) \cap A \neq \emptyset) \leq P_{0,X}(F_{A}(X,p(X)) \geq 0)^{n}$. Consequently, we want to show that $P_{0,X}(F_{A}(X,p(X))< 0) > 0$ to verify the first condition. Since $A \cap \Gamma_{I}(p)=\emptyset$ is equivalent to $A \subseteq \Gamma \setminus \Gamma_{I}(p)$, we can invoke the well-separatedness assumption
\begin{align*}
 c_{A}:=P_{0,X}\left(\sup_{\gamma \in A}\min_{1 \leq j \leq d_{f}}f_{j}(x,p(x),\gamma) < 0\right) > 0   
\end{align*}
to verify that Condition 1. holds because
\begin{align*}
    \sum_{n=1}^{\infty}P_{0,X}^{(\infty)}(\Gamma_{n,I}(p) \cap A \neq \emptyset) \leq \sum_{n=1}^{\infty}(1-c_{A})^{n} < \infty
\end{align*}
by the convergence of geometric series.
\paragraph{Step 2B.} Let $A^{o}$ be an open set such that $A^{o} \cap \Gamma_{I}(p) \neq \emptyset$. Since $\Gamma_{I}(p) \subseteq \Gamma_{n,I}(p)$ for all $n \geq 1$ and $P_{0,X}^{(\infty)}$-almost every fixed realization $\{x_{i}\}_{i \geq 1}$ of $\{X_{i}\}_{i \geq 1}$, we know that $\Gamma_{n,I}(p) \cap A^{o} \neq \emptyset$ for all $n \geq 1$ and $P_{0,X}^{(\infty)}$-almost every fixed realization $\{x_{i}\}_{i \geq 1}$ of $\{X_{i}\}_{i \geq 1}$. Consequently, $\{\{x_{i}\}_{i \geq 1}: \Gamma_{n,I}(p) \cap A^{o} = \emptyset \} = \emptyset$, and, as a result, $P_{0,X}^{(\infty)}(\Gamma_{n,I}(p) \cap A^{o} \ i.o.) = 0$ for each $p \in \mathcal{P}$. Since $A^{o}$ was an arbitrary open set with $A^{o} \cap \Gamma_{I}(p) = \emptyset$, Condition 2. holds.
\paragraph{Step 3.} Steps 1 and 2 verify that Proposition 1.7.23 in \cite{molchanov2017theory} holds, and, as a result, we conclude that $\Gamma_{n,I}(p)$ PK converges to $\Gamma_{I}(p)$ as $n\rightarrow \infty$ with $P_{0,X}^{(\infty)}$-probability equal to one. Since $\Gamma$ is compact, PK convergence is equivalent to convergence in the Hausdorff distance, and, as a result, we conclude that
\begin{align*}
    P_{0,X}^{(\infty)}\left(\lim_{n\rightarrow \infty}d_{\mathcal{H}}(\Gamma_{n,I}(p),\Gamma_{I}(p))=0\right) = 1
\end{align*}
for each $p \in \mathcal{P}$.

\end{proof}
\subsection{Proof of Theorems \ref{thm:consistency}, \ref{thm:test}, and \ref{thm:misspec_consistency}, and Related Results}
\begin{proof}[Proof of Theorem \ref{thm:consistency}]
The proof has three steps. Step 1 establishes that it suffices to condition on $p \in V_{n,\delta_{n}}$. Step 2 shows that it also suffices to condition on the event $\{\Gamma_{n,I}(p) \neq \emptyset\}$. Step 3 uses the results from the first two steps to establish the claim.
\paragraph{Step 1.} We first show that $\Pi_{n}(p \in V_{n,\delta_{n}}|Y^{(n)}) = 1+o_{P_{0}^{(n)}}(1)$ as $n\rightarrow \infty$, conditionally given $P_{0,X}^{(\infty)}$-almost every fixed realization $\{x_{i}\}_{i \geq 1}$ of $\{X_{i}\}_{i \geq 1}$, where $\{V_{n,\delta_{n}}\}_{n \geq 1}$ satisfies $V_{n,\delta_{n}} = \mathcal{P}_{n} \cap \{p \in \mathcal{P}: d_{n}(p,p_{0})< \delta_{n}\}$. By the complement rule and the union bound,
\begin{align*}
    \Pi_{n}(p \in V_{n,\delta_{n}}|Y^{(n)}) &= 1 - \Pi_{n}(p \in \mathcal{P} \setminus V_{n,\delta_{n}}|Y^{(n)}) \\
    &\geq  1 - \Pi_{n}(p \in \mathcal{P} \setminus \mathcal{P}_{n}|Y^{(n)}) - \Pi_{n}(p \in \mathcal{P}: d_{n}(p,p_{0}) \geq \delta_{n}|Y^{(n)})
\end{align*}
By Assumption \ref{as:prior}.1, the following holds for $P_{0,X}^{(\infty)}$-almost every fixed realization $\{x_{i}\}_{i \geq 1}$ of $\{X_{i}\}_{i \geq 1}$,
\begin{align*}
  \Pi_{n}(p \in \mathcal{P} \setminus \mathcal{P}_{n}|Y^{(n)}) \overset{P_{0}^{(n)}}{\longrightarrow} 0   
\end{align*}
and
\begin{align*}
\Pi_{n}(p \in \mathcal{P}: d_{n}(p,p_{0}) \geq \delta_{n}|Y^{(n)}) \overset{P_{0}^{(n)}}{\longrightarrow } 0    
\end{align*}
as $n\rightarrow \infty$. Consequently, for $P_{0,X}^{(\infty)}$-almost every fixed realization $\{x_{i}\}_{i \geq 1}$ of $\{X_{i}\}_{i \geq 1}$,
\begin{align}\label{eq:sievedneighborhood}
   \Pi_{n}(p \in V_{n,\delta_{n}}|Y^{(n)}) = 1+o_{P_{0}^{(n)}}(1)   
\end{align}
as $n\rightarrow \infty$. Applying of the law of total probability, for $P_{0,X}^{(\infty)}$-almost every fixed realization $\{x_{i}\}_{i \geq 1}$ of $\{X_{i}\}_{i \geq 1}$,
\begin{align*}
&\Pi_{n}(d_{\mathcal{H}}(\Gamma_{n,I}(p),\Gamma_{n,I}(p_{0})) \geq \varepsilon |Y^{(n)})  \\
&\quad = \Pi_{n}(d_{\mathcal{H}}(\Gamma_{n,I}(p),\Gamma_{n,I}(p_{0})) \geq \varepsilon |Y^{(n)},V_{n,\delta_{n}})\Pi_{n}(p \in V_{n,\delta_{n}}|Y^{(n)}) \\
&\quad \quad + \Pi_{n}(d_{\mathcal{H}}(\Gamma_{n,I}(p),\Gamma_{n,I}(p_{0})) \geq \varepsilon |Y^{(n)},V_{n,\delta_{n}}^{c})\Pi_{n}(p \in V_{n,\delta_{n}}^{c}|Y^{(n)}) \\
&\quad \leq   \Pi_{n}(d_{\mathcal{H}}(\Gamma_{n,I}(p),\Gamma_{n,I}(p_{0})) \geq \varepsilon |Y^{(n)},V_{n,\delta_{n}})(1+o_{P_{0}^{(n)}}(1)) + o_{P_{0}^{(n)}}(1). 
\end{align*}
and
\begin{align*}
&\Pi_{n}(d_{\mathcal{H}}(\Gamma_{n,I}(p),\Gamma_{n,I}(p_{0})) \geq \varepsilon |Y^{(n)}) \\
&\quad \geq  \Pi_{n}(d_{\mathcal{H}}(\Gamma_{n,I}(p),\Gamma_{n,I}(p_{0})) \geq \varepsilon |Y^{(n)},V_{n,\delta_{n}})(1+o_{P_{0}^{(n)}}(1))
\end{align*}
as $n\rightarrow \infty$, where the inequalities use that probabilities are bounded from above and below by one and zero, respectively, and the posterior contraction (\ref{eq:sievedneighborhood}). Hence, for $P_{0,X}^{(\infty)}$-almost every fixed realization $\{x_{i}\}_{i \geq 1}$ of $\{X_{i}\}_{i \geq 1}$,
\begin{align*}
&\Pi_{n}(d_{\mathcal{H}}(\Gamma_{n,I}(p),\Gamma_{n,I}(p_{0})) \geq \varepsilon |Y^{(n)}) \\
&\quad = \Pi_{n}(d_{\mathcal{H}}(\Gamma_{n,I}(p),\Gamma_{n,I}(p_{0})) \geq \varepsilon |Y^{(n)},V_{n,\delta_{n}})+ o_{P_{0}^{(n)}}(1),
\end{align*}
meaning that it suffices to show that, for $P_{0,X}^{(\infty)}$-almost every fixed realization $\{x_{i}\}_{i \geq 1}$ of $\{X_{i}\}_{i \geq 1}$, the following holds: for every $\varepsilon > 0$,
\begin{align*}
    \Pi_{n}(d_{\mathcal{H}}(\Gamma_{n,I}(p),\Gamma_{n,I}(p_{0})) \geq \varepsilon |Y^{(n)},V_{n,\delta_{n}}) = o_{P_{0}^{(n)}}(1)
\end{align*}
as $n\rightarrow \infty$.
\paragraph{Step 2.} We next show that that it suffices to show that, for $P_{0,X}^{(\infty)}$-almost every fixed realization $\{x_{i}\}_{i \geq 1}$ of $\{X_{i}\}_{i \geq 1}$, the following holds: for every $\varepsilon > 0$,
\begin{align*}
    \Pi_{n}(d_{\mathcal{H}}(\Gamma_{n,I}(p),\Gamma_{n,I}(p_{0})) \geq \varepsilon |Y^{(n)},V_{n,\delta_{n}} \cap \{\Gamma_{n,I}(p) \neq \emptyset\})= o_{P_{0}^{(n)}}(1)
\end{align*}
as $n\rightarrow \infty$. By the Law of Total Probability,
\begin{align*}
      &\Pi_{n}(d_{\mathcal{H}}(\Gamma_{n,I}(p),\Gamma_{n,I}(p_{0})) \geq \varepsilon |Y^{(n)},V_{n,\delta_{n}}) \\
      &\quad \leq  \Pi_{n}(d_{\mathcal{H}}(\Gamma_{n,I}(p),\Gamma_{n,I}(p_{0})) \geq \varepsilon |Y^{(n)},V_{n,\delta_{n}} \cap \{\Gamma_{n,I}(p) \neq \emptyset\})\Pi_{n}(\Gamma_{n,I}(p) \neq \emptyset | Y^{(n)},V_{n,\delta_{n}}) \\
      &\quad \quad + \Pi_{n}(\Gamma_{n,I}(p) = \emptyset | Y^{(n)},V_{n,\delta_{n}})
\end{align*}
and
\begin{align*}
 &\Pi_{n}(d_{\mathcal{H}}(\Gamma_{n,I}(p),\Gamma_{n,I}(p_{0})) \geq \varepsilon |Y^{(n)},V_{n,\delta_{n}}) \\
 &\quad \geq    \Pi_{n}(d_{\mathcal{H}}(\Gamma_{n,I}(p),\Gamma_{n,I}(p_{0})) \geq \varepsilon |Y^{(n)},V_{n,\delta_{n}} \cap \{\Gamma_{n,I}(p) \neq \emptyset\})\Pi_{n}(\Gamma_{n,I}(p) \neq \emptyset | Y^{(n)},V_{n,\delta_{n}})
\end{align*}
Under our assumptions, the proof of Theorem \ref{thm:test}.2 shows that, for $P_{0,X}^{(\infty)}$-almost every fixed realization $\{x_{i}\}_{i \geq 1}$ of $\{X_{i}\}_{i \geq 1}$,
\begin{align*}
    \Pi_{n}(\Gamma_{n,I}(p) \neq \emptyset | Y^{(n)},V_{n,\delta_{n}}) = 1 + o_{P_{0}^{(n)}}(1)
\end{align*}
as $n\rightarrow \infty$. Consequently,
\begin{align*}
     &\Pi_{n}(d_{\mathcal{H}}(\Gamma_{n,I}(p),\Gamma_{n,I}(p_{0})) \geq \varepsilon |Y^{(n)},V_{n,\delta_{n}}) \\
     &\quad =   \Pi_{n}(d_{\mathcal{H}}(\Gamma_{n,I}(p),\Gamma_{n,I}(p_{0})) \geq \varepsilon |Y^{(n)},V_{n,\delta_{n}} \cap \{\Gamma_{n,I}(p) \neq \emptyset\})(1+o_{P_{0}^{(n)}}(1)) + o_{P_{0}^{(n)}}(1)
\end{align*}
Consequently, it suffices to show that, for $P_{0,X}^{(\infty)}$-almost every fixed realization $\{x_{i}\}_{i \geq 1}$ of $\{X_{i}\}_{i \geq 1}$, the following holds: for every $\varepsilon > 0$,
\begin{align}\label{eq:restrictnonempty}
    \Pi_{n}(d_{\mathcal{H}}(\Gamma_{n,I}(p),\Gamma_{n,I}(p_{0})) \geq \varepsilon |Y^{(n)},V_{n,\delta_{n}} \cap \{\Gamma_{n,I}(p) \neq \emptyset\}) \overset{P_{0}^{(n)}}{\longrightarrow} 0
\end{align}
as $n\rightarrow \infty$.
\paragraph{Step 3.}
We show (\ref{eq:restrictnonempty}). First, the Hausdorff distance is defined as
\begin{align*}
    d_{\mathcal{H}}(\Gamma_{n,I}(p),\Gamma_{n,I}(p_{0})) = \max\left\{\sup_{\gamma \in \Gamma_{n,I}(p)}\text{dist}(\gamma,\Gamma_{n,I}(p_{0})), \sup_{\gamma \in \Gamma_{n,I}(p_{0})}\text{dist}(\gamma,\Gamma_{n,I}(p))\right\}
\end{align*}
Using this definition and the union bound, one can deduce that
\begin{align}
    &\Pi_{n}(d_{\mathcal{H}}(\Gamma_{n,I}(p),\Gamma_{n,I}(p_{0})) \geq \varepsilon |Y^{(n)},V_{n,\delta_{n}}\cap \{\Gamma_{n,I}(p) \neq \emptyset\}) \nonumber \\
    & \leq T_{\Gamma_{n,I}|Y^{(n)}, V_{n,\delta_{n}}\cap \{\Gamma_{n,I}(p) \neq \emptyset\}}(\Gamma \setminus \Gamma_{n,I}^{\varepsilon}(p_{0})) \nonumber \\
    &\quad + \Pi_{n}\left(\text{dist}(\gamma_{n}(p,p_{0}),\Gamma_{n,I}(p)) \geq \varepsilon \middle |Y^{(n)},V_{n,\delta_{n}}\cap \{\Gamma_{n,I}(p) \neq \emptyset\}\right), \label{eq:upperboundconsistency}
\end{align}
where $\gamma_{n}(p,p_{0}) \in \argmax_{\gamma \in \Gamma_{n,I}(p_{0})}\text{dist}(\gamma,\Gamma_{n,I}(p))$. Consequently, to complete the proof, we show that the terms of the upper bound converge to zero in $P_{0}^{(n)}$-probability as $n\rightarrow \infty$. Before doing this, we note that the probabilities in (\ref{eq:upperboundconsistency}) are well-defined (i.e., the events are $\mathscr{P}$-measurable) for the following reasons:
\begin{enumerate}
    \item \textit{First Term:} Since $\Gamma \setminus \Gamma_{n,I}^{\varepsilon}(p_{0})$ is compact (as closed subset of compact $\Gamma$) and $\Gamma_{n,I}$ is a random closed set over $(\mathcal{P},\mathscr{P})$, we deduce, from the definition of a random closed set, that $\{p:\Gamma_{n,I}(p) \cap (\Gamma \setminus \Gamma_{n,I}^{\varepsilon}(p_{0})) \neq \emptyset\} \in \mathscr{P}$ by setting $A = \Gamma \setminus \Gamma_{n,I}^{\varepsilon}(p_{0})$.
    \item \textit{Second Term:} The mapping $p \mapsto \text{dist}(\gamma,\Gamma_{n,I}(p))$ is random variable on $(\mathcal{P},\mathscr{P})$ for each $\gamma \in \Gamma_{n,I}(p_{0})$ and each $n \geq 1$ because $\{p: \text{dist}(\gamma,\Gamma_{n,I}(p)) \leq t\} = \{\Gamma_{n,I}(p) \cap \{\tilde{\gamma}: ||\gamma-\tilde{\gamma} ||_{2} \leq t\} \neq \emptyset \} \in \mathscr{P}$ for each $t \geq 0$ (since $\Gamma_{n,I}$ is a random closed set). Consequently, $p \mapsto \text{dist}(\gamma_{n}(p,p_{0}),\Gamma_{n,I}(p))$ is also a random variable on $(\mathcal{P},\mathscr{P})$ for each $n \geq 1$, using measurability of the supremum of a continuous function over a compact set, where the continuity follows from the general result that the distance of a point to a set being $1$-Lipschitz and $\Gamma_{n,I}(p_{0})$ is compact (as a closed subset of compact $\Gamma$).
\end{enumerate}
Given these probabilities are well-defined, we now use the following sub-steps to prove that the terms in the upper bound converge to zero in $P_{0}^{(n)}$-probability as $n\rightarrow \infty$.
\paragraph{Step 3A.} We now show that the first term in (\ref{eq:upperboundconsistency}) converges in probability to zero. Under Assumption \ref{as:dgp}.2, 
\begin{align*}
 &T_{\Gamma_{n,I}|Y^{(n)}, V_{n,\delta_{n}}\cap \{\Gamma_{n,I}(p) \neq \emptyset\}}(\Gamma \setminus \Gamma_{n,I}^{\varepsilon}(p_{0})) \\
 &\quad = \Pi_{n}\left(\Gamma_{n,I}(p) \cap \{\gamma \in \Gamma: \text{dist}(\gamma,\Gamma_{n,I}(p_{0})) \geq \varepsilon\} \neq \emptyset \middle | Y^{(n)},V_{n,\delta_{n}}\cap \{\Gamma_{n,I}(p) \neq \emptyset\}\right)   \\
 &\quad \leq \Pi_{n}\left(\Gamma_{n,I}(p) \cap \left\{\gamma \in \Gamma: Q_{n}(\gamma,p_{0}) \geq \xi \right\} \neq \emptyset \middle | Y^{(n)},V_{n,\delta_{n}}\cap \{\Gamma_{n,I}(p) \neq \emptyset\}\right).  
\end{align*}
for large $n$, where the event $\{\Gamma_{n,I}(p) \cap \left\{\gamma \in \Gamma: Q_{n}(\gamma,p_{0}) \geq \xi \right\} \neq \emptyset\} \in \mathscr{P}$ by Assumption \ref{as:dgp}.2 and that $\Gamma_{n,I}(p)$ is a random set. On the event $\Gamma_{n,I}(p) \cap \left\{\gamma \in \Gamma: Q_{n}(\gamma,p_{0}) \geq \xi \right\} \neq \emptyset$, there exists $\tilde{\gamma} \in \Gamma_{n,I}(p)$ such that $|Q_{n}(\tilde{\gamma},p_{0})-Q_{n}(\tilde{\gamma},p)| \geq \xi$, where the centering reflects that $Q_{n}(\tilde{\gamma},p)= 0$ for any $\tilde{\gamma} \in \Gamma_{n,I}(p)$. Then, for $ p \in V_{n,\delta_n}$ taking the supremum over $\Gamma \times  V_{n,\delta_{n}}$ and using that $\Gamma_{n,I}(p) \subseteq \Gamma$, we can further conclude that $\Gamma_{n,I}(p) \cap \left\{\gamma: Q_{n}(\gamma,p_{0}) \geq \xi \right\} \neq \emptyset$ implies that $\xi\leq \sup_{(\gamma,p) \in \Gamma \times V_{n,\delta_{n}}}\left|Q_{n}(\gamma,p)-Q_{n}(\gamma,p_{0})\right|$. Consequently,
\begin{align}\label{eq:bound11}
T_{\Gamma_{n,I}|Y^{(n)}, V_{n,\delta_{n}}\cap \{\Gamma_{n,I}(p) \neq \emptyset\}}(\Gamma \setminus \Gamma_{n,I}^{\varepsilon}(p_{0})) \leq \mathbf{1}\left\{\sup_{(\gamma,p) \in \Gamma \times V_{n,\delta_{n}}}\left|Q_{n}(\gamma,p)-Q_{n}(\gamma,p_{0})\right| \geq \xi\right\}    
\end{align}
Applying Assumption \ref{as:prior}.2, the upper bound of (\ref{eq:bound11}) is identically zero for $n$ large, conditionally given $P_{0,X}^{(\infty)}$-almost every fixed realization $\{x_{i}\}_{i \geq 1}$ of $\{X_{i}\}_{i \geq 1}$. Consequently,
\begin{align*}
 T_{\Gamma_{n,I}|Y^{(n)}, V_{n,\delta_{n}}\cap \{\Gamma_{n,I}(p) \neq \emptyset\}}(\Gamma \setminus \Gamma_{n,I}^{\varepsilon}(p_{0})) \overset{P_{0}^{(n)}}{\longrightarrow} 0    
\end{align*}
as $n\rightarrow \infty$, conditionally given $P_{0,X}^{(\infty)}$-almost every realization $\{x_{i}\}_{i \geq 1}$ of $\{X_{i}\}_{i \geq 1}$.
\paragraph{Step 3B.} We show that, for $P_{0,X}^{(\infty)}$-almost every fixed realization $\{x_{i}\}_{i \geq 1}$ of $\{X_{i}\}_{i \geq 1}$,
\begin{align*}
  \Pi_{n}\left(\text{dist}(\gamma_{n}(p,p_{0}),\Gamma_{n,I}(p)) \geq \varepsilon \middle |Y^{(n)},V_{n,\delta_{n}}\cap \{\Gamma_{n,I}(p) \neq \emptyset\}\right) = o_{P_{0}^{(n)}}(1)
\end{align*}
as $n\rightarrow \infty$. Applying the complement rule,
\begin{align*}
     &\Pi_{n}\left(\text{dist}(\gamma_{n}(p,p_{0}),\Gamma_{n,I}(p)) \geq \varepsilon \middle |Y^{(n)},V_{n,\delta_{n}}\cap \{\Gamma_{n,I}(p) \neq \emptyset\}\right)  \\
     &\quad = 1 -  \Pi_{n}\left(\text{dist}(\gamma_{n}(p,p_{0}),\Gamma_{n,I}(p)) < \varepsilon \middle |Y^{(n)},V_{n,\delta_{n}}\cap \{\Gamma_{n,I}(p) \neq \emptyset\}\right) 
\end{align*}
Since $\Gamma_{n,I}(p_{0})$ is compact and $\gamma\mapsto \text{dist}(\gamma,\Gamma_{n,I}(p))$ is continuous for every $p \in \mathcal{P}$, $\text{dist}(\gamma_{n}(p,p_{0}),\Gamma_{n,I}(p)) < \varepsilon$ holds if and only if $\inf_{\tilde{\gamma} \in \Gamma_{n,I}(p)}||\gamma-\tilde{\gamma}||_{2} < \varepsilon$ for all $\gamma \in \Gamma_{n,I}(p_{0})$. Moreover, $\inf_{\tilde{\gamma} \in \Gamma_{n,I}(p)}||\gamma-\tilde{\gamma}||_{2} < \varepsilon$ for all $\gamma \in \Gamma_{n,I}(p_{0})$ is equivalent to $\Gamma_{n,I}(p) \cap B(\gamma,\varepsilon) \neq \emptyset$ for all $\gamma \in \Gamma_{n,I}(p_{0})$. Hence, by Assumption \ref{as:prior}.3, the probability $\Pi_{n}\left(\text{dist}(\gamma_{n}(p,p_{0}),\Gamma_{n,I}(p)) < \varepsilon \middle |Y^{(n)},V_{n,\delta_{n}}\cap \{\Gamma_{n,I}(p) \neq \emptyset\}\right) $ is identically equal to $1$ for $n$ large, and as a result, we conclude that
\begin{align*}
\Pi_{n}\left(\text{dist}(\gamma_{n}(p,p_{0}),\Gamma_{n,I}(p)) \geq \varepsilon \middle |Y^{(n)},V_{n,\delta_{n}}\cap \{\Gamma_{n,I}(p) \neq \emptyset\}\right) = 1 - (1+o_{P_{0}^{(n)}}(1)) = o_{P_{0}^{(n)}}(1)
\end{align*}
conditionally given $P_{0,X}^{(\infty)}$-almost every fixed realization $\{x_{i}\}_{i \geq 1}$ of $\{X_{i}\}_{i \geq 1}$.
\end{proof}
\begin{proof}[Proof of Corollary \ref{cor:subvectors}]
Let $\varepsilon > 0 $ be fixed arbitrarily. Since $\Gamma$ is compact and $g: \mathbb{R}^{d_{\gamma}} \rightarrow \mathbb{R}^{d_{g}}$ is continuous, then $g$ is uniformly continuous on $\Gamma$. Consequently, there exists $\delta_{\varepsilon} > 0 $ such that $||g(\gamma)-g(\tilde{\gamma})||_{2} < \varepsilon$ for all $\gamma,\tilde{\gamma} \in \Gamma$ with $||\gamma-\tilde{\gamma} ||_{2} < \delta_{\varepsilon}$. Hence, if $d_{\mathcal{H}}(\Gamma_{n,I}(p),\Gamma_{n,I}(p_{0})) < \delta_{\varepsilon}$, then $
\text{dist}(g(\gamma),g(\Gamma_{n,I}(p_{0}))) < \varepsilon$ for all $\gamma \in \Gamma_{n,I}(p)$ and $\text{dist}(g(\gamma),g(\Gamma_{n,I}(p))) < \varepsilon$ for all $\gamma \in \Gamma_{n,I}(p_{0})$. Using that $G_{n,I}(p) = g(\Gamma_{n,I}(p))$ and $G_{n,I}(p_{0}) = g(\Gamma_{n,I}(p_{0}))$, it follows that $d_{\mathcal{H}}(G_{n,I}(p),G_{n,I}(p_{0})) < \varepsilon$. Since Theorem \ref{thm:consistency} guarantees that, for $P_{0,X}^{(\infty)}$-almost every fixed realization $\{x_{i}\}_{i \geq 1}$ of $\{X_{i}\}_{i\geq 1}$, the posterior satisfies $\Pi_{n}(d_{\mathcal{H}}(\Gamma_{n,I}(p),\Gamma_{n,I}(p_{0})) < \delta_{\varepsilon}|Y^{(n)}) = 1+o_{P_{0}^{(n)}}(1)$, it follows that the marginal posterior for $G_{n,I}(p)$ has the property that $\Pi_{n}(d_{\mathcal{H}}(G_{n,I}(p),G_{n,I}(p_{0})) <  \varepsilon |Y^{(n)})$ is identitically zero for large $n$, conditionally given $P_{0,X}^{(\infty)}$-almost every fixed realization $\{x_{i}\}_{i \geq 1}$ of $\{X_{i}\}_{i\geq 1}$.
\end{proof}
\begin{proof}[Proof of Corollary \ref{cor:fullsupportidset}]
The proof has two steps. Step 1 verifies the first claim, while Step 2 verifies the second.
\paragraph{Step 1.}
Let $\varepsilon > 0$ be fixed arbitrarily and note that, under the Assumptions of Theorem \ref{thm:fullsupport}, $P_{0,X}^{(\infty)}(A_{\varepsilon}) = 1$, where 
\begin{align*}
A_{\varepsilon} = \left\{\{x_{i}\}_{i \geq 1}: \exists \ N \geq 1 \ s.t. \ d_{\mathcal{H}}(\Gamma_{n,I}(p_{0}),\Gamma_{I}(p_{0})) < \frac{\varepsilon}{2} \ \forall \ n \geq N\right\}.    
\end{align*}
Consequently, we can condition on realizations $\{x_{i}\}_{i \geq 1}$ of $\{X_{i}\}_{i \geq 1}$ such that $\{x_{i}\}_{i \geq 1} \in A_{\varepsilon}$. Hence, by the triangle inequality, for $P_{0,X}^{(\infty)}$-almost every fixed realization $\{x_{i}\}_{i \geq 1}$ of $\{X_{i}\}_{i \geq 1}$, there is an $N \geq 1$ such that
\begin{align*}
    d_{\mathcal{H}}\left(\Gamma_{n,I}(p),\Gamma_{I}(p_{0})\right) &\leq d_{\mathcal{H}}\left(\Gamma_{n,I}(p),\Gamma_{n,I}(p_{0})\right) + d_{\mathcal{H}}\left(\Gamma_{n,I}(p_{0}),\Gamma_{I}(p_{0})\right) \\
    &\leq d_{\mathcal{H}}\left(\Gamma_{n,I}(p),\Gamma_{n,I}(p_{0})\right) + \frac{\varepsilon}{2}
\end{align*}
holds for each $n \geq N$. Hence, for $P_{0,X}^{(\infty)}$-almost every fixed realization $\{x_{i}\}_{i \geq 1}$ of $\{X_{i}\}_{i \geq 1}$, there is an $N \geq 1$ such that
\begin{align*}
    \Pi_{n}\left(d_{\mathcal{H}}(\Gamma_{n,I}(p),\Gamma_{I}(p_{0})) \geq \varepsilon \middle | Y^{(n)}\right) \leq \Pi_{n}\left(d_{\mathcal{H}}\left(\Gamma_{n,I}(p),\Gamma_{n,I}(p_{0})\right) \geq \frac{\varepsilon}{2} \middle | Y^{(n)}\right)
\end{align*}
for all $n \geq N$. Applying Theorem \ref{thm:consistency}, for $P_{0,X}^{(\infty)}$-almost every fixed realization $\{x_{i}\}_{i \geq 1}$ of $\{X_{i}\}_{i \geq 1}$,
\begin{align*}
    \Pi_{n}\left(d_{\mathcal{H}}\left(\Gamma_{n,I}(p),\Gamma_{n,I}(p_{0})\right) \geq \frac{\varepsilon}{2} \middle | Y^{(n)}\right) \overset{P_{0}^{(n)}}{\longrightarrow} 0
\end{align*}
as $n\rightarrow \infty$. Consequently, for $P_{0,X}^{(\infty)}$-almost every fixed realization $\{x_{i}\}_{i \geq 1}$ of $\{X_{i}\}_{i \geq 1}$,
\begin{align*}
    \Pi_{n}\left(d_{\mathcal{H}}(\Gamma_{n,I}(p),\Gamma_{I}(p_{0})) \geq \varepsilon \middle | Y^{(n)}\right) \overset{P_{0}^{(n)}}{\longrightarrow} 0
\end{align*}
as $n\rightarrow \infty$.
\paragraph{Step 2.} To extend the result to functions of $\gamma$, it suffices to show that
\begin{align}\label{eq:conditionforg}
    P_{0,X}^{(\infty)}\left(\lim_{n\rightarrow \infty}d_{\mathcal{H}}\left(G_{n,I}(p_{0}),G_{I}(p_{0})\right) = 0 \right) = 1
\end{align}
because if this holds, then the same argument as Step 1 can be applied, except replacing $\Gamma_{n,I}$ and $\Gamma_{I}$ with $G_{n,I}$ and $G_{I}$, respectively, and invoking Corollary \ref{cor:subvectors}. Since the closed set $G_{n,I}(p_{0})$ satisfies $\left\{\{x_{i}\}_{i \geq 1}: G_{n,I}(p_{0}) \cap A \neq \emptyset\right\} = \left\{\{x_{i}\}_{i \geq 1}: \Gamma_{n,I}(p_{0}) \cap g^{-1}(A)\right\}$ for any compact $A$ and $g^{-1}(A)$ is compact (as $g$ is a continuous mapping with compact domain $\Gamma$), it follows that  $\left\{\{x_{i}\}_{i \geq 1}: G_{n,I}(p_{0}) \cap A \neq \emptyset\right\}  \in \mathscr{X}^{\infty}$ for each $n \geq 1$. Hence, $\{x_{i}\}_{i \geq 1} \mapsto G_{n,I}(p_{0})$ is a random closed set over $(\mathcal{X}^{\infty},\mathscr{X}^{\infty})$ for each $n \geq 1$, and, as a result, the probability (\ref{eq:conditionforg}) is well-defined. Next, since $g$ is continuous and defined on a compact set $\Gamma$, $g$ is uniformly continuous, and, as a result, we obtain
\begin{align*}
\lim_{n\rightarrow \infty}d_{\mathcal{H}}(\Gamma_{n,I}(p_{0}),\Gamma_{I}(p_{0})) = 0\implies \lim_{n\rightarrow \infty}d_{\mathcal{H}}(G_{n,I}(p_{0}),G_{I}(p_{0})) = 0.    
\end{align*}
Consequently, we invoke Theorem \ref{thm:fullsupport} and follow the same argument as Step 1 to conclude that (\ref{eq:conditionforg}) holds.
\end{proof}
\begin{proof}[Proof of Theorem \ref{thm:test}]
The proof has two steps with each corresponding to the appropriate part of the theorem.
\paragraph{Step 1.} By a similar argument to Step 1 of Theorem \ref{thm:consistency}, it suffices to show that, for $P_{0,X}^{(\infty)}$-almost every fixed realization $\{x_{i}\}_{i \geq 1}$ of $\{X_{i}\}_{i \geq 1}$, 
\begin{align*}
\Pi_{n}(\Gamma_{n,I}(p) = \emptyset | Y^{(n)},V_{n,\delta_{n}}) = 1 + o_{P_{0}^{(n)}}(1).    
\end{align*} 
By assumption, there is a constant $c_{0} > 0$ such that $\inf_{\gamma \in \Gamma}Q_{n}(\gamma,p_{0}) \geq c_{0}$ for $n$ large. Consequently,
\begin{align*}
    &\Pi_{n}(\Gamma_{n,I}(p) = \emptyset | Y^{(n)},V_{n,\delta_{n}}) \\
    &\quad \geq \Pi_{n}\left( \inf_{\gamma \in \Gamma}Q_{n}(\gamma,p) \geq \frac{c_{0}}{2}\middle |Y^{(n)},V_{n,\delta_{n}}\right) \\
    &\quad \geq \Pi_{n}\left( \inf_{\gamma \in \Gamma}\left[Q_{n}(\gamma,p)-Q_{n}(\gamma,p_{0})\right] + \inf_{\gamma \in \Gamma}Q_{n}(\gamma,p_{0}) \geq \frac{c_{0}}{2} \middle |Y^{(n)},V_{n,\delta_{n}}\right) \\
    &\quad =\Pi_{n}\left( \inf_{\gamma \in \Gamma}Q_{n}(\gamma,p_{0}) \geq \frac{c_{0}}{2}-\inf_{\gamma \in \Gamma}\left[Q_{n}(\gamma,p)-Q_{n}(\gamma,p_{0})\right]  \middle |Y^{(n)},V_{n,\delta_{n}}\right) \\
    &\quad \geq \mathbf{1}\left\{\inf_{\gamma \in \Gamma}Q_{n}(\gamma,p_{0})\geq c_{0}\right\}\Pi_{n}\left(\inf_{\gamma \in \Gamma}\left[Q_{n}(\gamma,p)-Q_{n}(\gamma,p_{0})\right] \geq -\frac{c_{0}}{2}\middle | Y^{(n)},V_{n,\delta_{n}}\right) \\
    &\quad = 1+o_{P_{0}^{(n)}}(1),
\end{align*}
where the second inequality uses that the infimum of a sum is greater than the sum of the infima, and the third inequality uses the law of total probability and the fact that probabilities are bounded from below by zero, and the final equality uses that $\inf_{\gamma \in \Gamma}Q_{n}(\gamma,p_{0}) \geq c_{0}$ for $n$ large and Assumption \ref{as:prior}.2.
\paragraph{Step 2.} Following a similar argument to Step 1 of Theorem \ref{thm:consistency}, it suffices to show that, for $P_{0,X}^{(\infty)}$-almost every fixed realization $\{x_{i}\}_{i \geq 1}$ of $\{X_{i}\}_{i \geq 1}$ $\Pi_{n}(\Gamma_{n,I}(p) \neq \emptyset | Y^{(n)},V_{n,\delta_{n}}) = 1 + o_{P_{0}^{(n)}}(1)$. Applying Assumption \ref{as:prior}.3 with $\varepsilon = 1$, there exists $N_{1} \geq 1$ such that $p \in V_{n,\delta_{n}}$ implies that $\Gamma_{n,I}(p) \cap \{\tilde{\gamma} \in \Gamma: ||\tilde{\gamma}-\gamma_{n,0}||_{2} < 1\} \neq \emptyset$ for all $n \geq N_{1}$, where $\gamma_{n,0}$ is an arbitrary element of $\Gamma_{n,I}(p_{0}) \neq \emptyset$. Consequently, $\Gamma_{n,I}(p) \neq \emptyset$ for all $n \geq N_{1}$ on the event $p \in V_{n,\delta_{n}}$. Hence, for large $n$, $\Pi_{n}(\Gamma_{n,I}(p) \neq \emptyset | Y^{(n)},V_{n,\delta_{n}})$ is identically equal to one, establishing the claim.

\end{proof}
\begin{proof}[Proof of Theorem \ref{thm:misspec_consistency}.]
Let $\varepsilon > 0$ be fixed arbitrarily. Using the exact same argument as Step 1 of Theorem \ref{thm:consistency}, it suffices to show that, for $P_{0,X}^{(\infty)}$-almost every fixed realization $\{x_{i}\}_{i \geq 1}$ of $\{X_{i}\}_{i \geq 1}$,
\begin{align*}
    \Pi_{n}(d_{\mathcal{H}}(\tilde{\Gamma}_{n,I}(p),\tilde{\Gamma}_{n,I}(p_{0})) \geq \varepsilon |Y^{(n)},V_{n,\delta_{n}}) = o_{P_{0}^{(n)}}(1)
\end{align*}
as $n\rightarrow \infty$. Using the definition of $d_{\mathcal{H}}$, we can refine this requirement to showing that, for $P_{0,X}^{(\infty)}$-almost every fixed realization $\{x_{i}\}_{i \geq 1}$ of $\{X_{i}\}_{i \geq 1}$,
\begin{align}\label{eq:mod2A}
 T_{\tilde{\Gamma}_{n,I}|Y^{(n)}, V_{n,\delta_{n}}}(\Gamma \setminus \tilde{\Gamma}_{n,I}^{\varepsilon}(p_{0})) \overset{P_{0}^{(n)}}{\longrightarrow} 0 
\end{align}
and
\begin{align}\label{eq:mod2B}
\Pi_{n}\left(\text{dist}(\tilde{\gamma}_{n}(p,p_{0}),\tilde{\Gamma}_{n,I}(p)) \geq \varepsilon \middle |Y^{(n)},V_{n,\delta_{n}}\right) \overset{P_{0}^{(n)}}{\longrightarrow} 0 
\end{align}
as $n\rightarrow \infty$, where $\tilde{\gamma}_{n}(p,p_{0}) = \argmax_{\gamma \in \tilde{\Gamma}_{n,I}(p_{0})}\text{dist}(\gamma,\tilde{\Gamma}_{n,I}(p))$. Applying the modified Assumption \ref{as:dgp}.2, the same arguments from Step 3A of Theorem \ref{thm:consistency} can be applied (except with $\tilde{Q}_{n}(\gamma,p)$ in place of $Q_{n}(\gamma,p)$) to conclude that there exists $\xi > 0$ (depending on $\varepsilon$) such that
\begin{align*}
    T_{\tilde{\Gamma}_{n,I}|Y^{(n)}, V_{n,\delta_{n}}}(\Gamma \setminus \tilde{\Gamma}_{n,I}^{\varepsilon}(p_{0})) \leq \mathbf{1}\left\{2\sup_{(\gamma,p) \in \Gamma \times V_{n,\delta_{n}}}|Q_{n}(\gamma,p) - Q_{n}(\gamma,p_{0})| \geq \xi \right\},
\end{align*}
and, by Assumption \ref{as:prior}.2, we conclude that (\ref{eq:mod2A}) holds. Since $\tilde{\Gamma}_{n,I}(p)$, $p \in \mathcal{P}$, and $\tilde{\Gamma}_{n,I}(p_{0})$ are compact subsets of $\Gamma$ and the modified Assumption \ref{as:prior}.3 holds, the same argument to Step 3B of Theorem \ref{thm:consistency} implies (\ref{eq:mod2B}). Since $\varepsilon >0$ is arbitrary, the proof is complete. 
\end{proof}
\begin{remark}
The proof of Theorem \ref{thm:misspec_consistency} does not require conditioning on $\tilde{\Gamma}_{n,I}(p) \neq \emptyset$ because this event receives probability one by the construction of $\tilde{\Gamma}_{n,I}(p)$.
\end{remark}
\section{Proof of Propositions}
\subsection{Proofs for Sections \ref{sec:suff1} and \ref{sec:suffhemi}, and Appendix \ref{ap:hemi_misspec}}
\subsubsection{Proofs for Section \ref{sec:suff1}}
\begin{proof}[Proof of Proposition \ref{prop:ucon}]
The proof has three steps. The first establishes a sufficient conditions for the claim. The remaining steps (i.e., Steps 2 and 3) verify these conditions.
\paragraph{Step 1.} Since $Q_{n,r}(\gamma,p) = ||\text{dist}_{W(\cdot,p(\cdot),\gamma)}(f(\cdot,p(\cdot),\gamma),\mathbb{R}_{+}^{d_{f}})||_{n,r}$, the triangle inequality and properties of the supremum over sums reveals that
\begin{align}
&\sup_{(\gamma,p) \in \Gamma \times V_{n,\delta_{n}}}\left|Q_{n,r}(\gamma,p)- Q_{n,r}(\gamma,p_{0})\right| \nonumber \\
&\leq \sup_{(\gamma,p) \in \Gamma \times V_{n,\delta_{n}}}\left| ||\text{dist}_{W(\cdot,p(\cdot),\gamma)}(f(\cdot,p(\cdot),\gamma),\mathbb{R}_{+}^{d_{f}})||_{n,r}- ||\text{dist}_{W(\cdot,p(\cdot),\gamma)}(f(\cdot,p_{0}(\cdot),\gamma),\mathbb{R}_{+}^{d_{f}})||_{n,r}\right| \nonumber \\
&+ \sup_{(\gamma,p) \in \Gamma \times V_{n,\delta_{n}}}\left|||\text{dist}_{W(\cdot,p(\cdot),\gamma)}(f(\cdot,p_{0}(\cdot),\gamma),\mathbb{R}_{+}^{d_{f}})||_{n,r}-||\text{dist}_{W(\cdot,p_{0}(\cdot),\gamma)}(f(\cdot,p_{0}(\cdot),\gamma),\mathbb{R}_{+}^{d_{f}})||_{n,r}\right| \nonumber \\
&\leq \sup_{(\gamma,p) \in \Gamma \times V_{n,\delta_{n}}}\left|\left| \text{dist}_{W(\cdot,p(\cdot),\gamma)}(f(\cdot,p(\cdot),\gamma),\mathbb{R}_{+}^{d_{f}})-\text{dist}_{W(\cdot,p(\cdot),\gamma)}(f(\cdot,p_{0}(\cdot),\gamma),\mathbb{R}_{+}^{d_{f}})\right|\right|_{n,r} \label{eq:bound1}\\
&+\sup_{(\gamma,p) \in \Gamma \times V_{n,\delta_{n}}}\left|\left| \text{dist}_{W(\cdot,p(\cdot),\gamma)}(f(\cdot,p_{0}(\cdot),\gamma),\mathbb{R}_{+}^{d_{f}})-\text{dist}_{W(\cdot,p_{0}(\cdot),\gamma)}(f(\cdot,p_{0}(\cdot),\gamma),\mathbb{R}_{+}^{d_{f}})\right|\right|_{n,r} \label{eq:bound2}
\end{align}
The remaining steps show that the terms in the upper bound converges to zero as $n\rightarrow \infty$.
\paragraph{Step 2.} We show that (\ref{eq:bound1}) converges to zero as $n\rightarrow \infty$, uniformly over $\Gamma \times V_{n,\delta_{n}}$. Let 
\begin{align*}
C_{1}(x_{i},p(x_{i}),\gamma))=\left | \left| W^{\frac{1}{2}}(x_{i},p(x_{i}),\gamma)W^{-\frac{1}{2}}(x_{i},p_{0}(x_{i}),\gamma) \right| \right|_{op}.
\end{align*}
Since, for any metric, the distance function is $1$-Lipschitz and $W(x,p(x),\gamma)$ is symmetric and positive-definite,
\begin{align*}
    &\left|\text{dist}_{W(x_{i},p(x_{i}),\gamma)}(f(x_{i},p(x_{i}),\gamma),\mathbb{R}_{+}^{d_{f}}) - \text{dist}_{W(x_{i},p(x_{i}),\gamma)}(f(x_{i},p_{0}(x_{i}),\gamma),\mathbb{R}_{+}^{d_{f}})\right| \\
    &\quad \leq ||f(x_{i},p(x_{i}),\gamma)-f(x_{i},p_{0}(x_{i}),\gamma)||_{W(x_{i},p(x_{i}),\gamma)} \\
    &\quad \leq \sqrt{\lambda_{max}(W(x_{i},p(x_{i}),\gamma))} \cdot ||f(x_{i},p(x_{i}),\gamma)-f(x_{i},p_{0}(x_{i}),\gamma)||_{2} \\
    &\quad \leq \sqrt{\lambda_{max}(W(x_{i},p_{0}(x_{i}),\gamma))} \cdot C_{1}(x_{i},p(x_{i}),\gamma)\cdot ||f(x_{i},p(x_{i}),\gamma)-f(x_{i},p_{0}(x_{i}),\gamma)||_{2} \\
     &\quad \leq \sqrt{\overline{\lambda}_{W}} \cdot C_{1}(x_{i},p(x_{i}),\gamma) \cdot ||f(x_{i},p(x_{i}),\gamma)-f(x_{i},p_{0}(x_{i}),\gamma)||_{2},
\end{align*}
where the last equality uses that $\sup_{\gamma \in \Gamma}\sup_{n \geq 1}\max_{1 \leq i \leq n}\lambda_{max}^{1/2}(W(x_{i},p_{0}(x_{i}),\gamma)) \leq \overline{\lambda}_{W}^{1/2}$ by the assumption of the proposition. Consequently, for any $r \in [1,\infty]$, the following holds
\begin{align*}
    &\sup_{(\gamma,p) \in \Gamma \times V_{n,\delta_{n}}}\left|\left| \text{dist}_{W(\cdot,p(\cdot),\gamma)}(f(\cdot,p(\cdot),\gamma),\mathbb{R}_{+}^{d_{f}})-\text{dist}_{W(\cdot,p(\cdot),\gamma)}(f(\cdot,p_{0}(\cdot),\gamma),\mathbb{R}_{+}^{d_{f}})\right|\right|_{n,r} \\
    &\quad \leq \sqrt{\overline{\lambda}_{W}} \sup_{(\gamma,p) \in \Gamma \times V_{n,\delta_{n}}}\max_{1 \leq i \leq n}C_{1}(x_{i},p(x_{i}),\gamma) \sup_{(\gamma,p) \in \Gamma \times V_{n,\delta_{n}}}||f(\cdot,p(\cdot),\gamma)-f(\cdot,p_{0}(\cdot),\gamma)||_{n,r}
\end{align*}
Since $||\cdot||_{n,r} \leq ||\cdot||_{n,\infty}$ for all $r \in [1,\infty]$, we know, by assumption that,
\begin{align*}
\sup_{(\gamma,p) \in \Gamma \times V_{n,\delta_{n}}}||f(\cdot,p(\cdot),\gamma)-f(\cdot,p_{0}(\cdot),\gamma)||_{n,r} = o(1)
\end{align*}
for all $r \in [1,\infty]$. Moreover, the assumptions $\inf_{\gamma \in \Gamma}\min_{1 \leq i \leq n}\lambda_{min}(W(x_{i},p_{0}(x_{i}),\gamma)) \geq \underline{\lambda}_{W}$ and $\sup_{(\gamma,p) \in \Gamma \times V_{n,\delta_{n}}}||W(\cdot,p(\cdot),\gamma)-W(\cdot,p_{0}(\cdot),\gamma)||_{n,\infty} = o(1)$ implies that
\begin{align*}
    &\sup_{(\gamma,p) \in \Gamma \times V_{n,\delta_{n}}}\max_{1 \leq i \leq n}||W^{\frac{1}{2}}(x_{i},p(x_{i}),\gamma)W^{-\frac{1}{2}}(x_{i},p_{0}(x_{i}),\gamma)-I_{d_{f}}||_{op} \\
    &\quad \leq \sup_{\gamma \in \Gamma}\max_{1 \leq i \leq n}||W^{-\frac{1}{2}}(x_{i},p_{0}(x_{i}),\gamma)||_{op} \\
    &\quad\quad \times \sup_{(\gamma,p) \in \Gamma \times V_{n,\delta_{n}}}\max_{1 \leq i \leq n}||W^{\frac{1}{2}}(x_{i},p(x_{i}),\gamma)-W^{\frac{1}{2}}(x_{i},p(x_{i}),\gamma)||_{op} \\
    & \quad\leq \underline{\lambda}_{W}^{-\frac{1}{2}}\left(\sup_{(\gamma,p) \in \Gamma \times V_{n,\delta_{n}}}\max_{1 \leq i \leq n}||W(x_{i},p(x_{i}),\gamma)-W(x_{i},p_{0}(x_{i}),\gamma)||_{op}\right)^{\frac{1}{2}} \\
    &\quad = o(1),
\end{align*}
which implies
\begin{align*}
    \sup_{(\gamma,p) \in \Gamma \times V_{n,\delta_{n}}}\max_{1 \leq i \leq n} \left| C_{1}(x_{i},p(x_{i}),\gamma) - 1 \right| = o(1)
\end{align*}
by the reverse triangle inequality. Consequently,
\begin{align*}
&\sup_{(\gamma,p) \in \Gamma \times V_{n,\delta_{n}}}\left|\left| \text{dist}_{W(\cdot,p(\cdot),\gamma)}(f(\cdot,p(\cdot),\gamma),\mathbb{R}_{+}^{d_{f}})-\text{dist}_{W(\cdot,p(\cdot),\gamma)}(f(\cdot,p_{0}(\cdot),\gamma),\mathbb{R}_{+}^{d_{f}})\right|\right|_{n,r} \\
&\quad \leq  \sqrt{\overline{\lambda}_{W}}(1+o(1))o(1) \\
&\quad = o(1),  
\end{align*}
which establishes that (\ref{eq:bound1}) vanishes uniformly along $\{\Gamma \times V_{n,\delta_{n}}\}_{n \geq 1}$ as $n\rightarrow \infty$.
\paragraph{Step 3.}
We now show that (\ref{eq:bound2}) converges to zero as $n\rightarrow \infty$, uniformly over $\Gamma \times V_{n,\delta_{n}}$. Let
\begin{align*}
 C_{2}(x_{i},p(x_{i}),\gamma) = \left | \left| W^{\frac{1}{2}}(x_{i},p_{0}(x_{i}),\gamma)W^{-\frac{1}{2}}(x_{i},p(x_{i}),\gamma) \right| \right|_{op}.
\end{align*}
Using that $(W^{\frac{1}{2}}(x_{i},p(x_{i}),\gamma)W^{-\frac{1}{2}}(x_{i},p_{0}(x_{i}),\gamma))^{-1} = W^{\frac{1}{2}}(x_{i},p_{0}(x_{i}),\gamma)W^{-\frac{1}{2}}(x_{i},p(x_{i}),\gamma) $ and the submultiplicativity of operator norms, we know that
\begin{align*}
  C(x_{i},p(x_{i}),\gamma) :=   \max\{C_{1}(x_{i},p(x_{i}),\gamma)),C_{2}(x_{i},p(x_{i}),\gamma)\} \geq 1.
\end{align*}
Writing $||v||_{W(x_{i},p(x_{i}),\gamma)} = ||W^{1/2}(x_{i},p(x_{i}),\gamma) v||_{2}$ for any $v \in \mathbb{R}^{d_{f}}$, we obtain that
\begin{align*}
 \text{dist}_{W(x_{i},p(x_{i}),\gamma)}(f(x_{i},p_{0}(x_{i}),\gamma),\mathbb{R}_{+}^{d_{f}})& \leq C_{1}(x_{i},p(x_{i}),\gamma) \text{dist}_{W(x_{i},p_{0}(x_{i}),\gamma)}(f(x_{i},p_{0}(x_{i}),\gamma),\mathbb{R}_{+}^{d_{f}}) \\
 & \leq C(x_{i},p(x_{i}),\gamma) \text{dist}_{W(x_{i},p_{0}(x_{i}),\gamma)}(f(x_{i},p_{0}(x_{i}),\gamma),\mathbb{R}_{+}^{d_{f}}).
\end{align*}
Consequently,
\begin{align*}
&\text{dist}_{W(x_{i},p(x_{i}),\gamma)}(f(x_{i},p_{0}(x_{i}),\gamma),\mathbb{R}_{+}^{d_{f}})-\text{dist}_{W(x_{i},p_{0}(x_{i}),\gamma)}(f(x_{i},p_{0}(x_{i}),\gamma),\mathbb{R}_{+}^{d_{f}}) \\
&\quad \leq (C(x_{i},p(x_{i}),\gamma)-1)  \text{dist}_{W(x_{i},p_{0}(x_{i}),\gamma)}(f(x_{i},p_{0}(x_{i}),\gamma),\mathbb{R}_{+}^{d_{f}}).
\end{align*}
Similarly,
\begin{align*}
  \text{dist}_{W(x_{i},p_{0}(x_{i}),\gamma)}(f(x_{i},p_{0}(x_{i}),\gamma),\mathbb{R}_{+}^{d_{f}}) &\leq C_{2}(x_{i},p(x_{i}),\gamma) \text{dist}_{W(x_{i},p(x_{i}),\gamma)}(f(x_{i},p_{0}(x_{i}),\gamma),\mathbb{R}_{+}^{d_{f}}) \\
  &\leq C(x_{i},p(x_{i}),\gamma)\text{dist}_{W(x_{i},p(x_{i}),\gamma)}(f(x_{i},p_{0}(x_{i}),\gamma),\mathbb{R}_{+}^{d_{f}}),
\end{align*}
which implies
\begin{align*}
      &\text{dist}_{W(x_{i},p_{0}(x_{i}),\gamma)}(f(x_{i},p_{0}(x_{i}),\gamma),\mathbb{R}_{+}^{d_{f}}) -  \text{dist}_{W(x_{i},p(x_{i}),\gamma)}(f(x_{i},p_{0}(x_{i}),\gamma),\mathbb{R}_{+}^{d_{f}}) \\
      &\quad \leq (C(x_{i},p(x_{i}),\gamma)-1) \text{dist}_{W(x_{i},p(x_{i}),\gamma)}(f(x_{i},p_{0}(x_{i}),\gamma),\mathbb{R}_{+}^{d_{f}}) \\
      &\quad \leq (C(x_{i},p(x_{i}),\gamma)-1)C(x_{i},p(x_{i}),\gamma) \text{dist}_{W(x_{i},p_{0}(x_{i}),\gamma)}(f(x_{i},p_{0}(x_{i}),\gamma),\mathbb{R}_{+}^{d_{f}}).
\end{align*}
Using that $C(x_{i},p(x_{i}),\gamma) \geq 1$, it follows that
\begin{align*}
    &\left|\text{dist}_{W(x_{i},p(x_{i}),\gamma)}(f(x_{i},p_{0}(x_{i}),\gamma),\mathbb{R}_{+}^{d_{f}})-\text{dist}_{W(x_{i},p_{0}(x_{i}),\gamma)}(f(x_{i},p_{0}(x_{i}),\gamma),\mathbb{R}_{+}^{d_{f}})\right| \\
    &\quad \leq (C(x_{i},p(x_{i}),\gamma)-1)C(x_{i},p(x_{i}),\gamma) \text{dist}_{W(x_{i},p_{0}(x_{i}),\gamma)}(f(x_{i},p_{0}(x_{i}),\gamma),\mathbb{R}_{+}^{d_{f}}) \\
    &\quad \leq  \text{dist}_{W(\cdot,p_{0}(\cdot),\gamma)}(f(\cdot,p_{0}(\cdot),\gamma),\mathbb{R}_{+}^{d_{f}}) \\
    &\quad \quad \times \sup_{(\gamma,p) \in \Gamma \times V_{n,\delta_{n}}}\max_{1 \leq i \leq n}|C(x_{i},p(x_{i}),\gamma)-1|\sup_{(\gamma,p) \in \Gamma \times V_{n,\delta_{n}}}\max_{1 \leq i \leq n}C(x_{i},p(x_{i}),\gamma),  
\end{align*}
which implies that
\begin{align*}
&\sup_{(\gamma,p) \in \Gamma \times V_{n,\delta_{n}}}\left|\left| \text{dist}_{W(\cdot,p(\cdot),\gamma)}(f(\cdot,p_{0}(\cdot),\gamma),\mathbb{R}_{+}^{d_{f}})-\text{dist}_{W(\cdot,p_{0}(\cdot),\gamma)}(f(\cdot,p_{0}(\cdot),\gamma),\mathbb{R}_{+}^{d_{f}})\right|\right|_{n,r}   \\
&\quad \leq \sup_{\gamma \in \Gamma}\left | \left |\text{dist}_{W(\cdot,p_{0}(\cdot),\gamma)}(f(\cdot,p_{0}(\cdot),\gamma),\mathbb{R}_{+}^{d_{f}}) \right| \right|_{n,r} \cdot \sup_{(\gamma,p) \in \Gamma \times V_{n,\delta_{n}}}\max_{1 \leq i \leq n}|C(x_{i},p(x_{i}),\gamma)-1| \\
&\quad \quad \times \sup_{(\gamma,p) \in \Gamma \times V_{n,\delta_{n}}}\max_{1 \leq i \leq n}C(x_{i},p(x_{i}),\gamma)
\end{align*}
Provided that
\begin{align}\label{eq:inverse}
    \sup_{(\gamma,p) \in \Gamma \times V_{n,\delta_{n}}}\max_{1 \leq i \leq n}|C_{2}(x_{i},p(x_{i}),\gamma)-1 | = o(1),
\end{align}
we know that
\begin{align*}
    \sup_{(\gamma,p) \in \Gamma \times V_{n,\delta_{n}}}\max_{1 \leq i \leq n}|C(x_{i},p(x_{i}),\gamma)-1| = o(1)
\end{align*}
and
\begin{align*}
\sup_{(\gamma,p) \in \Gamma \times V_{n,\delta_{n}}}\max_{1 \leq i \leq n}C(x_{i},p(x_{i}),\gamma) = 1 + o(1),    
\end{align*}
which implies
\begin{align*}
    \sup_{(\gamma,p) \in \Gamma \times V_{n,\delta_{n}}}\left|\left| \text{dist}_{W(\cdot,p(\cdot),\gamma)}(f(\cdot,p_{0}(\cdot),\gamma),\mathbb{R}_{+}^{d_{f}})-\text{dist}_{W(\cdot,p_{0}(\cdot),\gamma)}(f(\cdot,p_{0}(\cdot),\gamma),\mathbb{R}_{+}^{d_{f}})\right|\right|_{n,r} = o(1)
\end{align*}
as $n\rightarrow \infty$. So, to complete the proof, we show (\ref{eq:inverse}). Observe that, by the triangle inequality,
\begin{align*}
    &\sup_{(\gamma,p) \in \Gamma \times V_{n,\delta_{n}}}\max_{1 \leq i \leq n}|C(x_{i},p(x_{i}),\gamma)-1|  \\
    &\quad \leq \sup_{(\gamma,p) \in \Gamma \times V_{n,\delta_{n}}}\max_{1 \leq i \leq n}||W^{-\frac{1}{2}}(x_{i},p(x_{i}),\gamma)||_{op} \\
    &\quad \quad \times \sup_{(\gamma,p) \in \Gamma \times V_{n,\delta_{n}}}\max_{1 \leq i \leq n}||W^{\frac{1}{2}}(x_{i},p(x_{i}),\gamma)-W^{\frac{1}{2}}(x_{i},p_{0}(x_{i}),\gamma)||_{op}
\end{align*}
In Step 2, we showed that
\begin{align*}
    \sup_{(\gamma,p) \in \Gamma \times V_{n,\delta_{n}}}\max_{1 \leq i \leq n}||W^{\frac{1}{2}}(x_{i},p(x_{i}),\gamma)-W^{\frac{1}{2}}(x_{i},p_{0}(x_{i}),\gamma)||_{op} = o(1),
\end{align*}
and, for $n$ large, this implies that
\begin{align*}
 \sup_{(\gamma,p) \in \Gamma \times V_{n,\delta_{n}}}\max_{1 \leq i \leq n}||W^{-\frac{1}{2}}(x_{i},p(x_{i}),\gamma)||_{op} \leq 2    \sup_{\gamma \in \Gamma}\max_{1 \leq i \leq n}||W^{-\frac{1}{2}}(x_{i},p_{0}(x_{i}),\gamma)||_{op} \leq \frac{2}{\underline{\lambda}_{W}},
\end{align*}
which together yields (\ref{eq:inverse}).
\end{proof}
\begin{proof}[Proof of Proposition \ref{prop:taxicab}]
Since $||(f(x,p(x),\gamma))_{-}||_{1}= \inf_{\tau \in \mathbb{R}_{+}^{d_{f}}}||f(x,p(x),\gamma)-\tau||_{1}$ for all $x$ and all $p$, the function $f(x,p(x),\gamma) \mapsto ||(f(x,p(x),\gamma))_{-}||_{1}$ is $1$-Lipschitz for all $x$ and all $p$ using the Lipschitz continuity of the distance of a point to a set. Combining this with the reverse triangle inequality,
\begin{align*}
    |Q_{n}^{LP}(\gamma,p)-Q_{n}^{LP}(\gamma,p_{0})| \leq ||f(\cdot,p(\cdot),\gamma)-f(\cdot,p_{0}(\cdot),\gamma)||_{n,1},
\end{align*}
and, taking the supremum over $\Gamma \times V_{n,\delta_{n}}$,
\begin{align*}
\sup_{(\gamma,p) \in \Gamma \times V_{n,\delta_{n}}}|Q_{n}^{LP}(\gamma,p)-Q_{n}^{LP}(\gamma,p_{0})| \leq \sup_{(\gamma,p) \in \Gamma \times V_{n,\delta_{n}}}||f(\cdot,p(\cdot),\gamma)-f(\cdot,p_{0}(\cdot),\gamma)||_{n,1}.
\end{align*}
Using that $||\cdot||_{n,1} \leq ||\cdot||_{n,1}$ and invoking the condition of the proposition, we conclude that, for $P_{0,X}^{(\infty)}$-almost every fixed realization $\{x_{i}\}_{i \geq 1}$ of $\{X_{i}\}_{i \geq 1}$,
\begin{align*}
    \sup_{(\gamma,p) \in \Gamma \times V_{n,\delta_{n}}}|Q_{n}^{LP}(\gamma,p)-Q_{n}^{LP}(\gamma,p_{0})| \longrightarrow 0
\end{align*}
as $n\rightarrow \infty$. This completes the proof.
\end{proof}
\subsubsection{Proofs for Section \ref{sec:suffhemi}}
\begin{proof}[Proof of Proposition \ref{prop:hemi}]
It suffices to show that for $P^{(\infty)}_{0,X}$-almost every fixed realization $\{x_{i}\}_{i \geq 1}$ of $\{X_{i}\}_{i \geq 1}$, the following is true: for every $\varepsilon > 0$, there exists $N \geq 1$ such that $p \in V_{n,\delta_{n}}$ and $n \geq N$ implies $\Gamma_{n,I}(p_{0})\subseteq \Gamma_{n,I}^{\varepsilon}$. To that end, let $\varepsilon > 0$ be fixed arbitrarily. If $\gamma \in \Gamma_{n,I}(p_{0})$, then $f(x_{i},p_{0}(x_{i}),\gamma)  \geq 0$ for all $i=1,...n$, which, by uniform convergence, implies that there exists $\bar{N}(\varepsilon) \geq 1$ such that $n \geq \bar{N}(\varepsilon)$ and $p \in V_{n,\delta_{n}}$ implies
\begin{align*}
   f(x_{i},p(x_{i}),\gamma) > - \frac{c}{\sqrt{d_{f}}}\varepsilon \cdot \iota_{d_{f}} \ \forall \ i=1,...,n,
\end{align*}
where $\iota_{d_{f}}$ is a $d_{f} \times 1$ vector of ones. Consequently, there exists $\bar{N}(\varepsilon) \geq 1$ such that $n \geq \bar{N}(\varepsilon)$ and $p \in V_{n,\delta_{n}}$ implies
\begin{align*}
\max_{1 \leq i \leq n}||(f(x_{i},p(x_{i}),\gamma))_{-}||_{2} < c\varepsilon,
\end{align*}
By the assumption of the proposition, if $n \geq \max\{\tilde{N},\bar{N}(\varepsilon)\}$ and $p \in V_{n,\delta_{n}}$, then
\begin{align*}
    \text{dist}(\gamma,\Gamma_{n,I}(p)) < \varepsilon
\end{align*}
Hence, $\gamma \in \Gamma_{n,I}^{\varepsilon}(p)$. Thus, if we set $N = \max\{\bar{N}(\varepsilon),\tilde{N}\}$, then we can repeat this argument over the elements of $\Gamma_{n,I}(p_{0})$ to conclude that $p \in V_{n,\delta_{n}}$ and $n \geq N$ implies $\Gamma_{n,I}(p_{0}) \subseteq \Gamma_{n,I}^{\varepsilon}(p)$. Moreover, since $\varepsilon > 0$ was fixed arbitrarily, we can conclude that this holds for every $\varepsilon$ and we complete the proof.
\end{proof}
\begin{proof}[Proof of Corollary \ref{cor:unions}]
We show that for $P^{(\infty)}_{0,X}$-almost every fixed realization $\{x_{i}\}_{i \geq 1}$ of $\{X_{i}\}_{i \geq 1}$, the following is true: for every $\varepsilon > 0$, there exists $N \geq 1$ such that $p \in V_{n,\delta_{n}}$ and $n \geq N$ implies $\Gamma_{n,I}(p_{0})\subseteq \Gamma_{n,I}^{\varepsilon}$. To this end, let $\{c_{j}\}_{j=1}^{J}$ and $\{\bar{N}_{j}\}_{j=1}^{J}$ be such that (\ref{eq:errorbound}) of Proposition \ref{prop:hemi} hold for each $j \in \{1,...,J\}$. Similarly, for every $\varepsilon >0$, let $\{\tilde{N}_{j}\}_{j=1}^{J}$ be such that
\begin{align*}
    &f^{(j)}_{l}(x_{i},p_{0}(x_{i}),\gamma) \geq 0 \ \forall \ l=1,...,d_{f^{(j)}}, \ i=1,...,n \\
    &\implies ||(f^{(j)}(x_{i},p(x_{i}),\gamma))_{-}||_{2} < c_{j} \varepsilon \ \forall  \ i=1,...,n
\end{align*}
for all $n \geq \tilde{N}_{j}$,  $\gamma \in \Gamma$, and $p \in V_{n,\delta_{n}}$. Such a sequence exists as we assume that the functions that define $\Gamma_{n,I}^{(j)}(p)$ satisfy the uniform convergence in Proposition \ref{prop:hemi}. Given these two properties, if $\gamma \in \Gamma_{n,I}(p_{0})$ and $n \geq N:=\max_{1 \leq j \leq J}\max\{\tilde{N}_{j},\bar{N}_{j}\}$ and $p \in V_{n,\delta_{n}}$, then there exists $j_{n} \in [J]$ such that $\text{dist}(\gamma,\Gamma_{n,I}^{(j_{n})}(p)) < \varepsilon$ (by following the argument of Proposition \ref{prop:hemi}). Since $\text{dist}(\gamma,\Gamma_{n,I}(p)) = \min_{1 \leq j \leq J}\text{dist}(\gamma,\Gamma_{n,I}^{(j)}(p))$, we conclude that $\text{dist}(\gamma,\Gamma_{n,I}(p)) < \varepsilon$, which is equivalent to $\gamma \in \Gamma_{n,I}^{\varepsilon}(p)$. Repeating this argument over all elements of $\Gamma_{n,I}(p_{0})$, it follows that $n \geq N$ and $p \in V_{n,\delta_{n}}$ implies $\Gamma_{n,I}(p_{0})\subseteq \Gamma_{n,I}^{\varepsilon}(p)$.
\end{proof}
\begin{lemma}\label{lem:interiorparspace}
Suppose that Conditions 3 and 5 of Proposition \ref{prop:convexhemi} holds. Then the following is true for $P_{0,X}^{(\infty)}$-almost every fixed realization $\{x_{i}\}_{i \geq 1}$ of $\{X_{i}\}_{i \geq 1}$: there exists $\hat{N} \geq 1$ such that $\Gamma_{n,I}(p) \subseteq \text{int}(\Gamma)$ for all $p \in V_{n,\delta_{n}}$ and all $n \geq \hat{N}$.
\end{lemma}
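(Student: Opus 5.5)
The plan is a direct contradiction argument that uses the uniform boundary separation in Condition 3 together with the uniform convergence in Condition 5. Since $\Gamma$ is compact, $\text{int}(\Gamma)$ and $\partial\Gamma$ partition $\Gamma$. It therefore suffices to show that for large $n$ no $\gamma\in\partial\Gamma$ can belong to $\Gamma_{n,I}(p)$ when $p\in V_{n,\delta_n}$. Fix a realization $\{x_i\}_{i\geq1}$ in the probability-one set on which both conditions hold; the intersection of two such sets still has probability one.

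\textbf{Step 1.} Condition 3 gives $\xi>0$ and $\bar N_1$ such that, for all $n\geq\bar N_1$, every $\gamma\in\partial\Gamma$ and every $i\leq n$ satisfy $\max_j f_j(x_i,p_0(x_i),\gamma)\leq-\xi$. This says every component of $f$ is violated by at least $\xi$ at each in-sample covariate value. It is a much stronger requirement than merely failing to lie in $\mathbb{R}^{d_f}_+$.

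\textbf{Step 2.} By Condition 5 there is $N'$ such that for $n\geq N'$,
\begin{align*}
\sup_{(\gamma,p)\in\Gamma\times V_{n,\delta_n}}\max_{1\leq i\leq n}\|f(x_i,p(x_i),\gamma)-f(x_i,p_0(x_i),\gamma)\|_2<\xi/2 .
\end{align*}
Since each coordinate is bounded in absolute value by the Euclidean norm, for $\gamma\in\partial\Gamma$, $p\in V_{n,\delta_n}$, $i\leq n$ and every $j$ we get
\begin{align*}
f_j(x_i,p(x_i),\gamma)\leq f_j(x_i,p_0(x_i),\gamma)+\xi/2\leq-\xi/2<0 .
\end{align*}
Hence $f(x_1,p(x_1),\gamma)\notin\mathbb{R}^{d_f}_+$, and so $\gamma\notin\Gamma_{n,I}(p)$. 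Taking $\hat N=\max\{\bar N_1,N'\}$ gives $\Gamma_{n,I}(p)\cap\partial\Gamma=\emptyset$, that is $\Gamma_{n,I}(p)\subseteq\text{int}(\Gamma)$, for all $n\geq\hat N$ and $p\in V_{n,\delta_n}$.

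There is essentially no obstacle. The only points needing care are that the bound in Condition 3 is uniform in $\gamma\in\partial\Gamma$ and in $i$, which it is because of the $\sup$ and $\max$, and that the norm convention in Condition 5 controls each coordinate. One also needs $\hat N$ to depend only on the realization and not on $p$; this holds because both thresholds are uniform over $V_{n,\delta_n}$. Note also that violation at a single index $i$ already suffices, so the bound is stronger than needed.
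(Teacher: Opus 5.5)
Your proposal is correct and uses the same argument as the paper: Condition 3 gives a uniform violation of size $\xi$ on $\partial\Gamma$ at $p_0$, and Condition 5 bounds the perturbation by $\xi/2$ uniformly over $\Gamma\times V_{n,\delta_n}$, so every constraint stays violated at boundary points for every $p\in V_{n,\delta_n}$. The only difference is that the paper phrases this as a contradiction along a subsequence $p_{\omega_n}$ with $\Gamma_{\omega_n,I}(p_{\omega_n})\cap\partial\Gamma\neq\emptyset$, while you argue directly (and a bit more cleanly) that $\Gamma_{n,I}(p)\cap\partial\Gamma=\emptyset$ for all $n\geq\max\{\bar N_1,N'\}$.
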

\begin{proof}
Suppose, by contradiction, there exists a measurable set $A$ with $P_{0,X}^{(\infty)}(A) > 0$ and, for any $\{x_{i}\}_{i \geq 1} \in A$, the result is false. For any fixed $\{x_{i}\}_{i \geq 1} \in A$, there exist a subsequence $\{\omega_n\}_{n\geq 1}$ of $\{n\}_{n \geq 1}$
and a sequence $\{p_{\omega_n}\}_{n\geq 1}$ such that $p_{\omega_n} \in V_{\omega_n,\delta_{\omega_n}}$ and $\Gamma_{\omega_n,I}(p_{\omega_n}) \cap \partial\Gamma \neq \emptyset$ for each
$n \geq 1$. For each $n \geq 1$, choose
$\gamma_{\omega_n} \in \Gamma_{\omega_n,I}(p_{\omega_n}) \cap \partial\Gamma$.
Since $\gamma_{\omega_n} \in \Gamma_{\omega_n,I}(p_{\omega_n})$, we have
\begin{equation*}
0 \leq f_j(x_i, p_{\omega_n}(x_i), \gamma_{\omega_n})
\leq f_j(x_i, p_0(x_i), \gamma_{\omega_n})
+ \sup_{(\gamma,p) \in \Gamma \times V_{\omega_n,\delta_{\omega_n}}}
\left\| f(\cdot, p(\cdot), \gamma) - f(\cdot, p_0(\cdot), \gamma) \right\|_{\omega_n,\infty}
\end{equation*}
for all $i = 1, \dots, \omega_n$ and every $j = 1, \dots, d_f$. Since $\gamma_{\omega_n} \in \partial\Gamma$, the third condition of Proposition \ref{prop:convexhemi}
implies that, for all $n \geq 1$ with $\omega_n \geq \bar{N}_1$,
\begin{equation*}
f_j(x_i, p_0(x_i), \gamma_{\omega_n})
\leq \sup_{\gamma \in \partial\Gamma} \max_{1 \leq i \leq \omega_n}
\max_{1 \leq j \leq d_f} f_j(x_i, p_0(x_i), \gamma) \leq -\xi.
\end{equation*}
By the fifth condition of Proposition \ref{prop:convexhemi}, there exists $N' \geq 1$ such that
\begin{equation*}
\sup_{(\gamma,p) \in \Gamma \times V_{\omega_n,\delta_{\omega_n}}}
\left\| f(\cdot, p(\cdot), \gamma) - f(\cdot, p_0(\cdot), \gamma) \right\|_{\omega_n,\infty}
\leq \frac{\xi}{2}
\end{equation*}
for all $n \geq 1$ with $\omega_n \geq N'$. Combining the three preceding displays,
for all $n \geq 1$ with $\omega_n \geq \max\{\bar{N}_1, N'\},$
\begin{equation*}
0 \leq -\xi + \frac{\xi}{2} = -\frac{\xi}{2} < 0,
\end{equation*}
which is a contradiction. This implies that no such $A$ can exist, and, as a result, for $P_{0,X}^{(\infty)}$-almost every fixed realization $\{x_{i}\}_{i \geq 1}$ of $\{X_{i}\}_{i \geq 1}$, there exists $\hat{N} \geq 1$ such that
$\Gamma_{n,I}(p) \subseteq \mathrm{int}(\Gamma)$ for all $p \in V_{n,\delta_n}$
and all $n \geq \hat{N}$.    
\end{proof}
\begin{lemma}\label{lem:slater}
Suppose that Conditions 3, 4, and 5 of Proposition \ref{prop:convexhemi} hold. Then, for $P_{0,X}^{(\infty)}$-almost every fixed realization $\{x_{i}\}_{i \geq 1}$ of $\{X_{i}\}_{i \geq 1}$, there exists $\check{N} \geq 1$ such that $\gamma^{\circ}$ is a Slater point for the class of constraint systems $\{\Gamma_{n,I}(p): p \in V_{n,\delta_{n}}\}$.    
\end{lemma}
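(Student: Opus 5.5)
The plan is to show that $\gamma^{\circ}$ satisfies the constraints of every system $\{\Gamma_{n,I}(p): p\in V_{n,\delta_{n}}\}$ strictly, with a uniform margin, once $n$ is large. We combine the uniform Slater margin at $p_{0}$ (Condition 4) with the uniform closeness of $f(\cdot,p(\cdot),\cdot)$ to $f(\cdot,p_{0}(\cdot),\cdot)$ over $\Gamma\times V_{n,\delta_{n}}$ (Condition 5). Condition 3 enters only to ensure that $\gamma^{\circ}$ can be taken in the interior of $\Gamma$, so that it is a genuine Slater point for the convex program over $\Gamma$.

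Fix a realization $\{x_{i}\}_{i\geq 1}$ in the full-measure set on which Conditions 3--5 hold. By Condition 5, there is $N'\geq 1$ such that for all $n\geq N'$,
\begin{align*}
\sup_{(\gamma,p)\in\Gamma\times V_{n,\delta_{n}}}\left\|f(\cdot,p(\cdot),\gamma)-f(\cdot,p_{0}(\cdot),\gamma)\right\|_{n,\infty}\leq \frac{\eta}{2}.
\end{align*}
Every coordinate is bounded by the Euclidean norm, so for all $i\leq n$, all $j\leq d_{f}$, and all $p\in V_{n,\delta_{n}}$ we get
\begin{align*}
f_{j}(x_{i},p(x_{i}),\gamma^{\circ})\geq f_{j}(x_{i},p_{0}(x_{i}),\gamma^{\circ})-\frac{\eta}{2}\geq \frac{\eta}{2},
\end{align*}
using Condition 4 for $n\geq \bar{N}_{2}$. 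Taking $\check{N}=\max\{N',\bar{N}_{2},\bar{N}_{1}\}$ then shows that $\gamma^{\circ}$ satisfies every inequality strictly, with margin $\eta/2$ independent of $n$ and $p$. This is the uniform Slater property.

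To confirm $\gamma^{\circ}\in\text{int}(\Gamma)$, we use Condition 3 for $n\geq\bar{N}_{1}$. Every $\gamma\in\partial\Gamma$ has $\max_{i,j}f_{j}(x_{i},p_{0}(x_{i}),\gamma)\leq-\xi<0$, whereas $\gamma^{\circ}$ has all $f_{j}\geq\eta>0$. Hence $\gamma^{\circ}\notin\partial\Gamma$, and since $\Gamma$ is compact, hence closed, $\gamma^{\circ}\in\text{int}(\Gamma)$. Alternatively, $\gamma^{\circ}\in\Gamma_{n,I}(p)$ for large $n$ by the display above, so Lemma \ref{lem:interiorparspace} gives the same conclusion.

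The only delicate point is not analytic but bookkeeping. The thresholds $N'$, $\bar{N}_{1}$, $\bar{N}_{2}$ depend on the realization, so we intersect the countably many almost-sure events on which Conditions 3--5 hold, which still has probability one. We also need to check that "Slater point for the class" is read with the margin uniform over $p\in V_{n,\delta_{n}}$, which the $\eta/2$ bound delivers directly.
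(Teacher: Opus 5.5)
Your proposal is correct and follows the paper's proof: the uniform margin $\eta/2$ comes from combining Condition 4 with the uniform convergence in Condition 5, exactly as the paper does. The only cosmetic difference is how you get $\gamma^{\circ}\in\operatorname{int}(\Gamma)$. You use the boundary-separation part of Condition 3, whereas the paper notes $\gamma^{\circ}\in\Gamma_{n,I}(p_{0})\subseteq\operatorname{int}(\Gamma)$ via the first part of Condition 3; both are valid.
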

\begin{proof}
We want to show that, conditional on $P_{0,X}^{(\infty)}$-almost every fixed realization $\{x_{i}\}_{i \geq 1}$ on $\{X_{i}\}_{i \geq 1}$, there exists $\check{N} \geq 1$ such that 
\begin{align*}
\gamma^{\circ} \in \text{int}(\Gamma)    
\end{align*} 
and 
\begin{align*}
\inf_{p \in V_{n,\delta_{n}}}\min_{1 \leq i \leq n}\min_{1 \leq j \leq d_{f}}f_{j}(x_{i},p(x_{i}),\gamma^{\circ}) > 0    
\end{align*} for all $n \geq \check{N}$. By the fourth condition of Proposition \ref{prop:convexhemi},
\[
\min_{1\leq i\leq n}\min_{1\leq j\leq d_f}
f_j\bigl(x_i,p_0(x_i),\gamma^\circ\bigr)
\geq \eta
\]
for all \(n\geq \bar N_2\), which implies $\gamma^{\circ} \in \Gamma_{n,I}(p_{0})$ for all $n \geq N_{2}$. Combining this with the first part of the third
condition gives $\gamma^{\circ} \in \operatorname{int}(\Gamma)$ for every $ n \geq \max\{\bar{N}_{1},\bar{N}_{2}\}$. Next, combining the fourth and fifth conditions, there exists $\check{N} \geq \max\{\bar{N}_{1},\bar{N}_{2}\}$ such that
\begin{align*}
\min_{1\leq i\leq n}\min_{1\leq j\leq d_f}
f_j\bigl(x_i,p(x_i),\gamma^\circ\bigr)
\geq \frac{\eta}{2}
\end{align*}
for all \(p\in V_{n,\delta_n}\) and all \(n\geq \check N\).
\end{proof}
\begin{proof}[Proof of Proposition \ref{prop:convexhemi}]
The proof requires checking that the second condition of Proposition \ref{prop:hemi} holds because, in that case, one can apply Proposition \ref{prop:hemi} to conclude that Assumption \ref{as:prior}.3 holds. This requires two steps. The first step derives a sufficient condition for the result and the second step verifies the sufficient condition.
\paragraph{Step 1.} Throughout the proof, set $n \geq \min\{\hat{N},\check{N}\}$, where $\hat{N}$ and $\check{N}$ are from Lemmas \ref{lem:interiorparspace} and \ref{lem:slater}, respectively, so that we can use $\Gamma_{n,I}(p) \subseteq \text{int}(\Gamma)$ for all $p \in V_{n,\delta_{n}}$, $\gamma^{\circ}$ is a Slater point for the constraint system $\Gamma_{n,I}(p)$ for all $p \in V_{n,\delta_{n}}$, and $\Gamma_{n,I}(p) \neq \emptyset$ for all $p \in V_{n,\delta_{n}}$ (an implication of Lemma \ref{lem:slater}). Further, set $\gamma \notin \Gamma_{n,I}(p)$ (otherwise, the result holds trivially if $\gamma \in \Gamma_{n,I}(p)$). We first show that there is a constant $C_{n} \geq 0$ such that
\begin{align*}
    \max_{1 \leq i \leq n}||(f(x_{i},p(x_{i}),\gamma))_{-}||_{2} \geq \text{dist}(\gamma,\Gamma_{n,I}(p))C_{n}.
\end{align*}
Let $\gamma_{n,p}$ solve $\min_{\tilde{\gamma} \in \Gamma_{n,I}(p)}||\gamma-\tilde{\gamma}||_{2}$. This distance-minimizing vector $\gamma_{n,p}$ exists by compactness of $\Gamma_{n,I}(p)$ and continuity of $||\cdot||_{2}$ (hence, the Weierstrass Extreme Value Theorem applies). Next, let $\mathcal{F}_{n}^{*}(\gamma,p) = \left\{(i,j) \in [n] \times [d_{f}]: f_{j}(x_{i},p(x_{i}),\gamma_{n,p}) = 0\right\}$ be the set of active constraints at $\gamma_{n,p}$ and note that $\mathcal{F}_{n}^{*}(\gamma,p) \neq \emptyset$ because $\gamma_{n,p} \in \partial \Gamma_{n,I}(p)$.\footnote{Note that $[n]:=\{1,...,n\}$ and $[d_{f}] = \{1,...,d_{f}\}$.} For any $(i,j) \in \mathcal{F}_{n}^{*}(\gamma,p)$, the concavity and differentiability of $f_{j}(x_{i},p(x_{i}),\gamma)$ in $\gamma$ (i.e., the second assumption of the proposition) reveals that
\begin{align*}
    f_{j}(x_{i},p(x_{i}),\gamma) &\leq f_{j}(x_{i},p(x_{i}),\gamma_{n,p}) + \nabla_{\gamma}f_{j}(x_{i},p(x_{i}),\gamma_{n,p})'(\gamma-\gamma_{n,p}) \\
    &= \nabla_{\gamma} f_{j}(x_{i},p(x_{i}),\gamma_{n,p})'(\gamma-\gamma_{n,p}),
\end{align*}
where the last equality uses that $f_{j}(x_{i},p(x_{i}),\gamma_{n,p}) = 0$ for any $i,j \in \mathcal{F}_{n}^{*}(\gamma,p)$. Consequently, we can use that $z \mapsto |(z)_{-}|$ is nonincreasing and that $||\gamma-\gamma_{n,p}||_{2} >0$ (because $\gamma \notin \Gamma_{n,I}(p)$) to conclude
\begin{align*}
\max_{1 \leq i \leq n}||(f(x_{i},p(x_{i}),\gamma))_{-}||_{2} &\geq \max_{(i,j) \in \mathcal{F}_{n}^{*}(\gamma,p)}\left|\min\{f_{j}(x_{i},p(x_{i}),\gamma),0\}\right| \\
&\geq ||\gamma-\gamma_{n,p}||_{2} \cdot \max_{(i,j) \in \mathcal{F}_{n}^{*}(\gamma,p)}\left|(\nabla_{\gamma}f_{j}(x_{i},p(x_{i}),\gamma_{n,p})'t_{n,p})_{-}\right| \\
&= \text{dist}(\gamma,\Gamma_{n,I}(p)) \cdot \max_{(i,j) \in \mathcal{F}_{n}^{*}(\gamma,p)}\left|(\nabla_{\gamma}f_{j}(x_{i},p(x_{i}),\gamma_{n,p})'t_{n,p})_{-}\right|
\end{align*}
where $t_{n,p} = (\gamma-\gamma_{n,p}) /||\gamma-\gamma_{n,p}||_{2}$. Let 
\begin{align*}
    C_{n}=\inf_{(\gamma,p)\in \Gamma \times V_{n,\delta_{n}}}\max_{(i,j) \in \mathcal{F}_{n}^{*}(\gamma,p)}\left|(\nabla_{\gamma}f_{j}(x_{i},p(x_{i}),\gamma_{n,p})'t_{n,p})_{-}\right| \geq 0.
\end{align*}
If we can show that
\begin{align}\label{eq:condition}
C:=\liminf_{n\rightarrow\infty}C_{n} >0,   
\end{align}
then we complete the proof because we set $c = C/2$ and $\tilde{N}$ to be the minimal $n$ for which $\inf_{(\gamma,p)\in \Gamma \times V_{n,\delta_{n}}}\max_{(i,j) \in \mathcal{F}_{n}^{*}(\gamma,p)}\left|(\nabla_{\gamma}f_{j}(x_{i},p(x_{i}),\gamma_{n,p})'t_{n,p})_{-}\right| > C/2$ (such $n$ exists if $C>0$).
\paragraph{Step 2.} We break the verification of (\ref{eq:condition}) into two substeps.
\paragraph{Step 2A.} Consider the constrained optimization problem $\min_{\tilde{\gamma} \in \Gamma_{n,I}(p)}||\gamma-\tilde{\gamma}||_{2}$. We first note that this is a convex program because the objective is convex, the mapping $\tilde{\gamma} \mapsto f_{j}(x_{i},p(x_{i}),\tilde{\gamma})$ is concave for each $i,j$ (by the second condition of the proposition), and $\Gamma$ is convex and compact by the first condition of the proposition. Next, we invoke Lemma \ref{lem:slater} to conclude that the Karusch-Kuhn-Tucker conditions are necessary at $\gamma_{n,p}$ (see, for example, Theorem 28.3 of \cite{rockafeller1970convex}). Because $\gamma_{n,p} \in \Gamma_{n,I}(p) \subseteq \operatorname{int}(\Gamma)$ (recall this holds because $n \geq \max\{\hat{N},\check{N}\}$ so that Lemma \ref{lem:interiorparspace} holds), 
the constraint \(\gamma\in \Gamma\) is inactive at
\(\gamma_{n,p}\), and stationarity gives
\begin{align*}
t_{n,p}'
=
-\sum_{(i,j)\in \mathcal{F}_n^*(\gamma,p)}
\lambda_{n,ij}(\gamma,p)\nabla_{\gamma}f_{j}(x_{i},p(x_{i}),\gamma_{n,p})',
\end{align*}
where $\lambda_{n,ij}(\gamma,p)\geq 0$ is the multiplier associated with the constraint $f_j\bigl(x_i,p(x_i),\widetilde\gamma\bigr)\geq 0$, and, the restriction to $\mathcal{F}_{n}^{*}(\gamma,p)$ in the sum reflects that, by complementary slackness, $\lambda_{n,ij}(\gamma,p)=0$ for $(i,j) \notin \mathcal{F}_{n}^{*}(\gamma,p)$. Post-multiplying by $t_{n,p}$ and using that $||t_{n,p}||_{2} = 1$, we know that
\begin{align*}
    1 &=-\sum_{(i,j) \in \mathcal{F}_{n}^{*}(\gamma,p)}\lambda_{n,ij}(\gamma,p)\nabla_{\gamma}f_{j}(x_{i},p(x_{i}),\gamma_{n,p})'t_{n,p} \\
    &\leq \max_{(i,j) \in \mathcal{F}_{n}^{*}(\gamma,p)}\{-\nabla_{\gamma}f_{j}(x_{i},p(x_{i}),\gamma_{n,p})'t_{n,p}\}||\lambda_{n}(\gamma,p)||_{1} \\
    &= \max_{(i,j) \in \mathcal{F}_{n}^{*}(\gamma,p)}|(\nabla_{\gamma}f_{j}(x_{i},p(x_{i}),\gamma_{n,p})'t_{n,p})_{-}|\cdot ||\lambda_{n}(\gamma,p)||_{1} \\
    &\leq \max_{(i,j) \in \mathcal{F}_{n}^{*}(\gamma,p)}|(\nabla_{\gamma}f_{j}(x_{i},p(x_{i}),\gamma_{n,p})'t_{n,p})_{-}| \sup_{(\gamma,p) \in \Gamma\times V_{n,\delta_{n}}}||\lambda_{n}(\gamma,p)||_{1},
\end{align*}
where the first inequality uses the definition of the taxicab norm $||\cdot||_{1}$ (i.e. $||z||_{1} = \sum_{j=1}^{d}|z_{j}|$ for $z \in \mathbb{R}^{d}$) and that $\lambda_{n}(\gamma,p) \in \mathbb{R}^{nd_{f}}_{+}$, the second equality uses that the first equality and the property $\lambda_{n}(\gamma,p) \in \mathbb{R}_{+}^{nd_{f}}$ implies there exists $(i,j) \in \mathcal{F}_{n}^{*}(\gamma,p)$ such that $\nabla_{\gamma}f_{j}(x_{i},p(x_{i}),\gamma_{n,p})'t_{n,p} < 0$, and, as a result, $\max_{(i,j) \in \mathcal{F}_{n}^{*}(\gamma,p)}\{-\nabla_{\gamma}f_{j}(x_{i},p(x_{i}),\gamma_{n,p})'t_{n,p}\}=\max_{(i,j) \in \mathcal{F}_{n}^{*}(\gamma,p)}|(\nabla_{\gamma}f_{j}(x_{i},p(x_{i}),\gamma_{n,p})'t_{n,p})_{-}|$, and the last inequality uses the definition of the supremum. Consequently, if there is a constant $\tilde{C} > 0$ and $\tilde{N} \geq 1$ such that
\begin{align}\label{eq:multipliers}
    \sup_{(\gamma,p) \in \Gamma\times V_{n,\delta_{n}}}||\lambda_{n}(\gamma,p)||_{1} \leq \tilde{C}
\end{align}
for all $n \geq \tilde{N}$, then
\begin{align*}
    \max_{(i,j) \in \mathcal{F}_{n}^{*}(\gamma,p)}|(\nabla_{\gamma}f_{j}(x_{i},p(x_{i}),\gamma_{n,p})'t_{n,p})_{-}| \geq \frac{1}{\tilde{C}} > 0
\end{align*}
for all $n \geq \tilde{N}$, which implies $ C > 0$ and we complete the proof.
\paragraph{Step 2B.} To show \eqref{eq:multipliers}, define
\begin{align*}
\mathcal{L}_n(\widetilde\gamma,p)
=
||\gamma-\tilde{\gamma}||_{2}
-
\sum_{(i,j) \in \mathcal{F}_n^*(\gamma,p)}
\lambda_{n,ij}(\gamma,p)
f_j\left(x_i,p(x_i),\widetilde{\gamma}\right).
\end{align*}
The function $\mathcal{L}_{n}(\gamma,p)$ is a differentiable, convex function over $\Gamma$, because it is the sum of a
convex function and nonnegative multiples of the convex functions $\tilde{\gamma} \mapsto -f_j\left(x_i,p(x_i),\widetilde\gamma\right)$ (by the second condition of the proposition), and all of the terms involving $\tilde{\gamma}$ are differentiable. Its gradient at $\gamma_{n,p}$ is
\begin{align*}
-t_{n,p}'
-
\sum_{(i,j)\in F_n^*(\gamma,p)}
\lambda_{n,ij}(\gamma,p)
\nabla_{\gamma}f_{j}(x_{i},p(x_{i}),\gamma_{n,p})'
\end{align*}
and satisfies
\begin{align*}
-t_{n,p}'
-
\sum_{(i,j)\in F_n^*(\gamma,p)}
\lambda_{n,ij}(\gamma,p)
\nabla_{\gamma}f_{j}(x_{i},p(x_{i}),\gamma_{n,p})'=0
\end{align*}
by the stationary condition above. Since $\gamma_{n,p}\in \operatorname{int}(\Gamma)$, it follows that $\gamma_{n,p}$ minimizes $\mathcal{L}_n(\cdot,p)$ over $\Gamma$, and, since $f_{j}(x_{i},p(x_{i}),\gamma_{n,p})=0$ for all $(i,j) \in \mathcal{F}_{n}^{*}(\gamma,p)$, the value of the objective is $\mathcal{L}_{n}(\gamma_{n,p},p) = ||\gamma-\gamma_{n,p}||_{2}$. Consequently, since \(\gamma^\circ\in \Gamma\),
\[
\begin{aligned}
||\gamma-\gamma_{n,p}||_{2}
&=
\inf_{\widetilde\gamma\in \Gamma}\mathcal{L}_n(\widetilde\gamma,p) \\
&\leq
\|\gamma-\gamma^\circ\|_2
-
\sum_{(i,j)\in F_n^*(\gamma,p)}
\lambda_{n,ij}(\gamma,p)
f_j\bigl(x_i,p(x_i),\gamma^\circ\bigr).
\end{aligned}
\]
where the second equality uses the definition of the infimum. Using the nonnegativity of the norm, it follows that
\begin{align}\label{eq:dualitybound}
  \sum_{(i,j) \in \mathcal{F}_{n}^{*}(\gamma,p)}\lambda_{n,ij}(\gamma,p)f_{j}(x_{i},p(x_{i}),\gamma^{\circ})  \leq  ||\gamma-\gamma^{\circ}||_{2} - ||\gamma-\gamma_{n,p}||_{2}.
\end{align}
Since $\lambda_{n}(\gamma,p) \in \mathbb{R}_{+}^{nd_{f}}$ and Lemma \ref{lem:slater} holds for $n \geq \max\{\hat{N},\check{N}\}$,
\begin{align*}
    \sum_{(i,j) \in \mathcal{F}_{n}^{*}(\gamma,p)}\lambda_{n,ij}(\gamma,p)f_{j}(x_{i},p(x_{i}),\gamma^{\circ}) &\geq ||\lambda_{n}(\gamma,p)||_{1} \min_{1 \leq i \leq n}\min_{1 \leq j \leq d_{f}}f_{j}(x_{i},p(x_{i}),\gamma^{\circ}) \\
    &\geq \frac{\eta}{2}||\lambda_{n}(\gamma,p)||_{1} 
\end{align*}
for all $(\gamma,p) \in \Gamma \times V_{n,\delta_{n}}$ and $n \geq \check{N}$, (\ref{eq:dualitybound}) implies
\begin{align*}
 ||\lambda_{n}(\gamma,p)||_{1} \leq \frac{2}{\eta}\left(||\gamma-\gamma^{\circ}||_{2} - ||\gamma-\gamma_{n,p}||_{2}\right)
\end{align*}
By the triangle inequality and $||\gamma-\gamma_{n,p}||_{2} \leq ||\gamma-\gamma^{\circ}||_{2}$, we know
\begin{align*}
||\gamma-\gamma^{\circ}||_{2} - ||\gamma-\gamma_{n,p}||_{2}= \left|||\gamma-\gamma^{\circ}||_{2} - ||\gamma-\gamma_{n,p}||_{2}\right|   \leq ||\gamma^{\circ}-\gamma_{n,p}||_{2} \leq \text{diam}(\Gamma)
\end{align*}
which implies that
\begin{align*}
     ||\lambda_{n}(\gamma,p)||_{1}  \leq \frac{2\text{diam}(\Gamma)}{\eta}
\end{align*}
for all $(\gamma,p) \in \Gamma \times V_{n,\delta_{n}}$ and $n \geq \check{N}$. Hence, we conclude that, by setting $\tilde{C} = 2 \text{diam}(\Gamma)/\eta$ and $\tilde{N} \geq \max\{\hat{N},\check{N}\}$, that (\ref{eq:multipliers}) holds. This completes the proof.
\end{proof}
\begin{proof}[Proof of Proposition \ref{prop:minorantmisspec}]
 It suffices to show that for every $\varepsilon >0$, there exists $N \geq 1$ such that $p \in V_{n,\delta_{n}}$ and $n \geq N$ implies $\tilde{\Gamma}_{n,I}(p_{0}) \subseteq \tilde{\Gamma}_{n,I}^{\varepsilon}(p)$. To that end, let $\gamma \in \tilde{\Gamma}_{n,I}(p_{0}) $ be fixed arbitrarily. Since $\tilde{Q}_{n}(\gamma,p_{0}) = 0$ for all $\gamma \in \tilde{\Gamma}_{n,I}(p_{0})$,
 \begin{align*}
     \tilde{Q}_{n}(\gamma,p) \leq\sup_{(\gamma,p) \in \Gamma \times V_{n,\delta_{n}}}\left|\tilde{Q}_{n}(\gamma,p) - \tilde{Q}_{n}(\gamma,p_{0})\right|=:u_{n}
 \end{align*}
 for all $p \in V_{n,\delta_{n}}$. Then, applying (\ref{eq:hemi_misspec_minorant}),
 \begin{align*}
     \text{dist}(\gamma,\tilde{\Gamma}_{n,I}(p)) \leq \left(\frac{u_{n}+r_{n}}{c}\right)^{\frac{1}{\zeta}}=:\rho_{n}(c,\zeta)
 \end{align*}
 for all $p \in V_{n,\delta_{n}}$ and $n \geq \tilde{N}$. Since $\rho_{n}(c,\zeta) \rightarrow 0$ as $n\rightarrow \infty$, there exists $\check{N} \geq 1$ such that $\rho_{n}(c,\zeta) < \varepsilon$ for all $n \geq \check{N}$. Consequently,
 \begin{align*}
  \text{dist}(\gamma,\tilde{\Gamma}_{n,I}(p)) < \varepsilon
 \end{align*}
 for all $n \geq N:=\max\{\tilde{N},\check{N}\}$, which is equivalent to $\gamma \in \tilde{\Gamma}_{n,I}^{\varepsilon}(p)$. Since $\gamma \in \tilde{\Gamma}_{n,I}(p_{0})$ was chosen arbitrarily, it follows that $p \in V_{n,\delta_{n}}$ and $n \geq N$ implies $\tilde{\Gamma}_{n,I}(p_{0}) \subseteq \tilde{\Gamma}_{n,I}^{\varepsilon}(p)$.
\end{proof}
Before proving Proposition \ref{prop:misspec_hemi_convex}, we establish two technical lemmas.
\begin{lemma}\label{lem:singletonidset}
Suppose that Conditions 1, 2, and 5 of Proposition \ref{prop:misspec_hemi_convex} hold. Then, for $P_{0,X}^{(\infty)}$-almost every fixed realization $\{x_{i}\}_{i \geq 1}$ of $\{X_{i}\}_{i \geq 1}$, $\tilde{\Gamma}_{n,I}(p_{0})$ is a singleton for all $n \geq \bar{N}$.
\end{lemma}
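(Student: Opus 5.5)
The plan is to show that Condition 5 forces a strict first-order growth of $Q_{n}(\cdot,p_{0})$ away from any minimizer. We will then argue that two distinct minimizers would contradict this growth.

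Fix $n \geq \bar{N}$ and a realization for which the conditions hold. Note that $\tilde{\Gamma}_{n,I}(p_{0}) \neq \emptyset$, since $Q_{n}(\cdot,p_{0})$ is a finite maximum of continuous (differentiable) functions on the compact set $\Gamma$ (Condition 1), so a minimizer exists. Take any $\tilde{\gamma} \in \tilde{\Gamma}_{n,I}(p_{0})$ and any other $\gamma \in \Gamma$, $\gamma \neq \tilde{\gamma}$, and set $t = (\gamma - \tilde{\gamma})/\|\gamma-\tilde{\gamma}\|_{2}$.

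\textbf{Step 1 (subgradient lower bound).} Condition 2 gives that $Q_{n}(\cdot,p_{0}) = \max_{a} q_{n,a}(\cdot,p_{0})$ is convex. Each active piece $a \in \mathcal{A}_{n}^{*}(\tilde{\gamma},p_{0})$ is convex and differentiable. Hence
\[
Q_{n}(\gamma,p_{0}) \geq q_{n,a}(\gamma,p_{0}) \geq Q_{n}(\tilde{\gamma},p_{0}) + \nabla_{\gamma}q_{n,a}(\tilde{\gamma},p_{0})'(\gamma-\tilde{\gamma}),
\]
using $q_{n,a}(\tilde{\gamma},p_{0}) = Q_{n}(\tilde{\gamma},p_{0})$ for active $a$. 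Maximizing over active $a$, and noting that a linear functional attains its maximum over a convex hull at a generator, yields
\[
Q_{n}(\gamma,p_{0}) - Q_{n}(\tilde{\gamma},p_{0}) \geq \|\gamma-\tilde{\gamma}\|_{2}\,\max_{v \in \conv\{\nabla_{\gamma} q_{n,a}(\tilde{\gamma},p_{0}):\, a \in \mathcal{A}^{*}_{n}\}} v't.
\]

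\textbf{Step 2 (use Condition 5).} Condition 5 says the convex hull contains the ball of radius $\eta$. We may therefore choose $v = \eta t$, which gives $\max v't \geq \eta$. Consequently
\[
\tilde{Q}_{n}(\gamma,p_{0}) = Q_{n}(\gamma,p_{0}) - Q_{n}(\tilde{\gamma},p_{0}) \geq \eta\|\gamma-\tilde{\gamma}\|_{2} > 0.
\]
So $\gamma$ is not a minimizer, and $\tilde{\Gamma}_{n,I}(p_{0}) = \{\tilde{\gamma}\}$ for all $n \geq \bar{N}$. This holds on the full-measure set of realizations where Conditions 2 and 5 hold.

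The argument is essentially routine. The only delicate point is justifying the inequality $\max_{a \text{ active}} \nabla q_{n,a}'t \geq \eta$ through the convex-hull maximization. We also need differentiability on the open set $\Gamma^{o} \supseteq \Gamma$ so that the gradient inequality holds at boundary points $\tilde{\gamma}$ of $\Gamma$; this is exactly why Condition 2 is stated on $\Gamma^{o}$. The well-posedness of the existence of a minimizer is the mild secondary obstacle, and it is handled by compactness and continuity.
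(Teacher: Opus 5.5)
Your proposal is correct and follows essentially the same route as the paper. Both arguments use Condition 5 to place $\eta t$ in the convex hull of the active gradients and extract an active piece $a^{*}$ with $\nabla_{\gamma}q_{n,a^{*}}(\tilde{\gamma},p_{0})'t \geq \eta$; the paper does this with explicit convex weights, you do it via the maximum of a linear functional over a hull being attained at a generator. Both then apply the gradient inequality to get $\tilde{Q}_{n}(\gamma,p_{0}) \geq \eta\|\gamma-\tilde{\gamma}\|_{2}$. Your explicit check that a minimizer exists (by continuity on compact $\Gamma$) is a small point the paper's proof takes for granted.
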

\begin{proof}
Let $n\geq\bar{N}$, $\tilde{\gamma}_{n,p_{0}}\in\tilde{\Gamma}_{n,I}(p_0)$, and $\gamma\in\Gamma$ with $\gamma\neq\tilde{\gamma}_{n,p_{0}}$, and set $t_{n}:=(\gamma-\tilde{\gamma}_{n,p_{0}})/||\gamma-\tilde{\gamma}_{n,p_{0}}||_2$. By Condition 5, $\eta t_{n}\in\conv\{\nabla_\gamma q_{n,a}(\tilde{\gamma}_{n,p_{0}},p_0):a\in\mathcal{A}^{*}_n(\tilde{\gamma}_{n,p_{0}},p_0)\}$, so there exist weights $\{\lambda_{a}: a \in \mathcal{A}_{n}^{*}(\tilde{\gamma}_{n,p_{0}},p_{0})\}$ with $\lambda_{a}\geq0$ for all $a \in \mathcal{A}_{n}^{*}(\tilde{\gamma}_{n,p_{0}},p_{0})$, $\sum_{a \in \mathcal{A}_{n}^{*}(\tilde{\gamma}_{n,p_{0}},p_{0})}\lambda_{a}=1$, and $\eta t_{n}=\sum_{a\in \mathcal{A}_{n}^{*}(\tilde{\gamma}_{n,p_{0}},p_{0})}\lambda_{a}\nabla_{\gamma} q_{n,a}(\tilde{\gamma}_{n,p_{0}},p_{0})$. Multiplying both sides by $t_{n}$ and using $||t_{n}||_2=1$,
\begin{align*}
\eta=\sum_{a \in \mathcal{A}_{n}^{*}(\tilde{\gamma}_{n,p_{0}},p_{0})}\lambda_a\,\nabla_\gamma q_{n,a}(\tilde{\gamma}_{n,p_{0}},p_0)'t_{n}\leq \max_{a\in\mathcal{A}^{*}_n(\tilde{\gamma}_{n,p_{0}},p_{0})}\nabla_\gamma q_{n,a}(\tilde{\gamma}_{n,p_{0}},p_0)'t_{n}
\end{align*}
so there exists $a^{*}\in\mathcal{A}^{*}_n(\tilde{\gamma}_{n,p_{0}},p_0)$ such that
\begin{equation}
\nabla_\gamma q_{n,a^{*}}(\tilde{\gamma}_{n,p_{0}},p_0)'t_{n}\geq\eta.
\label{eq:kinkdir}
\end{equation}
By convexity and differentiability of $q_{n,a^{*}}(\cdot,p_0)$ on $\Gamma^{o}$ (the gradient inequality) and exact activity of $a^{\star}$ at $(\tilde{\gamma}_{n,p_{0}},p_0)$ (i.e., $q_{n,a^{*}}(\tilde{\gamma}_{n,p_{0}},p_0)=Q_n(\tilde{\gamma}_{n,p_{0}},p_0)=m_n(p_0)$),
\begin{align*}
Q_n(\gamma,p_0)&\geq q_{n,a^{*}}(\gamma,p_0)\\
&\geq q_{n,a^{*}}(\tilde{\gamma}_{n,p_{0}},p_0)+\nabla_\gamma q_{n,a^{*}}(\tilde{\gamma}_{n,p_{0}},p_0)'(\gamma-\tilde{\gamma}_{n,p_{0}}) \\
&\geq m_n(p_0)+\eta\,\|\gamma-\tilde{\gamma}_{n,p_{0}}\|_2,
\end{align*}
which implies
\begin{align*}
\tilde{Q}_n(\gamma,p_0) \geq \eta \cdot ||\gamma-\tilde{\gamma}_{n,p_{0}} ||_2
\end{align*}
by substituting $m_{n}(p_{0})$ from each side of the inequality. Since we can repeat this argument for all $\gamma \in \Gamma \setminus \{\tilde{\gamma}_{n,p_{0}}\}$ and the inequality holds trivially at $\tilde{\gamma}_{n,p_{0}}$, we conclude that
\begin{align}\label{eq:growth}
 \tilde{Q}_n(\gamma,p_0) \geq \eta \cdot ||\gamma-\tilde{\gamma}_{n,p_{0}} ||_2 \quad \text{for all $\gamma \in \Gamma$}.   
\end{align}
Applying \eqref{eq:growth} with $\gamma$ equal to any other element of $\tilde{\Gamma}_{n,I}(p_0)$ forces that element to coincide with $\tilde{\gamma}_{n,p_{0}}$, and, as a result, $\tilde{\Gamma}_{n,I}(p_0)=\{\tilde{\gamma}_{n,p_{0}}\}$ (i.e., a singleton) for all $n \geq \bar{N}$.
\end{proof}
\begin{lemma}\label{lem:localization}
Suppose that the conditions of Proposition \ref{prop:misspec_hemi_convex} hold. Then, for $P_{0,X}^{(\infty)}$-almost every fixed realization $\{x_{i}\}_{i \geq 1}$ of $\{X_{i}\}_{i \geq 1}$, there exists a sequence $\{\kappa_{n}\}_{n \geq 1}$ such that
\begin{align*}
       \sup_{p \in V_{n,\delta_{n}}}\sup_{\tilde{\gamma} \in \tilde{\Gamma}_{n,I}(p)}||\tilde\gamma-\tilde\gamma_{n,p_{0}}||_2 \leq \kappa_{n}
\end{align*}
for all $n \geq \bar{N}$, where $\bar{N} \geq 1$ is the index from Lemma \ref{lem:singletonidset} and $\tilde\gamma_{n,p_{0}}$ is the single element of $\tilde{\Gamma}_{n,I}(p_{0})$. Moreover, if $\Delta_{n} \rightarrow 0$ as $n\rightarrow \infty$, then $\kappa_{n} \rightarrow 0$ as $n\rightarrow \infty$.
\end{lemma}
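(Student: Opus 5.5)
The plan is to reuse the linear growth bound \eqref{eq:growth} established in the proof of Lemma \ref{lem:singletonidset}, and to combine it with a uniform perturbation bound on the criterion that comes from Condition 3 of Proposition \ref{prop:misspec_hemi_convex}. Concretely, I will take
\begin{align*}
\kappa_{n}:=\min\left\{\frac{2\Delta_{n}}{\eta},\ \text{diam}(\Gamma)\right\}
\end{align*}
for $n \geq \bar{N}$, and set $\kappa_{n}:=\text{diam}(\Gamma)$ for $n<\bar{N}$. The cap by $\text{diam}(\Gamma)$ makes the bound trivially valid even if $\Delta_{n}$ is infinite. Clearly $\kappa_{n}\rightarrow 0$ whenever $\Delta_{n}\rightarrow 0$, so the second claim is immediate from this choice. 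Throughout I fix a realization $\{x_{i}\}_{i\geq 1}$ in the probability-one set on which Lemma \ref{lem:singletonidset} and Conditions 2--5 hold.

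\textbf{Step 1 (perturbation of $Q_{n}$).} Fix $n \geq \bar{N}$ and $p\in V_{n,\delta_{n}}$. Since $Q_{n}(\gamma,\cdot)=\max_{a\in\mathcal{A}_{n}}q_{n,a}(\gamma,\cdot)$ by Condition 2, and a finite maximum is $1$-Lipschitz with respect to the sup norm over its arguments, we have $|Q_{n}(\gamma,p)-Q_{n}(\gamma,p_{0})|\leq \max_{a}|q_{n,a}(\gamma,p)-q_{n,a}(\gamma,p_{0})|\leq \Delta_{n}$ for every $\gamma\in\Gamma$. Taking infima over $\Gamma$ gives $|\inf_{\Gamma}Q_{n}(\cdot,p)-\inf_{\Gamma}Q_{n}(\cdot,p_{0})|\leq\Delta_{n}$. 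Consequently
\begin{align*}
\sup_{\gamma\in\Gamma}|\tilde{Q}_{n}(\gamma,p)-\tilde{Q}_{n}(\gamma,p_{0})|\leq 2\Delta_{n}.
\end{align*}

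\textbf{Step 2 (growth transfers to localization).} By \eqref{eq:growth}, which holds for $n\geq\bar{N}$ under Conditions 1, 2 and 5, we have $\tilde{Q}_{n}(\gamma,p_{0})\geq\eta\|\gamma-\tilde{\gamma}_{n,p_{0}}\|_{2}$ for all $\gamma\in\Gamma$. Now take any $\tilde{\gamma}\in\tilde{\Gamma}_{n,I}(p)$. If this set is empty, the supremum in the statement is vacuous. Otherwise $\tilde{Q}_{n}(\tilde{\gamma},p)=0$, and Step 1 gives $\tilde{Q}_{n}(\tilde{\gamma},p_{0})\leq 2\Delta_{n}$. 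Combining the two inequalities yields $\eta\|\tilde{\gamma}-\tilde{\gamma}_{n,p_{0}}\|_{2}\leq 2\Delta_{n}$. Together with the trivial bound $\|\tilde{\gamma}-\tilde{\gamma}_{n,p_{0}}\|_{2}\leq\text{diam}(\Gamma)$, this gives $\|\tilde{\gamma}-\tilde{\gamma}_{n,p_{0}}\|_{2}\leq\kappa_{n}$. Since this holds uniformly in $\tilde{\gamma}$ and in $p\in V_{n,\delta_{n}}$, the displayed bound in the lemma follows.

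I expect no serious obstacle; essentially all of the work was already done in Lemma \ref{lem:singletonidset}. The only points needing care are bookkeeping ones. First, \eqref{eq:growth} must be available uniformly for all $n\geq\bar{N}$ with the same $\eta$; this is exactly what Condition 5 provides. Second, the infimum in $\tilde{Q}_{n}$ must be finite, which follows from lower semicontinuity together with compactness of $\Gamma$ (Assumption \ref{as:parameterspace}), so that subtracting it is legitimate. Neither the gradient Lipschitz condition (Condition 4) nor $\Delta_{n}'$ is needed here; I expect those to enter only later, in the proof of Proposition \ref{prop:misspec_hemi_convex} itself.
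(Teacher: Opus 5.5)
Your proposal is correct and matches the paper's proof. Both arguments use the max-inequality to get $\tilde{Q}_{n}(\tilde{\gamma},p_{0})\leq 2\Delta_{n}$ for $\tilde{\gamma}\in\tilde{\Gamma}_{n,I}(p)$, then combine it with the growth bound \eqref{eq:growth} from Lemma \ref{lem:singletonidset} to obtain $\kappa_{n}=2\Delta_{n}/\eta$; your cap of $\kappa_{n}$ at $\text{diam}(\Gamma)$ is only a harmless refinement that keeps $\kappa_{n}$ finite if $\Delta_{n}$ is infinite for some $n$.
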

\begin{proof}
Since $\tilde{Q}_n(\tilde\gamma,p)=0$ for all $\tilde{\gamma} \in \tilde{\Gamma}_{n,I}(p)$ and all $p \in V_{n,\delta_{n}}$, the triangle inequality and the inequality $|\max_{a}u_{a} - \max_{a}v_{a}| \leq \max_{a}|u_{a}-v_{a}|$ yields $\tilde{Q}_n(\tilde\gamma,p_0)\leq2\Delta_n$. Then, the inequality $\tilde{Q}_{n}(\gamma,p_{0}) \geq \eta \cdot ||\gamma-\tilde{\gamma}_{n,p_{0}}||_{2}$ for all $\gamma \in \Gamma$ and $n \geq \bar{N} $(derived in Lemma \ref{lem:singletonidset}) implies
\begin{equation}
||\tilde\gamma-\tilde{\gamma}_{n,p_{0}}||_2 \leq \frac{2\Delta_n}{\eta}=: \kappa_n
\label{eq:localize}
\end{equation}
for all $\tilde{\gamma} \in \tilde{\Gamma}_{n,I}(p)$, all $p \in V_{n,\delta_{n}}$, and $n \geq \bar{N}$, where $\eta > 0$ is the constant in Condition 5 of Proposition \ref{prop:misspec_hemi_convex}. Consequently, the first claim of the lemma holds by taking the supremum (as $\kappa_{n}$ is a uniform upper bound). The second claim holds because $\kappa_{n} = 2 \Delta_{n}/\eta$.
\end{proof}
\begin{proof}[Proof of Proposition \ref{prop:misspec_hemi_convex}]
The proof has two steps. Step 1 verifies (\ref{eq:hemi_misspec_ucon}). Step 2 verifies (\ref{eq:hemi_misspec_minorant}). 
\paragraph{Step 1.} Let $m_{n}(p) = \inf_{\gamma \in \Gamma}Q_{n}(\gamma,p)$. The infimum is well-defined because $Q_n(\cdot,p)$ is continuous on $\Gamma^{o}$ (being a finite maximum of convex functions, each of which is continuous on the open set $\Gamma^{o}$) and $\Gamma \subseteq \Gamma^{o}$ is compact, which means that the Weierstrass Extreme Value Theorem applies. Next, we apply the elementary inequality $|\max_{a}u_a-\max_{a}v_a|\leq\max_{a}|u_a-v_a|$ to obtain
\begin{align*}
\sup_{(\gamma,p)\in\Gamma\times V_{n,\delta_n}}\big|Q_n(\gamma,p)-Q_n(\gamma,p_0)\big|\leq\Delta_n
\quad\text{and}\quad
\sup_{p\in V_{n,\delta_n}}\big|m_n(p)-m_n(p_0)\big|\leq\Delta_n,
\end{align*}
and, as a result, the triangle inequality implies that
\begin{align*}
\sup_{(\gamma,p)\in\Gamma\times V_{n,\delta_n}}\big|\tilde{Q}_n(\gamma,p)-\tilde{Q}_n(\gamma,p_0)\big|\leq 2\Delta_n.
\end{align*}
Consequently, (\ref{eq:hemi_misspec_ucon}) holds because $\Delta_{n} \rightarrow 0$ as $n\rightarrow \infty$ by Condition 3.
\paragraph{Step 2.} Let $N^{*} \geq \bar{N}$ chosen so that $\sup_{\gamma \in \Gamma_{nbd}}\max_{a \in \mathcal{A}_{n}}||\nabla_{\gamma}q_{n,a}(\gamma,p_{0})||_{2} \leq \acute{C}$ for all $n \geq N^{*}$, where $\acute{C} >0$ is sufficiently large, and $\Delta_{n}' + L'\kappa_{n} < \eta/2$ for all $n \geq N^{*}$. Such an $N^{*}$ exists by Conditions 3 and 4, and Lemma \ref{lem:localization}. Next, let $\gamma \in \Gamma \setminus \tilde{\Gamma}_{n,I}(p)$ for $p \in V_{n,\delta_{n}}$. By non-emptiness and compactness of $\tilde{\Gamma}_{n,I}(p)$, there exists $\tilde{\gamma}_{n,p} \in \tilde{\Gamma}_{n,I}(p)$ such that $||\gamma-\tilde{\gamma}_{n,p}||_{2} = \text{dist}(\gamma,\tilde{\Gamma}_{n,I}(p)) > 0$. Let $t_{n,p} = (\gamma-\tilde{\gamma}_{n,p})/||\gamma-\tilde{\gamma}_{n,p}||_{2}$ and let $a^{*} \in \mathcal{A}_{n}$ be such that $\nabla_{\gamma}q_{n,a^{*}}(\tilde{\gamma}_{n,p_{0}},p_{0})'t_{n,p} \geq \eta$ for all $n \geq N^{*}$, where $\tilde{\gamma}_{n,p_{0}}$ is the single element of $\tilde{\Gamma}_{n,I}(p_{0})$. Such an $a^{*}$ exists by Condition 5. By convexity and differentiability of $q_{n,a^{*}}(\gamma,p)$ in $\gamma$,
\begin{align*}
    Q_{n}(\gamma,p) &\geq q_{n,a^{*}}(\gamma,p) \\
    &\geq q_{n,a^{*}}(\tilde{\gamma}_{n,p},p) + \nabla_{\gamma}q_{n,a^{*}}(\tilde{\gamma}_{n,p},p)'(\gamma-\tilde{\gamma}_{n,p}) \\
    &=q_{n,a^{*}}(\tilde{\gamma}_{n,p},p) + \nabla_{\gamma}q_{n,a^{*}}(\tilde{\gamma}_{n,p},p)'t_{n,p} \cdot  \text{dist}(\gamma,\tilde{\Gamma}_{n,I}(p)),
\end{align*}
Consequently, we complete the proof by verifying that the following holds:
\begin{enumerate}
    \item There is a sequence $\{r_{n}\}_{n \geq 1}$ such that $r_{n} \geq 0$ for each $n \geq 1$, $r_{n} \rightarrow 0$ as $n\rightarrow \infty$, and $q_{n,a^{*}}(\tilde{\gamma}_{n,p},p) \geq Q_{n}(\tilde{\gamma}_{n,p},p)-r_{n}$ for all $p \in V_{n,\delta_{n}}$ and $n \geq N^{*}$.
    \item There is a constant $c >0$ such that $\nabla_{\gamma}q_{n,a^{*}}(\tilde{\gamma}_{n,p},p)'t_{n,p} \geq c$ for all $n \geq N^{*}$ and all $p \in V_{n,\delta_{n}}$.
\end{enumerate}
If these conditions hold, then we verify (\ref{eq:hemi_misspec_minorant}) because, by rearranging the inequality $Q_{n}(\gamma,p)  \geq q_{n,a^{*}}(\tilde{\gamma}_{n,p},p) + \nabla_{\gamma}q_{n,a^{*}}(\tilde{\gamma}_{n,p},p)'t_{n,p} \cdot  \text{dist}(\gamma,\tilde{\Gamma}_{n,I}(p))$ derived above (and using that $\gamma \in \Gamma \setminus \tilde{\Gamma}_{n,I}(p)$ was arbitrary), we obtain that, for each $\gamma \in \Gamma$,
\begin{align*}
    \tilde{Q}_{n}(\gamma,p) \geq c \cdot \text{dist}(\gamma,\tilde{\Gamma}_{n,I}(p)) - r_{n}
\end{align*}
for all $n \geq N^{*}$ and $p \in V_{n,\delta_{n}}$. The case with $\gamma \in \tilde{\Gamma}_{n,I}(p)$ holds because $\tilde{Q}_{n}(\gamma,p) = 0$ and $\text{dist}(\gamma,\tilde{\Gamma}_{n,I}(p)) = 0$, so the same $c$ and $\{r_{n}\}_{n \geq 1}$ can be used. The remaining substeps verify these claims.
\paragraph{Step 2A.} We claim that $q_{n,a^{*}}(\tilde{\gamma}_{n,p},p) \geq Q_{n}(\tilde{\gamma}_{n,p},p)-r_{n}$ for all $p \in V_{n,\delta_{n}}$ and $n \geq N^{*}$, where $r_{n} =  \acute{C} \kappa_{n} + 2\Delta_{n}$ and $\kappa_{n}$ is the sequence from Lemma \ref{lem:localization}. Using Condition 3 and the convex, differentiability of $q_{n,a^{*}}$ in $\gamma$, 
\begin{align*}
    q_{n,a^{*}}(\tilde{\gamma}_{n,p},p) &\geq q_{n,a^{*}}(\tilde{\gamma}_{n,p},p_{0})-\Delta_{n} \\
    &\geq q_{n,a^{*}}(\tilde{\gamma}_{n,p_{0}},p_{0}) + \nabla_{\gamma} q_{n,a^{*}}(\tilde{\gamma}_{n,p_{0}},p_{0})'(\tilde{\gamma}_{n,p}-\tilde{\gamma}_{n,p_{0}})-\Delta_{n},
\end{align*}
Using the definition of $m_{n}(p)$, $q_{n,a^{*}}(\tilde{\gamma}_{n,p_{0}},p_{0}) = m_{n}(p_{0})$, and, by the result in Step 1, 
\begin{align*}
q_{n,a^{*}}(\tilde{\gamma}_{n,p_{0}},p_{0}) \geq m_{n}(p) - \Delta_{n} = Q_{n}(\tilde{\gamma}_{n,p},p)-\Delta_{n}   
\end{align*} 
for all $n \geq N^{*}$ and all $p \in V_{n,\delta_{n}}$, where the equality uses that $\tilde{\gamma}_{n,p} \in \tilde{\Gamma}_{n,I}(p)$. Applying Conditions 4 and 5 and the Cauchy-Schwarz inequality,
\begin{align*}
    \nabla_{\gamma} q_{n,a^{*}}(\tilde{\gamma}_{n,p_{0}},p_{0})'(\gamma_{n,p}-\gamma_{n,p_{0}}) &\geq  - \acute{C} \kappa_{n}
\end{align*}
for all $n \geq N^{*}$ and $p \in V_{n,\delta_{n}}$. Putting these two results together,
\begin{align*}
    q_{n,a^{*}}(\tilde{\gamma}_{n,p_{0}},p_{0}) \geq Q_{n}(\tilde{\gamma}_{n,p},p)-r_{n}
\end{align*}
for all $n \geq N^{*}$ and all $p \in V_{n,\delta_{n}}$. Since $\Delta_{n} \geq 0$, $\kappa_{n} \geq 0$, and $\Delta_{n}\rightarrow 0$ as $n\rightarrow \infty$, $r_{n} \rightarrow 0$ as $n\rightarrow \infty$. This verifies the first claim.
\paragraph{Step 2B.} We show that there exists a constant $c>0$ such that $\nabla_{\gamma}q_{n,a^{*}}(\tilde{\gamma}_{n,p},p)'t_{n,p} \cdot  \text{dist}(\gamma,\tilde{\Gamma}_{n,I}(p)) \geq c \cdot \text{dist}(\gamma,\tilde{\Gamma}_{n,I}(p)) $ for all $n \geq N^{*}$ and all $p \in V_{n,\delta_{n}}$. By the Cauchy--Schwarz and triangle inequalities,
\begin{align*}
\nabla_\gamma q_{n,a^{*}}(\tilde\gamma_{n,p},p)'t_{n,p}
&\geq\nabla_\gamma q_{n,a^{*}}(\tilde\gamma_{n,p_{0}},p_0)'t_{n,p}
-\big\|\nabla_\gamma q_{n,a^{*}}(\tilde\gamma_{n,p},p)-\nabla_\gamma q_{n,a^{\star}}(\tilde\gamma_{n,p_{0}},p_0)\big\|_2\\ 
&\geq \nabla_\gamma q_{n,a^{*}}(\tilde\gamma_{n,p_{0}},p_0)'t_{n,p}
-\left| \left|\nabla_\gamma q_{n,a^{*}}(\tilde\gamma_{n,p},p)-\nabla_\gamma q_{n,a^{*}}(\tilde\gamma_{n,p},p_{0})\right |\right|_{2} \\
&\quad - \left| \left|\nabla_\gamma q_{n,a^{\star}}(\tilde\gamma_{n,p},p_0) -\nabla_\gamma q_{n,a^{\star}}(\tilde\gamma_{n,p_{0}},p_0)\right| \right|_{2} \\
&\geq\eta-\Delta'_n-L'\kappa_n \\
&=: c_{n},
\end{align*}
for all $p \in V_{n,\delta_{n}}$ and $n \geq N^{*}$, where the last inequality uses Conditions 3, 4, and 5. Since $\Delta_{n}'+ L'\kappa_{n} < \eta/2$ for $n \geq N^{*}$, it follows that $c_{n} > \eta /2$ for all $n \geq N^{*}$, so set $c = \eta/2$ and we verify the claim.
\end{proof}
\begin{proof}[Proof of Proposition \ref{prop:relaxed}]
Fix a realization for which Conditions 1 and 2 hold and let $n\geq\max\{\bar{N},\check{N}\}$, where $\check{N}$ is chosen such that $u_{n} \leq t$ for all $n \geq \check{N}$, and $p\in V_{n,\delta_n}$. Such an $\check{N}$ exists because $u_{n}\rightarrow 0$ as $n\rightarrow \infty$. If $\gamma\in\tilde{\Gamma}_{n,I}(p_0)$, then $\tilde{Q}_n(\gamma,p)\leq\tilde{Q}_n(\gamma,p_0)+u_n=u_n\leq t$ for all $n \geq \max\{\bar{N},\check{N}\}$, so $\gamma\in\tilde{\Gamma}^{\,t}_{n,I}(p)$; this establishes the containment and, in particular, $\sup_{\gamma\in\tilde{\Gamma}_{n,I}(p_0)}\dist(\gamma,\tilde{\Gamma}^{\,t}_{n,I}(p))=0$. Conversely, if $\gamma\in\tilde{\Gamma}^{\,t}_{n,I}(p)$, then $\tilde{Q}_n(\gamma,p_0)\leq\tilde{Q}_n(\gamma,p)+u_n\leq 2t$ for all $n \geq \max\{\bar{N},\check{N}\}$, so Condition 1 gives $\dist(\gamma,\tilde{\Gamma}_{n,I}(p_0))\leq 2t/\eta$ for all $n \geq \max\{\bar{N},\check{N}\}$. Combining these with the definition of the Hausdorff distance completes the proof.
\end{proof}
\subsection{Proofs for Section \ref{sec:GPverification} and Some Additional Discussion.}
Before proving Propositions \ref{prop:L2} and \ref{prop:post_sup_norm} (and the associated lemmas), we introduce Kullback-Leibler `neighborhoods',
\begin{align*}
    \mathcal{U}_{n,2}^{*}(p_{0},\delta) = \left\{p : \frac{1}{n}\sum_{i=1}^{n}KL(p_{0}(x_{i}),p(x_{i})) < \delta^{2}, \quad \frac{1}{n}\sum_{i=1}^{n}KLV(p_{0}(x_{i}),p(x_{i}))<\delta^{2}\right\}
\end{align*}
for each $\delta > 0$, where
\begin{align*}
    KL(p_{0}(x),p(x)) = \sum_{k=1}^{K}p_{0,k}(x)\log\left(\frac{p_{0,k}(x)}{p_{k}(x)}\right)
\end{align*}
and
\begin{align*}
    KLV(p_{0}(x),p(x)) = \sum_{k=1}^{K}p_{0,k}(x)\left(\log\left(\frac{p_{0,k}(x)}{p_{k}(x)}\right)\right)^{2}
\end{align*}
are the Kullback-Leibler divergence and variation of the categorical distribution, respectively.
\begin{lemma}\label{lem:KLdivergence}
Suppose that Assumptions \ref{as:covariates}.1 and \ref{as:GP}.2 hold, $b_{0,k} \in \overline{\mathcal{H}}_{k}$, $||b_{0,k}||_{\infty} \leq \overline{B}_{k}$, and $\{\tilde{\delta}_{n,k}\}_{n \geq 1}$ is such that $n\tilde{\delta}_{n,k}^{2}\rightarrow \infty$ as $n\rightarrow \infty$ and $\varphi_{b_{0,k}}(\tilde{\delta}_{n,k}) \leq n \tilde{\delta}_{n,k}^{2}$ for each $k=1,...,K-1$. Then there are constants $C_{1}, C_{2} \in (0,\infty)$ such that
\begin{align*}
\Pi\left(p \in \mathcal{U}_{n,2}^{*}(p_{0},C_{1}\tilde{\delta}_{n})\right) \geq C_{2} \exp\left(-(K-1)n\tilde{\delta}_{n}^{2}\right) 
\end{align*}
for $P_{0,X}^{(\infty)}$-almost every fixed realization $\{x_{i}\}_{i \geq 1}$ of $\{X_{i}\}_{i \geq 1}$, where $\tilde{\delta}_{n} = \max_{1 \leq k \leq K-1}\tilde{\delta}_{n,k}$.
\end{lemma}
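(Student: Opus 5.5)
The plan is to reduce the Kullback--Leibler neighborhood to a product of sup-norm balls for the Gaussian processes $B_{1},\dots,B_{K-1}$, and then apply the standard small-ball bound for Gaussian measures through the concentration function $\varphi_{k,b_{0,k}}$.

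\emph{Step 1: deterministic reduction to sup-norm balls.} Under the stick-breaking parametrization (\ref{eq:stickbreak}), every $p_{k}$ is a product of factors $\Lambda(B_{j})$ and $1-\Lambda(B_{j})=\Lambda(-B_{j})$. Hence for every $x$
\begin{align*}
\log\frac{p_{0,k}(x)}{p_{k}(x)}=\sum_{j\leq k}\pm\left[\log\Lambda(\pm b_{0,j}(x))-\log\Lambda(\pm B_{j}(x))\right].
\end{align*}
Since $z\mapsto\log\Lambda(z)$ is $1$-Lipschitz, each log-ratio is bounded in absolute value by $\sum_{j=1}^{K-1}|B_{j}(x)-b_{0,j}(x)|$. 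This gives
\begin{align*}
KLV(p_{0}(x),p(x))\leq (K-1)\sum_{j}\|B_{j}-b_{0,j}\|_{\infty}^{2}.
\end{align*}
For the KL term, the crude bound $KL\leq\sqrt{KLV}$ only gives order $\tilde\delta_{n}$, which is too weak. I will use the sharper bound $KL(p_{0},p)\lesssim \sum_{k}(p_{0,k}-p_{k})^{2}/p_{k}$ (the $\chi^{2}$ bound). Assumption \ref{as:GP}.2 together with $\|b_{0,k}\|_{\infty}\leq\bar B_{k}$ keeps every $p_{k}(x)$ and $p_{0,k}(x)$ bounded below by a constant $c(\bar B)>0$ on the truncated support. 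Combined with Lipschitz continuity of the stick-breaking map, this yields $KL\leq C\sum_{j}\|B_{j}-b_{0,j}\|_{\infty}^{2}$. The conclusion is that there is a constant $C_{1}$, free of $n$ and $\{x_{i}\}$, such that $\{\|B_{k}-b_{0,k}\|_{\infty}<\tilde\delta_{n}\ \forall k\}\subseteq \mathcal{U}^{*}_{n,2}(p_{0},C_{1}\tilde\delta_{n})$. This inclusion holds for every covariate realization, so the ``almost every'' qualifier is immediate.

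\emph{Step 2: prior mass.} By independence of the $B_{k}$, the prior mass of this set factors into a product over $k$. For each $k$ I lower bound $P(\|B_{k}-b_{0,k}\|_{\infty}<\tilde\delta_{n})\geq P(\|B_{k}-b_{0,k}\|_{\infty}<\tilde\delta_{n,k}/2)$ via \cite{vaart2008rates}, Lemma 5.3. Because $b_{0,k}\in\overline{\mathcal H}_{k}$, that lemma gives $\geq \exp(-\varphi_{k,b_{0,k}}(\tilde\delta_{n,k}/4))$. Using monotonicity of $\varphi$ in $\delta$, and absorbing the factor $1/4$ into $C_{1}$ or the constants, this is at least $\exp(-n\tilde\delta_{n,k}^{2})\geq\exp(-n\tilde\delta_{n}^{2})$ once the constants are adjusted. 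The product over $k$ then gives $\exp(-(K-1)n\tilde\delta_{n}^{2})$.

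\emph{Step 3: the truncation, which I expect to be the main obstacle.} The prior is the law of $B_{k}$ conditioned on $\{\|B_{k}\|_{\infty}\leq\bar B_{k}\}$. Conditioning multiplies probabilities by $1/P(\|B_{k}\|_{\infty}\leq\bar B_{k})\geq1$, but the ball $\{\|B_{k}-b_{0,k}\|_{\infty}<\tilde\delta_{n}\}$ may not lie inside the truncation region when $\|b_{0,k}\|_{\infty}=\bar B_{k}$. To handle this, I will intersect with the truncation region and shift the center to a function $\tilde b_{0,k}$ with $\|\tilde b_{0,k}-b_{0,k}\|_{\infty}\leq\tilde\delta_{n}/2$ and $\|\tilde b_{0,k}\|_{\infty}\leq\bar B_{k}-\tilde\delta_{n}/2$; for instance, take $(1-c\tilde\delta_{n})b_{0,k}$, which still lies in $\overline{\mathcal H}_{k}$ by linearity. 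The concentration function at the rescaled center is controlled by that at $b_{0,k}$, up to constants absorbed in $C_{2}$ and in the rate. The constant $C_{2}$ collects these adjustments, the factors $P(\|B_{k}\|_{\infty}\leq\bar B_{k})^{-1}$, and any loss from the constant rescalings of $\tilde\delta_{n}$.
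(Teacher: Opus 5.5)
Your route is the paper's. You bound $KL$ and $KLV$ by squared sup-norm distances between $B_k$ and $b_{0,k}$, factor the prior mass over $k$ by independence, apply the decentered small-ball bound (Lemma 5.3 of \cite{van2008reproducing}), and finish with monotonicity of $\varphi_{k,b_{0,k}}$. Step 1 is fine, and using the $1$-Lipschitz property of $\log\Lambda$ on the log-ratios is a clean way to get the $KLV$ bound.

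The step that fails as written is the radius bookkeeping in Step 2. Shrinking to radius $\tilde\delta_{n,k}/2$ leaves you with $\exp(-\varphi_{k,b_{0,k}}(\tilde\delta_{n,k}/4))$. Since $\varphi_{k,b_{0,k}}$ is decreasing, $\varphi_{k,b_{0,k}}(\tilde\delta_{n,k}/4)\ge\varphi_{k,b_{0,k}}(\tilde\delta_{n,k})$, so the hypothesis $\varphi_{k,b_{0,k}}(\tilde\delta_{n,k})\le n\tilde\delta_{n,k}^2$ gives no control: monotonicity runs the wrong way. The loss also cannot be absorbed into $C_2$. The exponent $(K-1)n\tilde\delta_n^2$ is fixed in the statement, and $\varphi(\delta/4)$ can exceed $\varphi(\delta)$ by a constant factor (e.g.\ polynomial small-ball exponents, as for Mat\'ern processes). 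Because $n\tilde\delta_n^2\to\infty$, a constant factor in the exponent is not offset by a multiplicative constant. The fix is to enlarge the ball rather than shrink it, as the paper does. Since $C_1$ is free, choose it (the paper takes $C_1=2(K-1)^{3/2}C(K)$) so that $\mathcal{U}_{n,2}^{*}(p_0,C_1\tilde\delta_n)$ contains the logit ball of radius $2\tilde\delta_n$. Then Lemma 5.3 gives $\exp(-\varphi_{k,b_{0,k}}(\tilde\delta_n))$, and $\varphi_{k,b_{0,k}}(\tilde\delta_n)\le\varphi_{k,b_{0,k}}(\tilde\delta_{n,k})\le n\tilde\delta_{n,k}^2\le n\tilde\delta_n^2$.

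Your Step 3 repairs something the paper's proof glosses over. The paper writes the truncated-prior mass of the ball as $P_{GP(0,\kappa_k)}(\|B_k-b_{0,k}\|_\infty<2\tilde\delta_n)/P_{GP(0,\kappa_k)}(\|B_k\|_\infty\le\bar B_k)$. That is valid only when the ball lies inside $\{\|B_k\|_\infty\le\bar B_k\}$; otherwise the numerator must be the probability of the intersection. This holds for large $n$ if $\|b_{0,k}\|_\infty<\bar B_k$, but fails for every $n$ when $\|b_{0,k}\|_\infty=\bar B_k$, which the hypotheses allow. Your recentering works, and the control of the concentration function is exact, not merely up to constants. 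For $t\in[0,1]$ and any $h\in\mathcal{H}_k$ with $\|h-b_{0,k}\|_\infty<\epsilon$, the function $(1-t)h$ lies within $\epsilon$ of $(1-t)b_{0,k}$ and has smaller RKHS norm, so $\varphi_{k,(1-t)b_{0,k}}(\epsilon)\le\varphi_{k,b_{0,k}}(\epsilon)$. Carry the enlarged radius through here too. Take $\tilde b_{0,k}=(1-2\tilde\delta_n/\bar B_k)b_{0,k}$ for $n$ large enough that $2\tilde\delta_n\le\bar B_k$. The ball of radius $2\tilde\delta_n$ about $\tilde b_{0,k}$ then lies in the truncation region and in the ball of radius $4\tilde\delta_n$ about $b_{0,k}$. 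After choosing $C_1$ to cover logit balls of radius $4\tilde\delta_n$, each factor is at least $\exp(-\varphi_{k,\tilde b_{0,k}}(\tilde\delta_n))\ge\exp(-\varphi_{k,b_{0,k}}(\tilde\delta_n))\ge\exp(-n\tilde\delta_n^2)$.
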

\begin{proof}
    Since $||B_{k}||_{\infty} \leq \overline{B}_{k}$ and $||b_{0,k}||_{\infty} \leq \bar{B}_{k}$ for each $k=1,...,K-1$ and $\max_{1 \leq k \leq K-1}\overline{B}_{k} < \infty$, a Taylor expansion reveals
\begin{align*}
\max\left\{\frac{1}{n}\sum_{i=1}^{n}KL(p_{0}(x_{i}),p(x_{i})),    \frac{1}{n}\sum_{i=1}^{n}V(p_{0}(x_{i}),p(x_{i})) \right\} \leq \breve{C}\sum_{k=1}^{K} ||p_{k}-p_{0,k}||_{\infty}^{2},
\end{align*}
where $\breve{C}  \in (0,\infty)$ is a constant that depends on $\overline{B}_{1},...,\overline{B}_{K-1}$. Consequently, it suffices to show there exists constants $C_{1},C_{2} \in (0,\infty)$ such that
\begin{align*}
    \Pi\left(\sqrt{\sum_{k=1}^{K}||p_{k}-p_{0,k}||_{\infty}^{2}} < C_{1}\tilde{\delta}_{n}\right) \geq C_{2}\exp\left(-(K-1)n\tilde{\delta}_{n}^{2}\right)
\end{align*}
To this end, we relate $||p_{k}-p_{0,k}||_{\infty}$ to $||B_{k}-b_{0,k}||_{\infty}$ by considering three cases:
\begin{enumerate}
    \item When $k= 1$, we know that $p_{1} = q_{1}$ which implies
    \begin{align*}
    ||p_{1}-p_{0,1}||_{\infty} = ||\Lambda(B_{1})-\Lambda(b_{0,1})||_{\infty}\leq \frac{1}{4}||B_{1}-b_{0,1}||_{\infty}    
    \end{align*} 
    where the inequality uses that $z \mapsto \Lambda(z)$ is Lipschitz with constant $1/4$.
    \item When $k \in \{2,...,K-1\}$, we know that $p_{k} = q_{k} \prod_{j=1}^{k-1}(1-q_{j})$, and, as a result, we obtain
    \begin{align*}
    &\left| \left| p_{k}-p_{0,k} \right| \right|_{\infty} \\
    &\quad = \left| \left|q_{k}\prod_{j=1}^{k-1}(1-q_{j})- q_{0,k}\prod_{j=1}^{k-1}(1-q_{0,j})\right| \right|_{\infty}   \\
    &\quad \leq  \left| \left|q_{k}\prod_{j=1}^{k-1}(1-q_{j})-q_{0,k}\prod_{j=1}^{k-1}(1-q_{j})\right| \right| +\left| \left| q_{0,k}\left(\prod_{j=1}^{k-1}(1-q_{j})-\prod_{j=1}^{k-1}(1-q_{0,j})\right)\right| \right|_{\infty} \\
    &\quad \leq \left| \left|q_{k} - q_{0,k} \right| \right|_{\infty} \cdot \left| \left| \prod_{j=1}^{k-1}(1-q_{j})\right | \right|_{\infty}+ \left|\left|q_{0,k}\right|\right|_{\infty}\cdot \left| \left| \prod_{j=1}^{k-1}(1-q_{j})-\prod_{j=1}^{k-1}(1-q_{0,j}) \right| \right|_{\infty} \\
     &\quad \leq \left| \left|q_{k} - q_{0,k} \right| \right|_{\infty} + \left| \left| \prod_{j=1}^{k-1}(1-q_{j})-\prod_{j=1}^{k-1}(1-q_{0,j}) \right| \right|_{\infty},
    \end{align*}
    where the first inequality adds and subtracts $q_{0,k}\prod_{j=1}^{k-1}(1-q_{j})$ and applies the triangle inequality, and the final inequality uses that $q_{j} \in [0,1]$ for each $j=1,...,K-1$. Using that $q_{k} = \Lambda(B_{k})$, $q_{0,k}= \Lambda(b_{0,k})$, and that $z \mapsto \Lambda(z)$ is Lipschitz with constant $1/4$,
    \begin{align*}
     \left| \left|q_{k} - q_{0,k} \right| \right|_{\infty} \leq \frac{1}{4} \cdot \left| \left|B_{k}-b_{0,k} \right| \right|_{\infty}   
    \end{align*}
    Using the inequality
    \begin{align*}
     \left| \prod_{j=1}^{k-1}(1-q_{j}(x))-\prod_{j=1}^{k-1}(1-q_{0,j}(x)) \right|  \leq  \sum_{j=1}^{k-1}\left| q_{j}(x) - q_{0,j}(x) \right|,  
    \end{align*}
    and the Lipschitz property of $z \mapsto \Lambda(z)$, we obtain
    \begin{align*}
\left| \left| \prod_{j=1}^{k-1}(1-q_{j})-\prod_{j=1}^{k-1}(1-q_{0,j}) \right| \right|_{\infty} &\leq \sum_{j=1}^{k-1}||q_{j}-q_{0,j}||_{\infty} \\
        &\leq \frac{1}{4}\sum_{j=1}^{k-1}||B_{k}-b_{0,k}||_{\infty}.
    \end{align*}
    Combining the upper bounds,
    \begin{align*}
        \left| \left| p_{k}-p_{0,k} \right| \right|_{\infty} \leq \frac{1}{4}\sum_{j=1}^{k}||B_{j}-b_{0,j}||_{\infty}
    \end{align*}
    for $k=2,...,K-1$.
    
    \item When $k = K$, we know that $p_{K} = 1 - \sum_{k=1}^{K-1}p_{k}$. This implies that
    \begin{align*}
        ||p_{K}-p_{0,K}||_{\infty} \leq \sum_{k=1}^{K-1}||p_{k}-p_{0,k}||_{\infty} \leq \frac{1}{4}\sum_{k=1}^{K-1}||B_{k}-b_{0,k}||_{\infty}
    \end{align*}
\end{enumerate}
Combining these cases, we obtain
\begin{align*}
\sqrt{\sum_{k=1}^{K-1}||p_{k}-p_{0,K}||_{\infty}^{2}} \leq C(K)\sqrt{\sum_{k=1}^{K-1}||B_{k}-b_{0,k}||_{\infty}^{2}},
\end{align*}
where $C(K)$ is some finite constant that depends on $K$. Consequently,
\begin{align*}
  \Pi\left(\sqrt{\sum_{k=1}^{K}||p_{k}-p_{0,k}||_{\infty}^{2}} < C_{1}\tilde{\delta}_{n}\right)  &\geq \Pi\left(\sum_{k=1}^{K-1}||B_{k}-b_{0,k}||_{\infty} < \frac{C_{1}\tilde{\delta}_{n}}{(\sqrt{K-1})C(K)}\right) \\
  &\geq \Pi\left(\max_{1 \leq k \leq K-1}||B_{k}-b_{0,k}||_{\infty}< \frac{C_{1}\tilde{\delta}_{n}}{(K-1)^{\frac{3}{2}}C(K)}\right),
\end{align*}
where the inequalities use equivalence of finite product norms. Define $C_{1} = 2(K-1)^{3/2}C(K)$. Using that $B_{1},...,B_{K-1}$ are mutually independent and the definition of $C_{1}$,
\begin{align*}
&\Pi\left(\max_{1 \leq k \leq K-1}||B_{k}-b_{0,K}||_{\infty} < \frac{C_{1}\tilde{\delta}_{n}}{(K-1)^{3/2}C(K)}\right) \\ &\quad =\prod_{k=1}^{K-1}P_{GP_{\overline{B}_{k}}(0,\kappa_{k})}\left(||B_{k}-b_{0,K}||_{\infty} < 2\tilde{\delta}_{n} \right) \\
&\quad= \prod_{k=1}^{K-1}\frac{P_{GP(0,\kappa_{k})}\left(||B_{k}-b_{0,K}||_{\infty} < 2\tilde{\delta}_{n} \right)}{P_{GP(0,\kappa_{k})}(||B_{k}||_{\infty} \leq \overline{B}_{k})} \\
     &\quad \geq C_{2}\prod_{k=1}^{K-1}\exp\left(-\varphi_{b_{0,k}}\left(\tilde{\delta}_{n} \right)\right), 
\end{align*}
where $P_{GP(0,\kappa_{k})}$ and $P_{GP_{\bar{B}_{k}}(0,\kappa_{k})}$ are the probability measures associated with $GP(0,\kappa_{k})$ and the truncation $GP_{\overline{B}_{k}}(0,\kappa_{k})$ of $GP(0,\kappa_{k})$ to $||B_{k}||_{\infty} \leq \bar{B}_{k}$, respectively, and the last inequality applies Lemma 5.3 of \cite{van2008reproducing} and defines $C_{2} > 0$ as $C_{2} = \prod_{k=1}^{K-1}P_{GP(0,\kappa_{k})}(||B_{k}||_{\infty} \leq \overline{B}_{k}) > 0$ (positivity holds by Lemma 5.1 of \cite{van2008reproducing}). Since $\delta \mapsto \varphi_{b_{0,k}}(\delta)$ is strictly decreasing (Lemma 3 of \cite{castillo2008lower}), we know that
\begin{align*}
 \varphi_{b_{0,k}}\left(\tilde{\delta}_{n} \right) \leq \varphi_{b_{0,k}}\left(\tilde{\delta}_{n,k} \right) \leq n \tilde{\delta}_{n,k}^{2} \leq n \tilde{\delta}_{n}^{2}.
\end{align*}
Putting everything together, we conclude that 
\begin{align*}
\Pi\left(p \in \mathcal{U}_{n,2}^{*}(p_{0},C_{1}\tilde{\delta}_{n})\right) \geq C_{2} \exp\left(-(K-1)n\tilde{\delta}_{n}^{2}\right)    
\end{align*}
holds for $P_{0,X}^{(\infty)}$-almost every fixed realization $\{x_{i}\}_{i \geq 1}$ of $\{X_{i}\}_{i \geq 1}$.
\end{proof}
\begin{lemma}\label{lem:normalizing}
Let $\bar{\Pi}_{n}$ be the renormalized restriction of $\Pi$ to the event $\mathcal{U}_{n,2}^{*}(p_{0},C_{1}n\tilde{\delta}_{n}^{2})$. If Assumption \ref{as:dgp}.1 holds, then, for any $C> 0$,
\begin{align*}
P_{0}^{(n)}\left(\int \frac{L_{n}(p)}{L_{n}(p_{0})}d\bar{\Pi}_{n}(p) \leq \exp(-(C_{1}^{2}+C)n\tilde{\delta}_{n}^{2})\right) \longrightarrow 0
\end{align*}
as $n\rightarrow \infty$, conditionally given $P_{0,X}^{(\infty)}$-almost every fixed realization $\{x_{i}\}_{i \geq 1}$ of $\{X_{i}\}_{i \geq 1}$.
\end{lemma}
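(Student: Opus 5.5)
The plan is the standard ``evidence lower bound'' argument for independent, non-identically distributed observations (cf. Lemma 8.10 in \cite{ghosal2017fundamentals}). It combines Jensen's inequality with Chebyshev's inequality, applied conditionally on a fixed covariate realization $\{x_{i}\}_{i\geq1}$. I read the neighborhood in the statement as $\mathcal{U}_{n,2}^{*}(p_{0},C_{1}\tilde{\delta}_{n})$, consistent with Lemma \ref{lem:KLdivergence}; the radius $C_{1}n\tilde{\delta}_{n}^{2}$ appears to be a typo. Throughout I use $n\tilde{\delta}_{n}^{2}\rightarrow\infty$, which holds under the hypotheses of Proposition \ref{prop:L2}.

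\textbf{Step 1 (Jensen).} Set $Z_{n}=\log\int L_{n}(p)/L_{n}(p_{0})\,d\bar{\Pi}_{n}(p)$. Since $\log$ is concave and $\bar{\Pi}_{n}$ is a probability measure,
\begin{align*}
Z_{n}\geq W_{n}:=\int \sum_{i=1}^{n}\log\frac{p_{Y_{i}}(x_{i})}{p_{0,Y_{i}}(x_{i})}\,d\bar{\Pi}_{n}(p).
\end{align*}
Here $p_{Y_{i}}(x_{i})$ denotes the mass that $p(x_{i})$ assigns to the observed category. Every $p$ in the support of $\Pi$ has components bounded away from zero, because the truncation in Assumption \ref{as:GP}.2 gives $\|B_{k}\|_{\infty}\leq\bar{B}_{k}$. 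The same bound holds for $p_{0}$ since $\|b_{0,k}\|_{\infty}\leq\bar{B}_{k}$. Hence the integrand is bounded, and Fubini's theorem applies. It therefore suffices to show that $P_{0}^{(n)}\bigl(W_{n}\leq-(C_{1}^{2}+C)n\tilde{\delta}_{n}^{2}\bigr)\rightarrow0$.

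\textbf{Step 2 (mean and variance).} By Assumption \ref{as:dgp}.1, the $Y_{i}$ are independent with $Y_{i}\sim Categorical(p_{0}(x_{i}))$. Fubini then gives
\begin{align*}
E_{0}W_{n}=-\int\sum_{i=1}^{n}KL(p_{0}(x_{i}),p(x_{i}))\,d\bar{\Pi}_{n}(p)\geq-C_{1}^{2}n\tilde{\delta}_{n}^{2},
\end{align*}
because $\bar{\Pi}_{n}$ is supported on $\mathcal{U}_{n,2}^{*}(p_{0},C_{1}\tilde{\delta}_{n})$. Next, $W_{n}=\sum_{i}w_{i}(Y_{i})$ with $w_{i}(y)=\int\log(p_{y}(x_{i})/p_{0,y}(x_{i}))\,d\bar{\Pi}_{n}(p)$. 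By independence across $i$,
\begin{align*}
\mathrm{Var}_{0}(W_{n})=\sum_{i}\mathrm{Var}_{0}(w_{i}(Y_{i}))\leq\sum_{i}E_{0}[w_{i}(Y_{i})^{2}]\leq\int\sum_{i}KLV(p_{0}(x_{i}),p(x_{i}))\,d\bar{\Pi}_{n}(p)\leq C_{1}^{2}n\tilde{\delta}_{n}^{2}.
\end{align*}
The middle inequality applies Jensen's inequality to $w_{i}^{2}$ before taking expectations.

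\textbf{Step 3 (Chebyshev).} Combining the two steps,
\begin{align*}
P_{0}^{(n)}\bigl(Z_{n}\leq-(C_{1}^{2}+C)n\tilde{\delta}_{n}^{2}\bigr)\leq P_{0}^{(n)}\bigl(W_{n}-E_{0}W_{n}\leq-Cn\tilde{\delta}_{n}^{2}\bigr)\leq\frac{C_{1}^{2}n\tilde{\delta}_{n}^{2}}{C^{2}n^{2}\tilde{\delta}_{n}^{4}},
\end{align*}
and the right-hand side tends to zero because $n\tilde{\delta}_{n}^{2}\rightarrow\infty$. The argument holds for every fixed covariate realization, so in particular for $P_{0,X}^{(\infty)}$-almost every one.

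\textbf{Main obstacle.} The computations are routine. The only delicate point is justifying the exchange of $\int d\bar{\Pi}_{n}$ with $E_{0}$ and ensuring the log-likelihood ratios are integrable, which requires $\bar{\Pi}_{n}$ to be a well-defined probability measure. The first issue is handled by the boundedness of the log ratios that follows from the truncation $\|B_{k}\|_{\infty}\leq\bar{B}_{k}$. The second needs $\Pi(\mathcal{U}_{n,2}^{*})>0$, which Lemma \ref{lem:KLdivergence} guarantees.
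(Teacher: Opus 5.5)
Your proof is correct and follows the paper's argument essentially step for step. It uses Jensen to pass to the $\bar{\Pi}_{n}$-averaged log-likelihood ratio, the KL condition on $\mathcal{U}_{n,2}^{*}(p_{0},C_{1}\tilde{\delta}_{n})$ to bound the mean, the KLV condition with independence to bound the variance by $C_{1}^{2}n\tilde{\delta}_{n}^{2}$, and Chebyshev with $n\tilde{\delta}_{n}^{2}\rightarrow\infty$. You are also right that the radius in the statement should read $C_{1}\tilde{\delta}_{n}$, and your final bound $C_{1}^{2}/(C^{2}n\tilde{\delta}_{n}^{2})$ has the correct constant, whereas the paper's displayed Chebyshev step has a typo.

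One small point: the appeal to the truncation in Assumption \ref{as:GP}.2 is unnecessary. The KLV bound already makes the log-ratios square-integrable under $P_{0}^{(n)}\otimes\bar{\Pi}_{n}$, so Tonelli justifies the exchange of integrals using only the lemma's hypotheses.
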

\begin{proof}
Applying Jensen's inequality,
\begin{align*}
  \log  \int \frac{L_{n}(p)}{L_{n}(p_{0})}d\bar{\Pi}_{n}(p) \geq  \int \left\{\log L_{n}(p) - \log L_{n}(p_{0})\right\}d\bar{\Pi}_{n}(p)
\end{align*}
Consequently,
\begin{align*}
    &P_{0}^{(n)}\left(\int \frac{L_{n}(p)}{L_{n}(p_{0})}d\bar{\Pi}_{n}(p) < \exp(-(C_{1}^{2}+C)n\tilde{\delta}_{n}^{2})\right) \\
    &\leq P_{0}^{(n)}\left(\int \left\{\log L_{n}(p) - \log L_{n}(p_{0})\right\}d\bar{\Pi}_{n}(p) < -(C_{1}^{2}+C)n\tilde{\delta}_{n}^{2}\right)
\end{align*}
Adding/subtracting $E_{P_{0}^{(n)}}(\log L_{n}(p) - \log L_{n}(p_{0}))$ and using that 
\begin{align*}
E_{P_{0}^{(n)}}(\log L_{n}(p) - \log L_{n}(p_{0})) = -\sum_{i=1}^{n}KL(p_{0}(x_{i}),p(x_{i})) \geq - C_{1}^{2}n\tilde{\delta}_{n}^{2}    
\end{align*} 
for any $p \in \mathcal{U}_{n,2}^{*}(p_{0},C_{1}\tilde{\delta}_{n})$, it follows that
\begin{align*}
    &P_{0}^{(n)}\left(\int \left\{\log L_{n}(p) - \log L_{n}(p_{0})\right\}d\bar{\Pi}_{n}(p) < -(C_{1}^{2}+C)n\tilde{\delta}_{n}^{2}\right) \\
    &\quad \leq     P_{0}^{(n)}\left(\int R_{n}(p,p_{0})d\bar{\Pi}_{n}(p) < -Cn\tilde{\delta}_{n}^{2}\right), 
\end{align*}
where $R_{n}(p,p_{0}) = (\log L_{n}(p) - \log L_{n}(p_{0})) - E_{P_{0}^{(n)}}(\log L_{n}(p) - \log L_{n}(p_{0}))$. Applying Chebyshev's inequality,
\begin{align*}
    P_{0}^{(n)}\left(\int R_{n}(p,p_{0})d\bar{\Pi}_{n}(p) < -Cn\tilde{\delta}_{n}^{2}\right) &\leq P_{0}^{(n)}\left(\left|\int R_{n}(p,p_{0})d\bar{\Pi}_{n}(p)\right| > Cn\tilde{\delta}_{n}^{2}\right) \\
    &\leq \frac{1}{C_{1}^{4}n^{2}\tilde{\delta}_{n}^{4}}E_{P_{0}^{(n)}}\left(\int R_{n}(p,p_{0})d\bar{\Pi}_{n}(p)\right)^{2}
\end{align*}
Applying Jensen's inequality and Fubini's theorem,
\begin{align*}
    E_{P_{0}^{(n)}}\left(\int R_{n}(p,p_{0})d\bar{\Pi}_{n}(p)\right)^{2} &\leq     E_{P_{0}^{(n)}}\int \left(R_{n}(p,p_{0})\right)^{2}d\bar{\Pi}_{n}(p) \\
    &=\int Var_{P_{0}^{(n)}}\left(R_{n}(p,p_{0})\right)d\bar{\Pi}_{n}(p) \\
    &\leq C_{1}^{2}n\tilde{\delta}_{n}^{2},
\end{align*}
where the last inequality uses the definition of $\mathcal{U}_{n,2}^{*}(p_{0},C_{1}\tilde{\delta}_{n})$ and the conditional independence of elements of $Y^{(n)}$. Combining this with the above and using that $n\tilde{\delta}_{n}^{2}\rightarrow \infty$, we conclude that
\begin{align*}
    P_{0}^{(n)}\left(\int \frac{L_{n}(p)}{L_{n}(p_{0})}d\bar{\Pi}_{n}(p) < \exp(-(C_{1}+C)n\tilde{\delta}_{n}^{2})\right) \leq \frac{1}{C_{1}^{2}n\tilde{\delta}_{n}^{2}} \longrightarrow 0
\end{align*}
as $n\rightarrow \infty$ for $P_{0,X}^{(\infty)}$-almost every fixed realization $\{x_{i}\}_{i \geq 1}$ of $\{X_{i}\}_{i \geq 1}$.
    
\end{proof}
\begin{proof}[Proof of Proposition \ref{prop:L2}]
The proof has two steps. Step 1 derives a sufficient condition. Step 2 verifies the sufficient condition.
\paragraph{Step 1.} Since $0 \leq p_{k} \leq 1$ for each $k=1,...,K$, there exists a constant $C>0$ (that is independent of $\{x_{i}\}_{i \geq 1}$ and $K$) such that $$||p-p_{0}||_{n,2} \leq C h_{n}(p,p_{0}),$$ where $h_{n}^{2}(p,p_{0}) = n^{-1}\sum_{i=1}^{n}h^{2}(p(x_{i}),p_{0}(x_{i}))$ and $h(p(x_{i}),p_{0}(x_{i}))$ is the Hellinger distance between the PMFs $p(x_{i})$ and $p_{0}(x_{i})$ for each $i=1,...,n$. Consequently, we show that
\begin{align*}
\Pi_{n}\left(h_{n}(p,p_{0}) \geq M\tilde{\delta}_{n}|Y^{(n)}\right) \overset{P_{0}^{(n)}}{\longrightarrow} 0     
\end{align*}
as $n\rightarrow \infty$, conditionally given $P_{0,X}^{(\infty)}$-almost every fixed realization $\{x_{i}\}_{i \geq 1}$ of $\{X_{i}\}_{i \geq 1}$. To that end, let $A_{n}$ be the event that
\begin{align*}
    \int \frac{L_{n}(p)}{L_{n}(p_{0})} d\Pi(p) \geq \exp(-(C_{1}^{2}+1)n\tilde{\delta}_{n}^{2})\Pi(p \in \mathcal{U}_{n,2}^{*}(p_{0},C_{1}\tilde{\delta}_{n})).
\end{align*}
By Lemma \ref{lem:normalizing}, $P_{0}^{(n)}(A_{n}) = 1 + o(1)$ as $n\rightarrow \infty$ for $P_{0,X}^{(\infty)}$-almost every fixed realization $\{x_{i}\}_{i \geq 1}$ of $\{X_{i}\}_{i \geq 1}$. Consequently, it suffices to show that
\begin{align}\label{eq:onAn}
 \Pi_{n}\left(h_{n}(p,p_{0}) \geq M\tilde{\delta}_{n}|Y^{(n)}\right)\mathbf{1}\{A_{n}\} \overset{P_{0}^{(n)}}{\longrightarrow} 0   
\end{align}
as $n\rightarrow \infty$, conditionally given $P_{0,X}^{(\infty)}$-almost every fixed realization $\{x_{i}\}_{i \geq 1}$ of $\{X_{i}\}_{i \geq 1}$. Let $\{\psi_{n}(Y^{(n)})\}_{n \geq 1}$ be a sequence of tests (i.e., $\psi_{n}(Y^{(n)}) \in \{0,1\}$ for each $n \geq 1$) for which there exists universal constants $D >0$ and $\xi >0$ such that, for any $\delta > 0$ and $p_{1}$ with $h_{n}(p_{0},p_{1}) > \delta$,
\begin{align}\label{eq:null}
    E_{P_{0}^{(n)}}[\psi_{n}(Y^{(n)})] \leq \exp(-Dn\delta^{2})
\end{align}
and
\begin{align}\label{eq:alternative}
\sup_{p: h_{n}(p,p_{1})<\xi \delta}E_{P^{(n)}}[1-\psi_{n}(Y^{(n)})] \leq \exp(-Dn\delta^{2}).
\end{align}
for $P_{0,X}^{(\infty)}$-almost every fixed realization $\{x_{i}\}_{i \geq 1}$ of $\{X_{i}\}_{i \geq 1}$, where $P^{(n)}= \bigotimes_{i=1}^{n}Categorical(p(x_{i}))$. Lemma 2 of \cite{ghosal2007convergence} establishes such tests exist. Recognizing that
\begin{align*}
     &\Pi_{n}\left(h_{n}(p,p_{0}) \geq M\tilde{\delta}_{n}|Y^{(n)}\right)\mathbf{1}\{A_{n}\} \\
     &\quad =  \Pi_{n}\left(h_{n}(p,p_{0}) \geq M\tilde{\delta}_{n}|Y^{(n)}\right)\mathbf{1}\{A_{n}\} \psi_{n}(Y^{(n)}) \\
     &\quad \quad + \Pi_{n}\left(h_{n}(p,p_{0}) \geq M\tilde{\delta}_{n}|Y^{(n)}\right)\mathbf{1}\{A_{n}\}(1- \psi_{n}(Y^{(n)})) \\
     &\quad \leq \psi_{n}(Y^{(n)}) + \Pi_{n}\left(h_{n}(p,p_{0}) \geq M\tilde{\delta}_{n}|Y^{(n)}\right)\mathbf{1}\{A_{n}\}(1- \psi_{n}(Y^{(n)})) \\
     &\quad =\Pi_{n}\left(h_{n}(p,p_{0}) \geq M\tilde{\delta}_{n}|Y^{(n)}\right)\mathbf{1}\{A_{n}\}(1- \psi_{n}(Y^{(n)})) + o_{P_{0}^{(n)}}(1)
\end{align*}
with the last equality using (\ref{eq:null}), we conclude that (\ref{eq:onAn}) reduces to showing
\begin{align}\label{eq:contractionalternative}
    \Pi_{n}\left(h_{n}(p,p_{0}) \geq M\tilde{\delta}_{n}|Y^{(n)}\right)\mathbf{1}\{A_{n}\}(1- \psi_{n}(Y^{(n)}))\overset{P_{0}^{(n)}}{\longrightarrow} 0   
\end{align}
as $n\rightarrow \infty$, conditionally given $P_{0,X}^{(\infty)}$-almost every fixed realization $\{x_{i}\}_{i \geq 1}$ of $\{X_{i}\}_{i \geq 1}$. The next step shows (\ref{eq:contractionalternative}). 
\paragraph{Step 2.} Let 
\begin{align*}
\mathcal{P}_{n}^{*} = \left\{(p_{1},...,p_{K}): B_{k} \in \mathcal{B}_{n,k} \ \forall \ k=1,...,K-1\right\},    
\end{align*} 
where 
\begin{align*}
\mathcal{B}_{n,k} = \tilde{\delta}_{n}C^{1}([0,1]^{d_{x}},\mathbb{R})+ M_{n,k}\mathcal{H}_{k}^{1}     
\end{align*}
for each $k=1,...,K-1$, $C^{1}([0,1]^{d_{x}},\mathbb{R})$ is the unit ball in $(C([0,1]^{d_{x}},\mathbb{R}),||\cdot||_{\infty})$, $M_{n,k} = -2 \Phi^{-1}(\exp(-\tilde{C}_{k}n\tilde{\delta}_{n}^{2}))$ with $\tilde{C}_{k} > 1$ is a sufficiently large constant (to be chosen later), and $\mathcal{H}_{k}^{1}$ is the unit ball in the RKHS $(\mathcal{H}_{k},||\cdot||_{\mathcal{H}_{k}})$. Recognizing that
\begin{align*}
    \left\{h_{n}(p,p_{0}) \geq M\tilde{\delta}_{n}\right\} &= \left(\left\{h_{n}(p,p_{0}) \geq M\tilde{\delta}_{n}\right\} \cap \mathcal{P}_{n}^{*}\right) \cup \left( \left\{h_{n}(p,p_{0}) \geq M\tilde{\delta}_{n}\right\} \cap (\mathcal{P}\setminus \mathcal{P}_{n}^{*})\right) \\
    &\subseteq\left(\left\{h_{n}(p,p_{0}) \geq M\tilde{\delta}_{n}\right\} \cap \mathcal{P}_{n}^{*}\right) \cup (\mathcal{P}\setminus \mathcal{P}_{n}^{*})
\end{align*}
and using that $(1-\psi_{n}(Y^{(n)})) \leq 1$,
\begin{align*}
    &\Pi_{n}\left(h_{n}(p,p_{0}) \geq M\tilde{\delta}_{n}|Y^{(n)}\right)\mathbf{1}\{A_{n}\}(1- \psi_{n}(Y^{(n)})) \\
    &\quad\leq \Pi_{n}\left(p \in \mathcal{P}_{n}^{*}: h_{n}(p,p_{0}) \geq M\tilde{\delta}_{n}|Y^{(n)}\right)\mathbf{1}\{A_{n}\}(1- \psi_{n}(Y^{(n)}))\\
    &\quad \quad + \Pi_{n}\left(\mathcal{P} \setminus \mathcal{P}_{n}^{*}|Y^{(n)}\right)\mathbf{1}\{A_{n}\}
\end{align*}
The following substeps show that the upper bound converges to zero in $P_{0}^{(n)}$-probability as $n\rightarrow \infty$, conditionally given $P_{0,X}^{(\infty)}$-almost every fixed realization $\{x_{i}\}_{i \geq 1}$ of $\{X_{i}\}_{i \geq 1}$.
\paragraph{Step 2A.}
For the first term, we apply the definition of $A_{n}$, Lemma \ref{lem:KLdivergence}, and Tonelli's theorem to conclude that
\begin{align*}
    &E_{P_{0}^{(n)}}\left(\Pi_{n}\left(p \in \mathcal{P}_{n}^{*}: h_{n}(p,p_{0}) \geq M\tilde{\delta}_{n}|Y^{(n)}\right)\mathbf{1}\{A_{n}\}(1- \psi_{n}(Y^{(n)}))\right) \\
    &\quad \leq \exp\left(\left(K+C_{1}^{2}\right)n\tilde{\delta}_{n}^{2}\right)E_{P_{0}^{(n)}}\left(\int_{\mathcal{P}_{n}^{*} \cap \{h_{n}(p,p_{0}) \geq M\tilde{\delta}_{n}\}}\frac{L_{n}(p)}{L_{n}(p_{0})}\left(1-\psi_{n}(Y^{(n)})\right)d\Pi(p)\right) \\
    &\quad \leq \exp\left(\left(K+C_{1}^{2}\right)n\tilde{\delta}_{n}^{2}\right)\sup_{p \in \mathcal{P}_{n}^{*}: h_{n}(p,p_{0}) \geq M\tilde{\delta}_{n}}E_{P^{(n)}}\left(1-\psi_{n}(Y^{(n)})\right),
\end{align*}
where $P^{(n)}= \bigotimes_{i=1}^{n}Categorical(p(x_{i}))$. Consequently, we want to show that there exists $\check{C} > K+C_{1}^{2}$ such that
\begin{align*}
    \sup_{p \in \mathcal{P}_{n}^{*}: h_{n}(p,p_{0}) \geq M\tilde{\delta}_{n}}E_{P^{(n)}}\left(1-\psi_{n}(Y^{(n)})\right) \leq \exp(-\check{C}n\tilde{\delta}_{n}^{2}).
\end{align*}
To this end, we show that the metric entropy $\log N(\tau,\mathcal{P}_{n}^{*},h_{n})$ of $\mathcal{P}_{n}^{*}$ with respect to $h_{n}$ satisfies
\begin{align}\label{eq:entropy}
\log N(\xi \tilde{\delta}_{n},\mathcal{P}_{n}^{*},h_{n}) \lesssim n \tilde{\delta}_{n}^{2}    
\end{align}
because, by Lemma 9 of \cite{ghosal2007convergence}, this implies that
\begin{align*}
        \sup_{p \in \mathcal{P}_{n}^{*}: h_{n}(p,p_{0}) \geq M\tilde{\delta}_{n}}E_{P^{(n)}}\left(1-\psi_{n}(Y^{(n)})\right) \leq \exp(-DM^{2}n\tilde{\delta}_{n}^{2}),
\end{align*}
and, as a result, by choosing $M$ sufficiently large, we find $\check{C}$ such that $\check{C} > K + C_{1}^{2}$ and conclude that
\begin{align*}
     \sup_{p \in \mathcal{P}_{n}^{*}: h_{n}(p,p_{0}) \geq M\tilde{\delta}_{n}}E_{P^{(n)}}\left(1-\psi_{n}(Y^{(n)})\right) \longrightarrow 0
\end{align*}
as $n\rightarrow \infty$, conditionally given $P_{0,X}^{(\infty)}$-almost every fixed realization $\{x_{i}\}_{i \geq 1}$ of $\{X_{i}\}_{i \geq 1}$. To show (\ref{eq:entropy}), we first note that, since $p_{k}$ is bounded away from zero and one (by $||B_{k}||_{\infty} \leq \bar{B}_{k}$) and $(B_{1},...,B_{K-1}) \mapsto p_{k}$ is Lipschitz for all $k=1,...,K$, there exists a constant $\acute{C}:=\acute{C}(K,\bar{B}_{1},...,\bar{B}_{K-1}) > 0$ (independent of $\{x_{i}\}_{i \geq 1}$) such that
\begin{align*}
    h_{n}(p,\tilde{p}) \leq \acute{C}\max_{1 \leq k \leq K-1}||B_{k}-\tilde{B}_{k}||_{\infty}
\end{align*}
Consequently, one can show that
\begin{align*}
\log N(\xi \tilde{\delta}_{n},\mathcal{P}_{n}^{*},h_{n}) &\leq \log N\left(\frac{\xi \tilde{\delta}_{n}}{\acute{C}},\mathcal{B}_{n,1}\times \cdots \times \mathcal{B}_{n,K-1},\max_{1 \leq k \leq K-1}||\cdot||_{\infty}\right)    \\
&\leq \sum_{k=1}^{K-1}\log N\left(\frac{\xi \tilde{\delta}_{n}}{\acute{C}},\mathcal{B}_{n,k},||\cdot||_{\infty}\right) 
\end{align*}
where the second inequality uses metric entropy properties with respect to product norms.
Theorem 2.1 of \cite{vaart2008rates} shows that 
\begin{align*}
    \log N\left(\frac{\xi \tilde{\delta}_{n}}{\acute{C}},\mathcal{B}_{n,k},||\cdot||_{\infty}\right) \lesssim n \tilde{\delta}_{n}^{2}
\end{align*}
for our choice of $\mathcal{B}_{n,k}$. Consequently, by combining all of the above, we conclude (\ref{eq:entropy}).
\paragraph{Step 2B.} For the second term, we apply the definition of $A_{n}$, Lemma \ref{lem:KLdivergence}, and Tonelli's theorem to conclude that 
\begin{align*}
E_{P_{0}^{(n)}}\left(\Pi_{n}\left(\mathcal{P} \setminus \mathcal{P}_{n}^{*}|Y^{(n)}\right)\mathbf{1}\{A_{n}\}\right) &\leq \exp\left(\left(K+C_{1}\right)n\tilde{\delta}_{n}^{2}\right)E_{P_{0}^{(n)}}\left(\int_{\mathcal{P} \setminus \mathcal{P}_{n}^{*}}\frac{L_{n}(p)}{L_{n}(p_{0})}d\Pi(p)\right) \\
&=\exp\left(\left(K+C_{1}\right)n\tilde{\delta}_{n}^{2}\right)\Pi(p \notin \mathcal{P}_{n}^{*})
\end{align*}
Letting $C^{\dagger} = \min_{1 \leq k \leq K-1}P_{GP(0,\kappa_{k})}(||B_{k}||_{\infty} \leq \overline{B}_{k}) > 0$ (by Lemma 5.1 in \cite{van2008reproducing}) and applying the union bound,
\begin{align*}
    \Pi(p \notin \mathcal{P}_{n}^{*}) \leq \sum_{j=1}^{K-1}P_{GP_{\overline{B}_{k}}(0,\kappa_{k})}(B_{k} \notin \mathcal{B}_{n,k}) \leq \frac{1}{C^{\dagger}}\sum_{k=1}^{K-1}P_{GP(0,\kappa_{k})}\left(B_{k} \notin \mathcal{B}_{n,k}\right)
\end{align*}
Following the argument of Theorem 2.1 in \cite{vaart2008rates}, we obtain
\begin{align*}
    P_{GP(0,\kappa_{k})}\left(B_{k} \notin \mathcal{B}_{n,k}\right) \leq \exp\left(-\tilde{C}_{k}n\tilde{\delta}_{n}^{2}\right),
    \end{align*}
    which implies
    \begin{align*}
    \Pi(p \notin \mathcal{P}_{n}^{*}) \leq \frac{1}{C^{\dagger}}\sum_{k=1}^{K-1}\exp\left(-\tilde{C}_{k}n\tilde{\delta}_{n}^{2}\right)  \leq \frac{K-1}{C^{\dagger}}\exp\left(-n\tilde{\delta}_{n}^{2}\min_{1 \leq k \leq K-1}\tilde{C}_{k}\right).
    \end{align*}
    Consequently, by setting $\min_{1 \leq k \leq K-1}\tilde{C}_{k}$ to be large, we conclude that
    \begin{align*}
        E_{P_{0}^{(n)}}\left(\Pi_{n}\left(\mathcal{P} \setminus \mathcal{P}_{n}^{*}|Y^{(n)}\right)\mathbf{1}\{A_{n}\}\right) \longrightarrow 0
    \end{align*}
    as $n\rightarrow \infty$ for $P_{0,X}^{(\infty)}$-almost every fixed realization $\{x_{i}\}_{i \geq 1}$ of $\{X_{i}\}_{i \geq 1}$.
\end{proof}
The proof of Proposition \ref{prop:post_sup_norm} relies on the theory of wavelets. Let $\{\psi_{j,l}: j \geq 0, 0 \leq l \leq 2^{jd_{x}-1}\}$ be the $d_{x}$-dimensional tensor product of the boundary-adapted wavelet basis for $L^{2}([0,1])$ from \cite{cohen1993wavelets} (`CDV wavelets' in shorthand), where $L^{2}([0,1]^{d})$ is the space of functions $f: [0,1]^{d} \rightarrow \mathbb{R}$ such that $\int_{[0,1]^{d}} f^{2}(x)dx < \infty$ and is equipped with the inner product $\langle f,g \rangle_{L^{2}([0,1]^{d})} = \int_{[0,1]^{d}} f(x)g(x)dx$. We will rely on the following properties: 1. $\{\psi_{j,l}\}$ forms an orthonormal basis for $L^{2}([0,1]^{d_{x}})$, 2. $||\psi_{j,l}||_{\infty} \lesssim 2^{jd_{x}}$ for all $j,l$, 3. $\sum_{l=0}^{2^{jd_{x}-1}}||\psi_{j,l}||_{\infty} \lesssim 2^{jd_{x}}$ for each $j \geq 0$, 4. $\langle \psi_{j,l},1 \rangle_{L^{2}([0,1]^{d_{x}})} = 0$ for $j$ large, and 5. $\{\psi_{j,l}\}$ characterizes the Besov space $\mathfrak{B}_{\infty,\infty,a}([0,1]^{d_{x}},\mathbb{R})$ for $a \leq \alpha$ with $\alpha > 0$ fixed in that $f \in \mathfrak{B}_{\infty,\infty,a}([0,1]^{d_{x}})$ iff $||f||_{\infty,\infty,a}:=\sup_{j \geq 0, 0 \leq l \leq 2^{jd_{x}-1}}|\langle f, \psi_{j,l} \rangle_{L^{2}([0,1]^{d_{x}})}| < \infty$. This last property is useful because H\"{o}lder spaces $C^{a}([0,1]^{d_{x}},\mathbb{R})$ can be continuously embedded into $\mathfrak{B}_{\infty,\infty,a}([0,1]^{d_{x}},\mathbb{R})$ (see, for example, Chapter 4 of \cite{Gine_Nickl_2021}).
\begin{assumption}\label{as:wavelet}
Let $\mathcal{CLS}^{J_{k}} = \text{span}\{\psi_{j,l}: 0 \leq j \leq J_{k}, 0 \leq l \leq 2^{jd_{x}}-1\}$ for each $k=1,...,K-1$. The collection $\{J_{k}\}_{k=1}^{K-1}$ is chosen so that 
\begin{align*}
2^{J_{k}} = o\left(\min\left\{\tilde{\delta}_{n}^{-\frac{2}{d_{x}}}, \left(\frac{n}{\log n}\right)^{\frac{1}{d_{x}}}\right\}\right)
\end{align*}
and
\begin{align*}
\left(\sqrt{n}\tilde{\delta}_{n}\right)^{\frac{1}{a_{k}-\frac{d_{x}}{2}}} = o(2^{J_{k}})
\end{align*}
where $\{\tilde{\delta}_{n}\}_{n \geq 1}$ is the empirical $L^{2}$ rate from Proposition \ref{prop:L2}, $\{a_{k}\}_{k = 1}^{K-1}$ are constants such that $d_{x}/2 < a_{k} < \alpha_{k}$ for each $k=1,...,K-1$.
\end{assumption}
\begin{proof}[Proof of Proposition \ref{prop:post_sup_norm}] The proof has three steps. Step 1 constructs a collection of sieves $\{\mathcal{P}_{n}\}_{n \geq 1}$ for which the posterior concentrates, so that, in conjunction with Proposition \ref{prop:L2}, we can restrict attention to sieved empirical $L^{2}$ neighborhoods of $p_{0}$. Step 2 derives conditions for the desired uniform convergence via projections onto CDV wavelets. Step 3 uses the results from Step 2 to conclude the theorem.
\paragraph{Step 1.} We start by constructing sieves $\{\mathcal{P}_{n}\}_{n \geq 1}$ for which the posterior concentrates as $n\rightarrow \infty$ so that, in conjunction with Proposition \ref{prop:L2}, we can restrict attention to $\mathcal{P}_{n} \cap B_{n,2}(p_{0},M\tilde{\delta}_{n})$, where $B_{n,2}(p_{0},r) = \{p : \max_{1 \leq k \leq K-1}||p_{k}-p_{0,k}||_{n,2} < r\}$. For each $k=1,...,K-1$, let $a_{k} \in (d_{x}/2,\alpha_{k})$ be a constant compatible with Assumption \ref{as:wavelet}. We show that, for $P_{0,X}^{(\infty)}$-almost every fixed realization $\{x_{i}\}_{i \geq 1}$ of $\{X_{i}\}_{i \geq 1}$, $\Pi_{n}(\mathcal{P}_{n}|Y^{(n)}) = 1 + o_{P_{0}^{(n)}}(1)$ as $n\rightarrow \infty$, where 
\begin{align*}
\mathcal{P}_{n} = \left\{(p_{1},...,p_{K}):  \max_{1 \leq k \leq K-1}||B_{k}||_{a_{k}} \leq M\sqrt{n}\tilde{\delta}_{n}\right\}.    
\end{align*} Since \begin{align*}
P_{GP_{\overline{B}_{k}}(0,\kappa_{k})}\left(||B_{k}||_{a_{k}} \geq M \sqrt{n}\tilde{\delta}_{n}\right) \leq \frac{\exp(-\check{C}M^{2}n\tilde{\delta}_{n}^{2})}{\min_{1 \leq k \leq K-1}P_{GP(0,\kappa_{k})}(||B_{k}||\leq \overline{B}_{k})}    
\end{align*}
for some $\check{C}> 0$ (by Proposition A.2.1 of \cite{vaart2023weak}), we can apply the union bound to conclude that
\begin{align*}
\Pi\left(p \notin \mathcal{P}_{n}\right) \leq \sum_{k=1}^{K-1}P_{GP_{\overline{B}_{k}}(0,\kappa_{k})}\left(||B_{k}||_{a_{k}} \geq M \sqrt{n}\tilde{\delta}_{n}\right) \leq \frac{(K-1)\exp(-\check{C}M^{2}n\tilde{\delta}_{n}^{2})}{\min_{1 \leq k \leq K-1}P_{GP(0,\kappa_{k})}(||B_{k}||\leq \overline{B}_{k})} 
\end{align*}
Following a similar argument as Step 2B of Proposition \ref{prop:L2}, it follows that, by selecting $M>0$ large enough, 
\begin{align*}
\Pi_{n}(\mathcal{P}\setminus \mathcal{P}_{n}|Y^{(n)}) \overset{P_{0}^{(n)}}{\longrightarrow} 0    
\end{align*} 
as $n\rightarrow \infty$, conditionally given $P_{0,X}^{(\infty)}$-almost every fixed realization $\{x_{i}\}_{i \geq 1}$ of $\{X_{i}\}_{i \geq 1}$. Hence, by the union bound and the complement rule,
\begin{align*}
    \Pi_{n}\left(\mathcal{P}_{n} \cap B_{n,2}(p_{0},M\tilde{\delta}_{n})\middle | Y^{(n)}\right) &\geq 1 - \Pi_{n}\left(\mathcal{P} \setminus \mathcal{P}_{n} \middle |Y^{(n)}\right) - \Pi_{n}\left(\mathcal{P} \setminus B_{n,2}(p_{0},M\tilde{\delta}_{n}) \middle |Y^{(n)}\right) \\
    &=1 + o_{P_{0}^{(n)}}\left(1\right),
\end{align*}
and, then, by the law of total probability, we conclude that it suffices to show that, for every $\varepsilon > 0$,
\begin{align*}
    \Pi_{n}\left(\max_{1 \leq i \leq n}\max_{1 \leq k \leq K-1}|p_{k}(x_{i})-p_{0,k}(x_{i})| \geq \varepsilon \middle | Y^{(n)},\mathcal{P}_{n} \cap B_{n,2}(p_{0},M\tilde{\delta}_{n})\right) \overset{P_{0}^{(n)}}{\longrightarrow} 0
\end{align*}
as $n\rightarrow \infty$, conditionally given $P_{0,X}^{(\infty)}$-almost every fixed realization $\{x_{i}\}_{i \geq 1}$ of $\{X_{i}\}_{i \geq 1}$.
\paragraph{Step 2.} Using the same argument to Lemma \ref{lem:KLdivergence}, there is a constant $C:=C(K) \in (0,\infty)$ such that
\begin{align*}
    \max_{1 \leq i \leq n}\max_{1 \leq k \leq K}|p_{k}(x_{i}) - p_{0,k}(x_{i})| \leq C\max_{1 \leq i \leq n}\max_{1 \leq k \leq K-1}|B_{k}(x_{i})-b_{0,k}(x_{i})|
\end{align*}
Let $\widehat{Proj}_{J_{k}}$ be the sample least squares projection onto $\mathcal{CLS}^{J_{k}}$. Applying the triangle inequality,
\begin{align*}
 \max_{1 \leq i \leq n}\max_{1 \leq k \leq K}|B_{k}(x_{i})-b_{0,k}(x_{i})| &\leq    \max_{1 \leq i \leq n}\max_{1 \leq k \leq K}|B_{k}(x_{i})-(\widehat{Proj}_{J_{k}}B_{k})(x_{i})| \\
&\quad + \max_{1 \leq i \leq n}\max_{1 \leq k \leq K}|(\widehat{Proj}_{J_{k}}B_{k})(x_{i})-(\widehat{Proj}_{J_{k}}b_{0,k})(x_{i})| \\
&\quad +  \max_{1 \leq i \leq n}\max_{1 \leq k \leq K}|b_{0,k}(x_{i})-(\widehat{Proj}_{J_{k}}b_{0,k})(x_{i})|
\end{align*}
Consequently, by the union bound, for any $\varepsilon >0$,
\begin{align*}
    & \Pi_{n}\left(\max_{1 \leq i \leq n}\max_{1 \leq k \leq K-1}|p_{k}(x_{i})-p_{0,k}(x_{i})| \geq \varepsilon \middle | Y^{(n)},\mathcal{P}_{n} \cap B_{n,2}(p_{0},M\tilde{\delta}_{n})\right)  \\
    &\leq \Pi_{n}\left(\max_{1 \leq i \leq n}\max_{1 \leq k \leq K-1}\left|(\widehat{Proj}_{J_{k}}B_{k})(x_{i})-(\widehat{Proj}_{J_{k}}b_{0,k})(x_{i})\right| \geq \frac{\varepsilon}{3C}\middle | Y^{(n)},\mathcal{P}_{n} \cap B_{n,2}(p_{0},D\tilde{\delta}_{n})\right) \\
    &\quad + \mathbf{1}\left\{\max_{1 \leq i \leq n}\max_{1 \leq k \leq K-1}\left|b_{0,k}(x_{i})-(\widehat{Proj}_{J_{k}}b_{0,k})(x_{i})\right| \geq \frac{\varepsilon}{3C} \right\} \\
    &\quad + \Pi_{n}\left(\max_{1 \leq i \leq n}\max_{1 \leq k \leq K-1}\left|B_{k}(x_{i})-(\widehat{Proj}_{J_{k}}B_{k})(x_{i})\right| \geq \frac{\varepsilon}{3C} \middle | Y^{(n)}, \mathcal{P}_{n} \cap B_{n,2}(p_{0},D\tilde{\delta}_{n})\right).
\end{align*}
The following substeps derive sufficient conditions for each of these terms vanish in probability as $n\rightarrow \infty$, conditional on $P_{0,X}^{(\infty)}$-almost every fixed realization $\{x_{i}\}_{i \geq 1}$ of $\{X_{i}\}_{i \geq 1}$. Step 3 shows that all of these sufficient conditions are implied by Assumption \ref{as:wavelet}.
\paragraph{Step 2A.} We start with the first term in the upper bound. Let $\hat{\beta}_{k}^{J_{k}}$ and $\hat{\beta}_{0,k}^{J_{k}}$ be the coefficients in the least squares projections $\widehat{Proj}_{J_{k}}B_{k}$ and $\widehat{Proj}_{J_{k}}b_{0,k}$, and let $\Delta_{n,k}^{J_{k}} = \hat{\beta}_{k}^{J_{k}}-\hat{\beta}_{0,k}^{J_{k}}$. The triangle inequality reveals and the definition of the maximum reveals that
\begin{align*}
|(\widehat{Proj}_{J_{k}}B_{k})(x_{i})-(\widehat{Proj}_{J_{k}}b_{0,k})(x_{i})|\leq \sum_{j=0}^{J_{k}} \max_{0 \leq l \leq 2^{jd_{x}}-1}|\Delta_{n,k,j,l}^{J_{k}}|\sum_{l=0}^{2^{jd_{x}}-1}||\psi_{j,l}||_{\infty},
\end{align*}
where  $\Delta_{n,k,j,l}^{J_{k}}$ are the individual elements of $\Delta_{n,k}^{J_{k}}$
Using localization properties of CDV wavelets, $\sum_{l=0}^{2^{jd_{x}}-1}||\psi_{j,l}||_{\infty} \lesssim 2^{\frac{jd_{x}}{2}}$. Using the residual regression formula (see Chapter 17.3 of \cite{goldberger1991course}),
\begin{align*}
    |\Delta_{n,k,j,l}^{J_{k}}| \leq \underbrace{\min_{0 \leq j \leq J_{k}}\min_{0 \leq l \leq 2^{jd_{x}}-1}\left(\frac{1}{n}\sum_{i=1}^{n}\tilde{\psi}_{j,l}^{2}(x_{i}) \right)^{-\frac{1}{2}}}_{R_{n,J_{k}}}||B_{k}-b_{0,k}||_{n,2} \leq R_{n,J_{k}} \cdot ||B-b_{0}||_{n,2}
\end{align*}
for all $0 \leq j \leq J_{k}$ and $0 \leq l \leq 2^{jd_{x}}-1$, where $\tilde{\psi}_{j,l}(x_{i}) = \psi_{j,l}(x_{i}) - \hat{\tau}_{n,jl}'\psi^{J_{k}}_{-jl}(x_{i})$, $\psi^{J_{k}}_{-jl}(x_{i}) = (\psi_{j,l}(x_{i}): 0 \leq \tilde{j} \leq J_{k}, 0 \leq \tilde{l} \leq 2^{\tilde{j}d_{x}-1}, \tilde{j} \neq j, \tilde{l} \neq l)$, and 
\begin{align*}
\hat{\tau}_{n,jl} = \left(\frac{1}{n}\sum_{i=1}^{n}\psi_{-jl}^{J_{k}}(x_{i})\psi_{-jl}^{J_{k}}(x_{i})'\right)^{-1}\frac{1}{n}\sum_{i=1}^{n}\psi_{-jl}^{J_{k}}(x_{i})\psi_{j,l}(x_{i}).    
\end{align*}
for each $0 \leq j \leq J_{k}$ and $0 \leq l \leq 2^{jd_{x}-1}$, and
\begin{align*}
||B-b_{0}||_{n,2} = \sqrt{\frac{1}{n}\sum_{i=1}^{n}\sum_{k=1}^{K-1}(B_{k}(x_{i})-b_{0,k}(x_{i}))^{2}}    
\end{align*}
Consequently,
\begin{align}\label{eq:Bbound}
|(\widehat{Proj}_{J_{k}}B_{k})(x_{i})-(\widehat{Proj}_{J_{k}}b_{0,k})(x_{i})| \lesssim R_{n,J_{k}} 2^{\frac{J_{k}d_{x}}{2}}||B-b_{0}||_{n,2} 
\end{align}
Using that $||B_{k}||_{\infty} \leq \overline{B}_{k}$ and $||b_{0,k}||_{\infty} \leq \overline{B}_{k}$ for each $k=1,...,K-1$, we can conclude that there exists a constant $\acute{C}:=\acute{C}(\overline{B}_{1},...,\overline{B}_{K-1}) \in (0,\infty)$ such that
\begin{align*}
||B-b_{0}||_{n,2} \leq \acute{C} ||p-p_{0}||_{n,2}.
\end{align*}
Consequently, there is a constant $\check{C} > 0$ such that
\begin{align*}
|(\widehat{Proj}_{J_{k}}B_{k})(x_{i})-(\widehat{Proj}_{J_{k}}b_{0,k})(x_{i})| \leq \check{C}M\max_{1 \leq k \leq K-1}(R_{n,J_{k}}2^{\frac{J_{k}d_{x}}{2}}\tilde{\delta}_{n})     
\end{align*}
on the event $\mathcal{P}_{n} \cap B_{n,2}(p_{0},M\tilde{\delta}_{n})$. Consequently, if $\left\{J_{k}\right\}_{k = 1}^{K-1}$ is such that $R_{n,J_{k}}2^{J_{k}d_{x}/2}\tilde{\delta}_{n}\rightarrow 0$ for each $k=1,...,K-1$ as $n\rightarrow \infty$, conditional on the realization $\{x_{i}\}_{i \geq 1}$ of $\{X_{i}\}_{i \geq 1}$, we obtain that, for every $\varepsilon > 0$,
\begin{align*}
\Pi_{n}\left(\max_{1 \leq i \leq n}\max_{1 \leq k \leq K-1}\left|(\widehat{Proj}_{J_{k}}B_{k})(x_{i})-(\widehat{Proj}_{J_{k}}b_{0,k})(x_{i})\right| \geq \frac{\varepsilon}{3C}\middle | Y^{(n)},\mathcal{P}_{n} \cap B_{n,2}(p_{0},D\tilde{\delta}_{n})\right)
\end{align*}
is identically equal to zero for $n$ large, conditionally given $P_{0,X}^{(\infty)}$-almost every realization $\{x_{i}\}_{i \geq 1}$ of $\{X_{i}\}_{i \geq 1}$.
\paragraph{Step 2B.} Let $Proj_{J_{k}}$ be the population least squares projection of $b_{0,k}$ onto $\mathcal{CLS}^{J_{k}}$. Applying the triangle inequality,
\begin{align}
|b_{0,k}(x_{i})-(\widehat{Proj}_{J_{k}}b_{0,k})(x_{i})| &\leq |b_{0,k}(x_{i})-(Proj_{J_{k}}b_{0,k})(x_{i})| \nonumber\\
    &+ |(Proj_{J_{k}}b_{0,k})(x_{i})-(\widehat{Proj}_{J_{k}}b_{0,k})(x_{i})| \label{eq:popprojbound}
\end{align}
for each $k=1,...,K-1$ and $i=1,...,n$. We bound the first term in the upper bound (\ref{eq:popprojbound}). Define $b_{0,k}^{J_{k}}$ to be the element of $\mathcal{CLS}^{J_{k}}$ that satisfies $||b_{0,k}-b_{0,k}^{J_{k}}||_{\infty} = \inf_{b \in \mathcal{CLS}^{J_{k}}}||b_{0,k}-b||_{\infty}$. Applying the triangle inequality,
\begin{align*}
|b_{0,k}(x_{i})-(Proj_{J_{k}}b_{0,k})(x_{i})| &\leq (1+||Proj_{J_{k}}||_{\infty})||b_{0,k}-b_{0,k}^{J_{k}}||_{\infty},   
\end{align*}
where
\begin{align*}
    ||Proj_{J_{k}}||_{\infty} := \sup_{f: f \neq 0, \ ||f||_{\infty} < \infty}\frac{||Proj_{J_{k}}f||_{\infty}}{||f||_{\infty}}
\end{align*}
is the projection operator norm. Under Assumption \ref{as:covariates}, we can apply Theorem 5.1 of \cite{chen2015optimal} to conclude that $||Proj_{J_{k}}||_{\infty} \lesssim 1$ for all $J_{k}$ and $k=1,...,K-1$. Moreover, since $b_{0,k} \in C^{\alpha_{0,k}}([0,1]^{d_{x}},\mathbb{R})$, we know from uniform approximation theory for CDV wavelets (e.g., \cite{chen2007large}) that $||b_{0,k}-b_{0,k}^{J_{k}}||_{\infty} \leq ||b_{0,k}||_{\alpha_{0,k}}2^{-\alpha_{0,k}J_{k}}$. Consequently, there exists a constant $\{C_{k}(\alpha_{0,k},b_{0,k})\}_{k=1}^{K-1} \subseteq (0,\infty)$ such that
\begin{align*}
\max_{1 \leq i \leq n}\max_{1 \leq k \leq K-1}|b_{0,k}(x_{i})-(Proj_{J_{k}}b_{0,k})(x_{i})| \leq  \max_{1 \leq k \leq K-1}2^{-\alpha_{0,k}J_{k}} \cdot\max_{1 \leq k \leq K-1}C_{k}(\alpha_{0,k},b_{0,k}).   
\end{align*}
We now bound the second term in (\ref{eq:popprojbound}). Applying the Cauchy-Schwarz inequality,
\begin{align*}
|(Proj_{J_{k}}b_{0,k})(x_{i})-(\widehat{Proj}_{J_{k}}b_{0,k})(x_{i})| \leq ||\psi^{J_{k}}(x_{i})||_{2} \cdot ||\hat{\beta}^{J_{k}}_{0,k}-\beta^{J_{k}}_{0,k} ||_{2},
\end{align*}
where $\hat{\beta}^{J}_{0,k}$ and $\beta^{J}_{0,k}$ are the sample and population least squares coefficients, respectively. Using localization properties of CDV wavelets,
\begin{align*}
\sup_{x \in [0,1]^{d_{x}}}||\psi^{J_{k}}(x)||_{2} \lesssim 2^{\frac{J_{k}d_{x}}{2}}
\end{align*}
Following the same arguments to Step 3 of the proof of Proposition 2 in \cite{walker2026semiparametric},
\begin{align*}
    ||\hat{\beta}^{J_{k}}_{0,k}-\beta^{J_{k}}_{0,k} ||_{2} \leq \lambda_{min}\left(\frac{1}{n}\sum_{i=1}^{n}\psi^{J_{k}}(x_{i})\psi^{J_{k}}(x_{i})'\right)^{-\frac{1}{2}}\left|\left|b_{0,k}-Proj_{J_{k}}b_{0,k}\right|\right|_{\infty}
\end{align*}
Consequently, there exists constants $\{\tilde{C}_{k}(\alpha_{0,k},b_{0,k})\}_{k=1}^{K-1} \subseteq (0,\infty)$ such that
\begin{align*}
||\psi^{J_{k}}(x_{i})||_{2} \cdot ||\hat{\beta}^{J_{k}}_{0,k}-\beta^{J_{k}}_{0,k} ||_{2} &\lesssim \max_{1 \leq k \leq K-1}\lambda_{min}\left(\frac{1}{n}\sum_{i=1}^{n}\psi^{J_{k}}(x_{i})\psi^{J_{k}}(x_{i})'\right)^{-\frac{1}{2}}\\
&\quad \times \max_{1 \leq k \leq K-1}\tilde{C}_{k}(\alpha_{0,k},b_{0,k}) \cdot \max_{1 \leq k \leq K-1}2^{J_{k}\left(\frac{d_{x}}{2}-\alpha_{0,k}\right)}   
\end{align*}
Together, these results imply that there is a constant $\acute{C}(\alpha_{0},b_{0}) \in (0,\infty)$ such that, for $n$ large,
\begin{align*}
    &\max_{1 \leq i \leq n}\max_{1 \leq k \leq K-1}|b_{0,k}(x_{i})-(\widehat{Proj}_{J_{k}}b_{0,k})(x_{i})| \\
    &\quad \lesssim \acute{C}(\alpha_{0},b_{0})\left(\max_{1 \leq k \leq K-1}\lambda_{min}\left(\frac{1}{n}\sum_{i=1}^{n}\psi^{J_{k}}(x_{i})\psi^{J_{k}}(x_{i})'\right)^{-\frac{1}{2}}\max_{1 \leq k \leq K-1}2^{J_{k}(\frac{d_{x}}{2}-\alpha_{0,k})}
    \right. \\
    &\quad \quad \quad \quad \quad \quad \quad \quad \quad \left.+ \max_{1 \leq k \leq K-1}2^{-J_{k}\alpha_{0,k}}\right)
\end{align*}
Since $\alpha_{0,k} > d_{x}/2$ for all $k=1,...,K-1$ (by assumption), it follows that, for any $\{J_{k}\}_{k =1}^{K-1}$ such that $\min_{1 \leq k \leq K-1}J_{k} \rightarrow \infty$ as $n\rightarrow \infty$ and, for $n$ large, 
\begin{align*}
    \lambda_{min}\left(\frac{1}{n}\sum_{i=1}^{n}\psi^{J_{k}}(x_{i})\psi^{J_{k}}(x_{i})'\right) \geq \tilde{c}
\end{align*}
for some $\tilde{c}>0$, the following is true for every $\varepsilon > 0$,
\begin{align*}
\mathbf{1}\left\{\max_{1 \leq i \leq n}\max_{1 \leq k \leq K-1}\left|b_{0,k}(x_{i})-(\widehat{Proj}_{J_{k}}b_{0,k})(x_{i})\right| \geq \frac{\varepsilon}{3C} \right\} = o(1)
\end{align*}
as $n\rightarrow \infty$. 
\paragraph{Step 2C.} Adding/subtracting $Proj_{J_{k}}B_{k}$ and applying the triangle inequality,
\begin{align*}
|B_{k}(x_{i})-(\widehat{Proj}_{J_{k}}B_{k})(x_{i})| &\leq \left|B_{k}(x_{i}) - (Proj_{J_{k}}B_{k})(x_{i})\right| \\
    &\quad +\left|(Proj_{J_{k}}B_{k})(x_{i})-(\widehat{Proj}_{J_{k}}B_{k})(x_{i})\right|
\end{align*}
for all $k=1,...,K-1$ and all $i=1,...,n$. Since $\max_{1 \leq k \leq K-1}||B_{k}||_{a_{k}} \leq M\sqrt{n}\tilde{\delta}_{n}$ on the event $\mathcal{P}_{n} \cap B_{n,2}(p_{0},M\tilde{\delta}_{n})$, a similar argument to Step 2B reveals that there is a constant $C_{k}$ independent of $B_{k}$ such that
\begin{align*}
\left|B_{k}(x_{i}) - (Proj_{J_{k}}B_{k})(x_{i})\right| \leq MC_{k} \sqrt{n}\tilde{\delta}_{n}2^{-a_{k}J_{k}}   
\end{align*}
on the event $\mathcal{P}_{n} \cap B_{n,2}(p_{0},M\tilde{\delta}_{n})$ for all $k=1,...,K-1$ and all $i=1,...,n$. Likewise, following similar arguments to Step 2C, we can also show that, on the event $\mathcal{P}_{n} \cap B_{n,2}(p_{0},M\tilde{\delta}_{n})$,
\begin{align*}
&\left| (Proj_{J_{k}}B_{k})(x_{i})-(\widehat{Proj}_{J_{k}}B_{k})(x_{i})\right| \\
&\quad \lesssim \lambda_{min}\left(\frac{1}{n}\sum_{i=1}^{n}\psi^{J_{k}}(x_{i})\psi^{J_{k}}(x_{i})'\right)^{-\frac{1}{2}}MC_{k}\sqrt{n}\tilde{\delta}_{n}2^{J_{k}\left(\frac{d_{x}}{2}-a_{k}\right)} 
\end{align*}
for all $k=1,...,K-1$ and all $i=1,...,n$. Consequently,
\begin{align*}
 &\max_{1 \leq i \leq n}\max_{1 \leq k \leq K-1}\left|B_{k}(x_{i})-(\widehat{Proj}_{J_{k}}B_{k})(x_{i})\right| \\
 &\quad \leq \max_{1 \leq k \leq K-1}C_{k} M\sqrt{n}\tilde{\delta}_{n}\left(\max_{1 \leq k \leq K-1}\lambda_{min}\left(\frac{1}{n}\sum_{i=1}^{n}\psi^{J_{k}}(x_{i})\psi^{J_{k}}(x_{i})'\right)^{-\frac{1}{2}}\max_{1 \leq k \leq K-1}2^{J_{k}(\frac{d_{x}}{2}-a_{k})} \right.\\
&\quad \quad \quad \quad \quad \quad \quad \quad \quad \quad \quad \quad \quad \left. +\max_{1 \leq k \leq K-1}2^{-a_{k}J_{k}}\right)    
\end{align*}
for all $p \in \mathcal{P}_{n} \cap B_{n,2}(p_{0},M\tilde{\delta}_{n})$. Since $d_{x}/2 < a_{k}$ for each $k=1,...,K-1$, it follows that if $\{J_{k}\}_{k=1}^{K-1}$ is such that $\sqrt{n}\tilde{\delta}_{n}2^{J_{k}(d_{x}/2-a_{k})}\rightarrow 0$ for each $k=1,...,K-1$ as $n\rightarrow \infty$ and, for large $n$,
\begin{align*}
    \lambda_{min}\left(\frac{1}{n}\sum_{i=1}^{n}\psi^{J_{k}}(x_{i})\psi^{J_{k}}(x_{i})'\right) \geq c
\end{align*}
for each $k=1,...,K-1$, then, for any $\varepsilon > 0$ and $n$ large,
\begin{align*}
\Pi_{n}\left(\max_{1 \leq i \leq n}\max_{1 \leq k \leq K-1}\left|B_{k}(x_{i})-(\widehat{Proj}_{J_{k}}B_{k})(x_{i})\right| \geq \frac{\varepsilon}{3C} \middle | Y^{(n)}, \mathcal{P}_{n} \cap B_{n,2}(p_{0},D\tilde{\delta}_{n})\right)
\end{align*}
is identically zero, conditionally given $P_{0,X}^{(\infty)}$-almost every fixed realization $\{x_{i}\}_{i \geq 1}$ of $\{X_{i}\}_{i \geq 1}$.
\paragraph{Step 3.} From Step 2, we conclude the proof if $\{J_{k}\}_{k = 1}^{K-1}$ is defined such that 1. $J_{k} \rightarrow \infty$ as $n\rightarrow \infty$ for each $k=1,...,K-1$, 2. $\sqrt{n}\tilde{\delta}_{n}2^{J_{k}\left(\frac{d_{x}}{2}-a_{k}\right)} \rightarrow 0$ as $n\rightarrow \infty$ for each $k=1,...,K-1$, 3. $2^{\frac{J_{k}d_{x}}{2}}\tilde{\delta}_{n}\rightarrow 0$ as $n\rightarrow \infty$ for each $k=1,...,K-1$, 4. there exists $c > 0$ such that $\min_{1 \leq k \leq K-1}\lambda_{min}(n^{-1}\sum_{i=1}^{n}\psi^{J_{k}}(x_{i})\psi^{J_{k}}(x_{i})') \geq c $ for $n$ large, and 5. there exists $\tilde{c}>0$ such that $\min_{1 \leq k \leq K-1}R_{n,J_{k}} \geq \tilde{c}$ for $n$ large, conditionally given $P_{0,X}^{(\infty)}$-almost every fixed realization $\{x_{i}\}_{i \geq 1}$ of $\{X_{i}\}_{i\geq 1}$. Condition 1. is satisfied for any $J_{k}:=J_{n,k} \rightarrow \infty$ as $n\rightarrow \infty$. Conditions 2. and 3. require that
\begin{align*}
     \left(\sqrt{n}\tilde{\delta}_{n}\right)^{\frac{1}{a_{k}-\frac{d_{x}}{2}}} = o(2^{J_{k}})
\end{align*}
and
\begin{align*}
    2^{J_{k}} = o\left(\tilde{\delta}_{n}^{-\frac{2}{d_{x}}}\right),
\end{align*}
respectively, both of which are implied by Assumption \ref{as:wavelet}. Under Assumption \ref{as:covariates}, we can follow the argument of Lemma 12 in \cite{walker2026semiparametric} to conclude that Condition 4. holds if
\begin{align*}
    2^{J_{k}} = o\left(\left(\frac{n}{\log n}\right)^{\frac{1}{d_{x}}}\right),
\end{align*}
which is also implied by Assumption \ref{as:wavelet}. Moreover,  Condition 4. implies that $\{R_{n,J_{k}}\}_{k=1}^{K-1}$ from Step 2A satisfies $R_{n,J_{k}} \geq c$ for $n$ large because, letting $\hat{\nu}_{n,l,j}$ be defined such that $\hat{\nu}_{n,l,j}'\psi^{J_{k}}(x_{i}) = \psi_{j,l}(x_{i})-\hat{\tau}_{n,-jl}\psi_{-jl}^{J_{k}}(x_{i})$, we know that  
\begin{align*}
    \frac{1}{n}\sum_{i=1}^{n}\tilde{\psi}_{j,l}^{2}(x_{i}) &= \frac{1}{n}\sum_{i=1}^{n}(\psi_{j,l}(x_{i})-\hat{\tau}_{n,-jl}'\psi_{-jl}^{J_{k}}(x_{i}))^{2} \\
    &= \hat{\nu}_{n,l,j}'\left(\frac{1}{n}\sum_{i=1}^{n}\psi^{J_{k}}(x_{i})\psi^{J_{k}}(x_{i})'\right)\hat{\nu}_{n,l,j} \\
    &\geq ||\hat{\nu}_{n,l,j}||_{2}^{2} \lambda_{min}\left(\frac{1}{n}\sum_{i=1}^{n}\psi^{J_{k}}(x_{i})\psi^{J_{k}}(x_{i})'\right) \\
    &\geq \lambda_{min}\left(\frac{1}{n}\sum_{i=1}^{n}\psi^{J_{k}}(x_{i})\psi^{J_{k}}(x_{i})'\right) \\
    &\geq c
\end{align*}
for all $j,l$, where the first inequality uses the variational characterization of eigenvalues, and the second inequality uses that $||\hat{\nu}_{n,l,j}||_{2}^{2} = ||\hat{\tau}_{n,-jl}||_{2}^{2} + 1 \geq 1$ for all $j,l$. Hence Condition 5. is satisfied for $\tilde{c} = c$. This completes the proof.
\end{proof}
\subsubsection{Verifying Assumption \ref{as:wavelet} for Gaussian Priors}\label{ap:verifyingwavelet}
This section verifies that there are data-generating processes and priors for which Assumption \ref{as:wavelet} holds. Suppose that $\kappa_{k}$ belongs to the Mat\'{e}rn class with smoothness parameter $\alpha_{k} > 0$ for each $k=1,...,K-1$. That is, for any $x,\tilde{x} \in [0,1]^{d_{x}}$, the covariance function $\kappa_{k}$ satisfies
\begin{align*}
    \kappa_{k}(x,\tilde{x}) = \int_{\mathbb{R}^{d}}\exp(i \lambda'(x-\tilde{x}))(1+||\lambda||_{2}^{2})^{-\alpha_{k}-\frac{d_{x}}{2}} d \lambda. 
\end{align*}
Further, suppose that $b_{0,k} \in C^{\alpha_{0,k}}([0,1]^{d_{x}},\mathbb{R}) \cap S^{\alpha_{0,k}}([0,1]^{d_{x}},\mathbb{R})$ for some $\alpha_{0,k} > d_{x}/2$. Then, Theorem 5 of \cite{JMLR:v12:vandervaart11a} implies that $\tilde{\delta}_{n,k} = n^{-r_{k}}$, where $r_{k}=\min\{\alpha_{k},\alpha_{0,k}\}/(2\alpha_{k}+d_{x})$ for each $k=1,...,K-1$, and, as a result, $\tilde{\delta}_{n} = n^{-r}$ with $r=\min_{1 \leq k \leq K-1}r_{k}$. This means that satisfying the first part of Assumption \ref{as:wavelet} requires $2^{J_{k}}/n^{2r/d_{x}} \rightarrow 0$ as $n\rightarrow \infty$, whereas the second part of Assumption \ref{as:wavelet} requires $n^{(1/2-r)/(a_{k}-d_{x}/2)}/2^{J_{k}} \rightarrow 0$ as $n\rightarrow \infty$.\footnote{The condition $2^{J_{k}}/((\log n)^{-1}n)^{1/d_{x}}$ is satisfied too because $r < 1/2$.} Consequently, if $2^{J_{k}} = n^{\zeta_{k}}$ with
\begin{align*}
    \frac{\frac{1}{2}-r}{\alpha_{k}-\frac{d_{x}}{2}}<\zeta_{k} < \frac{2r}{d_{x}}
\end{align*}
for each $k=1,...,K-1$, then Assumption \ref{as:wavelet} holds.\footnote{Since we have the flexibility to select $a_{k} \in (d_{x}/2,\alpha_{k})$ in Assumption \ref{as:wavelet}, we can choose $a_{k}$ to minimize $(1/2-r)/(\alpha_{k}-d_{x}/2)$ which yields $(1/2-r)/(\alpha_{k}-d_{x}/2)$.} This is equivalent to $\alpha_{k} > d_{x}/4r$ for each $k=1,...,K-1$. In the case where $\alpha_{k} = \alpha_{0,k}$ for all $k=1,...,K-1$, the condition reduces to $\min_{1 \leq k \leq K-1}\alpha_{k} > (1+\sqrt{5})d_{x}/4$.
\subsection{Proofs of Propositions \ref{prop:GPmultinomial} and \ref{prop:continuousY}}\label{ap:extensionproofs}
\begin{proof}[Proof of Proposition \ref{prop:GPmultinomial}]
The proof has two steps. Step 1 extends Proposition \ref{prop:L2} to the multinomial setting, while Step 2 extends Proposition \ref{prop:post_sup_norm} to conclude the result.
\paragraph{Step 1.} We show that, for $P_{0,X}^{(\infty)}$-almost every fixed realization $\{x_{i}\}_{i \geq 1}$ of $\{X_{i}\}_{i \geq 1}$, 
\begin{align}\label{eq:L2multinomial}
\Pi_{n}\left(||p-p_{0}||_{n,2} < M\tilde{\delta}_{n}|Y^{(n)}\right) \overset{P_{0}^{(n)}}{\longrightarrow} 1    
\end{align} 
as $n\rightarrow \infty$, where $\tilde{\delta}_{n} = \max_{1 \leq k \leq K-1}\tilde{\delta}_{n,k}$. To that end, we first verify that Lemmas \ref{lem:KLdivergence} and \ref{lem:normalizing} extend. For Lemma \ref{lem:KLdivergence}, since the multinomial density is $m_{i}$-fold product of Categorical densities, the \textit{multinomial} KL divergence and variance satisfies
\begin{align*}
    KL(p_{0}(x_{i}),p(x_{i})) = m_{i} \sum_{k=1}^{K}p_{0,k}(\tilde{x}_{i})\log\left(\frac{p_{0,k}(\tilde{x}_{i})}{p_{k}(\tilde{x}_{i})}\right)
\end{align*}
and
\begin{align*}
 KLV(p_{0}(\tilde{x}_{i}),p(\tilde{x}_{i})) = m_{i} \sum_{k=1}^{K}p_{0,k}(\tilde{x}_{i})\left(\log \frac{p_{0,k}(\tilde{x}_{i})}{p_{k}(\tilde{x}_{i})}\right)^{2}.   
\end{align*}
Moreover, since there exists $\bar{M} >0$ such that $P_{0,X}^{(\infty)}(\sup_{n \geq 1}\max_{1 \leq i \leq n}|M_{i}| \leq \bar{M}) = 1$, and $||B_{k}||_{\infty}, ||b_{0,k}||_{\infty} \leq \overline{B}_{k}$ and $||b_{0,k}||_{\infty} \leq \overline{B}_{k}$ for each $k=1,...,K-1$, a Taylor expansion reveals that
\begin{align*}
    \max\left\{\frac{1}{n}\sum_{i=1}KL(p_{0}(\tilde{x}_{i}),p(\tilde{x}_{i})), \frac{1}{n}\sum_{i=1}^{n}KLV(p_{0}(\tilde{x}_{i}),p(\tilde{x}_{i}))\right\} \leq \breve{C}\bar{M}\sum_{k=1}^{K}||p_{k}-p_{0,k}||_{\infty}^{2}.
\end{align*}
for $P_{0,X}^{(\infty)}$-almost every fixed realization $\{x_{i}\}_{i \geq 1}$ of $\{X_{i}\}_{i \geq 1}$. Consequently, we can follow the argument of Lemma \ref{lem:KLdivergence} to conclude that there are constants $\tilde{C}_{1},\tilde{C}_{2} \in (0,\infty)$ such that
\begin{align*}
    \Pi(p \in \mathcal{U}_{n,2}^{*}(p_{0},\tilde{C}_{1}\tilde{\delta}_{n})) \geq C_{2}\exp(-(K-1)n\tilde{\delta}_{n}^{2})
\end{align*}
for $P_{0,X}^{(\infty)}$-almost every realization $\{x_{i}\}_{i \geq 1}$ of $\{X_{i}\}_{i \geq 1}$, where $\mathcal{U}_{n,2}^{*}(p_{0},\delta)$ is defined as before except replacing the categorical KL divergence and variation with the multinomial counterpart. This verifies that Lemma \ref{lem:KLdivergence} extends to the multinomial setting. An inspection of the proof reveals that Lemma \ref{lem:normalizing} also extends, except replacing $C_{1}$ with $\tilde{C}_{1}$. Given these results, we argue that there exists a constant $C \in (0,\infty)$ independent of $\{x_{i}\}_{i \geq 1}$ and $K$ for which
\begin{align*}
||p-p_{0}||_{n,2} \leq C h_{n}(p,p_{0}),
\end{align*}
where $h_{n}$ is the root mean squared Hellinger distance between $Multinomial(m_{i},p(\tilde{x}_{i}))$ and $Multinomial(m_{i},p_{0}(\tilde{x}_{i}))$. Indeed, if this holds, then we can repeat the same argument as Proposition \ref{prop:L2} to conclude (\ref{eq:L2multinomial}) (in particular, the tests from Lemma 2 of \citep{ghosal2007convergence} also exist for the multinomial distribution). Let $h(p(\tilde{x}_{i}),p_{0}(\tilde{x}_{i}))$ be the Hellinger distance between $Multinomial(m_{i},p(\tilde{x}_{i}))$ and $ Multinomial(m_{i},p_{0}(\tilde{x}_{i}))$, Lemma B.5 of \cite{ghosal2017fundamentals} and properties of multinomial distributions reveals that
\begin{align*}
h^{2}(p(\tilde{x}_{i}),p_{0}(\tilde{x}_{i})) &= 2\left(1-\left(\sum_{k=1}^{K}(p_{k}(\tilde{x}_{i})p_{0,k}(\tilde{x}_{i}))^{\frac{1}{2}}\right)^{m_{i}}\right) \\
&=2\left(1- \left(1-\tilde{h}^{2}(p(\tilde{x}_{i}),p_{0}(\tilde{x}_{i}))\right)^{m_{i}}\right),    
\end{align*}
where $\tilde{h}(p(\tilde{x}_{i}),p_{0}(\tilde{x}_{i}))$ is the Hellinger distance between the categorical distributions $Categorical(p(\tilde{x}_{i}))$ and $Categorical(p_{0}(\tilde{x}_{i}))$. Since $\tilde{h}(p(\tilde{x}_{i}),p_{0}(\tilde{x}_{i}))\in [0,1]$ and $m_{i} \geq 1$,
\begin{align*}
    \left(1-\tilde{h}^{2}(p(\tilde{x}_{i}),p_{0}(\tilde{x}_{i}))\right)^{m_{i}} \leq 1-\tilde{h}^{2}(p(\tilde{x}_{i}),p_{0}(\tilde{x}_{i})),
\end{align*}
which, combined with the above, implies that
\begin{align*}
    h^{2}(p(\tilde{x}_{i}),p_{0}(\tilde{x}_{i}))\geq 2 \tilde{h}^{2}(p(\tilde{x}_{i}),p_{0}(\tilde{x}_{i})) \geq 2 c ||p(\tilde{x}_{i})-p_{0}(\tilde{x}_{i})||_{2}^{2}
\end{align*}
for some universal constant $c>0$, where the last inequality uses the equivalence between Hellinger and Total Variation distance (i.e., taxicab norm for discrete distributions) and then the equivalence between norms in Euclidean space. By averaging over the $\tilde{x}_{i}$s and taking the square root, it follows there exists a constant $C \geq 0$ independent of $\{x_{i}\}_{i \geq 1}$ and $K$ such that $||p-p_{0}||_{n,2} \leq C h_{n}(p,p_{0})$, which completes the extension of Proposition \ref{prop:L2} to the multinomial setting.
\paragraph{Step 2.} We now show that, for $P_{0,X}^{(\infty)}$-almost every fixed realization $\{x_{i}\}_{i \geq 1}$ of $\{X_{i}\}_{i \geq 1}$, the following holds: for every $\varepsilon > 0$,
\begin{align}\label{eq:supmultinomial}
\Pi_{n}\left(||p-p_{0}||_{n,\infty} \geq \varepsilon |Y^{(n)}\right) \overset{P_{0}^{(n)}}{\longrightarrow} 0    
\end{align} 
as $n\rightarrow \infty$. Following the same argument as Step 1 of Proposition \ref{prop:post_sup_norm}, we can show that
\begin{align*}
&\Pi_{n}\left(||p-p_{0}||_{n,\infty} \geq \varepsilon |Y^{(n)}\right)\\
&\quad = \Pi_{n}\left(||p-p_{0}||_{n,\infty} \geq \varepsilon |Y^{(n)}, \mathcal{P}_{n} \cap B_{n,2}(p_{0},M\tilde{\delta}_{n})\right)\left(1+ o_{P_{0}^{(n)}}(1)\right) + o_{P_{0}^{(n)}}(1),   
\end{align*}
where $\mathcal{P}_{n} = \{(p_{1},...,p_{K}): \max_{1 \leq k \leq K-1}||B_{k}||_{a_{k}} \leq M\sqrt{n}\tilde{\delta}_{n}\}$ with $a_{1},...,a_{K-1}$ chosen so that Assumption \ref{as:wavelet} is satisfied. Moreover, since in Steps 2 and 3, we only make use of the event $\mathcal{P}_{n} \cap B_{n,2}(p_{0},M\tilde{\delta}_{n})$, Assumption \ref{as:wavelet}, and that $b_{0,k} \in C^{\alpha_{0,k}}([0,1]^{d_{x}},\mathbb{R})$ with $\alpha_{0,k}> d_{x}/2$ for each $k=1,...,K-1$, the argument automatically extends to the multinomial model, thereby leading to conclusion that, for $P_{0,X}^{(\infty)}$-almost every fixed realization $\{x_{i}\}_{i \geq 1}$ of $\{X_{i}\}_{i \geq 1}$, (\ref{eq:supmultinomial}) holds. This completes the proof.
    
\end{proof}
\begin{assumption}\label{as:continuousDGP}
The following conditions hold:
\begin{enumerate}
    \item $\Gamma$ is a compact, convex subset of $\mathbb{R}^{d_{\gamma}}$
    \item The functions $f(x,p_{Y|X}(\cdot|x),\gamma)$ are differentiable and concave in $\gamma$ for all $(x,p_{Y|X})$,
\end{enumerate}
and, for $P_{0,X}^{(\infty)}$-almost every fixed realization $\{x_{i}\}_{i \geq 1}$ of $\{X_{i}\}_{i \geq 1}$,
\begin{enumerate}
\setcounter{enumi}{2}
\item The criterion $Q_{n,r}(\cdot,p_{0})$ satisfies $\sup_{\gamma \in \Gamma}Q_{n,r}(\gamma,p_{0,Y|X})< \infty$,
\item There exists constants $0 < \underline{\lambda}_{W} \leq \overline{\lambda}_{W} < \infty$ such that 
\begin{align*}
\underline{\lambda}_{W} \leq  \inf_{\gamma \in \Gamma} \inf_{n \geq 1}\min_{1 \leq i \leq n} \lambda_{min}(W(x_{i},p_{0,Y|X}(x_{i}),\gamma))     
\end{align*}
and
\begin{align*}
\sup_{\gamma \in \Gamma} \sup_{n \geq 1}\max_{1 \leq i \leq n} \lambda_{max}(W(x_{i},p_{0,Y|X}(x_{i}),\gamma)) \leq \overline{\lambda}_{W}.    
\end{align*}
\item There exists $\gamma^{\circ} \in \Gamma$, $\bar{N} \geq 1$, $\xi >0$ and $\eta >0$ such that $\Gamma_{n,I}(p_{0,Y|X}) \subseteq \operatorname{int}(\Gamma)$, 
\begin{align*}
\sup_{\gamma \in \partial \Gamma}\max_{1\leq i \leq n}\max_{1 \leq j \leq d_{f}}f_{j}(x_{i},p_{0,Y|X}(x_{i}),\gamma) \leq -\xi,    
\end{align*} 
and 
\begin{align*}
\min_{1\leq i \leq n}\min_{1 \leq j \leq d_{f}}f_{j}(x_{i},p_{0,Y|X}(x_{i}),\gamma^{\circ}) \geq \eta 
\end{align*}
for all $n \geq \bar{N}$. 
\end{enumerate}
\end{assumption}
\begin{proof}[Proof of Proposition \ref{prop:continuousY}]
Since the first two conditions is the continuous counterpart of Assumption \ref{as:prior}.1, it suffices to show that Assumptions \ref{as:prior}.2 and Assumptions \ref{as:prior}.3 extend to the case with $p_{Y|X}$ instead of $p$. This is done in Steps 1 and 2.
\paragraph{Step 1.} To verify Assumption \ref{as:prior}.2, we show that, for $P_{0,X}^{(\infty)}$-almost every fixed realization $\{x_{i}\}_{i \geq 1}$ of $\{X_{i}\}_{i \geq 1}$, the following is true:
\begin{align}\label{eq:cond1continuous}
    \sup_{(\gamma,p_{Y|X}) \in \Gamma \times \mathcal{P}_{n,Y|X}: e_{n}(p_{Y|X},p_{0,Y|X})<\delta_{n}}\left|Q_{n,r}(\gamma,p_{Y|X})-Q_{n,r}(\gamma,p_{0,Y|X})\right|\longrightarrow 0
\end{align}
as $n\rightarrow \infty$. Under Assumption \ref{as:continuousDGP}.3 and \ref{as:continuousDGP}.4 and the uniform convergence conditions, the same argument as Proposition \ref{prop:ucon}, except replacing $p$ with $p_{Y|X}$, yields (\ref{eq:cond1continuous}).
\paragraph{Step 2.} To show Assumption \ref{as:prior}.3, we show that, for $P_{0,X}^{(\infty)}$-almost every fixed realization $\{x_{i}\}_{i \geq 1}$ of $\{X_{i}\}_{i \geq 1}$, the following is true: for every $\varepsilon > 0$, there exists $N \geq 1$ such that $p_{Y|X} \in \mathcal{P}_{n,Y|X} \cap \{p_{Y|X}: e_{n}(p_{Y|X},p_{0,Y|X})< \delta_{n}\}$ and $n \geq N$ implies $\Gamma_{n,I}(p) \cap B(\gamma,\varepsilon) \neq \emptyset$ for all $\gamma \in \Gamma_{n,I}(p_{0,Y|X})$. Repeating an identical argument to Proposition \ref{prop:hemi} (except with $p$ replaced by $p_{Y|X}$), a sufficient condition is that, for $P_{0,X}^{(\infty)}$-almost every fixed realization $\{x_{i}\}_{i \geq 1}$ of $\{X_{i}\}_{i \geq 1}$, there exists a constant $c \in (0,\infty)$ and $\tilde{N} \geq 1$, such that, for each $\gamma \in \Gamma$,
\begin{align*}
    \max_{1 \leq i \leq n}||(f(x_{i},p_{Y|X}(\cdot|x_{i}),\gamma))_{-}||_{2} \geq c \cdot \text{dist}(\gamma,\Gamma_{n,I}(p_{Y|X}))
    \end{align*}
holds for all $p_{Y|X} \in \mathcal{P}_{n,Y|X} \cap \{p_{Y|X}: e_{n}(p_{Y|X},p_{0,Y|X})< \delta_{n}\}$. Under the conditions of the Proposition and Assumption \ref{as:continuousDGP}, we can repeat the argument of Proposition \ref{prop:convexhemi}, except replacing $p$ with $p_{Y|X}$, to that the sufficient condition is satisfied.
\end{proof}
\end{document}